\documentclass[11pt]{article}

\usepackage{amsthm}
\usepackage{graphicx} % support the \includegraphics command and options
\usepackage{array} % for better arrays (eg matrices) in maths

\usepackage{amsmath, amssymb, amsfonts, verbatim}
\usepackage{hyphenat,epsfig,subcaption,multirow}
\usepackage[font=small,labelfont=bf]{caption}
\usepackage{nicefrac}
\usepackage{paralist}

\usepackage[usenames,dvipsnames]{xcolor}
\usepackage[ruled]{algorithm2e}

\DeclareFontFamily{U}{mathx}{\hyphenchar\font45}
\DeclareFontShape{U}{mathx}{m}{n}{
      <5> <6> <7> <8> <9> <10>
      <10.95> <12> <14.4> <17.28> <20.74> <24.88>
      mathx10
      }{}
\DeclareSymbolFont{mathx}{U}{mathx}{m}{n}
\DeclareMathSymbol{\bigtimes}{1}{mathx}{"91}

\usepackage{tcolorbox}
\tcbuselibrary{skins,breakable}
\tcbset{enhanced jigsaw}

\usepackage[normalem]{ulem}
\usepackage[compact]{titlesec}

\definecolor{DarkRed}{rgb}{0.5,0.1,0.1}
\definecolor{DarkBlue}{rgb}{0.1,0.1,0.5}

\usepackage{nameref}
\definecolor{ForestGreen}{rgb}{0.1333,0.5451,0.1333}
\definecolor{Red}{rgb}{0.9,0,0}
\usepackage[linktocpage=true,
	pagebackref=true,colorlinks,
	linkcolor=DarkRed,citecolor=ForestGreen,
	bookmarks,bookmarksopen,bookmarksnumbered]
	{hyperref}
\usepackage[noabbrev,nameinlink]{cleveref}
\crefname{property}{property}{Property}
\creflabelformat{property}{(#1)#2#3}
\crefname{equation}{eq}{Eq}
\creflabelformat{equation}{(#1)#2#3}
\usepackage{bm}
\usepackage{url}
\usepackage{xspace}
\usepackage[mathscr]{euscript}

\usepackage{tikz}
\usetikzlibrary{arrows}
\usetikzlibrary{arrows.meta}
\usetikzlibrary{shapes}
\usetikzlibrary{backgrounds}
\usetikzlibrary{positioning}
\usetikzlibrary{decorations.markings}
\usetikzlibrary{decorations.pathreplacing} % needed for drawing braces
\usetikzlibrary{patterns}
\usetikzlibrary{calc}
\usetikzlibrary{fit}
\usetikzlibrary{decorations}

\usepackage[framemethod=TikZ]{mdframed}

\usepackage[noend]{algpseudocode}
\makeatletter
\def\BState{\State\hskip-\ALG@thistlm}
\makeatother

\usepackage{cite}
\usepackage{enumitem}
\setlist[itemize]{leftmargin=20pt}
\setlist[enumerate]{leftmargin=20pt}

\usepackage[margin=1in]{geometry}

\usepackage{thmtools}
\usepackage{thm-restate}

\newtheorem{theorem}{Theorem}
\newtheorem{lemma}{Lemma}[section]
\newtheorem{proposition}[lemma]{Proposition}
\newtheorem{corollary}[lemma]{Corollary}
\newtheorem{claim}[lemma]{Claim}
\newtheorem{fact}[lemma]{Fact}
\newtheorem{assumption}[lemma]{Assumption}

\newtheorem{definition}[lemma]{Definition}
\newtheorem{problem}{Problem}

\newtheorem*{claim*}{Claim}
\newtheorem*{assumption*}{Assumption}
\newtheorem*{proposition*}{Proposition}
\newtheorem*{lemma*}{Lemma}

\newtheorem{observation}[lemma]{Observation}
\newtheorem{property}{Property}

\newtheorem*{theorem*}{Theorem}

\crefname{lemma}{Lemma}{Lemmas}
\crefname{claim}{claim}{claims}
\crefname{property}{Property}{Properties}
\crefname{invariant}{Invariant}{Invariants}

\newtheorem{mdresult}{Result}

\theoremstyle{definition}

\newtheorem*{mdproblem*}{Problem}
\newenvironment{Problem*}{\begin{mdframed}[hidealllines=false,innerleftmargin=10pt,backgroundcolor=gray!10,innertopmargin=5pt,innerbottommargin=5pt,roundcorner=10pt]\begin{mdproblem*}}{\end{mdproblem*}\end{mdframed}}
\newtheorem{mddefinition}[lemma]{Definition}

\newtheorem*{mddefinition*}{Definition}
\newenvironment{Definition*}{\begin{mdframed}[hidealllines=false,innerleftmargin=10pt,backgroundcolor=white!10,innertopmargin=5pt,innerbottommargin=5pt,roundcorner=10pt]\begin{mddefinition*}}{\end{mddefinition*}\end{mdframed}}
\newtheorem{mdremark}{Remark}

\newenvironment{ourbox}{\begin{mdframed}[hidealllines=false,innerleftmargin=10pt,backgroundcolor=white!10,innertopmargin=2pt,innerbottommargin=5pt,roundcorner=10pt]}{\end{mdframed}}

\newtheorem{mdalgorithm}{Algorithm}

\allowdisplaybreaks

\renewcommand{\qed}{\nobreak \ifvmode \relax \else
      \ifdim\lastskip<1.5em \hskip-\lastskip
      \hskip1.5em plus0em minus0.5em \fi \nobreak
      \vrule height0.75em width0.5em depth0.25em\fi}

\renewcommand{\leq}{\leqslant}
\renewcommand{\geq}{\geqslant}
\renewcommand{\le}{\leq}
\renewcommand{\ge}{\geq}

\newcommand{\Leq}[1]{\ensuremath{\underset{\textnormal{#1}}\leq}}

\newcommand{\Eq}[1]{\ensuremath{\underset{\textnormal{#1}}=}}

\newcommand{\tvd}[2]{\ensuremath{\norm{#1 - #2}_{\mathrm{tvd}}}}
\newcommand{\Ot}{\ensuremath{\widetilde{O}}}
\newcommand{\eps}{\ensuremath{\varepsilon}}
\newcommand{\Paren}[1]{\Big(#1\Big)}
\newcommand{\Bracket}[1]{\Big[#1\Big]}
\newcommand{\bracket}[1]{\left[#1\right]}
\newcommand{\paren}[1]{\ensuremath{\left(#1\right)}\xspace}
\newcommand{\card}[1]{\left\vert{#1}\right\vert}

\newcommand{\IN}{\ensuremath{\mathbb{N}}}

\newcommand{\norm}[1]{\ensuremath{\|#1\|}}

\newcommand{\set}[1]{\ensuremath{\left\{ #1 \right\}}}

\newcommand{\alg}{\ensuremath{\mathcal{A}}\xspace}

\DeclareMathOperator*{\Exp}{\ensuremath{{\mathbb{E}}}}
\DeclareMathOperator*{\Prob}{\ensuremath{\textnormal{Pr}}}
\renewcommand{\Pr}{\Prob}

\newcommand{\Ex}{\Exp}

\newenvironment{tbox}{\begin{tcolorbox}[
		enlarge top by=5pt,
		enlarge bottom by=5pt,
		 breakable,
		 boxsep=0pt,
                  left=4pt,
                  right=4pt,
                  top=10pt,
                  arc=0pt,
                  boxrule=1pt,toprule=1pt,
                  colback=white
                  ]%%
	}
{\end{tcolorbox}}

\newcommand{\event}{\ensuremath{\mathcal{E}}}

\newcommand{\rv}[1]{\ensuremath{{\mathsf{#1}}}\xspace}
\newcommand{\rA}{\rv{A}}
\newcommand{\rB}{\rv{B}}
\newcommand{\rC}{\rv{C}}
\newcommand{\rD}{\rv{D}}

\newcommand{\rX}{\rv{X}}

\newcommand{\rG}{\rv{G}}
\newcommand{\rProt}{\rv{\Prot}}
\newcommand{\rR}{\rv{R}}

\newcommand{\supp}[1]{\ensuremath{\textnormal{\text{supp}}(#1)}}
\newcommand{\distribution}[1]{\ensuremath{\textnormal{dist}(#1)}\xspace}

\newcommand{\kl}[2]{\ensuremath{\mathbb{D}(#1~||~#2)}}
\newcommand{\II}{\ensuremath{\mathbb{I}}}
\newcommand{\HH}{\ensuremath{\mathbb{H}}}
\newcommand{\mi}[2]{\ensuremath{\def\mione{#1}\def\mitwo{#2}\mireal}}
\newcommand{\mireal}[1][]{
  \ifx\relax#1\relax%
    \II(\mione \,; \mitwo)%
  \else%
    \II(\mione \,; \mitwo\mid #1)%
  \fi
}
\newcommand{\en}[1]{\ensuremath{\HH(#1)}}

\newcommand{\itfacts}[1]{\Cref{fact:it-facts}-(\ref{part:#1})\xspace}

\newcommand{\Grs}{\ensuremath{G_{\textnormal{rs}}}}
\newcommand{\Lrs}{\ensuremath{L_{\textnormal{rs}}}}
\newcommand{\Rrs}{\ensuremath{R_{\textnormal{rs}}}}

\newcommand{\Mrs}[1]{\ensuremath{M_{\textnormal{rs}}^{#1}}}

\newcommand{\drs}{\ensuremath{\delta_{\text{rs}}}}
\newcommand{\crs}{\ensuremath{c_{\text{rs}}}}

\newcommand{\prot}{\ensuremath{\pi}}
\newcommand{\Prot}{\ensuremath{\Pi}}

\newcommand{\jstar}{\ensuremath{j^{\star}}}
\newcommand{\istar}{\ensuremath{i^{\star}}}
\newcommand{\raoconst}{\ensuremath{c_{\textnormal{comp}}}}

\newcommand{\reald}{\ensuremath{\GG^{\textnormal{\textsc{real}}}}}

\newcommand{\cc}[1]{\ensuremath{\textsc{CC}(#1)}}

\newcommand{\ic}[2]{\ensuremath{\textsc{IC}(#1,#2)}}

\newcommand{\PP}{\ensuremath{\mathbb{P}}}
\newcommand{\cP}{\ensuremath{\mathcal{P}}}
\newcommand{\GG}{\ensuremath{\mathcal{G}}}
\newcommand{\cH}{\ensuremath{\mathcal{H}}}
\newcommand{\cA}{\ensuremath{\mathcal{A}}}
\newcommand{\cB}{\ensuremath{\mathcal{B}}}
\newcommand{\cHstar}{\ensuremath{\mathcal{H}^*}}

\newcommand{\embed}{\ensuremath{\textnormal{\textsc{embed}}}}

\newcommand{\RS}{Ruzsa–Szemerédi }

\providecommand{\HS}{\textnormal{\textsc{HiddenRestrictions}}}
\providecommand{\cD}{\ensuremath{\mathcal{D}}}
\providecommand{\DHR}{\ensuremath{\cD_{\mathrm{HR}}}}
\providecommand{\Unif}{\ensuremath{\mathcal{U}}}
\providecommand{\pre}{\mathrm{pre}}
\providecommand{\post}{\mathrm{post}}
\renewcommand{\RS}{Ruzsa-Szemer\'{e}di~}

\title{An Unconditional Space-Approximation Lower Bound for Semi-Streaming Matching}

\author{
Yang P. Liu\footnote{(yangl7@andrew.cmu.edu) Carnegie Mellon University. Supported in part by NSF Career Award CCF2542272.\smallskip} 
\and 
Kheeran Naidu\footnote{(kheeran.naidu@gmail.com)\smallskip}
\and
Hoai-An Nguyen\footnote{(hnnguyen@andrew.cmu.edu) Carnegie Mellon University. Supported in part by
an NSF GRFP fellowship grant number DGE2140739/DGE2631988, NSF Career Award CCF2542272, Office
of Naval Research award number N000142112647, and a Simons Investigator Award. 
\smallskip} 
\and 
Janani Sundaresan\footnote{(jsundare@uwaterloo.ca) University of Waterloo. \smallskip} 
}

\date{}

\begin{document}
\pagenumbering{gobble}
\maketitle

% !TeX root = main.tex 
%!TEX root = main.tex

\begin{abstract}

\medskip

We prove the first unconditional lower bound tradeoff between the number of passes and the approximation factor of maximum matchings in the semi-streaming model. Specifically, we show that any algorithm that outputs an $(1-\eps)$-approximation to maximum matching requires $\Omega(\log(1/\eps))$ passes for constant $\eps \in (0,1)$.

Prior to our result, such tradeoffs were proven only conditioned on a certain combinatorial assumption about the density of \RS graphs. The state-of-the-art unconditional lower bound rules out any approximation factor better than $(8/9)$ in two passes but leaves open the possibility of a $(1-\eps)$-approximation for any $\eps > 0$ even in three passes. 

We provide two different proofs of our result: the first proof uses the recent hierarchical-embedding framework of Assadi, Konrad, Naidu, and Sundaresan (STOC 2024) along with combinatorial constructions using \RS graphs. The second extends the two-level \RS hierarchy of Konrad and Naidu (SODA 2024) beyond two levels and proves the corresponding communication lower bound directly by induction, rather than through a reduction to \textsc{Index}.

\end{abstract}

\newpage

\tableofcontents

\newpage

\pagenumbering{arabic}

% !TeX root = main.tex 
%!TEX root = main.tex

\section{Introduction}\label{sec:intro}
In this paper, we study maximum bipartite matching in insertion-only streams.
The input is a bipartite graph \(G\) whose edges arrive in a stream, and an algorithm with limited memory may take several passes over this stream. The relevant memory regime for graph streaming is often \emph{semi-streaming}, introduced by \cite{FKMSZ05}, where the space usage is near-linear in the number of vertices, i.e., $\Ot(n)$. 
We focus on the matching-output version of the problem: the algorithm must return a set of edges forming a large matching in the input graph, rather than estimating the size of a maximum matching. 

Similarly to many previous works, we focus on approximate maximum matchings. In particular, we study the following natural question: for a constant $\eps>0$, how many passes are needed to obtain a semi-streaming
algorithm for $(1-\eps)$-approximate maximum bipartite matching?

\paragraph{Upper bounds.}
There have been numerous works improving the number of passes for semi-streaming algorithms for $(1-\eps)$-approximate maximum matching \cite{M05,EKS09,AG11,AG13,EKMS12,Kapralov13,BDETY17,AG18,T18,GKMS19,ALT21,FMU22,AJJST22,HS23,A25}, where the current best bounds are $O(1/\eps^2)$ \cite{AG13,ALT21}, which is constant for fixed $\eps$, and $O((\log n)/\eps)$ \cite{AG18,A25}, i.e., a better $\eps$ dependence at the cost of a mild dependence on $n$.

\paragraph{Lower bounds.}
Progress on the side of lower bounds is much more limited. While there are several lower bounds against one-pass streaming algorithms (see e.g. \cite{GoelKK12,Kapralov13,Kapralov21,AJX26}), multipass lower bounds are much rarer. In the context of approximate matching, \cite{A22,AS23} proved that semi-streaming $(1-\eps)$-approximate matching requires $\Omega(\log(1/\eps)/\log(1/\beta_n))$ passes, where $\beta_n$ is a parameter which is related to the density of \emph{\RS graphs}; these graphs have edge sets that decompose into many large induced matchings, and are defined formally in \Cref{def:rs-graph}. (In fact, this lower bound even holds for algorithms that approximate the matching \emph{size}.) Currently, it is known that
\[
    \frac{1}{\log \log n} \underset{\text{\cite{FLNRRS02}}}{\le} \beta_n \underset{\text{\cite{FHS17}}}{\le} 1-o(1).
\]
Thus, this bound does not imply anything for constant $\eps$ for the current bound of $\beta_n \ge 1/\log\log n$. However, under the plausible \emph{conditional} hypothesis that $\beta_n=\Omega(1)$ for all $n$, these results imply that semi-streaming $(1-\eps)$-approximate matching size requires $\Omega(\log(1/\eps))$ passes.

Recently, \cite{KonradN24} gave the first unconditional lower bound against two-pass approximate maximum matching. Precisely, they prove that any two-pass algorithm for $(8/9+\eps)$-approximate maximum matching that returns the edge set of the approximate matching requires at least $n^{1+\Omega(1/(\log \log n)^2)}$ space. The proof was based on a two-level \RS graph hierarchy, and analyzing a corresponding communication complexity question. Building on this work, we prove the following unconditional pass lower bound.

\begin{theorem}
\label{thm:main}
\label{thm:main-lb}
For any constant $p\geq 1$ and constant $\delta > 0$, any $p$-pass semi-streaming algorithm cannot output a better than $(1-3^{-p}+\delta)$-approximate maximum bipartite matching. Put differently, for any sufficiently small constant $\eps > 0$, a $(1-\eps)$-approximation requires $\Omega(\log(1/\eps))$ passes.
\end{theorem}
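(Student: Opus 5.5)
We prove \Cref{thm:main} by induction on the number of passes $p$. At level $p$ we build a hard distribution over $p$-party (or $p$-round) communication instances, then translate it to streaming in the standard way. A $p$-pass semi-streaming algorithm with $\Ot(n)$ space yields a protocol in which the stream is split into blocks held by players. Each pass is one round of messages in which every message has $\Ot(n)$ bits, so total communication is $\Ot(p\cdot n)$. It therefore suffices to exhibit a distribution on bipartite graphs, with the edges partitioned among players in a fixed order, such that any protocol of $p$ rounds with messages of size $n^{1+o(1)}$ (more precisely $n^{1+c/(\log\log n)^{2}}$ or similar) fails to output a matching of size $(1-3^{-p}+\delta)\mu(G)$. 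The base case $p=1$ needs ratio $2/3$. This is the classical one-pass lower bound via \RS graphs (\cite{GoelKK12}): a hidden induced matching must be recovered, and the later edges force two-thirds of the optimum to depend on it.

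The inductive step generalizes the two-level \RS hierarchy of \cite{KonradN24}. We fix an \RS graph $\Grs$ whose edges decompose into $t$ induced matchings $\Mrs{1},\dots,\Mrs{t}$, each of linear size, with $t=n^{\Omega(1/\log\log n)}$ and thus superpolylogarithmic. Into each vertex slot of the level-$p$ instance we embed an independent copy of the level-$(p-1)$ hard instance. One special index $\jstar$ is hidden, and only the copies of the instance along $\Mrs{\jstar}$ matter. The last player then adds edges that make only those copies relevant. These edges are chosen so that any near-optimal matching must solve the inner level-$(p-1)$ problem well on the copies along $\Mrs{\jstar}$. The gadget is built so that the unmatched fraction composes as $\eps_p=\eps_{p-1}/3$, which gives $3^{-p}$. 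The hierarchical-embedding framework of \cite{AKNS24} (Assadi, Konrad, Naidu, Sundaresan, STOC 2024) performs this composition generically. Its analysis shows that one round of communication of size $s\ll t\cdot|\Mrs{\cdot}|$ reveals essentially nothing about which embedded sub-instance is special. By a direct-sum / round-elimination argument, the first round can be simulated for free, leaving a $(p-1)$-round protocol on a single embedded copy.

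The key steps, in order, are as follows. (i) Define the recursive distribution $\cD_p$ and verify combinatorially that $\mu(G)$ is large whp, while any matching avoiding the hidden information has size at most $(1-3^{-p})\mu(G)$. This is the gadget calculation. (ii) Prove the round-elimination lemma. From a $p$-round protocol for $\cD_p$ with messages of $s$ bits, we construct a $(p-1)$-round protocol for $\cD_{p-1}$ with comparable error, in which players use public randomness to sample the non-special copies. The information the first message carries about the special copy is at most $s/t$ by the chain rule over the $t$ induced matchings, and we bound this via \itfacts-style KL/TVD arguments. (iii) Do the parameter bookkeeping. Each level shrinks the instance polynomially while $p$ is constant, so $n^{1+o(1)}$ space still stays below the threshold at every level, and $\delta$ absorbs the accumulated additive errors.

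I expect step (ii) to be the main obstacle. The output is a matching, not a bit, so "the protocol reveals little about $\jstar$" has to be converted into "the output contains few edges of the special copies beyond what a $(p-1)$-round protocol achieves." We must also handle correlations between the copies. My first attempt would be to condition on the first message and show that, in this conditioned distribution, the special copy is close in total variation to a fresh $\cD_{p-1}$ instance for a random embedding index. Any good output on it then contradicts the inductive hypothesis.
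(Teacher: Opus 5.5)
Your architecture matches the paper's first proof: copies of the level-$(p-1)$ instance are placed on the \emph{edges} of an RS graph (not in vertex slots), the holder of the special index speaks last, and the argument is an AKNS24-style round elimination. However, the proposal leaves the crux unresolved and gets the quantitative core of the round elimination wrong.

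\textbf{The invariant.} You plan to apply the hypothesis ``no $(p-1)$-round protocol outputs a $(1-3^{-(p-1)}+\delta)$-approximate matching'' to the embedded copy. This does not compose as stated.
\begin{itemize}
\item After eliminating a round you only know something \emph{on average} over the uniformly random position of the embedded copy, and only up to TVD and truncation losses: the output contains some fraction of that copy's deepest special edges.
\item The restriction of the big matching to one copy carries no worst-case approximation guarantee for that copy.
\item The output player cannot even tell which returned edges are the relevant ones without knowing the copy's auxiliary edges at every level.
\end{itemize}
What is needed is an invariant that is linear in the hidden leaves, so that it survives averaging and additive losses. The paper gets this by leaving the matching world once, at the top. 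With parallel/cross gadgets at level $0$, every instance has a perfect matching, the ultimate matching $M^\star$ is induced of size $\kappa_{\rm rs}^p n$, and $G\setminus M^\star$ has a vertex cover of size $(1-\kappa_{\rm rs}^p)n$. Hence any $(1-\eps)$-approximate matching contains a $(1-\eps/\kappa_{\rm rs}^p)$-fraction of $M^\star$. Publishing the auxiliary matchings ($O(n\log n)$ bits) then lets one predict the gadget type at a uniformly random search sequence, i.e.\ solve the bit predicate $\PP_p(G,J)$ with constant advantage (\Cref{lem:mbmtopred}). The induction is run on ``solve $\PP_p$ with success $\frac12+\frac{p}{12(p+1)}$'' and bottoms out at $p=0$, where no zero-round protocol beats $1/2$. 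This is also why the bottom level must be near-perfect (gadgets, or deletions with small probability $q$ as in the second proof) rather than GKK12's probability-$1/2$ deletions; otherwise the levels do not compose to $1-3^{-p}$.

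\textbf{The round elimination.} Three quantitative ingredients are missing or wrong.

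First, the inner protocol must use about $s/r_p$ bits, not $s$. A copy has only $n_{p-1}=\Theta(n/b_p)$ vertices per side, so a budget $s\ge n$, which the argument must handle, is far above its threshold. The paper reduces the cost as follows, all \emph{before} eliminating the round so that a worst-case cost bound survives the change of distribution:
\begin{itemize}
\item embed the given instance at a uniform $j^\star$ inside the special matching with $i^\star$ public, so $P_{p+1}$ is simulated for free;
\item have each player privately sample the rest of its own input;
\item prove the direct-sum bound $\ic{\sigma}{\GG_{p-1}}\le \ic{\pi_p}{\GG_p}/r_p$;
\item apply message compression (\Cref{prop:msg-compress}) and Markov truncation.
\end{itemize}

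Second, the leakage bound must be per sub-instance. Since $P_1,\dots,P_p$ do not know $i^\star$, their first-round messages carry at most $s/(r_pt_p)$ bits about a uniformly random $H_{i,j}$, by the chain rule over all $r_pt_p$ independent copies. Your $s/t$ bound only helps when $s\ll t_p\le b_p^2/r_p=O(b_p)=o(n)$. The first round is then removed by sampling its messages publicly and having each player privately resample its own input conditioned on them, which is valid by the rectangle property. Publicly sampling the non-special copies, as you propose, would still require the first round to be sent.

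Third, for $s\ge n$ we have $s/(r_pt_p)\gtrsim n_{p-1}/t_p$, so this is small only if $t_p\gg n_{p-1}$. The known unconditional RS graphs with $r_p=\Theta(b_p)$ have only $t_p=b_p^{\Theta(1/\log\log b_p)}$. Hence the outer RS graph must be \emph{super-polynomially} larger than the inner instance; the paper takes $b_p=n_{p-1}^{(\log\log n_{p-1})^3}$, giving $t_p\ge n_{p-1}^4$. With the polynomial shrinkage in your step (iii), $t_p=n^{o(1)}$ while $n_{p-1}=n^{\Omega(1)}$, and the TVD bound is vacuous. This is also why the final bound is $n^{1+(\log\log n)^{-\Theta(p)}}$ rather than $n^{1+c/(\log\log n)^2}$.
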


This gives an unconditional result matching the conditional result of \cite{AS23}. Relative to \cite{AS23}, whose proof goes through harder intermediate problems such as hidden permutations and set-hiding, both of our proofs are based on \RS graph hierarchies and the communication/search problems they induce, giving a stronger unconditional lower bound with simpler arguments. We note, however, that \cite{AS23} also holds for size estimation whereas here we study the problem of outputting the edges. 

\paragraph{Related work.}
There are several algorithms and lower bounds for semi-streaming maximum matching in slightly different settings than those studied in this paper. For exact maximum bipartite matching, \cite{GO13} proved multipass lower bounds for testing whether a graph has a perfect matching, while semi-streaming algorithms with sublinear pass complexity in $n$ are known, based on optimization and parallel reachability algorithms \cite{AJJST22}. In dynamic streams, where the stream consists of edge insertions and deletions, recent work has established that $\Theta(\log \log n)$ passes is necessary and sufficient for achieving a constant-factor approximation in semi-streaming space \cite{ABKNS25}; identifying the exact dependence on the accuracy parameter $\eps$ remains open. We refer the reader to the references within \cite{ABKNS25} for further discussion related to matching in dynamic streams.

Finally, beyond matching, there has been recent progress on techniques for proving lower bounds against multi-pass semi-streaming algorithms for a number of graph problems; see \cite{CKPSRSY21,AKNS24,Assadi23} and the references therein. 

\paragraph{Road map.}
In Section~\ref{sec:prelims} we give the preliminaries with additional preliminaries in Appendix~\ref{app:additional-prelims}. We give a technical overview of previous work and both proofs of our result in Section~\ref{sec:technical-overview}. The first proof appears in Section~\ref{sec:he-proof} and Section~\ref{app:roundelim} and the second proof appears in Section~\ref{sec:communication} and Section~\ref{sec:reduction}.

% !TeX root = main.tex 
%!TEX root = main.tex

\section{Preliminaries}\label{sec:prelims}

\paragraph{Notation.}
All logarithms are base \(2\) unless otherwise stated. For a positive integer
\(m\), write \([m]:=\{1,\ldots,m\}\). For a tuple
\((x_1,\ldots,x_t)\) and \(i\in[t]\), we write
\(x_{<i}:=(x_1,\ldots,x_{i-1})\). For a set of tuples
\(\{(x,y):x\in X,y\in Y\}\), and any \(x\in X\), we write
\((x,*)\) for \(\{(x,y):y\in Y\}\).

For any graph \(G=(V,E)\) and any \(U\subseteq V\), we use \(G[U]\) to denote
the subgraph of \(G\) induced by \(U\). For a graph \(G\), let \(\mu(G)\)
denote the size of a maximum matching in \(G\).

When distinguishing a random variable from a realization, we often write the
random variable in sans serif, e.g. \(\rX\), and its realization as \(X\).
For a random variable \(\rX\), \(\distribution{\rX}\) denotes its law and
\(\supp{\rX}\) its support. For a finite set \(S\), let \(\Unif(S)\) denote
the uniform distribution on \(S\). For \(x\in[0,1]\), let
\[
    h(x):=-x\log x-(1-x)\log(1-x)
\]
be the binary entropy function, with the convention \(0\log 0=0\). For two
strings \(x,y\) indexed by the same finite set, let \(d_{\rm H}(x,y)\) denote
their Hamming distance. Subscripts on asymptotic notation indicate which
parameters the hidden constants may depend on, and a tilde suppresses
logarithmic factors in the relevant size parameters.

\paragraph{\texorpdfstring{\RS Graphs and Embeddings}{RS Graphs and Embeddings}.}

We rely on an extremal family of graphs known as \RS graphs~\cite{RuzsaS78}.
We use the construction originally given by~\cite{FLNRRS02} and
subsequently improved by~\cite{GoelKK12}.

\begin{definition}[$(r,t)$-RS graph]\label{def:rs-graph}
A bipartite graph \(Q=(L\cup R,E)\) is an \((r,t)\)-RS graph if its edge set
can be partitioned into \(t\) pairwise edge-disjoint induced matchings
\[
    M_1,\ldots,M_t,
    \qquad
    M_i=\{e_{i,1},\ldots,e_{i,r}\},
\]
each of size \(r\). Here ``induced'' is with respect to the whole graph \(Q\):
the subgraph of \(Q\) induced by the endpoints of \(M_i\) contains exactly the
edges of \(M_i\).
\end{definition}

\begin{proposition}[cf.~\cite{GoelKK12}]\label{prop:RS-gkk12}
For every sufficiently small constant \(\drs>0\), there is a constant
\(\crs=\crs(\drs)>0\) such that, for every sufficiently large integer \(n\),
there is a bipartite \((r,t)\)-RS graph \(Q=(L\cup R,E)\) satisfying
\(|L|=|R|=n\) and
\[
    r=\left\lfloor\left(\frac12-\drs\right)n\right\rfloor,
    \qquad
    t\geq n^{\crs/\log\log n}.
\]
\end{proposition}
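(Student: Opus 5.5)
This proposition is cited from \cite{GoelKK12}, so the plan is to recall their construction, which refines that of~\cite{FLNRRS02}, and check the parameters. The construction uses a large set of integers with no nontrivial solution to a suitable linear equation (a Behrend-type set), lifted to a high-dimensional discrete cube. Fix a dimension \(d\) and an alphabet \([m]\), and take \(L=R=[m]^d\), so \(n=m^d\). We build a bipartite graph in which each induced matching is indexed by a pair \((S,i)\). Here \(S\subseteq[d]\) is a set of coordinates of size about \(d/3\), and \(i\) is a shift. The matching \(M_{S,i}\) joins \(x\) to \(x+i\cdot\mathbf{1}_S\) for every \(x\) lying in a ``middle layer'' with respect to \(S\). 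The middle-layer condition is chosen so that a \((\frac12-\drs)\) fraction of \(L\) participates: concretely, \(x\) must have roughly the right number of coordinates in \(S\) at values that keep \(x+i\mathbf{1}_S\) inside \([m]^d\).

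First, I would verify the \emph{induced} property. Suppose some edge of \(M_{S',i'}\) joined two endpoints of \(M_{S,i}\). Taking coordinate-sum or Hamming-weight differences, this would force a linear relation among the shift vectors. That relation can only hold trivially if the family of sets \(S\) is chosen so that no two overlap in a controlled way. To enforce this, I would restrict the matchings to shifts coming from a 3-AP-free (Behrend) set, or use a set system with bounded pairwise intersections, as~\cite{FLNRRS02} do.

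Second, I would count the matchings. The number of admissible \(S\) is \(\binom{d}{d/3}=2^{\Theta(d)}\). Choosing \(d\approx \log n/\log\log n\) and \(m\approx\log n\) gives \(t\ge 2^{\Omega(d)}=n^{\crs/\log\log n}\), where \(\crs\) depends on how close to \(1/2\) the matching size is pushed.

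Third, I would make the vertex sides equal to exactly \(n\) for every large \(n\). This is done by padding with isolated vertices, or by choosing parameters with \(m^d\) just below \(n\). Padding loses only a \(1+o(1)\) factor in \(r/n\), which is absorbed into \(\drs\), and monotonicity in \(n\) handles the rounding. Finally, any matchings of size larger than \(r\) are trimmed to exactly \(r=\lfloor(\frac12-\drs)n\rfloor\) edges. Trimming keeps a matching induced and keeps the matchings edge-disjoint, provided we discard the removed edges from the graph. Discarding edges can only help inducedness.

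The main obstacle is the concentration step that makes each matching cover nearly half the vertices. It must show that for a uniformly random \(x\), the probability that \(x\) lies in the middle layer for \(S\) and that its shifted image stays in range tends to \(\frac12\) as \(m\to\infty\). This holds by a central-limit or Chernoff argument on the coordinates in \(S\). The constants must then be balanced so that this probability exceeds \(\frac12-\drs\) while the number of sets \(S\) stays \(2^{\Omega(d)}\); this tradeoff is what determines \(\crs(\drs)\). Since the paper treats the result as imported, in the write-up I would simply cite~\cite{GoelKK12} for these calculations.
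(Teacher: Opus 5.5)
Since the paper gives no proof of this proposition and imports it from \cite{GoelKK12}, your decision to cite it in the write-up is exactly what the paper does. The reconstruction you sketch along the way, however, would not prove the statement. The gap is in the step you treat as routine, inducedness, not in the concentration step.

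Read your ``middle layer'' as a window on $\sum_{j\in S}x_j$ for the matching $x\mapsto x+\mathbf 1_S$. Suppose an edge of $M_T$ joins endpoints $a$ and $b+\mathbf 1_S$ of $M_S$. Then $b=a+\mathbf 1_T-\mathbf 1_S$, so $\sum_{j\in S}b_j=\sum_{j\in S}a_j-|S\setminus T|$. For this sum argument to exclude the edge, an interval window must have length at most $\min_{T\neq S}|S\setminus T|\le s$. But keeping $x+\mathbf 1_S$ inside $[m]^d$ with constant probability forces $m=\Omega(s)$. Then $\sum_{j\in S}x_j$ has spread $\Theta(m\sqrt s)$, and the window holds only an $O(\sqrt s/m)=O(1/\sqrt s)$ fraction of the vertices. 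So the step you flagged as the main obstacle cannot give density near $\frac12$ for this condition.

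Indexing by several shifts per set makes things worse. The relation $a+c'\mathbf 1_T=b+c\,\mathbf 1_S$ leaves the $S$-sum unchanged whenever $c'|S\cap T|=c|S|$, and a 3-AP-free set of shifts does not exclude that relation. For the same reason you cannot take all $\binom{d}{d/3}$ sets, as in your count: the family must be a sparse code.

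The missing idea is a \emph{periodic} window matched to a nearly constant set difference.
\begin{itemize}
\item Take $s$-subsets $S\subseteq[d]$ from a family $\mathcal S$ in which every $|S\setminus T|$ ($S\neq T$) lies in $[\lambda,P-\lambda]$, with $\lambda\approx(\frac12-\drs)P$. For random sets with $s=P=d/2$, Hoeffding's bound and a union bound give such a family of size $2^{\Omega(\drs^2 d)}$.
\item Let $M_S$ match $x$ to $x+\mathbf 1_S$ whenever $x_j\le m-1$ for all $j\in S$ and $\sum_{j\in S}x_j \bmod P\in[0,\lambda)$.
\item Inducedness: in the situation above, $\sum_{j\in S}b_j \bmod P$ lands in $[\lambda,P)$, so $b$ is not a left endpoint of $M_S$.
\item Density: for $m\gg s/\drs$, equidistribution modulo $P$ gives $|M_S|\geq(\frac12-2\drs)m^d$.
\item Parameters: with $m=\Theta(d/\drs)$ this yields $t=n^{\Omega(\drs^2)/\log\log n}$.
\end{itemize}
This mechanism, not a central-limit computation, is where the constant $\frac12$ comes from: the window can fill at most half a period. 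It is also why $\crs$ depends on $\drs$, through the concentration of $|S\setminus T|$. Your trimming and padding steps are fine.
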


In the first proof, the
displayed values of \(r\) and \(t\) omit floors for readability; if the
construction supplies more than the required number of induced matchings, we
discard the extras.  Section~\ref{sec:reduction} writes the rounding explicitly.

\begin{definition}[Embedding product]\label{def:embed}
Let \(Q=(L_{\rm out}\cup R_{\rm out},E_{\rm out})\) be an \((r,t)\)-RS graph
with induced matchings
\[
    M_i=\{e_{i,1},\ldots,e_{i,r}\},
    \qquad
    e_{i,j}=(u_{i,j},v_{i,j}),
\]
where \(u_{i,j}\in L_{\rm out}\) and \(v_{i,j}\in R_{\rm out}\). Let
\(\mathcal H=\{H_{i,j}:i\in[t],j\in[r]\}\) be a family of bipartite graphs on
a common vertex set \(L_{\rm in}\cup R_{\rm in}\). The embedding of
\(\mathcal H\) into \(Q\), denoted by \(\embed(\mathcal H\to Q)\), has left
side \(L_{\rm out}\times L_{\rm in}\) and right side
\(R_{\rm out}\times R_{\rm in}\). For every \(i\in[t]\), \(j\in[r]\), and edge
\((x,y)\in E(H_{i,j})\), add the edge
\[
    ((u_{i,j},x),(v_{i,j},y)).
\]
We say that this edge is added with respect to the outer edge \(e_{i,j}\).
\end{definition}

% !TeX root = main.tex 
%!TEX root = main.tex
\section{Technical Overview}\label{sec:technical-overview}

\newcommand{\MBM}[3]{\ensuremath{\textnormal{\ensuremath{#3}-\textbf{MBM}}_{#1,#2}}}

\newcommand{\player}[1]{\ensuremath{{P}_{#1}}}

\newcommand{\tG}{\ensuremath{\widetilde{G}}}

\newcommand{\tE}{\ensuremath{\widetilde{E}}}

\newcommand{\Gstar}{\ensuremath{G^{\star}}}
\newcommand{\Gstarb}{\ensuremath{\overline{G^{\star}}}}

\newcommand{\Mstar}{\ensuremath{M^{\star}}}

\newcommand{\cMstar}{\ensuremath{\mathcal{M}^{\star}}}

\newcommand{\Mstarb}{\ensuremath{\overline{M^{\star}}}}

\newcommand{\dmark}{\ensuremath{\delta_\text{markov}}}

\newcommand{\ristar}{\ensuremath{\rv{i}^\star}}

\newcommand{\rjstar}{\ensuremath{\rv{j}^\star}}

\newcommand{\rH}{\ensuremath{\rv{H}}}

\newcommand{\sigmae}{\ensuremath{\overline{\sigma}}}

\newcommand{\Sigmae}{\ensuremath{\overline{\Sigma}}}

\newcommand{\rSigma}{\ensuremath{\rv{\Sigma}}}

\newcommand{\rPi}{\ensuremath{\rv{\Pi}}}

\newcommand{\dist}{\ensuremath{\textnormal{dist}}}

\newcommand{\Gbar}{\ensuremath{\overline{G}}}

\newcommand{\Lin}{\ensuremath{L_{\textnormal{in}}}}
\newcommand{\Rin}{\ensuremath{R_{\textnormal{in}}}}

\newcommand{\Lem}{\ensuremath{L_{\textnormal{em}}}}
\newcommand{\Rem}{\ensuremath{R_{\textnormal{em}}}}

\newcommand{\Laux}{\ensuremath{L_{\textnormal{aux}}}}
\newcommand{\Raux}{\ensuremath{R_{\textnormal{aux}}}}

\newcommand{\Maux}{\ensuremath{M^{\textnormal{aux}}}}

\newcommand{\Min}{\ensuremath{M^{\textnormal{in}}}}

\newcommand{\Vin}{\ensuremath{V^{\textnormal{in}}}}

\newcommand{\kappars}{\ensuremath{\kappa_{\textnormal{rs}}}}

\newcommand{\Vstar}{\ensuremath{V^{\star}}}
% !TeX root = main.tex 
%!TEX root = main.tex

\subsection{Prior Work}

Our work builds on the prior lower bounds for approximate matchings in \cite{GoelKK12} and \cite{KonradN24}, and our first proof also builds on the recent maximal independent set lower bound of \cite{AKNS24}. We begin with a short description of the prior work.

\paragraph{The one-pass lower bound of \cite{GoelKK12}.}

\cite{GoelKK12} pioneered the use of RS-graphs in proving graph streaming lower bounds. They proved that any streaming algorithm that obtains a $(2/3 + \eps)$-approximation requires $n^{1+\Omega(1/\log \log n)}$ space. 

Formally, RS-graphs are defined in \Cref{def:rs-graph}. Informally, their edges are a union of $t$ disjoint induced matchings, each of size $r$.\footnote{An induced matching is a matching where there are no other edges of the graph between the endpoints of the matching.} They are globally dense, but the vertex subsets corresponding to the induced matchings are sparse. See \Cref{fig:rs} for an illustration.

\begin{figure}[h!]
	\centering
	\includegraphics[scale=0.07,angle=90]{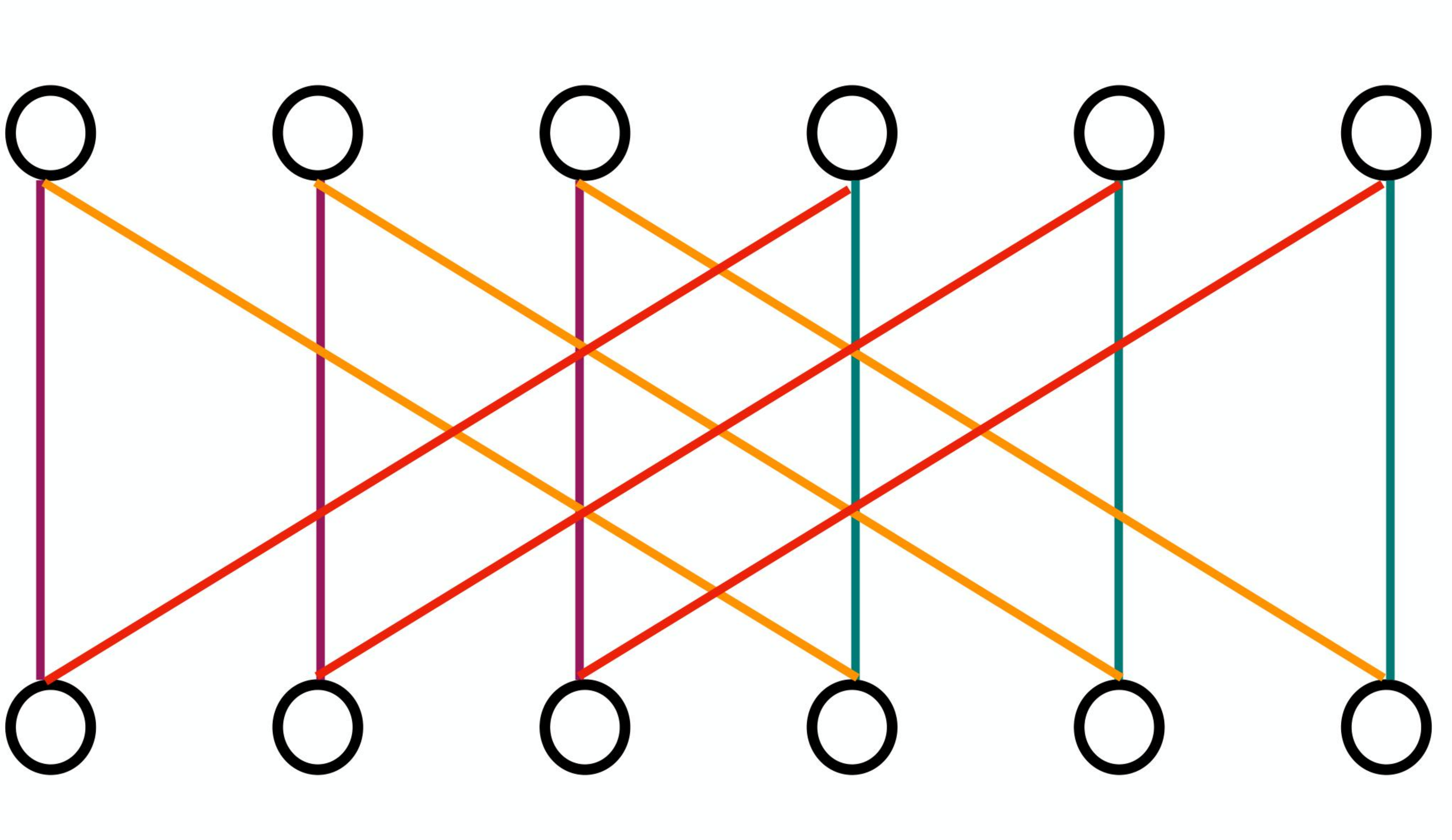}
	\caption{ An RS-graph with parameters $t = 4$, and $r = 3$. There are four induced matchings, each of size three. Each induced matching is shown in a different colour. }\label{fig:rs}
\end{figure}

\cite{GoelKK12} constructs an instance $G = (V, E)$ as follows with $n$ vertices. The base is an $(r, t)$-RS graph, and each edge of the RS-graph is removed independently with probability $1/2$ with $r \cdot t = n^{1+1/\log \log n}$. One induced matching is picked to be special with auxiliary edges and vertices. These edges are revealed only after all the edges of the RS-graph are sent, so the identity of the special induced matching is hidden. The instance is constructed so that the following statements are true:
\begin{enumerate}[label=$(\roman*)$]
	\item The graph $G$ has a large matching of size $\mu(G) = \Omega(n)$.
	\item Let $\Vstar$ be the vertex set corresponding to the special induced matching. The set $V \setminus \Vstar$ forms a small vertex cover of size at most $2\mu(G)/3$.
	\item Any matching of size $(2/3 + \eps) \cdot \mu(G)$ must find at least $\Omega(\eps)$-fraction of the edges of the induced matching.
\end{enumerate}
Refer to \Cref{fig:gkk12} for an illustration. Without knowing the identity of the induced matching among the $t$-many choices, an algorithm is forced to remember almost \emph{all} the edges of the RS-graph.\footnote{Remembering the edge corresponds to whether the edge was removed or not.} This requires $\Omega(rt) = \Omega(n^{1+1/\log \log n})$ space,\footnote{The specific parameters are given in \Cref{prop:RS-gkk12}.} corresponding to the dense RS-graphs used in \cite{GoelKK12}. This is more than semi-streaming space, hence a better approximation than $2/3$ is not possible with one pass.\footnote{Stronger lower bounds are now known for one pass algorithms \cite{Kapralov13,Kapralov21,AJX26}, and we know that any algorithm in one pass in semi-streaming cannot do better than a $0.55$ approximation. The best algorithm however still finds a greedy matching with a $0.5$-approximation. }

\begin{figure}[ht!]
	\centering
	\begin{subfigure}[b]{0.45\textwidth}
		\centering
		\includegraphics[scale=0.25]{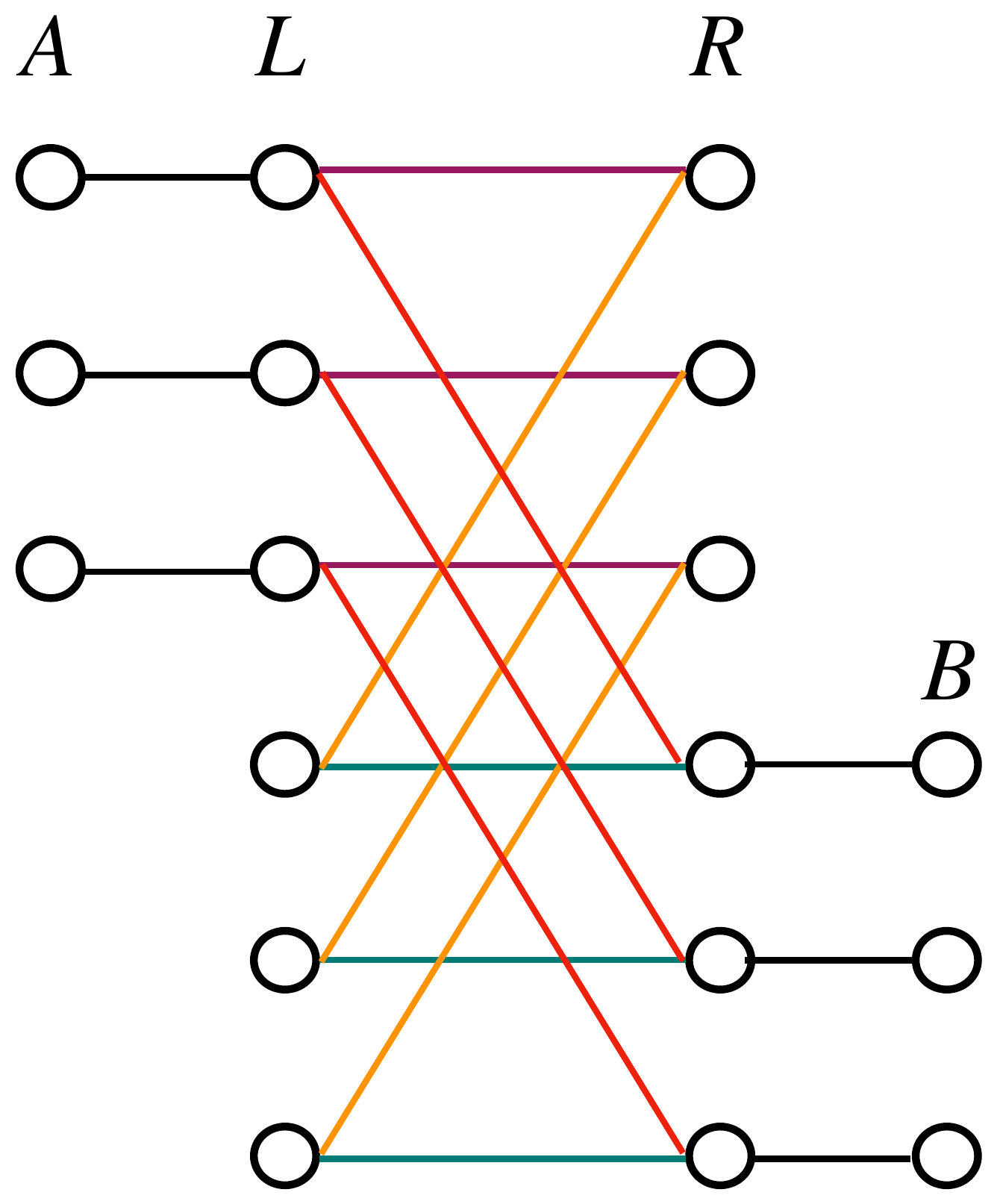}
		\caption{Given an RS graph on vertices $L \cup R$ with $4$ matchings of size $3$ each, auxiliary vertices $A \cup B$ are added, which connect to all vertices except the end points of the special orange matching. }\label{fig:RSsubgkk1}
	\end{subfigure}
	\hfill
	\begin{subfigure}[b]{0.45\textwidth}
		\centering
		\includegraphics[scale=0.25]{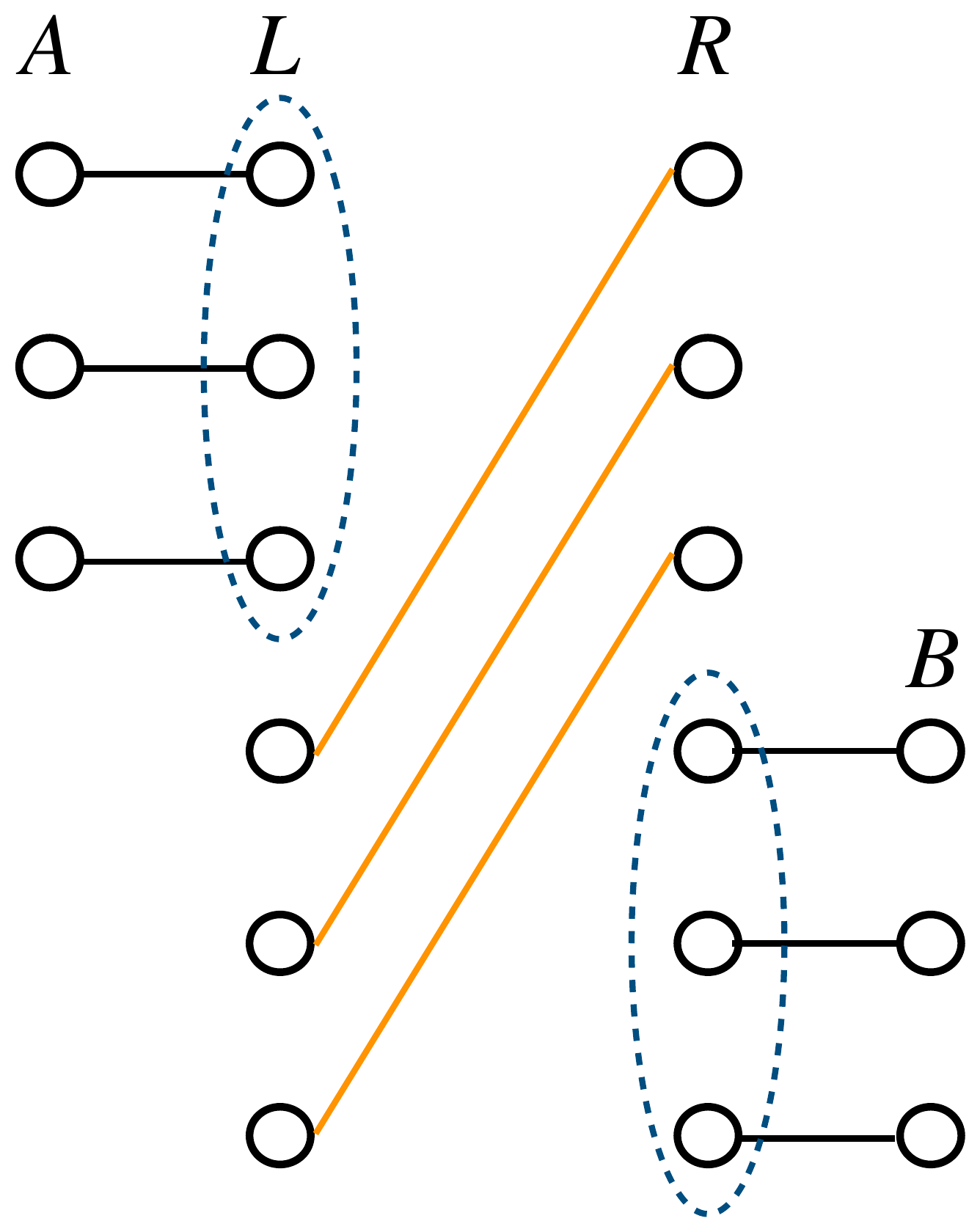}
		\caption{The orange edges of the RS graph are present with probability $(1/2)$ independently. The vertices in the blue circle (the set $V \setminus \Vstar$) form a small vertex cover of the other edges in the graph.}
	\end{subfigure}
	
	\caption{One round hard instance for $(2/3)$ approximate matching in \cite{GoelKK12}. Any $(2/3+\eps)$-approximate matching must find at least an $\Omega(\eps)$-fraction of the yellow edges.}
	\label{fig:gkk12}
\end{figure}

\paragraph{The two-pass lower bound of \cite{KonradN24}.}

To extend \cite{GoelKK12} into a two-pass lower bound, each edge of the RS-graph is replaced with a one-pass hard instance. More concretely, there are two levels of RS-graphs: an outer one with $n_1$ vertices and an inner one with $n_2$ vertices. The special matching of the outer graph is revealed at the end of the first pass. Each edge of this graph is \emph{another independent hard instance} for one pass algorithms. 

This gives a hard instance for two-pass algorithms: in the first pass, the special matching of the outer graph will be unknown, and in the second pass, all the special matchings of the smaller independent instances will be unknown. See \Cref{fig:2roundGKK} for an illustration.

\begin{figure}[ht]
	\centering
	\includegraphics[scale=0.07]{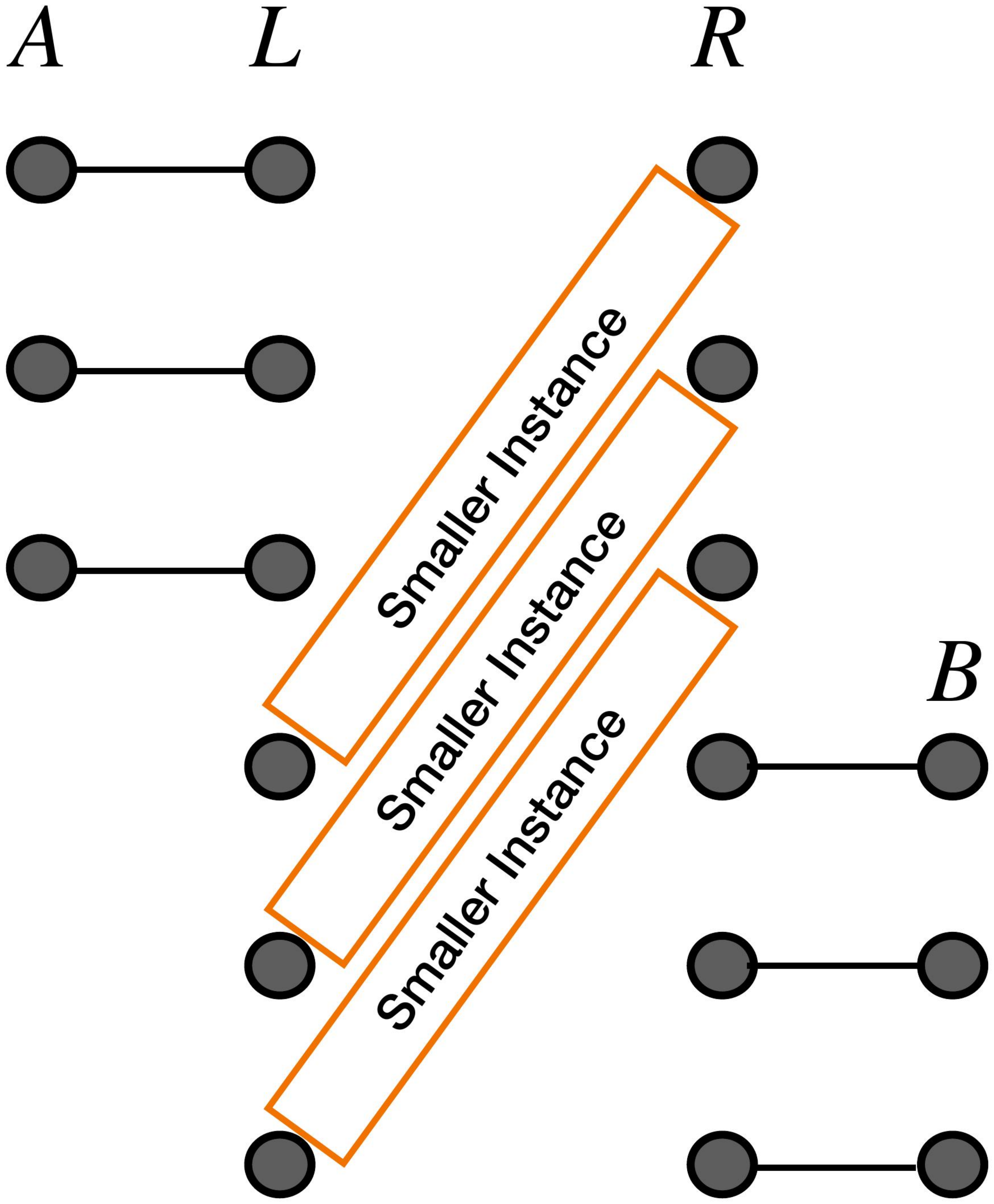}
	\caption{A hard instance constructed by \cite{KonradN24}, any $(8/9+\eps)$-approximate matching finds at least an $\Omega(\eps)$-fraction of the edges of the special matchings in the smaller instances. The smaller instances look exactly as in \Cref{fig:RSsubgkk1}. }\label{fig:2roundGKK}
\end{figure}

\cite{KonradN24} implements this approach to prove that any algorithm with an approximation ratio better than $(1-(1/3)^2)=(8/9)$ requires more than semi-streaming space (see \Cref{fig:2roundGKK}). In their instance:
\begin{enumerate}
	\item The auxiliary edges of the outer instance cover a $(2/3)$ fraction of the vertices, and the auxiliary edges of the smaller instances in the special matching cover another $(2/3)$ fraction of the vertices. In total, these auxiliary edges cover:
	\[
	\frac23 + \frac13 \cdot \frac23 = \frac89
	\]
	fraction of the vertices. Thus, the lower bound holds when an approximation ratio larger than $(8/9)$ is found. 
	\item The outer graph is chosen with $n_1 \approx n_2^{2\log \log n_2}$ vertices. The total number of vertices will be $n_2 \cdot n_1$.
	Any streaming algorithm has two choices:
	\begin{enumerate}
		\item Use $\Omega(n_1^{1+1/\log \log n_1})$ space to gain information about the $\approx n_1$ many smaller special instances in the first pass;
		\item Or spend $\Omega(n_1 \cdot n_2^{1+1/\log \log n_2})$ space to solve a majority of the smaller instances in the second pass without help from the first pass.
	\end{enumerate}
	\cite{KonradN24} proves that either one of these cases must happen, leading to a lower bound of
	\[
	n_1 \cdot \min\{{n_1^{1/\log \log n_1},  n_2^{1+1/\log \log n_2}}\} = (n_1 \cdot n_2)^{1+\Omega(1/\log \log (n_1 \cdot n_2))^2},
	\]
	which is not possible in semi-streaming.
\end{enumerate}

The techniques of \cite{KonradN24} however, do not extend beyond two passes, since in their information theoretic component they rely on the two-way communication complexity tradeoffs of the Index problem \cite{JainRS03}. We extend this framework further, and show that any $p$-pass algorithm cannot get an approximation ratio better than $(1-1/3^p)$ in semi-streaming space, which in turn proves that any $(1-\eps)$-approximation algorithm requires $\Omega(\log (1/\eps))$-passes.  We give two distinct proofs of our result.

\subsection{Our Approach} 

It is fairly straightforward to construct hard instances for multiple passes by extending the construction of \cite{KonradN24}. To construct a hard instance for $p$-passes we simply take an outer RS-graph with $n_p$ vertices, and replace the edges with hard instances for $(p-1)$-passes. 
Similar to the proof of \cite{KonradN24}, finding any approximation larger than $(1-\Omega(1/3^{p})+\eps)$-factor will require an algorithm to find many special edges in the last layer of these RS-graphs, and one can expect this to require $n^{1+1/(\log \log n)^{\Omega(p)}}$ space, which is larger than semi-streaming space. 

While extending the construction of \cite{KonradN24} may be simple, it is unclear how to prove the lower bound via information-theoretic arguments. We give two distinct approaches to do this. 

\paragraph{First Approach: Hierarchical Embedding of \cite{AKNS24}.}

This proof relies on the information-theoretic framework of \cite{AKNS24}, which proves an $\Omega(\log \log n)$ lower bound in the number of passes for finding a maximal independent set. While \cite{AKNS24} also involves combinatorial constructions pertaining to embedding non-bipartite graphs, the information-theoretic round elimination arguments are more relevant for our purpose. 

A hard instance for $p$-passes is constructed as follows. 
There are $t$-groups of smaller $(p-1)$-pass instances, and each group contains $r$ of these instances (in total, there are $r \cdot t$ many instances). To solve the $p$-pass hard instance, \emph{all} the smaller instances in a specially chosen group have to be solved. The identity of the specially chosen group is only revealed after all the $t$-groups are seen in the first pass.

An algorithm that solves the $p$-pass hard instance has two choices:
\begin{enumerate}[label=$(\roman*)$]
	\item Gain some advantage to solve all the $r\cdot t$ many $(p-1)$-pass instances in the first pass using $O(r \cdot t)$ space. 
	\item After the special group is known, spend $r \cdot s_{p-1}$ space to solve all the instances in $p-1$ passes, where $s_{p-1}$ is the space required to solve a smaller hard instance if only $p-1$ passes are given. 
\end{enumerate}

\cite{AKNS24} proves that this algorithm is optimal by using direct sum arguments \cite{BarakBCR10} and message compression arguments \cite{BravermanG14,RaoY20}. We apply this round elimination argument with these changes:

\begin{enumerate}[label=$(\roman*)$]
	\item Each group of smaller instances corresponds to an induced matching, and picking a group corresponds to identifying a special induced matching. 
	\item In place of solving \emph{all} the smaller instances in the special group, we show the statement holds even if only a large constant fraction of them have to be solved. This corresponds to a large matching in our graph only identifying many of the special edges and not all of them.
\end{enumerate}

\paragraph{Second Approach: Direct Proof via Induction.}

The second proof uses the same hierarchical graph construction, but proves the
communication lower bound from scratch rather than reducing to
\(\mathsf{Index}\) as \cite{KonradN24} does.  The communication problem has \(p+1\) players
\(P_1,\ldots,P_{p+1}\), \(p\) passes, and parameters
\((r_1,t_1),\ldots,(r_p,t_p)\).  Think of \(P_1\) as holding a large table of
bits, while \(P_2,\ldots,P_{p+1}\) hold instructions for which entries of this
table are relevant.  The final player must output one bit for every tuple
\((y_1,\ldots,y_p)\in[r_1]\times\cdots\times[r_p]\); here \(r_k\) is the
number of possibilities for the \(k\)-th output coordinate.  To locate the bit
for a fixed tuple, the players' inputs also define hidden indices
\(i_1,\ldots,i_p\), where \(i_k\in[t_k]\).  Namely, \(P_{p+1}\)'s input gives
\(i_1\), \(P_p\)'s input gives \(i_2\) as a function of \((i_1,y_1)\), and so
on, until \(P_2\)'s input gives \(i_p\).  The desired bit is then the entry of
\(P_1\)'s table indexed by \((i_1,y_1,\ldots,i_p,y_p)\).

There are simple protocols for this problem.  If the players had one more pass,
they could reveal the hidden indices one level at a time and then ask \(P_1\)
for the corresponding table entries.  With only \(p\) passes, one level has to
be handled before the relevant hidden index is known.  The natural workaround
is to choose a level and pay for all hidden possibilities at that level.  For
example, \(P_1\) can send all bottom bits for every possible final hidden index,
costing about \(r_1\cdots r_p t_p\) bits.  Or \(P_2\) can first send its whole
table of final hidden indices; then \(P_1\) only needs to send the bottom bits
selected by that table, so the leading cost moves to about
\(r_1\cdots r_{p-1}t_{p-1}\).  In general, choosing level \(k\) costs roughly
$r_1\cdots r_k t_k$, 
and the direct lower bound says that the best of these natural protocols is
optimal up to logarithmic factors.

The proof formalizes this intuition by induction on the levels.  At level
\(k\), after fixing the hidden indices from earlier levels, one player holds a
large table \(F_k\), and another player holds a selector \(G_k\) that chooses
one column of \(F_k\) in each relevant row.  The transcript can learn the
selected table \(F_{k,G_k}\) in two ways: by learning the whole table \(F_k\),
which is diluted by the \(t_k\) possible columns, or by learning the selector
\(G_k\), whose information is controlled by the previous level.  Iterating this
recurrence bounds the information in the final selected table, which is exactly
the output vector.  Since constant Hamming advantage requires
\(\Omega(r_1\cdots r_p)\) information about this vector, the communication must
be
\[
    \widetilde{\Omega}\!\left(\min_{k\in[p]} r_1\cdots r_k t_k\right).
\]

% !TeX root = main.tex 
%!TEX root = main.tex
\section{First Proof: Hierarchical Embeddings}\label{sec:he-proof}

Our result is proved by a $p$-round communication lower bound for approximate maximum matching (see \Cref{prop:cc-stream}). We state our result formally in terms of communication complexity first.

\begin{theorem} \label{thm:cc-multi-MBM}
	For any $p \in \IN$, $0 < \eps \leq 1$, and sufficiently large integer $n \geq 1$, any $p$-round $(p+1)$-player protocol $\prot$ that outputs an $(1- \eps)$-approximate maximum matching with
\[
\eps < \paren{\frac{1-2\drs}{3+2\drs}}^p,
\] 
where $\drs$ is the constant from \Cref{prop:RS-gkk12}, and has at least $2/3$ probability of success has communication cost
$$\cc{\prot} = n^{1+\Omega\paren{1/(\log\log n)^{3p}}}.$$
\end{theorem}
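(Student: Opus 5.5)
We will prove \Cref{thm:cc-multi-MBM} by induction on $p$, using a hierarchical hard distribution $\GG_p$ together with a round-elimination argument in the style of \cite{AKNS24}.

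\textbf{The hard distribution.} We take $\GG_1$ to be the \cite{GoelKK12} instance. It consists of an $(r_1,t_1)$-RS graph $Q_1$ from \Cref{prop:RS-gkk12}, held by player $P_1$, in which each edge is kept independently with probability $1/2$. Player $P_2$ holds the auxiliary edges that reveal a uniformly random special index $i^\star\in[t_1]$.

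For $p\ge 2$, we take an outer $(r_p,t_p)$-RS graph $Q_p$ on $n_p$ vertices per side. Using \Cref{def:embed}, we set the instance to be $\embed(\mathcal H\to Q_p)$, where $\mathcal H=\{H_{i,j}\}$ are independent copies of $\GG_{p-1}$. Players $P_1,\dots,P_p$ simulate the inner instances coordinate-wise. The new last player $P_{p+1}$ holds auxiliary vertices that connect to everything outside the blocks of the special outer matching $M_{i^\star}$.

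\textbf{Combinatorial step.} We show the following two properties.
\begin{enumerate}[label=$(\roman*)$]
\item $\mu$ is achieved by a matching that, inside the special blocks, uses a near-maximum matching of each inner instance.
\item Outside $V^\star$, there is a vertex cover of size at most $\frac{2+2\drs}{3+2\drs}$ times the relevant share.
\end{enumerate}
Given these, a counting argument on the vertex cover yields the following. Any $(1-\eps)$-approximate matching with $\eps<\paren{\frac{1-2\drs}{3+2\drs}}^p$ must, restricted to the special blocks $\{H_{i^\star,j}\}_{j\in[r_p]}$, give a $(1-\eps')$-approximation with $\eps'<\paren{\frac{1-2\drs}{3+2\drs}}^{p-1}$ on at least a constant fraction of the $r_p$ inner instances. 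This is a Markov-type averaging argument: the loss budget $\eps\mu$ can only be spread over a small fraction of the blocks. At the base level, this says that a constant fraction of the special-matching edges of $Q_1$ must be output, which identifies the corresponding hidden random bits.

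\textbf{Information-theoretic round elimination.} Suppose a $p$-round protocol $\prot$ with communication $s$ solves $\GG_p$. We consider $\prot$'s first message, which is sent before $i^\star$ is known. By a direct-sum argument over the $t_p$ groups (\cite{BarakBCR10}), for a random $i^\star$ this message carries on average only $s/t_p$ bits of information about the $r_p$ inner instances of group $i^\star$. Hence it carries about $s/(r_pt_p)$ bits about a typical inner instance $H_{i^\star,j}$.

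We then embed a single $\GG_{p-1}$ instance at a random $j$, sampling all other blocks publicly or privately. We compress the first message using \cite{BravermanG14,RaoY20}, so that the players can simulate it with $O(s/(r_pt_p))$ communication plus lower-order terms. This yields a $(p-1)$-round protocol for $\GG_{p-1}$ with cost about $s/r_p+s/(r_pt_p)$ and constant success probability.

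The success probability is maintained because a constant fraction of the blocks are solved. Here we need the inductive hypothesis in a robust form: it must hold for success on a constant fraction of the blocks rather than on all of them, so we carry a statement with success probability $\Omega(1)$ and adjust the constants.

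By induction, $s/r_p\ge n_{p-1}^{1+\Omega(1/(\log\log n)^{3(p-1)})}$. Alternatively, the first message must already be large, namely $s\gtrsim r_pt_p$. The parameters are $n=n_p\cdot n_{p-1}\cdots$, with $n_p$ chosen as a suitable power such that $t_p=n_p^{\crs/\log\log n_p}$ balances the two branches. Solving the recursion gives the exponent $1+\Omega(1/(\log\log n)^{3p})$.

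\textbf{Main obstacle.} The hard part is the round-elimination step. Compression must work against an arbitrary multi-round protocol in which later messages depend on the compressed first message. We also must ensure that conditioning on the revealed $i^\star$ and on the other blocks does not correlate the embedded instance with the inputs of other players. Keeping the inner instances independent across blocks, with the embedding determined by public randomness, is what we would rely on to make the simulation valid.
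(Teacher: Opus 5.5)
Your overall architecture matches the paper's: a hierarchical RS embedding, a direct-sum bound on the round-1 messages (using that they are sent before $i^\star$ is revealed), embedding one $\GG_{p-1}$ instance at a random position, compression, and balancing $s/r_p$ against $r_pt_p$. But the round-elimination step as you state it fails.

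\textbf{The round is not eliminated.} Compressing the first message to $O(s/(r_pt_p))$ bits does not remove a round. The compressed message still has to be sent, and the round-2 messages depend on it (e.g.\ $P_1$'s second message depends on the first messages of $P_2,\dots,P_p$), so the simulation still needs $p$ rounds. Nor does anything in your description bring the remaining rounds down to cost $s/r_p$; those messages can have total length $s$. The paper uses the two tools the other way round.
\begin{itemize}
\item The round-1 messages $\Pi_1(1),\dots,\Pi_1(p)$ are never sent. They are drawn with public coins from their prior. Each player then privately samples its input from $\GG_p$ conditioned on these messages, on $i^\star$, and on $H_{i^\star,j^\star}(a)=G_{p-1}(a)$. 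This is possible because, by the rectangle property, this conditional law is a product over players. The only error is that $G_{p-1}$ comes from its prior rather than from its law given $\Pi_1(\le p)$. Your own direct-sum estimate $\II(H_{i^\star,j^\star};\Pi_1(\le p)\mid i^\star,j^\star)\le s/(r_pt_p)$, together with Pinsker and Jensen, bounds this error in total variation by $\sqrt{s/(2r_pt_p)}$.
\item Compression is instead applied to every message of the protocol. First, a player-elimination step shows, via the chain rule over $j^\star$, that the information cost about the embedded instance is at most $s/r_p$. The $O(1)$ overhead per message then gives expected cost $\raoconst(s/r_p+p^2)$, and Markov truncation makes this worst-case at the price of $\eta$.
\end{itemize}
For that direct sum, the other blocks must be sampled privately, not publicly. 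With all other blocks public, the relevant sum is $\sum_j\II(H_j;\Pi\mid H_{-j})$, which can be $r_p$ times larger than $\II(H;\Pi)$. For example, take a one-bit transcript equal to the parity of one bit from each block.

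\textbf{The base distribution gives the wrong threshold.} Suppose every edge of the $(r_1,t_1)$-RS graph on $b$ vertices per side is kept with probability $1/2$, and $r_1=(1/2-\drs)b$. Then only about $r_1/2$ special edges survive, so $\mu\approx 2(b-r_1)+r_1/2=(5/4+3\drs/2)b$. Meanwhile the auxiliary matching alone, which the last player holds and may output at no cost, has $(1+2\drs)b$ edges, so it is already about a $4/5$-approximation. Hence for $p=1$ and $\eps$ between roughly $1/5$ and $\kappars\approx 1/3$, your instance is trivial and your combinatorial step fails. You need essentially all special edges to be present. The paper achieves this with the parallel/cross gadgets of $\GG_0$: every instance has a perfect matching, and the hidden bit is \emph{which} pair of edges is present.

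\textbf{The robust inductive hypothesis is left unformulated.} You flag this need but do not state the hypothesis. The paper avoids it by reducing once to the search predicate $\PP_p$: guess the type of one uniformly random ultimate gadget. This is preserved exactly by the embedding, because $(j^\star,J')$ is again a uniform search sequence. Its advantage drops only additively per level, down to the impossible $0$-round case. Your Markov-over-blocks route would lose a multiplicative constant and some approximation margin at each level. It would also need a base case for arbitrarily small constant success probability, which you do not supply.
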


Proof of \Cref{thm:main-lb} follows using \Cref{prop:cc-stream} and \Cref{thm:cc-multi-MBM}.

\begin{proof}[First Proof of \Cref{thm:main-lb}.]
		Let $\alg$ be a semi-streaming algorithm that gets a $(1-\eps)$-approximate maximum matching in $p$-passes and $\Ot(n)$ space with probability of success at least $2/3$. By the connection between streaming algorithms and communication protocols \Cref{prop:cc-stream}, $\alg$ gives rise to a protocol for approximate matching with the same guarantees. 
	
	Using \Cref{thm:cc-multi-MBM}, we know that the space required by such a protocol is at least
	\[
	n^{1+\Omega\paren{1/(\log\log n)^{3p}}} = \omega(n \log^c n),
	\]
	for any constant $c$ when $p = o(\log \log n/\log \log \log n)$, which is more than the space used by $\alg$. Therefore, the approximation ratio of such an algorithm should be at most:
	\[
	1-\eps >1-  \paren{\frac{1-2\drs}{3+2\drs}}^p,
	\]
	where $\drs$ is the choice of constant from \Cref{prop:RS-gkk12}. For sufficiently small $\drs$, this approximation ratio  is larger than $(1-1/3^p + \delta)$, proving the statement.
\end{proof}

The rest of this section is dedicated to the proof of \Cref{thm:cc-multi-MBM}.
The first part of this proof is the construction of a hard distribution for matchings. 
 The distributions are defined recursively as a family $\set{\GG_p}_{p \geq 0}$ where each $\GG_p$ is a hard distribution for $p$-round $(p+1)$-player protocols.

Before describing the hard distributions, we recall that the embedding product
from \Cref{def:embed} replaces every edge of an ``outer'' RS graph with one of
the ``inner'' graphs of a family. See \Cref{fig:embedRSmatch} for an
illustration. For ease of notation, we use $\Lem$ to denote the vertex set $L_{\rm out}\times L_{\rm in}$ and $\Rem$ to denote the set $R_{\rm out}\times R_{\rm in}$, corresponding to the vertices of the embedded graph. For matchings in RS-graphs, we use $\Mrs{i}$ to denote the $i^{\textnormal{th}}$ induced matching from the $t$ induced matchings.

\begin{figure}
	\centering
	\includegraphics[scale=0.1]{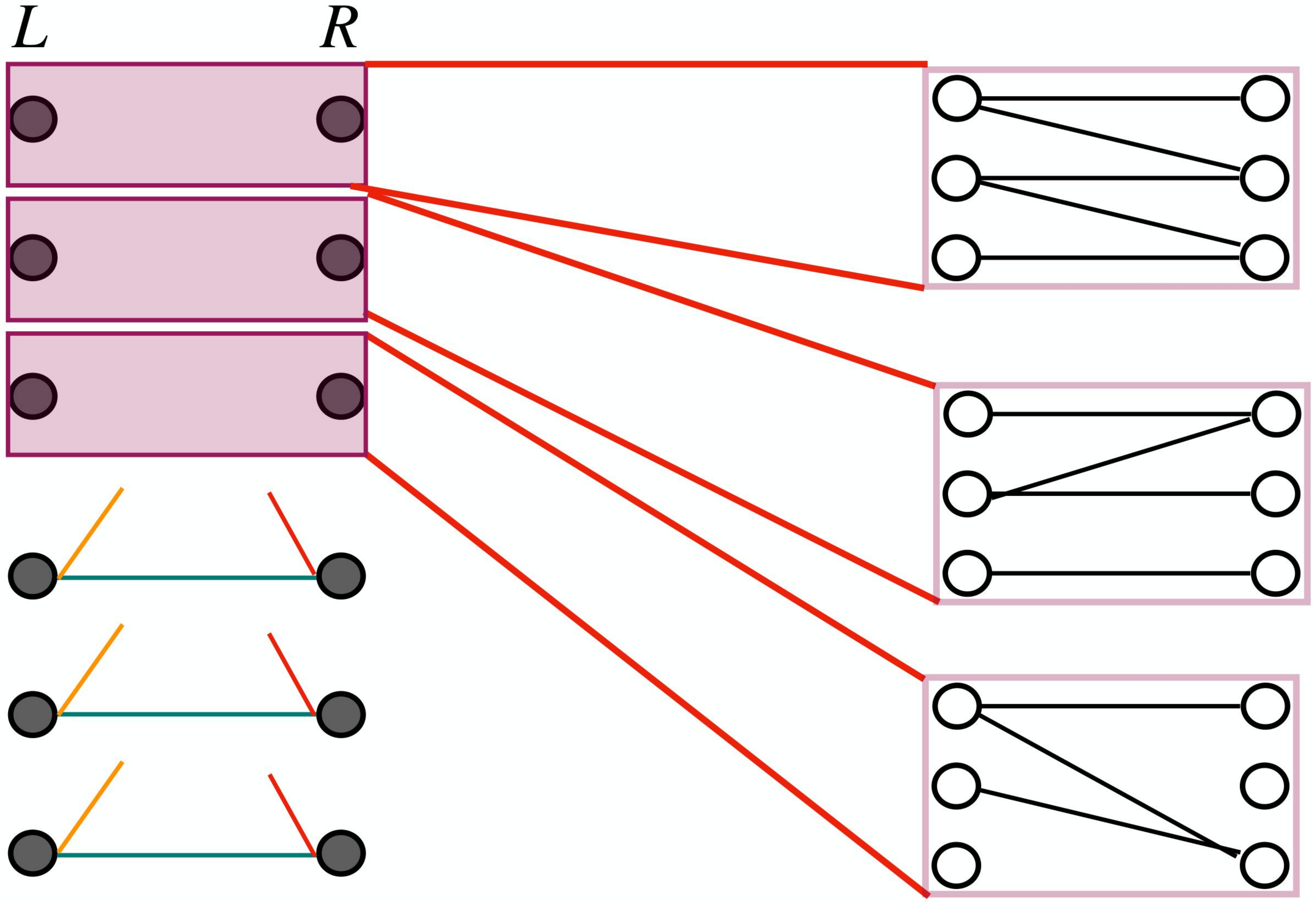}
	\caption{An example of the embedding product on RS graphs. An RS graph with parameters $r = 3, t = 4$ is used and there are totally 12 many smaller graphs. We show three such graphs on the right. The shaded circles indicate vertices which are groups of three. Each edge is replaced by a specific bipartite graph from the given family.}\label{fig:embedRSmatch}
\end{figure}

The following lemma captures the main property of these embeddings, which crucially relies on the matchings in a RS graph being induced and edge disjoint. 

\begin{lemma}\label{lem:match-embed-induced}
	Let $\Grs$ be a balanced bipartite $(r,t)$-RS graph for some $r,t \geq 1$, let $$\cH := \{H_{i,j} \mid i \in [t] , j \in [r]\}$$ be a family of balanced bipartite graphs and let $G := \embed(\cH \rightarrow \Grs)$. 
	For any $i \in [t]$, the \textbf{\emph{induced subgraph}} of $G$ on vertices corresponding to $\Mrs{i}$, i.e., $\{(v,*) \mid v \in e~\text{for some $e \in \Mrs{i}$}\}$, is a \emph{\textbf{vertex-disjoint union}} of graphs $H_{i,j}$ for $j \in [r]$ in $\cH$. 
\end{lemma}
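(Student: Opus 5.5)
Fix $i\in[t]$ and write $M^{\textnormal{rs}}_i=\{e_{i,1},\ldots,e_{i,r}\}$ with $e_{i,j}=(u_{i,j},v_{i,j})$. Let $U_i=\{u_{i,j}:j\in[r]\}\subseteq L_{\rm out}$ and $V_i=\{v_{i,j}:j\in[r]\}\subseteq R_{\rm out}$. The vertex set in question is $S_i:=(U_i\times \Lin)\cup(V_i\times\Rin)$. The plan is to show two things: first, every edge of $G$ with both endpoints in $S_i$ is added with respect to an outer edge of $M^{\textnormal{rs}}_i$; second, edges added with respect to distinct $e_{i,j}$ live on pairwise disjoint vertex blocks, each block being a copy of $H_{i,j}$.

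\textbf{Step 1 (edges inside $S_i$ come from $M^{\textnormal{rs}}_i$).} Take any edge of $G$ with both endpoints in $S_i$. By \Cref{def:embed}, it has the form $((u_{i',j'},x),(v_{i',j'},y))$ for some outer edge $e_{i',j'}$ and some $(x,y)\in E(H_{i',j'})$. Its endpoints lie in $S_i$, so $u_{i',j'}\in U_i$ and $v_{i',j'}\in V_i$. Thus $e_{i',j'}$ is an edge of $\Grs$ with both endpoints in $V(M^{\textnormal{rs}}_i)$. Because $M^{\textnormal{rs}}_i$ is induced in $\Grs$ (\Cref{def:rs-graph}), we get $e_{i',j'}\in M^{\textnormal{rs}}_i$. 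Because the induced matchings are edge-disjoint and each edge carries a unique label, this forces $i'=i$ and $e_{i',j'}=e_{i,j''}$ for some $j''$. One point needs care here: the edge $e_{i',j'}$ is a single edge of $\Grs$, and edge-disjointness is exactly what guarantees that its label $(i',j')$ is unique, so the graph $H_{i',j'}$ placed on it is well defined.

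\textbf{Step 2 (block decomposition).} Since $M^{\textnormal{rs}}_i$ is a matching, the $u_{i,j}$ are distinct and the $v_{i,j}$ are distinct. Hence the blocks $B_j:=(\{u_{i,j}\}\times\Lin)\cup(\{v_{i,j}\}\times\Rin)$, for $j\in[r]$, partition $S_i$. By Step 1, every edge of $G[S_i]$ is added with respect to some $e_{i,j}$, so both of its endpoints lie in the single block $B_j$. In particular, no edge of $G[S_i]$ crosses between blocks.

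\textbf{Step 3 (each block is a copy of $H_{i,j}$).} Within $B_j$, the map $(u_{i,j},x)\mapsto x$, $(v_{i,j},y)\mapsto y$ is a bijection onto $\Lin\cup\Rin$. The edges of $G$ inside $B_j$ are exactly those added for $e_{i,j}$, since no other outer edge joins $u_{i,j}$ to $v_{i,j}$. These edges correspond precisely to $E(H_{i,j})$, so the bijection is an isomorphism. Together with Step 2, this shows $G[S_i]$ is the vertex-disjoint union of the copies of $H_{i,j}$ for $j\in[r]$.

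I expect no real obstacle; the proof is a direct unwinding of the definitions. The only place that needs care is Step 1, where both properties of the RS graph are used: inducedness rules out edges of other matchings between endpoints of $M^{\textnormal{rs}}_i$, and edge-disjointness makes the embedded graph on each outer edge well defined.
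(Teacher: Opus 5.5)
Your proof is correct and takes essentially the same approach as the paper. Your Step 1 corresponds to the paper's Part (ii): inducedness of the $i$-th matching, together with edge-disjointness of the matchings, rules out edges coming from outer edges of other matchings. Your Steps 2--3 correspond to Part (i): vertex-disjointness of the matching edges places the copies of $H_{i,j}$ on disjoint blocks, and you are somewhat more explicit than the paper about the block partition and the isomorphism on each block.
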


\begin{proof}
	For any $i \in [t]$, let $V_i := \{(v,*) \mid v \in e_{i,j}~\text{for some $e_{i,j} \in \Mrs{i}$}\}$ denote the vertices of $G$ corresponding to the induced matching $\Mrs{i}$. 
	Let $G_i := G[V_i]$ be the induced subgraph of $G$ on $V_i$. We prove the result in two parts: (i) $G_i$ contains the disjoint union of $H_{i,j}$ for $j \in [r]$, and (ii) it does not contain any other edges.
	
	\paragraph{Part (i).} 
	Fix $j \in [r]$ and observe that all edges of $H_{i,j}$ are added to $G_i$ by the embedding product with respect to the edge $e_{i,j} = (u, v)$ in a way that ``preserves its structure''. 
	Formally, the bipartite edges of $H_{i,j}$ between vertices $\Lin$ and $\Rin$ are added, with a one-to-one correspondence, between the vertices $\{ u\} \times \Lin \subseteq \Lem$ and $\{ v \} \times \Rin \subseteq \Rem$ of $G_i$.
	Furthermore, since the edges of the matching $\Mrs{i}$ are vertex-disjoint, the edges added to $G$ with respect to the edges of $\Mrs{i}$ form a vertex-disjoint union of $H_{i,j}$ for all $j \in [r]$ in $G_i$.
	
	\paragraph{Part (ii).}
	It remains to argue that no other edges are added to $G_i$ by the embedding product. 
	Towards a contradiction, assume that $G_i$ contains an edge $f$ that is \emph{not} added with respect to an edge of $\Mrs{i}$. 
	This implies that $f$ is added with respect to an edge $e_{i',j'} = (u',v')$ of $\Grs$ for some $i' \neq i \in [t]$ and $j' \in [r]$, and that $u',v'$ must each be an endpoint of edges in $\Mrs{i}$.
	This, however, is a contradiction since $e_{i',j'}$ is an edge of $\Mrs{i'}$ that is edge disjoint from $\Mrs{i}$, and $\Mrs{i}$ is an induced matching.
\end{proof}

As previously mentioned, the distributions are defined recursively as a family $\set{\GG_p}_{p \geq 0}$ where each $\GG_p$ is a hard distribution for $p$-round $(p+1)$-player protocols. 
We use $\player{1},\player{2}$,$\ldots,\player{p+1}$ to denote the players in $\GG_p$. 
In addition, for any input graph $G$ and any $a \in [p+1]$, the input to player $\player{a}$ is denoted by $G(a)$. 

\paragraph{Base case for $p=0$.} The base case is defined for $p=0$ such that protocols with $0$ rounds of communication are required to solve a simple yet non-trivial matching problem (and is hence trivially impossible). 
Formally, the distribution is as follows and is illustrated in \Cref{fig:harddistG0}:

\begin{tbox}\textbf{The distribution $\GG_0$ over $n$-vertex graphs for $0$-round ``protocols''.}
	\begin{enumerate}
		\item Let $G$ be a balanced bipartite graph on $2n$ vertices. $G$ is made of $r_0 = (n/2)$ vertex disjoint gadgets, each defined as follows:
		\begin{enumerate}
			\item Let $H = (L ,R,E)$ be a gadget on vertices $L= \set{u_1, u_2}$ and $R= \set{v_1, v_{2}}$
			\item With probability $1/2$, the edges $(u_1, v_1)$ and $(u_2, v_2)$ are included in $E$ (a parallel gadget). Otherwise, the edges $(u_1, v_2)$ and $(u_2, v_1)$ are included in $E$ (a cross gadget).
		\end{enumerate}
		\item Player $\player{1}$ receives $G(1) := G$ as the input (this is the only player when $p=0$). 
	\end{enumerate}
\end{tbox}

\begin{figure}[h!]
	\centering
	\begin{subfigure}[b]{0.45\textwidth}
		\centering
		\includegraphics[scale=0.05]{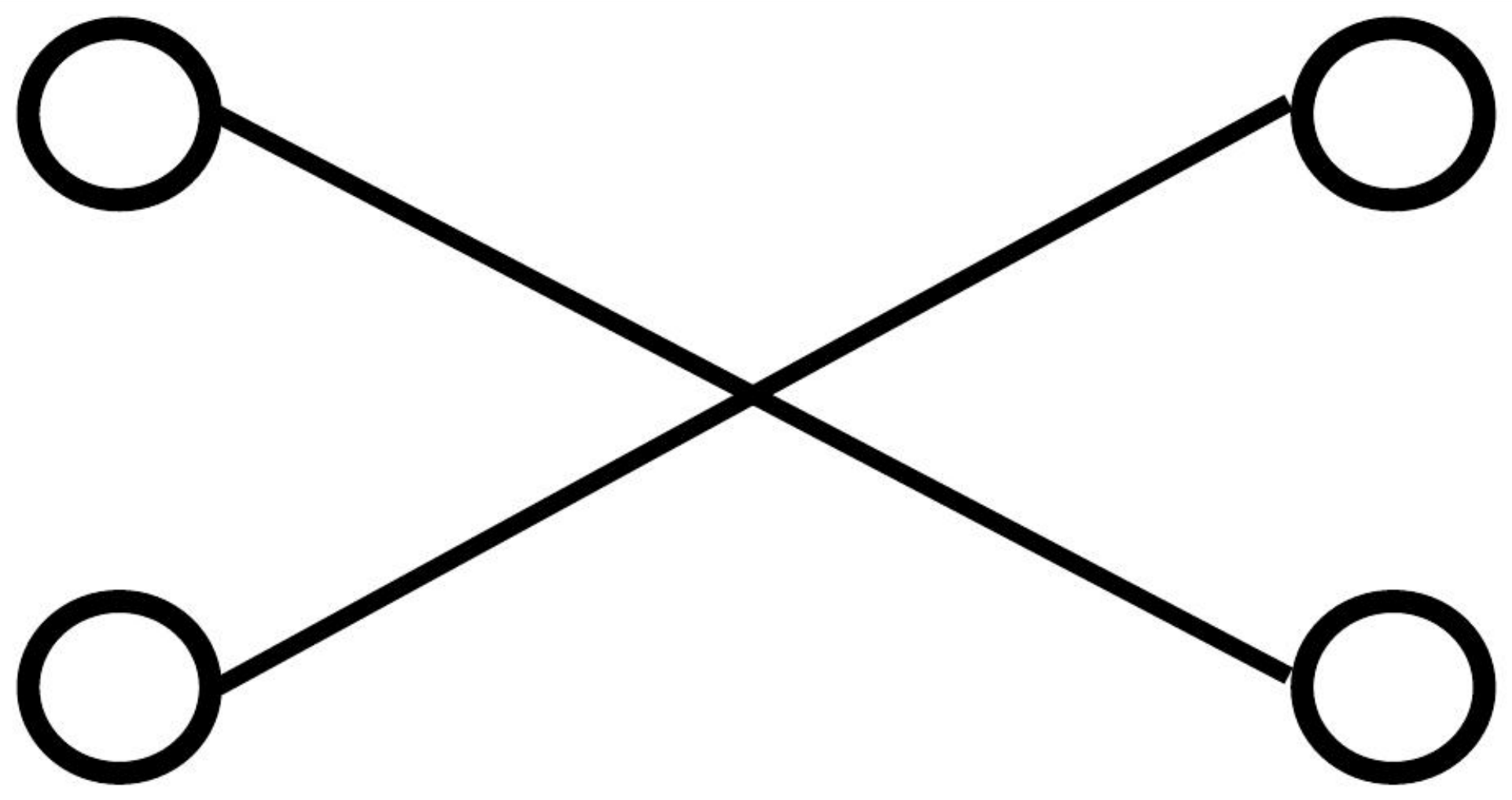}
		\label{fig:cross-gadget}
		\caption{A cross gadget.}
	\end{subfigure} 
	\hspace{25pt}
	\begin{subfigure}[b]{0.45\textwidth}
		\centering
		\includegraphics[scale=0.05]{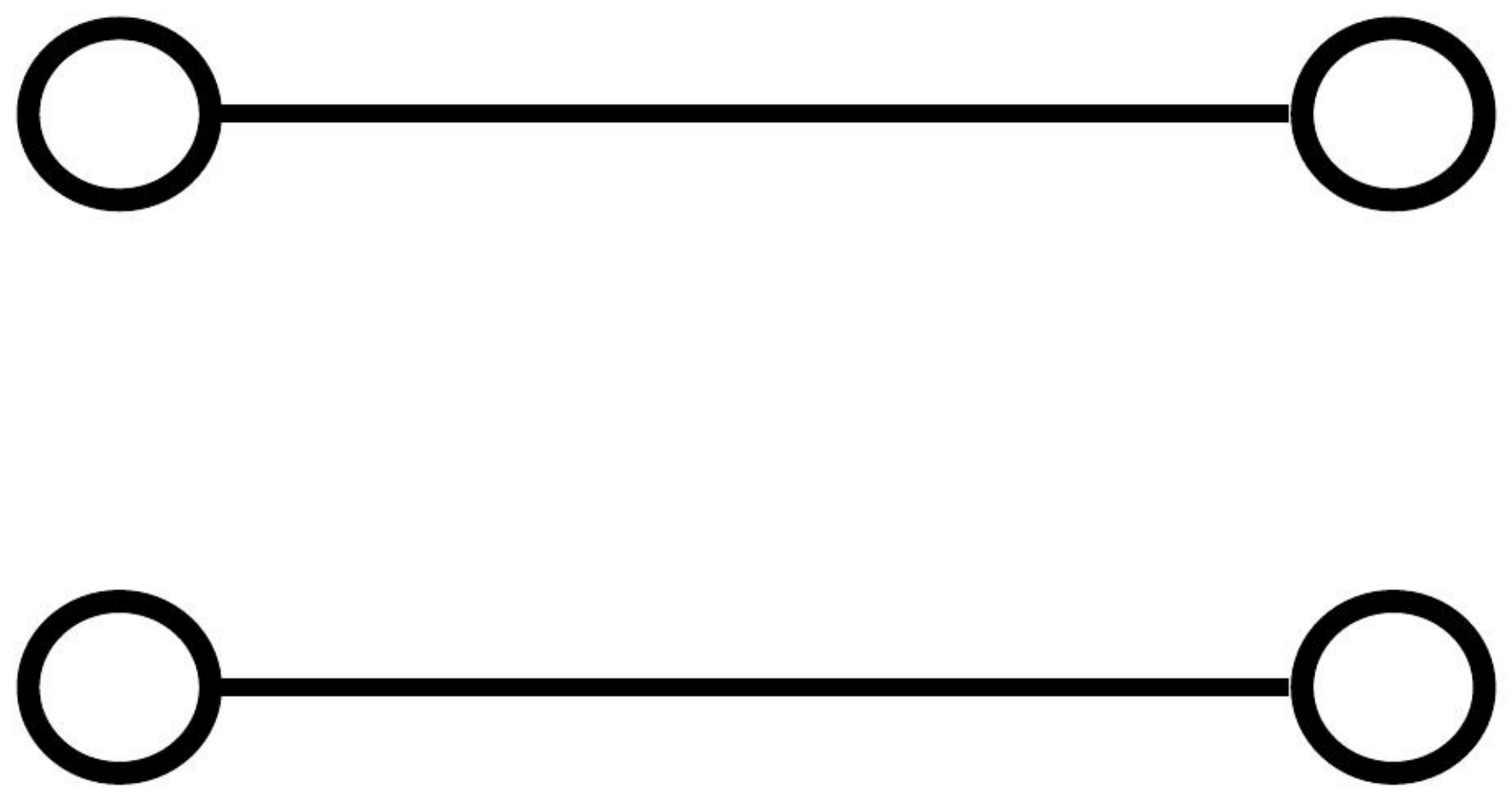}
		\label{fig:parallel-gadget}
		\caption{A parallel gadget.}
	\end{subfigure} \\	
	\begin{subfigure}[b]{0.8\textwidth}
		\centering
		\includegraphics[angle=90,scale=0.07]{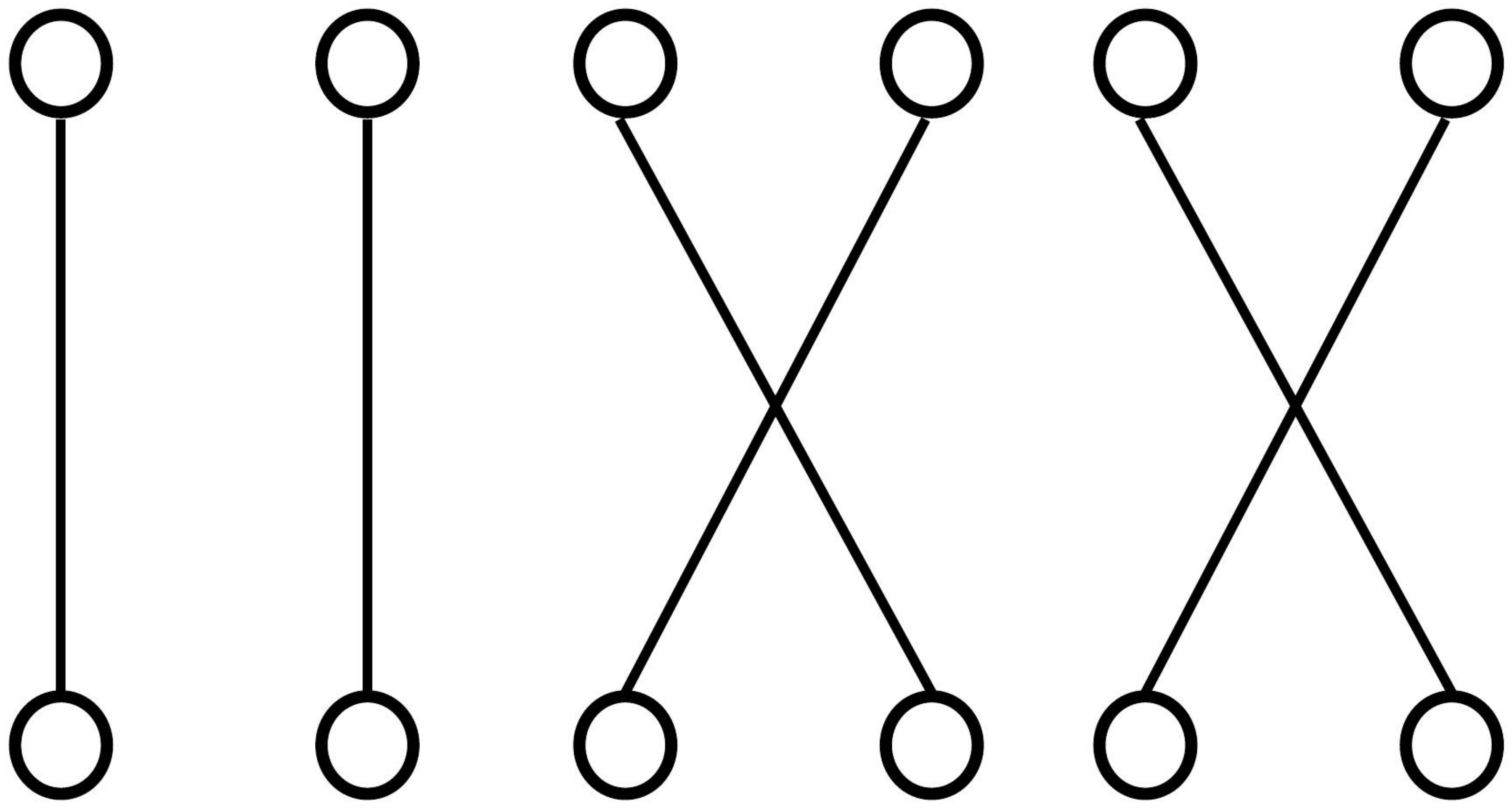}
		
		\caption{An example of a graph sampled from $\GG_0$ with $n = 6$, two cross gadgets and one parallel gadget.}
	\end{subfigure}
	\caption{The hard distribution $\GG_0$ and the gadgets.}\label{fig:harddistG0}
\end{figure}

We can immediately see that any graph sampled from $\GG_0$ is made up only of a single perfect matching of size $2 \cdot r_0 = n$. 
Learning a single edge of a gadget reveals the type of gadget, i.e., whether it is a parallel gadget or cross gadget, and thus also reveals the other edge. 
Furthermore, to learn the type of a gadget, any $0$-round ``protocol'', which has to commit to an answer without knowing the input, can {\em not} do better that random guessing.

\paragraph{Distributions for $p \geq 1$.}
The hard distribution $\GG_p$ for $p$-rounds will be a $(2n)$-vertex balanced bipartite graph constructed by {\em embedding} ``many'' hard instances for $(p-1)$-rounds sampled from $\GG_{p-1}(n_{p-1})$ into a balanced bipartite $(r_p, t_p)$-RS graph with $2b_p$ vertices (see \Cref{eq:params-match1} and \Cref{eq:parameters-match2} for the parameters).
The distribution is as follows and is illustrated in \Cref{fig:harddistGp}:
\begin{tbox}\textbf{The distribution $\GG_p$ on $n$ vertices for $p$-round protocols for $p \geq 1$ and sufficiently large $n \in \IN$.}
	\begin{enumerate}
		\item Pick a \underline{fixed} balanced bipartite $(r_p,t_p)$-RS graph $\Grs$ on $2b_p$ vertices $\Lrs \cup \Rrs$, from \Cref{prop:RS-gkk12}.  
		\item Sample $r_p \cdot t_p$ \underline{independent} instances from $\GG_{p-1}(n_{p-1})$ denoted by \[
		\cH := \set{H_{i,j} \mid i \in [t_p], j \in [r_p]}.\]  
		\vspace{-9mm}
		\item Create the graph $\tG := \embed(\cH \rightarrow \Grs)$ on vertices $\Lem \cup \Rem$ (as in \Cref{def:embed}). 
		\item For $a \in [p]$, player $\player{a}$ receives the edges in $\tG$ that
		correspond to subgraphs $H_{i,j}(a)$ for all $i \in [t_p]$ and $j \in [r_p]$ where $H_{i,j}(a)$ is the subgraph of $H_{i,j}$ given to the $a^\text{th}$ player in $\GG_{p-1}(n_{p-1})$.
		\item Pick $\istar$ \underline{uniformly at random} from $[t_p]$ and let $\Mrs{\istar}$ in $\Grs$ be the \emph{special matching}. Let $V^{\star} \subseteq \Lrs \cup \Rrs$ be the set of vertices in $\Grs$ incident to $\Mrs{\istar}$. 
		\item Player $\player{p+1}$ receives an arbitrary auxiliary \underline{perfect matching} $\Maux$ from $\set{(u,*) \mid u \in \paren{\Lem \cup \Rem} \setminus V^{\star}}$ to new sets of vertices $\Laux$ and $\Raux$ (each of size $\paren{b_p - r_p} \cdot n_{p-1})$. 
	\end{enumerate}
\end{tbox}

\begin{figure}[ht]
	\centering
	\includegraphics[scale=0.07]{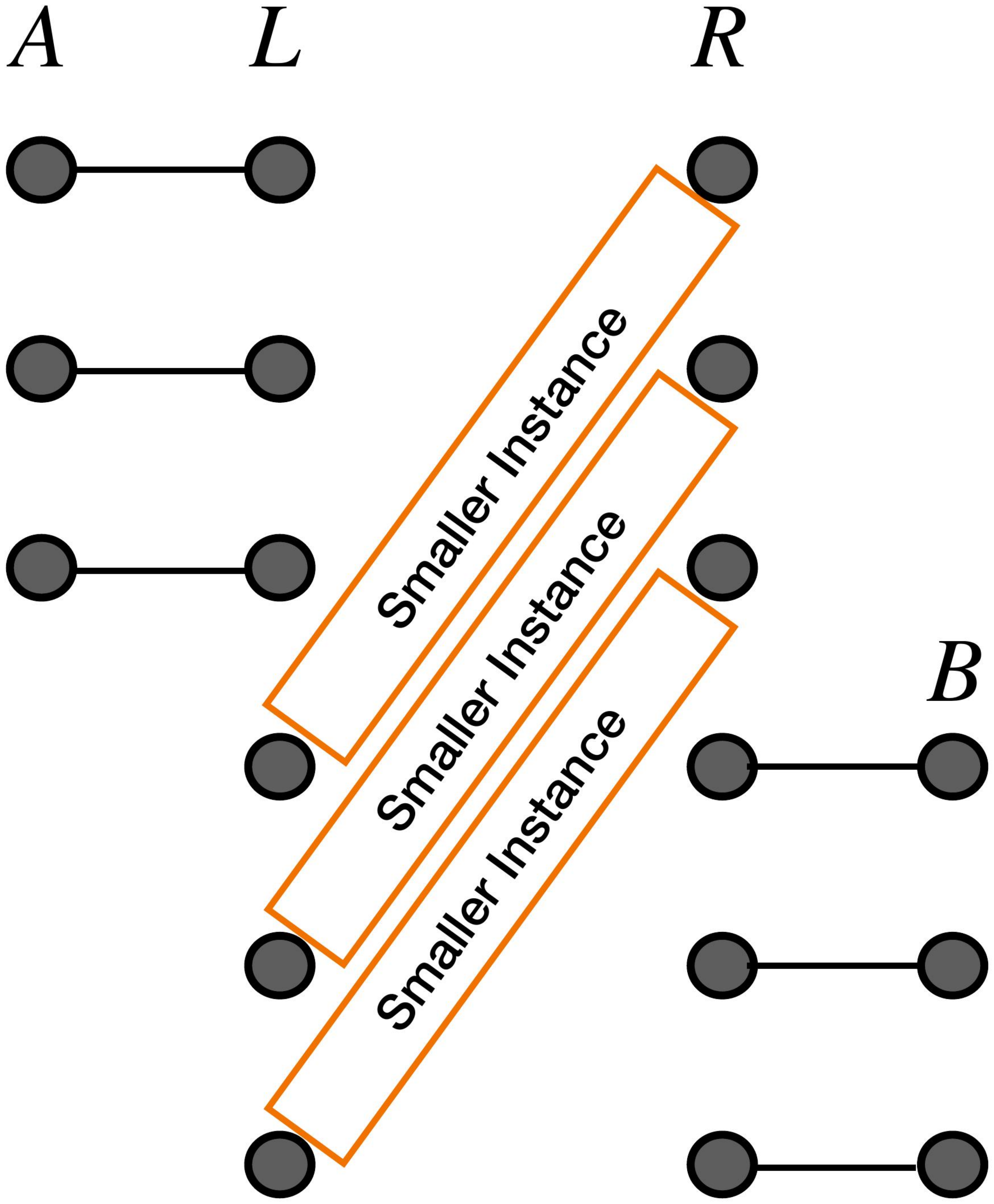}
	\caption{An illustration of the distribution $\GG_p$ for $p \geq 1$. When $p = 1$, the smaller instances are as in \Cref{fig:harddistG0}. Only the smaller instances in the special matching are shown. All edges of the outer RS graph are replaced also. The shaded circles represent groups of vertices.}
	\label{fig:harddistGp}
\end{figure}
The parameters $n_{p-1}$ and $b_p$ are chosen such that:
\begin{equation}\label{eq:params-match1}
	n = n_{p-1} \cdot (2b_p-r_p), \qquad
	b_p = (n_{p-1})^{(\log \log n_{p-1})^3}.
\end{equation}

Using the family of RS graph in \Cref{prop:RS-gkk12} for the construction of the distribution $\GG_p$ on $n$ vertices, we get the following parameters for $p \geq 1$:
\begin{equation}
	n= \paren{\frac{3}{2} + \drs} \cdot b_p \cdot n_{p-1}, \qquad r_p:= \paren{\frac{1}{2} - \drs} \cdot b_p, \qquad t_p:= b_p^{\crs/\log \log b_p}. \label{eq:parameters-match2}
\end{equation}

Lastly we define a new parameter $\kappars$ that will be used repeatedly in the calculations. 
\begin{equation}\label{eq:kap-del-def}
	\kappars = \frac{1-2\drs}{3+2\drs}.
\end{equation}

\paragraph{Notation.} 
{For any $a \in [p+1]$, let $\GG_{p}(a)$ denote 
	the distribution of edges in the input graph that are given to player $\player{a}$. The distribution $\GG_p$ is a joint distribution of $(\GG_{p}(1),\ldots,\GG_{p}(p+1))$. 
	
	An instance sampled from $\GG_p$ is a graph $G$ together with its edge partitioning $(G(1)$, $\ldots$, $G(p+1))$ between the $(p+1)$ players. 
	With a slight abuse of notation, we sometimes refer to $G$ sampled from $\GG_p$, i.e., $G \sim \GG_p$, as the entire instance where the edge partitioning between the players is implicit. 
	Any $G \sim \GG_p$ contains $r_p \cdot t_p$ instances $H_{1,1},\ldots,H_{t_p,r_p}$ sampled from $\GG_{p-1}(n_{p-1})$ that are embedded into $\Grs$ to give $\tG$. 
	We refer to these $(p-1)$-round instances as the sub-instances of $G$. For any $i \in [t_p], j \in [r_p]$, we further write $\tE_{i,j}$ to denote the edges inserted to $\tG$ as part of the embedding of $H_{i,j}$ into $\Grs$.
	
	Among the sub-instances of any $G \sim \GG_p$, there are 
	$r_p$ sub-instances that correspond to the edges in the special matching $\Mrs{\istar}$; we denote them by $\cHstar$ and refer to them as the \textbf{special sub-instances or subgraphs} of $G$.

	Any $G \in \GG_p$ has $r_p$ many special subgraphs. 
	Each special subgraphs is an instance sampled from $\GG_{p-1}(n_{p-1})$, which has its own corresponding $r_{p-1}$ many special subgraphs that are instances sampled from $\GG_{p-2}(n_{p-2})$. 
	Continuing this gives us $r_p \cdot r_{p-1} \cdot \ldots \cdot r_1$ many special subgraphs sampled from $\GG_0$, which is each made up of $r_0$ many (either parallel or cross) gadgets with $2$ vertex disjoint edges each, i.e., a matching of size $2$. 
	We define these gadgets/matchings as the \textbf{ultimate gadgets} $$\cMstar = \{ \Mstar_{j_p,j_{p-1}, \dots j_0} :
	j_p \in [r_p], j_{p-1} \in [r_{p-1}], \dots, j_0 \in [r_0]
	\}$$ whose union we denote as $\Mstar$ and call the \textbf{ultimate induced matching} (see \Cref{prop:s-ind-match} below).
	
	We need some key properties about any input graph $G$ sampled from $\GG_p$ for $p \geq 1$.}

\begin{property}\label[property]{prop:product-dist-match}
	For any $p \geq 0$, the distribution $\GG_p$ is a product distribution over the $(p+1)$-players. 
\end{property}
\begin{proof}
	When $p = 0$, there is only one player, and $\GG_0$ is a product distribution trivially. Let us assume that $\GG_{p-1}$ on $n_{p-1}$ vertices is a product distribution for the induction hypothesis. 
	
	In distribution $\GG_{p}$ with $n$-vertex graphs, we know that the input of $\player{p+1}$ is independent of the input of all the other players, as they receive one induced matching $\istar$ from $[t_p]$ uniformly at random. The $r_p \cdot t_p$ instances from $\GG_{p-1}$ are given to the other players, and they are sampled independently of each other. As distribution $\GG_{p-1}$ is a product distribution, the input of these players is independent of each other. 
\end{proof}

\begin{property} \label[property]{prop:ind-subG}
	The special subgraphs of $G$ are all vertex-disjoint. 
	Furthermore, the subgraph of $G$ induced by the vertices corresponding to the special subgraphs consists only of the edges of these subgraphs, i.e., the edges of $\Gstar_j$ for all $j \in [r_p]$. 
\end{property}
\begin{proof}
	Recall that, by definition, the vertices corresponding to the special subgraphs are the vertices of $G$ that correspond to the matching $\Mrs{\istar}$, i.e., $\{(u,*) \mid u \in V^{\star}\}$. 
	Let $\Gstar$ be the subgraph induced by these vertices.
	
	We have by \Cref{lem:match-embed-induced} that the edges of $\Gstar \cap \tG$ are exactly the edges of a vertex-disjoint union of the special subgraphs $H_{\istar,j}$ for $j \in [r_p]$. 
	Finally, $\Gstar \cap \tG = \Gstar$ since the only edges not in $\tG$ are the edges of the auxiliary matching $\Maux$, each of which is incident to at most a single vertex of $\{(u,*) \mid u \in V^{\star}\}$ and thus can not be in $\Gstar$.
\end{proof}

We can extend this property to all the ultimate small gadgets also.

\begin{property} \label[property]{prop:s-ind-match}
	The ultimate gadgets $\cMstar$ in $G$ are vertex-disjoint and their union forms an \underline{induced} matching $\Mstar$, namely, the ultimate induced matching, of size $(\kappars)^p \cdot n$.
\end{property}
\begin{proof}
	We first show by induction that the special gadgets $\cMstar$ are vertex-disjoint and that $\Mstar$ is an induced matching of size $2 \cdot r_p \cdot r_{p-1} \cdot \ldots \cdot r_0$.
	
	\subparagraph*{Base Case ($\GG_0$).} 
	By construction, the only edges of any graph $G \sim \GG_0$ are exactly the $r_0$ many vertex-disjoint gadgets that make up $G$, which are the ultimate gadgets of $\cMstar$. Furthermore, each gadget is made up of 2 vertex-disjoint edges and thus their union is an induced matching $\Mstar$ of size $2 \cdot r_0$.
	
	\subparagraph*{Inductive Step ($\GG_p$).}
	For the induction hypothesis, 
	assume that the statement holds for graphs sampled from $\GG_{p-1}$. By definition, any graph $G \sim \GG_p$ on $n$-vertex graphs has special subgraphs $H_{\istar, j}\sim \GG_{p-1}$ with $n_{p-1}$ vertices
	 for $j \in [r_p]$. By the induction hypothesis, the union of the ultimate gadgets $\cMstar_j$ of each $H_{\istar, j}$ form an induced matching $\Mstar_j$ of size $2 \cdot r_{p-1} \cdot \ldots \cdot r_0$.
	We also have that the union of the special gadgets $\cMstar$ in $G$ is $\Mstar = \bigcup_{j \in [r_p]} \Mstar_j$. 
	
	The vertices of $\Mstar_{j} $ and $\Mstar_{j'}$ are disjoint by \Cref{prop:ind-subG} for $j \neq j'$, $j , j' \in [r_p]$. 
	As each $\Mstar_j$ is an induced matching, their union $\Mstar$ is an induced matching in $G$ of size $r_p \cdot (2 \cdot r_{p-1} \cdot \ldots 
	\cdot r_0)$.
	
	Finally, using the parameters in \Cref{eq:params-match1} and \Cref{eq:parameters-match2}, we have that
	\begin{align*}
		|\Mstar| &= 2 \cdot r_p \cdot r_{p-1} \cdot \ldots \cdot r_0 \\
		&= (1/2 - \drs)^p \cdot (b_p \cdot \ldots \cdot b_1) \cdot n_0 \tag{since $2 \cdot r_0 = n_0$}\\
		&= (1/2 - \drs)^p (b_p \cdot \ldots \cdot b_2) \cdot (3/2 + \drs)^{-1} \cdot n_1  \tag{as $n_1 = (3/2 + \drs) \cdot n_0 \cdot b_1$}\\
		&= (1/2 - \drs)^p (3/2 + \drs)^{-p} \cdot n \tag{repeating the above two steps $(p-1)$ more times}\\
		&= \paren{\frac{1/2 - \drs}{3/2 + \drs}}^p \cdot n = (\kappars)^p \cdot n,
	\end{align*}
	by definition of $\kappars$ in \Cref{eq:kap-del-def} as required.
\end{proof}

\begin{property}\label[property]{prop:perf-matching}
	Any graph $G \sim \GG_p$ with $n$-vertices for $ p \geq 0$ has a perfect matching of size $n$. 
\end{property}

\begin{proof}
	The proof is again by induction. 
	
	\subparagraph*{Base Case ($\GG_0$).} 
	Graph $G \sim \GG_0$ is made of $(n/2)$ gadgets on four vertices each. Each gadget has a perfect matching by construction, and the union of these matchings gives a perfect matching of $G$. 
	
	\subparagraph*{Inductive Step ($\GG_p$).}
	For the induction hypothesis, 
	assume that the statement holds for graphs sampled from $\GG_{p-1}$. By definition, any graph $G \sim \GG_p$ has special subgraphs $H_{\istar, j}\sim \GG_{p-1}$ for $j \in [r_p]$.  Let $\Min_j$ for $j \in [r_p]$ be the perfect matching of the inner instance $H_{\istar, j}$.
	We pick the following edges in our matching. 
	\[
	\Maux \cup \bigcup_{j \in [r_p]} \Min_j
	\]
	The vertices of $\Maux$ cover all vertices of $G$ not incident to any special subgraph $H_{\istar, j}$ by construction. We also have that the subgraphs $H_{\istar, j}$ are vertex-disjoint by \Cref{prop:ind-subG}. Thus, taking the union of these matchings is sufficient to give a perfect matching of $G$.
\end{proof}

\begin{property} \label[property]{prop:perfGnotM}
	Let $\Gbar$ be the subgraph of $G$ induced by the vertices which are \emph{not} incident to the ultimate matching $\Mstar$. Then, $\Gbar$ has a vertex cover of size
	$$ (1- (\kappars)^p) \cdot n.$$
\end{property}
\begin{proof}
	For the base case of the induction on $p$, we see that \emph{all} the vertices of $G\sim \GG_0$ are incident to some edge in $\Mstar$. This is because $\Mstar$ contains all the edges of all the $(n/2)$ gadgets of $G$. There is a vertex cover of size $0$ trivially.  
	
	\subparagraph*{Inductive Step ($\GG_p$).}
	Now, suppose that the statement holds for graphs sampled from $\GG_{p-1}(n_{p-1})$.
	Again, we look at the special subgraphs $H_{\istar, j}\sim \GG_{p-1}(n_{p-1})$ for $j \in [r_p]$.  These graphs are vertex disjoint by \Cref{prop:ind-subG}, and they have a vertex  cover $\Vin_j$ that covers all the edges not in $\Mstar_j$ (the ultimate matching of $H_{\istar, j}$).  The vertex cover we pick contains:
	\[
	\textnormal{Vertices of $\tG$ which matching $\Mrs{\istar}$ does not touch and } \bigcup_{j \in [r_p]} \Vin_j.
	\]
	
	First, we argue that this subset of vertices covers the required edges. Any edge in $\Maux$ is covered by the first part, as $\Maux$ only incident to vertices not corresponding to $\Mrs{\istar}$. Any other edge in $\tG$ which is not added because of $\Mrs{\istar}$ is also covered by the first part. We know $\Vin_j$ is a vertex cover of the subgraph of $H_{\istar, j}$ with vertices not touched by $\Mstar_j$ by the induction hypothesis. As $\Mstar = \cup_{j \in [r_p]} \Mstar_j$, we are done. 
	
	It remains to analyze the size of our vertex cover. By the induction hypothesis, the size is exactly:
	\begin{align*}
		&r_p \cdot \paren{1- (\kappars)^{p-1}} \cdot n_{p-1} + \textnormal{Number of vertices of $\tG$ not touching $\Mrs{\istar}$} \tag{from each of the $r_p$ many $H_{\istar, j}$}\\ 
		&= n \cdot \paren{1- (\kappars)^{p-1}} \cdot \paren{\frac{1-2\drs}{3+2\drs}} + \textnormal{Number of vertices of $\tG$ not touching $\Mrs{\istar}$} \tag{by \Cref{eq:params-match1}, \Cref{eq:parameters-match2}}\\
		&=   n \cdot \paren{1- (\kappars)^{p-1}} \cdot \paren{\frac{1-2\drs}{3+2\drs}} + 2(b_p-r_p) \cdot n_{p-1} \\
		&= n \cdot \paren{1-(\kappars)^{p-1}} \cdot \paren{\frac{1-2\drs}{3+2\drs}}  + \paren{\frac{2+4\drs}{3+2\drs}} \cdot n \tag{again by \Cref{eq:params-match1}, \Cref{eq:parameters-match2}}  \\
		&= n \cdot \paren{1-(\kappars)^{p-1}} \cdot \kappars  + (1-\kappars) \cdot n \\
		&= n \cdot \paren{1- (\kappars)^p},
	\end{align*}
	which completes the proof.
\end{proof}

\begin{property}\label[property]{prop:approx-specialM}
	For positive $\eps \leq (\kappars)^p$, any $(1 - \eps)$-approximate matching of $G \sim \GG_p$ contains at least an $\paren{1-(\eps/(\kappars)^p)}$ fraction of the edges of the ultimate matching $\Mstar$.
\end{property}
\begin{proof}
	Let $M$ be a $(1 - \eps)$-approximate maximum matching of $G$.
	By \Cref{prop:perf-matching}, $G$ has a perfect matching of size $n$ which implies that $M$ has at least $(1-\eps)\cdot n$ edges.

	The size of the largest matching that does not use any edges of $\Mstar$ is at most the vertex cover of $\Gbar$ from \Cref{prop:perfGnotM} by Konig's theorem. Therefore, the size of such a matching is at most, 
	$$
	n \cdot \paren{1- (\kappars)^p}.
	$$
	Hence, $M$ must contain edges of $\Mstar$. In particular, we have that $M$ contains at least,
	\begin{align*}
		&\geq \paren{1 - \eps} \cdot n- \paren{1 - (\kappars)^p} \cdot n\\
		&= \paren{(\kappars)^p- \eps} \cdot n
	\end{align*}
	many edges of the ultimate matching $\Mstar$. 
	Finally, by \Cref{prop:s-ind-match}, the number of edges in $\Mstar$ is $(\kappars)^p\cdot n$.
	The matching $M$ contains at least $\paren{1-\eps \cdot \paren{1/(\kappars)^p}}$ fraction of the edges of the matching $\Mstar$. 
\end{proof}

These properties tell us that any protocol for $\beta$-approximate maximum matching with $\beta = (1 -  \eps)$ with its input distributed according to $\GG_p$ necessarily learns at least $(1-\eps/(\kappars)^p)$ edges of $\Mstar$.
Now, suppose the goal was to use the output of the protocol and determine the edges of a uniform random gadget from $\cMstar$. Learning any one of the two edges of the gadget would reveal the other edge also (be it a parallel gadget or a cross gadget). 
Thus, any $(1-\eps)$-approximation reveals at least $(1-\eps/(\kappars)^p)/2$ ultimate gadgets. 
We formalize this intuition by using search sequences and predicates to point to ultimate gadgets. Large matchings will be used to solve these predicates.

Let $J := (j_p, j_{p-1}, \dots, j_0 )$ where $j_i \in [r_i]$ for $0 \leq i \leq p$ be a \textbf{search sequence} that identifies a \textbf{ultimate gadget} $\Mstar_J := \Mstar_{j_p, j_{p-1}, \dots, j_0} \in \cMstar$ of an input graph $G \sim \GG_p$. We define the \textbf{search predicate} as 
\begin{align*}
	\PP_p (G, J) := 
	\begin{cases}
		0 & \text{~if the special gadget $\Mstar_J$ is a parallel gadget and }\\
		1 &\text{~if it is a cross gadget.} 	
	\end{cases}
\end{align*}

For any $(p+1)$-player protocol $\pi$ on input $G$ sampled from $\GG_p$, we say that $\pi$ \textbf{solves the search predicate} $\PP_p$ for the input graph $G$ if, given only the transcript of messages communicated in $\pi$ and a uniform random search sequence $J = (j_p, \dots, j_0)$ (hidden from the protocol and with no further access to $G$), $\PP_p(G,J)$ can be deterministically determined. 
It is important to note that $J$ is {\em not} part of the input to the protocol and is probabilistically independent from the graph $G$.

\begin{lemma} \label{lem:mbmtopred}
	For every $0 \leq \tau \leq 1$ and constant $0 < \eps \leq(\kappars)^p$, any $p$-round $(p+1)$-player protocol $\pi$ for $(1 - \eps)$-approximate maximum matching on input $G \sim \GG_p$ that succeeds with probability at least $\tau$ and with communication cost $s$ can be altered and then used to solve the search predicate $\PP_p$ for a uniform random search sequence $J$ with probability at least 
	$$ \frac12 \cdot (1+  \tau \cdot (1-\eps/\kappars^p))$$
	with communication cost $s + O\paren{n \log n}$.
\end{lemma}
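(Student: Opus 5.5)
The reduction runs $\pi$ and then has the last player append the output matching $M$ to the transcript. Since $M$ has at most $n$ edges, each named with $O(\log n)$ bits, this costs $O(n\log n)$ extra communication, for a total of $s+O(n\log n)$. Given the augmented transcript and a uniform search sequence $J=(j_p,\dots,j_0)$ that is independent of $G$, the decoder outputs $\PP_p(G,J)$ as follows. If $M$ contains an edge of the ultimate gadget $\Mstar_J$, it reports $0$ when that edge is one of the two parallel edges on the gadget's four vertices and $1$ when it is a cross edge. Otherwise it outputs a uniformly random bit.

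For this decoder to work, two things must hold. First, the vertex positions of $\Mstar_J$ must be known from the transcript and $J$ alone. Second, the decoder must be able to recognize an edge of $M$ as lying on the gadget $\Mstar_J$.

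The first point is where I would be careful. The location of the gadget depends on the special indices $\istar$ at every level of the recursion. I would handle this by having all players who hold auxiliary matchings append their $\istar$ values as well. These are $P_{p+1}$ at the top level and, recursively, players at lower levels who receive the auxiliary matchings of the sub-instances. Alternatively, the decoder can read the $\istar$ values off from $\Maux$ if that matching is included, or the protocol can simply write them down. Writing all the special-matching identities down costs at most $O(n\log n)$ bits, since there are at most $n$ sub-instances in total, each with one index of $O(\log n)$ bits. With the $\istar$'s known, the fixed RS graphs and the embedding of \Cref{def:embed} determine the four vertices $\{u_1,u_2,v_1,v_2\}$ of $\Mstar_J$ deterministically.

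The second point follows from \Cref{prop:s-ind-match}: $\Mstar$ is an induced matching, so any edge of $G$ between these four vertices is an edge of the gadget. Its orientation (parallel or cross) therefore reveals the gadget type, and the decoder is correct whenever $M$ hits $\Mstar_J$. In words, reading a gadget edge from $M$ gives the predicate exactly.

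For the probability bound, let $\mathcal{E}$ be the event that $\pi$ succeeds, which has probability at least $\tau$. On $\mathcal{E}$, \Cref{prop:approx-specialM} (using $\eps\le\kappars^p$) says $M$ contains at least a $(1-\eps/\kappars^p)$ fraction of the edges of $\Mstar$. Each gadget contributes exactly two edges, so at least a $(1-\eps/\kappars^p)$ fraction of the gadgets are hit, counting a gadget as hit if $M$ contains at least one of its edges. (The hit fraction is at least the edge-coverage fraction because each gadget has exactly two edges.) Because $J$ is uniform and independent of $G$ and of the protocol's randomness, the conditional probability that $\Mstar_J$ is hit, given $\mathcal{E}$, is at least $1-\eps/\kappars^p$. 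When the gadget is hit, the answer is correct with probability $1$. In every other case, including when $\pi$ fails or outputs garbage, the decoder either guesses randomly or reads an edge. To keep the guarantee clean, I would have the decoder check that the output is a valid matching of $G$ restricted to the gadget vertices; otherwise the guess succeeds with probability $1/2$. Since $\Mstar$ is induced, any output edge on those four vertices that is actually present in $G$ is correct, and the protocol only outputs edges of $G$ when it is valid. On failure, I would treat invalid output as a random guess. The total success probability is then at least $\tau q+(1-\tau q)/2=\tfrac12(1+\tau q)$ with $q=1-\eps/\kappars^p$, as claimed.

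The main obstacle is the failure case, where the output may contain non-edges on the gadget vertices that would mislead the decoder. I would resolve it by arguing that the output must consist of edges of $G$ on success, and by having the decoder fall back to a coin flip when the output is not verifiably a matching. If that verification is unavailable, I would note that success is defined jointly, so only the success event is credited with correctness and the remaining mass is given at least $1/2$ by symmetry of the gadget-type bit. That symmetry holds because the gadget type is a uniform bit independent of the rest of the input, given $\pi$'s transcript whenever the gadget's edges are never communicated; establishing this independence is the delicate point.
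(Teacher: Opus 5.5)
The gap is the failure case, and neither resolution you sketch works. Your decoder sees only the transcript and $J$, not $G$, so it cannot verify that an output edge on the four vertices of $\Mstar_J$ is actually present.

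The symmetry fallback is also false. Conditioned on the transcript, the type of $\Mstar_J$ need not be uniform, since the protocol may have communicated those very edges. A failing run may then name the two non-edges of the opposite type. On that mass your read-off answer is wrong with probability one, not right with probability $1/2$.

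Concretely, suppose every player sends its whole input, and the last player outputs a perfect matching with probability $\tau$ and otherwise swaps the pair on every ultimate gadget for the opposite-type pair. Your decoder then succeeds with probability exactly $\tau$. This is below $\frac12\bigl(1+\tau(1-\eps/\kappars^p)\bigr)$ whenever $\tau<1/(1+\eps/\kappars^p)$.

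The paper does not argue around this. It assumes, in a footnote to its proof and in the streaming preliminaries, that the protocol outputs only edges of $G$ even when it fails. With that assumption your argument closes: on every run, an output edge on the vertices of $\Mstar_J$ is a gadget edge because $\Mstar$ is induced. Your independent fair coin is right with probability exactly $1/2$ otherwise. So the success probability is at least $\frac12\bigl(1+\Pr[\Mstar_J\text{ is hit}]\bigr)\ge\frac12\bigl(1+\tau(1-\eps/\kappars^p)\bigr)$.

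Otherwise your reduction matches the paper's, which reveals the indices by writing the auxiliary matchings themselves. You do need to fix the index bookkeeping. There are $r_pt_p\gg n$ sub-instances at the top level alone, and $P_p$ holds the auxiliary matchings of all of them. Only the special ones along the recursion can be afforded: $1+r_p+r_pr_{p-1}+\cdots+r_p\cdots r_2\le n$ indices in total.

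Writing exactly those within $p$ rounds requires the top-down schedule the paper uses. $P_{p+1}$ writes in round $1$, $P_p$ writes the indices of the $r_p$ special sub-instances in round $2$, and so on down to $P_2$ in round $p$. This order is forced because $P_{p+2-k}$ learns which of its sub-instances are special only from what was written in earlier rounds.
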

\begin{proof}
	Let $M$ be the $(1 - \eps)$-approximate maximum matching of $G$ that is output by the protocol $\pi$ and can thus be determined from its transcript.
	We use the output to solve $\PP_p$ as follows:
	If $M$ contains at least one edge of the ultimate gadget $\Mstar_J \in \cMstar$, we can correctly determine $\PP_p(G,J)$. Otherwise, we output $\PP_p(G,J) = 0$, which is essentially a random guess for $G \sim \GG_p$. 
	
	When the protocol outputs a matching, we need to explicitly identify what edges from the ultimate gadgets and the ultimate induced matching have been sent in the transcript. We do this by recording the edges of $\Maux$ explicitly at every level:  in the first round, player $\player{p+1}$ additionally communicates the edges $\Maux$ goven to them ($\Maux$ from  the definition of $\GG_p$) that identifies the $r_p$ many special subgraphs sampled from $\GG_{p-1}$. In the next round, $\player{p}$ outputs the $\Maux$ matchings from \emph{all} $r_p$-many special subgraphs sampled from $\GG_{p-2}$. 
	Continuing this way, we know that the edges in $M$ which have not been revealed as a part of $\Maux $ at some level, are from the ultimate induced matching. 
	
	To output these auxiliary edges explicity, one edge takes $O(\log n)$ communication and there are totally only $n$ edges, as they form a matching. Hence, this adds at most $ O(n\log n)$ factor to the communication. 
	
	Let us analyze the probability of success of this strategy. 
	We have three cases:
	\begin{enumerate}[label=$(\roman*)$]
		\item When $\pi$ succeeds, which happens with probability $\tau$, and the transcript identifies the gadget that $J$ points to, the protocol can solve the predicate exactly. By \Cref{prop:approx-specialM}, we know that the transcript contains at least $(1-\eps/(\kappars)^p)$-fraction of the edges of $\Mstar$. The probability that it contains at least one edge from a randomly chosen $J$ is also at least $(1-\eps/(\kappars)^p)$.
		\item When $\pi$ succeeds, but it does not identify the gadget that $J$ points to, which happens with probability at most $\tau \cdot (\eps /(\kappars)^p)$, we output a random guess. This succeeds with probability $1/2$.
			\item When the algorithm fails, which happens with probability $(1-\tau)$, the algorithm may not output a large matching, and the required edges which identify the gadget $J$ may not exist. In this case also, we output a random guess that succeeds with probability $1/2$.\footnote{An algorithm that fails may also output invalid edges that do not exist in the graph, leading us to make mistakes. However, in many existing works on lower bounds for streaming matchings \cite{GoelKK12,Kapralov13,Konrad15}, there is an implicit assumption that the algorithm, even when it fails, outputs valid edges which may form a smaller matching. We use the same assumption in our paper.}
	\end{enumerate}
	Hence, the total probability of success is at least:
	\[
	\tau \cdot (1-\eps/(\kappars)^p) + \tau (\eps /(\kappars)^p) \cdot (1/2) + (1-\tau) \cdot (1/2) = \frac12 + \tau \cdot (1-\eps/\kappars^p) \cdot \frac12. \qedhere
	\]
\end{proof}

With that, to prove \Cref{thm:cc-multi-MBM}, it is enough to only consider protocols that solve the search predicate. We prove the following lower bound on the communication cost of a protocol that solves a search predicate with large enough advantage.

\begin{lemma} \label{lem:main-cc-matchinsert}
	For every constant integer $p \geq 1$ any $p$-round protocol $\pi$ that, given input $G \sim \GG_p(n_p)$ and a uniform random search sequence, solves predicate $\PP_p$ with probability at least $$\delta \geq \frac12 + \frac{p}{12(p+1)}$$
	has communication cost 
	$$\cc{\pi} \geq  \paren{\frac1{(144\raoconst(p+1)^2)}}^p  n_p^{ 1+1 / (2 \log \log n_p)^{3p}} $$
	where $\raoconst$ is the constant from \Cref{eq:msg-compress-final} and $\crs$ is the constant from \Cref{prop:RS-gkk12}.
\end{lemma}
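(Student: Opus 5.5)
We prove \Cref{lem:main-cc-matchinsert} by induction on $p$, via a round-elimination argument in the style of \cite{AKNS24}. The invariant is the advantage over $1/2$: a protocol for $\PP_p$ with success $\delta_p \ge \frac12 + \frac{p}{12(p+1)}$ is turned into a protocol for $\PP_{p-1}$ with one fewer round and success $\delta_{p-1} \ge \frac12 + \frac{p-1}{12p}$. Each step loses at most $\frac{1}{12p(p+1)}$ in advantage. This is why the thresholds are chosen this way.

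\textbf{Base case.} The base case is $p=0$: a $0$-round protocol has an empty transcript. Since $J$ is uniform and independent of $G$, and each ultimate gadget is parallel or cross with probability $1/2$ independently, success is exactly $1/2$. This is consistent with the formula at $p=0$, so we may start the induction at $p=1$ from it.

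\textbf{Inductive step: two cases.} Let $\Pi_1$ be the first-round message. The only player whose round-$1$ message depends on $\istar$ is $\player{p+1}$, and its input is $\Maux$, which has $O(t_p)$ possibilities. We split by how much $\Pi_1$ reveals about the sub-instances.

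\emph{Case (a): $\Pi_1$ carries much information.} Suppose $\Pi_1$ carries at least a $\Omega(1/p^2)$-fraction of information per sub-instance about the hidden bits relevant to the ultimate gadgets of a typical sub-instance $H_{i,j}$. A direct-sum argument over the $r_p t_p$ independent sub-instances (\Cref{prop:product-dist-match}), together with chain rule and superadditivity of information for product distributions, gives
\[
|\Pi_1| \ge \Omega(1/p^2)\cdot r_p t_p \cdot (\text{bits per sub-instance}).
\]
Here the bits per sub-instance are at least the number of its ultimate gadgets. This already exceeds the claimed bound, because $t_p = b_p^{\crs/\log\log b_p}$ and $b_p = n_{p-1}^{(\log\log n_{p-1})^3}$. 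Hence $n_p^{1/(2\log\log n_p)^{3p}} \ll t_p$.

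\emph{Case (b): $\Pi_1$ carries little information.} Here $\Pi_1$ is nearly independent of the special sub-instances $H_{\istar,j}$ for a random $\istar$. The key point is that $\istar$ is uniform and hidden from $\player{1},\ldots,\player{p}$ in round $1$. So the information $\Pi_1$ carries about the special group is on average a $1/t_p$ fraction of its information about everything. It is at most $|\Pi_1|/t_p$ plus the $\log t_p$ bits $\player{p+1}$ can leak.

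\textbf{Embedding into a $(p-1)$-round protocol.} For a typical $j\in[r_p]$, we embed a fresh $\GG_{p-1}$ instance as $H_{\istar,j}$. The players publicly sample $\istar$, the other sub-instances, and $\Pi_1$ (approximately) using shared randomness. The remaining rounds then simulate $\pi$ as a $(p-1)$-round $p$-player protocol: the role of $\player{p+1}$ is absorbed since $\istar$ and $\Maux$ are public.

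The approximate sampling of $\Pi_1$ conditioned on each player's private sub-instance is where message compression enters. We use \cite{BravermanG14,RaoY20} with constant $\raoconst$ to simulate the first message at cost about $\raoconst$ times its internal information. This only loses a statistical distance of order $\frac{1}{12p(p+1)}$ when the information is below the threshold. The communication of the embedded protocol is $O(\cc{\pi}/r_p)$ per sub-instance on average, by another direct-sum/averaging step over $j$. We also check that solving $\PP_p$ for a random $J=(j_p,\ldots)$ with $j_p=j$ solves $\PP_{p-1}$ for the embedded instance; this holds by definition of the ultimate gadgets.

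\textbf{Closing the recursion.} Applying the induction hypothesis to $n_{p-1}$ gives
\[
\cc{\pi} \ge \min\Big\{\tilde\Omega(r_p t_p),\ \tfrac{r_p}{144\raoconst(p+1)^2}\,\cc{\pi_{p-1}}\Big\}.
\]
We then unfold using $n_p \approx \tfrac32 b_p n_{p-1}$ and $\log\log n_p \le 2(\log\log n_{p-1})^{3}$-type estimates to reach the stated exponent $1+1/(2\log\log n_p)^{3p}$.

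\textbf{Main obstacle.} The hardest step is Case (b): bounding the information about the special group and making the simulation rigorous. $\Pi_1$ depends jointly on all sub-instances, and $\player{p+1}$'s first message correlates with $\istar$. So we must condition carefully, e.g.\ on $\player{p+1}$'s message and the other players' inputs, to preserve the product structure. Only then can the chain-rule argument give the $1/t_p$ dilution while keeping the error within the $\frac{1}{12p(p+1)}$ budget.
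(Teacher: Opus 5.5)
Your outline has the right ingredients: embedding a $\GG_{p-1}$ instance at a random $(\istar,\jstar)$, a $1/r_p$ direct-sum saving, and a $1/t_p$ dilution because $\player{1},\dots,\player{p}$ speak in round $1$ before $\istar$ is visible. However, the two mechanisms that actually realize these savings are misassigned, and as described the step does not go through.

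\textbf{Communication versus information.} The claim that the embedded protocol has communication $O(\cc{\pi}/r_p)$ ``by another direct-sum/averaging step over $j$'' is unjustified. Averaging over $j$ only bounds \emph{information}; simulating $\pi$ still writes $s$ bits. The paper converts information into communication in three steps. First, it makes $\istar,\jstar$ public, so $\player{p+1}$ is simulated for free, and shows the resulting protocol has external information cost at most $\ic{\pi_p}{\GG_p}/r_p\le s/r_p$ (\Cref{clm:prot-low-info}). Second, it compresses \emph{every} message with \Cref{prop:msg-compress}, which gives expected length $\raoconst(s/r_p+p^2)$ with zero error. Third, a Markov truncation turns this into a worst-case bound at a loss of $\eta$ in success. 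Compression neither eliminates a round nor produces statistical distance, so using it ``to simulate the first message'' is the wrong tool. The order also matters: truncation must precede round elimination. Compression's length guarantee holds only under the exact input distribution, whereas a worst-case bound survives the later change of distribution.

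\textbf{Eliminating the first round.} The paper publicly samples \emph{only} $\rPi_1(1),\dots,\rPi_1(p)$ from their marginal. Each $\player{a}$ then \emph{privately} samples its own input conditioned on these messages, $\istar$, and its piece $G_{p-1}(a)$. This is legitimate by the rectangle property, and Pinsker plus the chain rule give total variation at most $\sqrt{s/(2r_pt_p)}$ (\Cref{lem:tvd-diff}). Since $\player{p+1}$ speaks last, $\rPi_1(\le p)$ is independent of $\istar$, so there is no ``$\log t_p$ leak'' to handle. Publicly sampling the other sub-instances, as you propose, breaks both bounds. If $\rPi_1$ is the parity of one bit from each sub-instance, then $|\rPi_1|=1$ yet $\mathbb{I}(\rH_{i,j};\rPi_1\mid \text{all other } \rH)=1$ for every $(i,j)$. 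This destroys both the $1/(r_pt_p)$ dilution and the $1/r_p$ direct sum.

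\textbf{Base case.} Your base case does not close. A per-step loss of \emph{at most} $\frac{1}{12p(p+1)}$ only yields a $0$-round protocol with success $\ge 1/2$, which is no contradiction. You need the strict bound $\eta+\sqrt{s/(2r_pt_p)}<\frac{1}{12p(p+1)}$. The paper obtains it from $\eta=\frac{1}{36p(p+1)}$ and $s_p/(r_pt_p)<(13p(p+1))^{-2}$, using $t_p\ge n_{p-1}^4$. This also shows your Case~(a) is unnecessary: $s<s_p$ already forces the distance term to be small.
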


The proof of this lemma follows using a round-elimination argument where, by starting with a `very good' $p$-round $(p+1)$-player protocol, we can eliminate the first round and final player to obtain a still `very good' $(p-1)$-round $p$-party protocol. 
Repeatedly doing this eventually leads to a `too good to be true' protocol for the base case $p=0$, which trivially can not exist,  a contradiction. This strategy relies on the following lemma:

\begin{lemma}\label{lem:roundelim}
	
		For every $p \geq 1$, $0 < \eta, \delta \leq 1$ and integer $s \geq 1$, suppose that there exists a deterministic $p$-round $(p+1)$-player protocol $\pi_p$ that, given input $G \sim \GG_p$ and a uniform random search sequence $J$, solves $\PP_p(G, J)$ with probability at least $\delta$ and communication cost $\cc{\pi_p} \leq s$.
		
	Then, for constant $\raoconst$ (in \Cref{eq:msg-compress-final}), there exists a deterministic $(p-1)$-round $p$-player protocol $\pi_{p-1}$ that, given input $G' \sim \GG_{p-1}$ and a uniform random search sequence $J'$, solves $\PP_{p-1}(G',J')$ with probability at least 
	$$\delta - \eta - \sqrt{\frac{s}{2 \cdot r_p \cdot t_p}}$$
	and communication cost
	$$\cc{\pi_{p-1}} \leq \frac{\raoconst}{\eta} \cdot \paren{\frac{s}{r_p} + p^2}$$
	where $\raoconst \geq 1$ is an absolute constant defined in~\Cref{eq:msg-compress-final} (derived from \Cref{prop:msg-compress}). 
\end{lemma}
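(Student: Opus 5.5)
We follow the round-elimination template of \cite{AKNS24}. Given $\pi_p$, build $\pi_{p-1}$ for an instance $G'\sim\GG_{p-1}$ with players $P'_1,\ldots,P'_p$ and search sequence $J'$ by \emph{embedding} $G'$ as one special sub-instance of a $\GG_p$ instance. Using public randomness, the players sample $\istar\in[t_p]$ and $j^\star\in[r_p]$, and they set $H_{\istar,j^\star}:=G'$. Player $P'_a$ plays the role of $P_a$ for $a\in[p]$. The remaining sub-instances $H_{i,j}$ have to be filled in, and the first message $\Pi_1$ of $\pi_p$, which is sent by $P_1,\ldots,P_p$ before $\istar$ is known, has to be simulated. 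The search sequence for $\pi_p$ is $J=(j^\star,J')$, so that $\PP_p(G,J)=\PP_{p-1}(G',J')$. The player $P_{p+1}$, whose input is only $\Maux$ (a function of $\istar$, which is public), is simulated by everyone for free. This is why the number of rounds drops by one: the first round of $\pi_p$ is generated jointly and without interaction, and the later rounds of $\pi_{p-1}$ run the later rounds of $\pi_p$.

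\textbf{Step 1 (the first message carries little information about the special sub-instance).} By \Cref{prop:product-dist-match}, the sub-instances $H_{i,j}$ are independent and independent of $\istar$. The first message depends only on the $H$'s, so
\[
\sum_{i,j}\mi{\Pi_1}{H_{i,j}}\le \en{\Pi_1}\le s .
\]
Since $(\istar,j^\star)$ is uniform and independent of $\Pi_1$,
\[
\Exp\,\mi{\Pi_1}{H_{\istar,j^\star}\mid \istar,j^\star}\le s/(r_pt_p).
\]
By Pinsker's inequality, the joint distribution of $(\Pi_1,H_{\istar,j^\star})$ is within total variation distance $\sqrt{s/(2r_pt_p)}$ of the product of its marginals. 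This accounts for the $\sqrt{s/(2r_pt_p)}$ loss in the success probability. We may therefore sample $\Pi_1$ publicly from its marginal, conditioned on $\istar$ and $j^\star$, and only lose this amount.

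\textbf{Step 2 (filling in the other sub-instances consistently).} Conditioned on $\Pi_1$, each player $P'_a$ needs to sample its share $H_{i,j}(a)$ of all the non-embedded sub-instances. Three facts make this possible. First, $\Pi_1$ is a function of the players' combined inputs. Second, the sub-instances are independent. Third, the distribution is a product over the players. Together these should make the conditional distribution a product across players (a rectangle property), so each player samples its own part privately. I expect the subtle point here to be that the conditioning must also involve $G'$'s true shares in order to stay consistent. This is handled by running the actual message as a $(p-1)$-round computation rather than sampling it. Concretely, the true first-round messages of the players $P_a$ are multi-party messages depending on all $r_pt_p$ sub-instances, but the dependence on the real $G'$ must be \emph{communicated}.

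\textbf{Step 3 (the main obstacle: compressing the first message).} The information $\Pi_1$ reveals about $G'$ is only about $s/r_p$ on average over $j^\star$: restricting to the special matching $\istar$ and using the same direct-sum inequality over its $r_p$ sub-instances, rather than all $r_pt_p$, gives
\[
\Exp\,\mi{\Pi_1}{H_{\istar,j^\star}}\le s/r_p ,
\]
conditioned on the other sub-instances of matching $\istar$. We then invoke the message compression of \Cref{prop:msg-compress} (\cite{BravermanG14,RaoY20}). It lets the players of $G'$ communicate the needed part of $\Pi_1$ with
\[
O\bigl((\textnormal{information}+p^2)/\eta\bigr)
\]
bits, at an error of $\eta$, where the $p^2$ term accounts for the overhead of multiple senders and rounds. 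The compressed message is appended to the first round of $\pi_{p-1}$, so the round count is unchanged. This yields the claimed cost $\frac{\raoconst}{\eta}\bigl(\frac{s}{r_p}+p^2\bigr)$.

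\textbf{Step 4 (finishing).} The later rounds of $\pi_p$ have total length at most $s$. The subsequent messages of $\pi_p$ restricted to the special matching are likewise simulated with compression, or bounded by the same accounting. A union bound over the three errors (the Pinsker term, the compression error $\eta$, and the original error) gives success probability at least $\delta-\eta-\sqrt{s/(2r_pt_p)}$. Finally, fixing the public randomness to its best value makes $\pi_{p-1}$ deterministic.
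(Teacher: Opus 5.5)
Your Steps 1--2 are exactly the paper's round-elimination step: publicly sample $\istar,\jstar$ and $\Pi_1$ from its marginal, let each player privately complete its input conditioned on $\Pi_1$, $\istar$ and its own share of $G'$, and pay $\sqrt{s/(2r_pt_p)}$ in total variation. The gap is in Steps 3--4, which are supposed to give the cost bound.

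In Step 3 you compress and \emph{send} $\Pi_1$. This contradicts Step 1. Consistency with the real $G'$ comes from each player conditioning its private sample on its share $G'(a)$. The fact that $\Pi_1$ was drawn without looking at $G'$ is precisely what the total-variation term pays for. So nothing about $\Pi_1$ needs to be communicated, contrary to your remark in Step 2. Worse, if you did send it, no round would be eliminated. Once $P_1,\ldots,P_p$ have sent compressed versions of $\Pi_1(1),\ldots,\Pi_1(p)$, the second round of $\pi_p$ starts with $P_1$, who needs all of $\Pi_1$. That round therefore cannot be merged in, and $\pi_{p-1}$ would still use $p$ rounds. Also, the $s/r_p$ bound you prove for $\Pi_1$ is weaker than the $s/(r_pt_p)$ you already had. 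The $p^2$ term in the statement comes from compressing all $p\cdot p$ messages, not only those of the first round.

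What $\pi_{p-1}$ actually communicates is rounds $2,\ldots,p$ of $\pi_p$, whose total length can be close to $s$. Step 4 (``likewise simulated with compression, or bounded by the same accounting'') gives no bound for them. The missing ingredient is a player-elimination bound on what the \emph{entire} transcript $\Sigma$ of the simulated protocol reveals about $G'$. Here $\istar,\jstar$ are public, $G'$ is planted at $(\istar,\jstar)$, and $P_{p+1}$ is simulated. The bound is
\begin{align*}
\II(G';\Sigma\mid \istar,\jstar)&=\frac{1}{r_p}\sum_{j\in[r_p]}\II(H_{\istar,j};\Sigma\mid \istar)\le\frac{1}{r_p}\,\II(H_{\istar,*};\Sigma\mid \istar)\\
&\le\frac{1}{r_p}\,\II(\{H_{i,j}\}_{i,j},\istar;\Sigma)\le\frac{s}{r_p}.
\end{align*}
The equality uses that $\jstar$ is uniform and independent of $(\{H_{i,j}\},\istar,\Sigma)$. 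The first inequality uses independence of the sub-instances together with the chain rule.

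The direct sum runs only over the $r_p$ sub-instances of the special matching, because the later rounds depend on $\istar$ through $P_{p+1}$. This is where $1/r_p$, rather than $1/(r_pt_p)$, comes from. Compressing each of the $p\cdot p$ messages with \Cref{prop:msg-compress} then costs $\raoconst(s/r_p+p^2)$ in expectation. Truncating at $\frac{\raoconst}{\eta}(s/r_p+p^2)$ loses $\eta$ by Markov. The compression guarantee is only an expectation under the true $\GG_p$. So you need this truncated, worst-case cost bound before switching to the shifted input distribution of Step 1. After that switch, only the success probability degrades, by the total-variation term.
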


We can use \Cref{lem:roundelim} to prove the lower bound in \Cref{lem:main-cc-matchinsert}.
\begin{proof}[Proof of \Cref{lem:main-cc-matchinsert}]
	We prove the lower bound for deterministic protocols, and use the easy direction of Yao's minimax principle to extend the bound to all randomized protocols.
	
	Before proving the result by induction, we give the following useful bounds for every constant $p \geq 1$: Using the parameters in \Cref{eq:params-match1}, \Cref{eq:parameters-match2} and since $b_p = n_{p-1}^{(\log \log n_{p-1})^3}$, we have that 
	\begin{align*}
		\begin{split}
			n_p &:= \paren{\frac{3}{2} + \drs} \cdot b_p \cdot n_{p-1} \\
			&= \paren{\frac{3}{2} + \drs} \cdot (n_{p-1})^{1+( \log \log n_{p-1})^3}
		\end{split}
		, \quad
		\begin{split}
			r_p &:= \paren{\frac{1}{2} - \drs} \cdot b_p \\
			&= \paren{\frac{1}{2} - \drs} \cdot (n_{p-1})^{(\log \log n_{p-1})^3}
		\end{split}
	\end{align*}
	\begin{align}\label{eq:t-ub-np-1}
		t_p &= b_p^{\crs/\log \log b_p} \\
		&\geq \paren{n_{p-1}^{( \log \log n_{p-1} )^3}}^{1/ (2 \cdot \log \log n_{p-1})}
		\tag{since $\log \log b_p \leq 2 \cdot \log \log n_{p-1}$} \\
		&\geq n_{p-1}^4. \tag{for large enough $n_{p-1}$}
	\end{align}
	Let $s_p := \paren{\frac1{(144\raoconst(p+1)^2)}}^p  \cdot n_p^{1 + \paren{1 / (2 \log \log n_p)}^{3p}}$ (as in the statement). It then follows that
	\begin{align}\label{eq:sprpp2-ub}
		\frac{s_p}{r_p} + p^2 
		&=  \paren{\frac1{(144\raoconst(p+1)^2)}}^p  \cdot n_p^{ 1/(2\log \log n_p)^{3p}} \cdot \frac{n_p}{r_p} + p^2 \\
		&= \paren{\frac1{(144\raoconst(p+1)^2)}}^p  \cdot n_p^{ 1/(2\log \log n_p)^{3p}} \cdot \paren{\frac{3/2 + \drs}{1/2 - \drs}} \cdot n_{p-1} + p^2 \tag{from the parameters above}\\
		&\leq \paren{\frac1{(144\raoconst(p+1)^2)}}^p  \cdot n_p^{ 1/(2\log \log n_p)^{3p}} \cdot (3 + 16\drs) \cdot n_{p-1} + p^2 \tag{for $\drs \leq 1/4$} \\
		&\leq \paren{\frac1{(144\raoconst(p+1)^2)}}^p  \cdot n_p^{ 1/(2\log \log n_{p-1})^{3p}} \cdot (3 + 16\drs) \cdot n_{p-1}+ p^2 \tag{since $\log \log n_p \geq \log \log n_{p-1}$} \\
		&\leq \paren{\frac1{(144\raoconst(p+1)^2)}}^p  \cdot n_{p-1}^{ 1/(2\log \log n_{p-1})^{3(p-1)}} \cdot (3 + 16\drs) \cdot n_{p-1} + p^2 \tag{since $\log n_p \leq 2(\log \log n_{p-1})^3 \cdot \log n_{p-1})$} \\
		&\leq \paren{\frac1{(144\raoconst(p+1)^2)}}^p  \cdot 12 \cdot n_{p-1}^{ 1/(2\log \log n_{p-1})^{3(p-1)}} \cdot n_{p-1}. \tag{since $p^2 = o(n_{p-1})$ and $\drs < 1$} 
	\end{align}
	We also have that
	\begin{align}\label{eq:sprptp-ub}
		\frac{s_p}{r_p \cdot t_p} &\leq \paren{\frac1{(144\raoconst(p+1)^2)}}^p   n_p^{ \paren{1 / (2 \log \log n_p)}^{3p}} \cdot \frac{n_p}{r_p} \cdot \frac1{t_p}\\
		&\leq  \paren{\frac1{(144\raoconst(p+1)^2)}}^p   n_p^{ \paren{1 / (2 \log \log n_p)}^{3p}} \cdot \frac{n_p}{r_p \cdot n_{p-1}} \cdot \frac1{n_{p-1}^3} \tag{as $t_p \geq n_{p-1}^4$ from \Cref{eq:t-ub-np-1}} \\
		&=  \paren{\frac1{(144\raoconst(p+1)^2)}}^p   n_p^{ \paren{1 / (2 \log \log n_p)}^{3p}} \cdot \paren{\frac{3+2\drs}{1-2\drs}} \cdot \frac1{(n_{p-1})^3}\tag{from \Cref{eq:params-match1} and \Cref{eq:parameters-match2}}\\
		&\leq  \paren{\frac1{(144\raoconst(p+1)^2)}}^p  \cdot (11) \cdot (n_{p-1})\cdot \frac1{(n_{p-1})^3} \tag{as $\drs<1$ and $\log n_p < 2(\log \log(n_{p-1})^3 \cdot \log n_{p-1}$} \\
		&\leq   \paren{\frac1{(144\raoconst(p+1)^2)}}^p  \cdot (11) \cdot (n_{p-1})^{-2} \\
		&< \paren{\frac1{13p(p+1)}}^2.\tag{for large enough $n_{p-1}$} 
	\end{align}
	With these bounds, we are now ready to proceed with the inductive argument.
	
	\subparagraph*{Base Case (1-round).} Towards a contradiction, let $\pi_1$ be a $1$-round deterministic protocol that, given input $G \sim \GG_1(n_1)$ and a uniform random search sequence, solves $\PP_1$ with probability at least $13/24 $, as in the statement, and communication cost $\cc{\pi_1} < s_1$, which is strictly less than the statement. 
	We use \Cref{lem:roundelim} with $\eta = 1/(72)$ to get a protocol with probability of success at least:
	\[
	\frac{13}{24} - \frac1{72} - \frac1{26 \sqrt{2}} > \frac12,
	\]
	by our bound on $(s_1/(t_1 r_1))$ above.
	This protocol is deterministic, and uses $0$-rounds to solve predicate $\PP_0$ on input graph $G_0 \sim \GG_0$ and a uniform random search sequence.
	Regardless of the communication cost, this is a contradiction for $0$-round protocols as they cannot do better than random guessing. Therefore, $\pi_1$ must have communication cost $\cc{\pi_1} \geq s_1$.
	
	\subparagraph*{Inductive Step ($p$-round).} 
	Now, suppose that the statement is true for $(p-1)$-round deterministic protocols. 
	Let $\pi_p$ be a $p$-round deterministic protocol that, given input $G \sim \GG_p$ and a uniform random search sequence, solves $\PP_p$ with probability at least $ \frac12 + \frac{p}{12(p+1)}$, as in the statement, and communication cost $\cc{\pi_p} < s_p$. 
	By \Cref{lem:roundelim} with $\eta=1/(36p(p+1))$ and the above bounds, there exists a $(p-1)$-round deterministic protocol that, given input $G' \sim \GG_{p-1}(n_{p-1})$ and a uniform random search sequence, solves $\PP_{p-1}$ with probability at least 
	\begin{align*}
		&\frac12 + \frac{p}{12(p+1)} -   \frac{1}{36p(p+1)} - \sqrt{s_p/(2 \cdot r_p \cdot t_p)} \\
		&\geq \frac12 + \frac{p}{12(p+1)} -   \frac{p}{36p(p+1)}-  \frac{1}{13\sqrt{2}p(p+1)} \tag{from \Cref{eq:sprptp-ub}}\\
		&\geq \frac12 + \frac{(p-1)}{12p},
	\end{align*} 
	and communication cost 
	\begin{align*}
		\cc{\pi_{p-1}} &< \frac{\raoconst}{\eta} \cdot (\frac{s_p}{r_p} + p^2)  \\
		&\leq \raoconst \cdot (36p(p+1))\cdot  \paren{\frac1{(144\raoconst(p+1)^2)}}^p  \cdot 12 \cdot n_{p-1}^{ 1+1/(2\log \log n_{p-1})^{3(p-1)}} \tag{from \Cref{eq:sprpp2-ub}} \\
		&\leq  \paren{\frac1{(144\raoconst \cdot p^2)}}^{p-1} \cdot n_{p-1}^{ 1+1/(2\log \log n_{p-1})^{3(p-1)}} = s_{p-1}.
	\end{align*}
	This contradicts the induction hypothesis and, hence, $\pi_p$ must have communication cost $\cc{\pi_p}$ at least $s_p$.
\end{proof}

We now explicitly complete the proof of \Cref{thm:cc-multi-MBM}.
\begin{proof}[Proof of \Cref{thm:cc-multi-MBM}]
	Let $\pi$ be any $p$-round $(p+1)$-protocol  for $(1-\eps)$-approximate maximum matching with
	\begin{equation*}
		\eps <  (\kappars)^p, \qquad \kappars = \frac{1-2\drs}{3+2\drs},
	\end{equation*}
	probability of success at least $ (2/3)$ and communication cost $s$ (as in the statement).
	
	By \Cref{lem:mbmtopred}, protocol $\pi$ on input $G \sim \GG_p(n)$ can be made into a protocol $\pi'$ with communication cost $s + O(n \log n)$ that, given input $G$ and a uniform random search sequence, solves the search predicate $\PP_p$ with probability at least 
	\[
\frac12 \cdot (1+(2/3) \cdot (1-\eps/\kappars^p)),
	\]
	for $\eps <(\kappars)^p$. Let $\eps = (\kappars)^p - \rho$ for some positive constant $\rho$. Then, the probability of success of protocol $\pi$ is at least,
	\[
	\frac12 \cdot (1+ \frac23 \cdot \rho \cdot (1/\kappars)^p)  = \frac12 + c_{p,\rho},
	\]
	for some constant $c_{p,\rho} > 0$ that depends only on $p$ and $\rho$. We can run protocol $O(1)$ times in parallel to boost this success probability to $7/12$, and taking a majority of the answers. This incurs a cost of $O(1)$ multiplicative factor in the communication. The new probability of success is at least,
	\[
 \frac{7}{12} > \frac12 + \frac{p}{12(p+1)}.
	\]
	Therefore, by \Cref{lem:main-cc-matchinsert}, protocol $\pi'$ has communication cost 
	$$\cc{\pi'} = O(s) + O(n \log n) \geq   \paren{\frac1{(144\raoconst(p+1)^2)}}^p \cdot n^{1 + 1 / (2\log \log n)^{3p}} = n^{1+\Omega(1/\log \log n)^{3p}},$$
	for $p = o(\log \log n/\log \log \log n)$, as $\raoconst$ is a constant. 
	Finally, by rearranging, we have that
	$\cc{\pi} = s = n^{1 + \Omega\paren{1/(\log \log n)^{3p}}}$, which proves the statement.
\end{proof}

The only remaining component in the proof is \Cref{lem:roundelim}. This proof can be found in \Cref{app:roundelim}.

% !TeX root = main.tex 
%!TEX root = main.tex
\section{Round Elimination of Hierarchical Embedding}\label{app:roundelim}

This section contains the proof of \Cref{lem:roundelim}. We restate the lemma here for convenience. 

\begin{lemma*}[Restatement of \Cref{lem:roundelim}]
	For every $p \geq 1$, $0 < \eta, \delta \leq 1$ and integer $s \geq 1$, suppose that there exists a deterministic $p$-round $(p+1)$-player protocol $\pi_p$ that, given input $G \sim \GG_p$ and a uniform random search sequence $J$, solves $\PP_p(G, J)$ with probability at least $\delta$ and communication cost $\cc{\pi_p} \leq s$.
	
	Then, for constant $\raoconst$ (in \Cref{eq:msg-compress-final}), there exists a deterministic $(p-1)$-round $p$-player protocol $\pi_{p-1}$ that, given input $G_{p-1} \sim \GG_{p-1}$ and a uniform random search sequence $J'$, solves $\PP_{p-1}(G_{p-1},J')$ with probability at least 
	$$\delta - \eta - \sqrt{\frac{s}{2 \cdot r_p \cdot t_p}}$$
	and communication cost
	$$\cc{\pi_{p-1}} \leq \frac{\raoconst}{\eta} \cdot \paren{\frac{s}{r_p} + p^2}$$
	where $\raoconst \geq 1$ is an absolute constant defined in~\Cref{eq:msg-compress-final} (derived from \Cref{prop:msg-compress}). 
\end{lemma*}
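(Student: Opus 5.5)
We plan a standard round-elimination argument in the style of \cite{AKNS24}, with three ingredients: a direct-sum/embedding step, an information-to-statistical-distance step, and a message-compression step.

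\textbf{Step 1: locating the embedding coordinate.} Let $\rProt_1$ denote the first-round message of $\pi_p$. Since $\GG_p$ is a product distribution (\Cref{prop:product-dist-match}), $\rProt_1$ is a function of the inputs of players $\player{1},\dots,\player{p}$, i.e.\ of the $r_p t_p$ sub-instances $\rH_{i,j}$. It is independent of $\istar$, which is held only by $\player{p+1}$; without loss of generality $\player{p+1}$ speaks last, or its first message depends only on $\istar$ and is simulated publicly.

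The sub-instances are independent, so the chain rule and subadditivity give
\[
\sum_{i,j}\mi{\rProt_1}{\rH_{i,j}} \le \en{\rProt_1} \le s .
\]
Hence for uniform $(i,j)$,
\[
\Exp_{i,j}\,\mi{\rProt_1}{\rH_{i,j}}\le \frac{s}{r_p t_p}.
\]
By Pinsker, the expected total variation distance between $\distribution{\rH_{i,j}\mid \rProt_1}$ and $\distribution{\rH_{i,j}}$ is at most $\sqrt{s/(2r_pt_p)}$.

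\textbf{Step 2: the simulation.} The protocol $\pi_{p-1}$ receives $G'\sim\GG_{p-1}$. Using public randomness, it samples $\istar$ uniformly and $\jstar\in[r_p]$ uniformly; the search sequence becomes $J=(\jstar,J')$. It embeds $G'$ as $H_{\istar,\jstar}$.

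The crucial point is that in $\pi_p$, after conditioning on $\istar$, the relevant coordinate $(\istar,\jstar)$ is a uniform coordinate. The first message, however, was generated without knowledge of $\istar$. So the players publicly sample a first message $\Prot_1$ from its marginal. They then sample the remaining sub-instances jointly from $\distribution{\rH_{-(\istar,\jstar)}\mid \Prot_1, H_{\istar,\jstar}=G'}$. Because $\distribution{\Prot_1\mid H_{\istar,\jstar}}$ is close to $\distribution{\Prot_1}$, this simulation deviates from the true distribution by the total variation bound of Step 1, which costs the $\sqrt{s/(2r_pt_p)}$ term.

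This sampling is not local: the other sub-instances must be consistent with $\Prot_1$, and they are split across players. This is why we need compression. Each player $\player{a}$ privately samples its own parts of $\rH_{-(\istar,\jstar)}$ conditioned on $\Prot_1$. Here we use that, conditioned on a transcript prefix, inputs remain a product across players (the rectangle property), and that the outer sub-instances are independent.

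\textbf{Step 3: compression and the main obstacle.} The main obstacle is the cost bound $\frac{\raoconst}{\eta}(s/r_p+p^2)$. The simulated rounds $2,\dots,p$ of $\pi_p$ have length up to $s$, but they only need to convey information about $G'$. Conditioned on $\istar$, $\Prot_1$ and the public data, the sub-instances $\rH_{\istar,1},\dots,\rH_{\istar,r_p}$ are vertex-disjoint (\Cref{prop:ind-subG}) and independent. A direct-sum argument then bounds the information that each later message $\rProt_k$ reveals about a uniform $\rH_{\istar,\jstar}$ by $\frac1{r_p}\mi{\rProt_k}{\rH_{\istar,*}}$, which is at most $s/r_p$ summed over rounds.

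We then replace each of the $p-1$ remaining rounds by a compressed message using the one-way message compression of \Cref{prop:msg-compress} \cite{BravermanG14,RaoY20}. Each round costs $O(I_k+\log)$ in expectation, with additive overhead that contributes the $p^2$ term. Markov truncation at $1/\eta$ times the expectation, together with aggregated compression error, loses at most $\eta$ in success probability.

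Finally, the last player outputs exactly what $\pi_p$'s last player would output on the search sequence $(\jstar,J')$, and the final protocol is derandomized by fixing the best public randomness. We expect the delicate parts to be justifying the per-round information bound under the conditioning on $\Prot_1$ and $\istar$, and the bookkeeping of the compression error.
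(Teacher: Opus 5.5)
Your ingredients are the paper's: public $\istar,\jstar$, a first round sampled publicly from its marginal, private completion of the inputs via the rectangle property, the bound $\sum_{i,j}\mi{\rProt_1}{\rH_{i,j}}\le s$ plus Pinsker for the $\sqrt{s/(2r_pt_p)}$ loss, per-message compression, Markov truncation, and fixing the public coins. The one structural difference is that you eliminate the first round before compressing. The paper instead builds the player-eliminated protocol $\sigma$ on the true distribution, compresses and truncates all $p$ rounds, and only then replaces round one by sampling.

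The step you flag as delicate is justified by a false claim, and this is a real gap. Conditioned on $\ristar$ and $\rProt_1$, the special sub-instances $\rH_{\ristar,1},\dots,\rH_{\ristar,r_p}$ are \emph{not} independent. The rectangle property only decouples different players, while $\player{1}$ holds its part of every sub-instance. For example, suppose $\player{1}$'s first message lists, for all $i,j$, the parity of corresponding bits of $\rH_{i,1}(1)$ and $\rH_{i,j}(1)$, and a later one-bit message reveals that bit of $\rH_{\ristar,1}(1)$. Given the prefix, that later message carries a full bit about every $\rH_{\ristar,j}$, so your per-round inequality with the factor $1/r_p$ fails by a factor of $r_p$. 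The fix is never to condition on $\rProt_1$ inside the direct sum. The chain rule gives $\mi{\rH_{\ristar,j}}{\rProt_{\geq 2}\mid \rProt_1,\ristar}\le\mi{\rH_{\ristar,j}}{\rProt\mid \ristar}$. Only then use the \emph{a priori} independence of the $\rH_{\ristar,j}$ (\Cref{prop:info-increase} and the chain rule) to get $\sum_j\mi{\rH_{\ristar,j}}{\rProt\mid\ristar}\le s$. This is \Cref{clm:prot-low-info}, where the first round is charged as part of the transcript.

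Your ordering also obliges you to say under which law the expected compressed length of rounds $2,\dots,p$ is computed. It must be the true law $\GG_p$, in which $\rH_{\ristar,\rjstar}$ is correlated with $\rProt_1$. In your simulated law, $G'$ is independent of $\Prot_1$ while the other special sub-instances are not. So coordinate $\jstar$ is no longer exchangeable with the other $r_p-1$, and the direct sum says nothing about the expected message lengths there. The correct bookkeeping is the paper's:
\begin{itemize}
\item bound the expected length and apply Markov under the true process, losing $\eta$;
\item enforce the cost by worst-case truncation;
\item pay for the change of distribution once, in success probability, via \Cref{lem:tvd-diff}.
\end{itemize}

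Two smaller points. First, compression must be applied to each player's message separately; there are at most $p^2$ such messages, which is the source of the $p^2$ term. The sender's distribution must be conditioned only on $G'(a)$ and the blackboard, not on its privately sampled completion of $G(a)$. Otherwise the expected length is the information about the whole simulated input, which can be as large as $s$. Second, \Cref{prop:msg-compress} is error-free, so there is no ``aggregated compression error''. This is needed for the loss to be exactly $\eta$.
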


We use the notation and tools from
\Cref{sec:info-theory,subsec:info-cost-compression}.

Fix $p \geq 1$ and suppose there exists a deterministic $p$-round $(p+1)$-player protocol $\pi_p$ that, on input $G \sim \GG_p$ and a uniform random search sequence $J$, solves $\PP_p(G, J)$ with probability at least $\delta$ and communication cost $\cc{\pi_p} \leq s$ (as in the statement).

Given an input $G_{p-1} \sim \GG_{p-1}$, we want to use $\pi_p$ to answer the search predicate $\PP_{p-1}(G_{p-1},J')$ for a uniform random search sequence $J' = (j_{p-1}, j_{p-2}, \dots, j_0)$, i.e., solve $\PP_{p-1}(G_{p-1}, J')$.

The proof consists of the following two broad steps:

\begin{enumerate}[label=$(\roman*)$]
	\item A low communication $p$-round $p$-party protocol that solves predicate $\PP_{p-1}$, dubbed the player elimination step. 
	\item A low communciation $(p-1)$-round $p$-party protocol that solves predicate $\PP_{p-1}$, called the round elimination step. 
\end{enumerate}

\subsubsection*{Step 1: Player Elimination}

We first construct a low information cost protocol with one fewer player.

\begin{ourbox}
	\textbf{A $p$-round $p$-party protocol $\sigma$ for $\PP_{p-1}(G_{p-1}, J')$ on input graph $G_{p-1} \sim \GG_{p-1}$ and search sequence $J'$}: 
	\begin{enumerate}
		\item Sample indices $\istar \in [t_p]$ and $\jstar \in [r_p]$ uniformly at random using \underline{public randomness}. 
		\item For every $a \in [p]$, each player $\player{a}$ samples 
		$
		\Paren{G(a) \sim \GG_p \mid \ristar=\istar , \rH_{\istar,\jstar}(a) = G_{p-1}(a)}
		$
		using \underline{private randomness}. 
		Let $G=(G(1), G(2),\ldots,G(p+1))$ be the 
		input of all players in $\GG_p$ where the input of the simulated player $\player{p+1}$ is fixed by $\ristar = \istar$.
		\item The players run $\pi_p$ on input $G$ by \underline{simulating} the messages of player $\player{p+1}$ in $\pi_p$ given that $\istar$ is public knowledge. 
		\item At the end, for the search sequence $J = (j_{p-1}, \ldots, j_0)$, to answer the search predicate $\PP_{p-1}(G_{p-1},J)$, return the answer of $\pi_p$ for $\PP_p(G, (\jstar, j_{p-1},j_{p-2},\ldots,j_0))$.  
	\end{enumerate}
\end{ourbox}

Define the \textbf{distribution $\reald$} as the distribution of the instance $G$ obtained in the protocol $\sigma$ when $G_{p-1} \sim \GG_{p-1}$. We claim that this is the ``right'' distribution. 

\begin{observation}\label[observation]{obs:reald-is-real}
	Distribution $\reald$ is the same as $\GG_p$. 
\end{observation}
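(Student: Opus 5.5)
We will show that the instance $G$ produced by $\sigma$ has the same joint law as a fresh sample from $\GG_p$. The approach is to compute the joint distribution of the pair $(\istar,\jstar)$ together with all player inputs, and then marginalize.

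First, consider the pair $(\istar,\jstar)$ and the embedded sub-instance $G_{p-1}$. The indices $\istar$ and $\jstar$ are drawn uniformly and independently from $[t_p]$ and $[r_p]$, and $G_{p-1}\sim\GG_{p-1}(n_{p-1})$ is drawn independently of them. In $\GG_p$, the index $\ristar$ is uniform on $[t_p]$. The sub-instance $\rH_{\istar,\jstar}$ is distributed as $\GG_{p-1}(n_{p-1})$ and is independent of $\ristar$, because the $r_pt_p$ sub-instances are sampled independently of each other and of $\ristar$. Hence the triple $(\istar,\jstar,G_{p-1})$ has the same law in $\sigma$ as the triple $(\ristar,\jstar,\rH_{\ristar,\jstar})$ in $\GG_p$, where $\jstar$ is an independent uniform index.

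Second, we check that the private sampling reproduces the correct conditional distribution. Each $\player{a}$ with $a\in[p]$ samples $G(a)$ from $\GG_p(a)$ conditioned on $\ristar=\istar$ and $\rH_{\istar,\jstar}(a)=G_{p-1}(a)$. We need the players' independent private samples to jointly follow the law of $(G(1),\dots,G(p))$ under $\GG_p$ conditioned on $\ristar=\istar$ and $\rH_{\istar,\jstar}=G_{p-1}$. This is the main point to verify. Under $\GG_p$, the input $G(a)$ consists of the pieces $H_{i,j}(a)$ over all $i\in[t_p]$ and $j\in[r_p]$. These pieces are independent across $(i,j)$ by construction, and independent across $a$ by \Cref{prop:product-dist-match} applied to $\GG_{p-1}$. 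They are also independent of $\ristar$, since $\ristar$ only determines $\player{p+1}$'s input. Therefore, conditioned on $\ristar$ and $\rH_{\istar,\jstar}$, the inputs $G(1),\dots,G(p)$ remain mutually independent. Each $G(a)$ depends on the conditioning only through $\rH_{\istar,\jstar}(a)$: its other pieces $H_{i,j}(a)$ with $(i,j)\neq(\istar,\jstar)$ are fresh independent samples. So conditioning on the full $\rH_{\istar,\jstar}$ is the same as conditioning each player on its own part, and the product of the private conditionals equals the joint conditional.

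Third, $\player{p+1}$'s input $\Maux$ is a fixed function of $\ristar=\istar$. It therefore has the correct conditional law given everything else. Combining the three steps via the chain rule, the joint law of $(\istar,\jstar,G(1),\dots,G(p+1))$ in $\sigma$ matches that in $\GG_p$ with an independent uniform $\jstar$. Marginalizing out $\jstar$ gives $\reald=\GG_p$.
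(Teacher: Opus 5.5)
Your proof is correct and follows essentially the same route as the paper. Both arguments rest on three facts: the product structure of $\GG_p$ across players (\Cref{prop:product-dist-match}, applied to the sub-instances), the mutual independence of the $\rH_{i,j}$ from each other and from $\ristar$, and the fact that $\player{p+1}$'s input is determined by $\istar$. The paper compresses this into one line, whereas you spell out the chain-rule decomposition over $(\istar,\jstar,G_{p-1})$. You also check explicitly that independent per-player conditional sampling reproduces the joint conditional law, because the conditioning event $\{\rH_{\istar,\jstar}=G_{p-1}\}$ is a product over players.
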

\begin{proof}
	We know from \Cref{prop:product-dist-match} that $\GG_p$ is a product distribution of $\GG_{p}(1) \times \ldots \GG_{p}(p+1)$. Moreover, each $\GG_{p}(i)$ for $i \in [p]$ is
	a collection of \emph{independent} instances $H_{i,j}(a) \sim \GG_{p-1}(a)$ and $\istar \in [t_p]$ is chosen uniformly at random in $\GG_p$ to define the input of player $\player{p+1}$. This matches 
	exactly the distribution of $\reald$ in $\sigma$, given that $H_{p-1}(\ell)$ used by player $\player{\ell}$ is sampled from $\GG_{p-1}(\ell)$ by definition for $\ell \in [p]$. 
\end{proof}

Now, we can show that the information cost of $\sigma$ is much lower than that of $\prot_p$.

\begin{claim} \label{clm:prot-low-info}
	Protocol $\sigma$ solves $\PP_{p-1}$ with probability at least $\delta$ and information cost
	$$ \ic{\sigma}{\GG_{p-1}} \leq \frac{1}{r_p} \cdot \ic{\pi_p}{\GG_p}.$$
\end{claim}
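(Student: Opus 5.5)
The claim has two parts: the success probability of $\sigma$, and its information cost. I would prove them separately.

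\textbf{Success probability.} By \Cref{obs:reald-is-real}, the instance $G$ simulated in $\sigma$ has law exactly $\GG_p$. The players sample $\istar$ from the same distribution $\GG_p$ uses, so the simulated player $\player{p+1}$ is consistent. The pair $(\jstar, J')$ is a uniform search sequence for $\PP_p$ and is independent of $G$. Moreover, the ultimate gadget indexed by $(\jstar, J')$ in $G$ is exactly the ultimate gadget indexed by $J'$ in $H_{\istar,\jstar}=G_{p-1}$. This follows from the recursive definition of $\cMstar$ and from \Cref{prop:ind-subG}. Hence $\PP_p(G,(\jstar,J'))=\PP_{p-1}(G_{p-1},J')$, and $\sigma$ succeeds with the same probability $\delta$ as $\pi_p$.

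\textbf{Information cost.} I would use the external-type definition of information cost from the appendix preliminaries, in which the public randomness $(\istar,\jstar)$ is part of the conditioning. The aim is
\[
\ic{\sigma}{\GG_{p-1}} = \Ex_{\istar,\jstar}\, \mi{\Prot}{\rH_{\istar,\jstar}}[\ristar=\istar] .
\]
Here the transcript $\Prot$ is that of $\pi_p$ on $\GG_p$. This identity uses the fact that, conditioned on $\ristar=\istar$, the joint law of $(\Prot, \rH_{\istar,\jstar})$ in $\sigma$ matches that in $\pi_p$, which again follows from \Cref{obs:reald-is-real}.

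Fix $\istar$. By \Cref{prop:product-dist-match}, the sub-instances $\rH_{\istar,1},\ldots,\rH_{\istar,r_p}$ are mutually independent given $\ristar=\istar$. The chain rule then gives
\[
\sum_{j\in[r_p]} \mi{\Prot}{\rH_{\istar,j}}[\ristar=\istar] \le \mi{\Prot}{\rH_{\istar,1},\ldots,\rH_{\istar,r_p}}[\ristar=\istar].
\]
This step is the standard superadditivity of mutual information over independent variables: $\mi{\Prot}{\rH_j}[\rH_{<j}] \ge \mi{\Prot}{\rH_j}$ by independence. The right-hand side is at most $\mi{\Prot}{G}[\ristar]$, which is bounded by $\ic{\pi_p}{\GG_p}$. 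Averaging over the uniformly random $\jstar$ then yields the factor $1/r_p$.

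\textbf{Main obstacle.} The difficulty is matching the precise definition of $\ic{\cdot}{\cdot}$ used in \Cref{subsec:info-cost-compression}. If it is internal (multi-party) information cost, the same direct-sum argument must be carried out for each player's view. For each player $\player{a}$ and fixed $\istar$, conditioned on that player's own input, the other players' sub-instances are still independent across $j$ by the product structure, so the chain-rule argument applies player by player. I would also check that conditioning on $\ristar$, or equivalently giving $\player{p+1}$'s input to everyone, only lowers the relevant quantity. This holds because $\ristar$ is independent of the other players' inputs, so $\mi{\Prot}{X}[\ristar]\le \mi{\Prot}{X,\ristar}$, and the latter is bounded by the full information cost.
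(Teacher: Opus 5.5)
Your proposal is correct and follows essentially the same route as the paper. Both arguments condition on the public randomness $(\istar,\jstar)$, average over the uniform $\jstar$, and use the independence of $\rH_{\istar,1},\ldots,\rH_{\istar,r_p}$ given $\istar$ to get superadditivity via the chain rule (the paper cites \Cref{prop:info-increase} for this step); both then add the nonnegative term $\mi{\ristar}{\rPi}$ and apply data processing to reach $\mi{\rG}{\rPi}=\ic{\pi_p}{\GG_p}$, and since the paper uses the external information cost of \Cref{def:ext-info}, your hedge about internal information cost is not needed.
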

\begin{proof}
	We know from \Cref{obs:reald-is-real} that the distribution of $G$ in $\sigma$ is the same as that of $\GG_p$.
We further have that any message communicated by the simulated player $\player{p+1}$ is determined by the transcript of messages on the blackboard at that point and the player's input, which is fixed by $\istar$ and publicly known. 
Thus, the transcript of messages communicated by protocol $\sigma$ on input $G_{p-1} \sim \GG_{p-1}$ is identical to that of protocol $\pi_p$ on input $G \sim \GG_p$.
	
	Since $J = (j_{p-1}, \ldots, j_0)$ is a uniform random search sequence for $\PP_{p-1}$ and $\jstar \in [r_p]$ is a uniformly chosen, we also have that $(\jstar, j_{p-1}, \ldots, j_0)$ is a uniform random search sequence for $\PP_p$. 
	Recall that $H_{\istar,\jstar} = G_{p-1}$ and, by definition, $\jstar$ in the search sequence points to the special subgraph $H_{\istar,\jstar}$. 
	Hence, it follows that
	$\PP_p(G,(\jstar, j_{p-1}, \ldots, j_0)) = \PP_{p-1}(H_{\istar,\jstar}, J)$
	and $\sigma$ returns the correct answer when $\pi_p$ succeeds, i.e., with probability at least $\delta$.
	
	Finally,	
	we let $\rPi, \rSigma$ denote the random variables of the transcript of messages communicated in $\pi_p$ and $\sigma$ respectively.
	Recall that $\rSigma$ is $\rPi$ without the messages communicated by the simulated player $\player{p+1}$.
	Furthermore, we have that the random variables $\rH_{1,1}, \ldots, \rH_{t_p,r_p}, \ristar, \rjstar$ are mutually independent of each other.
	Therefore, we have that
	\begin{align*}
		\ic{\sigma}{\GG_{p-1}} 
		&= \mi{\rG_{p-1}}{\rSigma \mid \ristar, \rjstar} \tag{by \Cref{def:ext-info}} \\
		&= \frac{1}{r_p} \cdot \sum_{\jstar=1}^{r_p} \mi{\rG_{p-1}}{\rSigma \mid \ristar, \rjstar = \jstar} \tag{since $\rjstar$ is a uniform distribution} \\
		&= \frac{1}{r_p} \cdot \sum_{\jstar=1}^{r_p} \mi{\rH_{\ristar,\jstar}}{\rSigma \mid \ristar, \rjstar = \jstar}
		\tag{since $H = H_{\istar, \jstar}$ for any choice of $\istar$} \\
		&= \frac{1}{r_p} \cdot \sum_{\jstar=1}^{r_p} \mi{\rH_{\ristar,\jstar}}{\rSigma \mid \ristar}
		\tag{as event $\rjstar = \jstar \perp $ all $\rH_{i,j}, \rSigma, \ristar$} \\
		&\leq \frac{1}{r_p} \cdot \sum_{\jstar=1}^{r_p} \mi{\rH_{\ristar,\jstar}}{\rSigma \mid \ristar, \rH_{\ristar, <\jstar}}
		\tag{by \Cref{prop:info-increase} and as $\rH_{\ristar, <\jstar} \perp \rH_{\ristar, \jstar} \mid \ristar$} \\
		&= \frac{1}{r_p} \cdot \mi{\rH_{\ristar,*}}{\rSigma \mid \ristar}
		\tag{by the chain rule (\itfacts{chain-rule})} \\
		&\leq \frac{1}{r_p} \cdot \big(\mi{\rH_{\ristar,*}}{\rSigma \mid \ristar} + \mi{\rH_{-\ristar, *}}{\rSigma \mid \rH_{\ristar, *}, \ristar} + \mi{\ristar}{\rSigma}\big)
		\tag{since mutual information is non-negative} \\
		&= \frac{1}{r_p} \cdot \mi{\set{\rH_{i,j}}_{i \in [t_p], j \in [r_p]}, \ristar}{\rSigma} 
		\tag{by the chain rule (\itfacts{chain-rule})} \\
		&\leq \frac{1}{r_p} \cdot \mi{\rG}{\rPi} = \frac{1}{r_p} \cdot \ic{\pi_p}{\GG_p}
	\end{align*}
	where the final inequality holds by the data processing inequality (\itfacts{data-processing}) since we have that $(\{\rH_{i,j}\}_{i \in [t_p], j \in [r_p]}, \ristar)$ is a (deterministic) function of $\rG$ and $\rSigma$ is a function of $\rPi$.
\end{proof}

Although having a low information cost protocol is a significant improvement, this only holds since $\sigma$ faithfully runs $\pi_p$ on inputs $G$ distributed according to $\GG_p$.
We will, however, later require guarantees for inputs distributed `almost' according to $\GG_p$. 
Hence, we require a low communication cost variant of the protocol, which is independent of the input distribution.

We design a protocol $\sigmae$ by compressing the transcript of messages communicated in $\sigma$ down to its information cost. 
To avoid ambiguity, here, for every $a \in [p]$ and $k \in [p]$, 
we use $\Sigma_{k}(a)$ to denote the message sent by player $\player{a}$ of $\sigma$ in round $k$. We also use $\rR_{\sigma}$ to denote the public randomness of $\sigma$. 
We define $\Sigmae_{k}(a)$ and $\rR_{\sigmae}$ analogously for $\sigmae$. Protocol $\sigmae$ is as follows: 

\begin{ourbox}
	\textbf{A \underline{communication efficient} implementation of $\sigma$ as a protocol $\sigmae$}: 
	\begin{itemize}
		\item For $k=1$ to $p$ rounds and $a=1$ to $p$ players in this order: 
		\begin{enumerate}
			\item Suppose the players at this point \emph{all} know $R_{\sigma}$ as well as
			\[
			\Sigma_{<k} := \Sigma_{1}(1), \Sigma_{1}(2), \ldots,\Sigma_{k-1}(p), \quad \text{and} \quad \Sigma_{k}(<a) := \Sigma_{k}(1),\ldots,\Sigma_{k}(a-1),
			\]
			i.e., the public randomness of $\sigma$ and the messages that ``should have been'' communicated by $\sigma$ in the first $k-1$ rounds plus the ones by $\player{1},\ldots,\player{a-1}$ in round $k$. 
			\item Define the distributions 
			\[
			\mathcal{A}_{k,a} := \rSigma_{k}(a) \mid \Sigma_{<k},\Sigma_{k}(<a), R_{\sigma}, G_{p-1}(a), \quad \text{and} \quad \mathcal{B}_{k,a} := \rSigma_{k}(a) \mid \Sigma_{<k},\Sigma_{k}(<a), R_{\sigma}; 
			\]
			notice that the difference is only that in $\mathcal{A}_{k,a}$ we also condition on the input of player $\player{a}$ in $G_{p-1}$, i.e., $G_{p-1}(a)$, but in $\mathcal{B}_{k,a}$, we do not. 
			\item The player $\player{a}$ plays the role of Alice in~\Cref{prop:msg-compress}, who knows both $\mathcal{A}_{k,a}$ and $\mathcal{B}_{k,a}$ and all other players play
			the role of Bob who only knows $\mathcal{B}_{k,a}$. The players use public randomness and Alice (i.e., player $\player{a}$) writes the message of~\Cref{prop:msg-compress}, denoted by $\sigmae_{k,a}$, on the blackboard
			which allows \emph{all} players to sample a \emph{single} message
			\[
			\Sigma_{k}(a) \sim \mathcal{A}_{k,a} = \rSigma_{k}(a) \mid \Sigma_{<k},\Sigma_{k}(<a), G_{p-1}(a),R_{\sigma}. 
			\] 
			This allows the players to obtain the message $\Sigma_k(a)$ and continue this for-loop. 
		\end{enumerate}
	\end{itemize}
\end{ourbox}

\begin{claim} \label{lem:prot-low-comm-expec}
	Protocol $\sigmae$ solves $\PP_{p-1}$ with probability at least $\delta$ and \underline{expected} length of messages
	$${\raoconst} \cdot \Paren{\frac{s}{r_p} + p^2}$$
	where $\raoconst \geq 1$ is an absolute constant defined in~\Cref{eq:msg-compress-final} (derived from \Cref{prop:msg-compress}). 
\end{claim}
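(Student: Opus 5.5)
The proof has two parts. First, show that $\sigmae$ exactly reproduces the transcript distribution of $\sigma$. Second, bound the expected length of each compressed message by the corresponding conditional mutual information term and sum.

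\textbf{Correctness.} By \Cref{prop:msg-compress}, in each step the sampled message $\Sigma_k(a)$ has law exactly $\mathcal{A}_{k,a}$. This law is the conditional distribution of $\player{a}$'s message in $\sigma$, given the prior transcript, the public randomness, and $\player{a}$'s input $G_{p-1}(a)$. In $\sigma$, the message of $\player{a}$ is a function of exactly these quantities together with $\player{a}$'s private randomness, which was used to sample $G(a)$.

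By induction over the ordered pairs $(k,a)$, the joint distribution of $(G_{p-1},R_\sigma,\Sigma_{<k},\Sigma_k(<a))$ in $\sigmae$ therefore equals that in $\sigma$. Hence the final transcript, and with it the answer returned, has the same joint law with $(G_{p-1},J)$ as in $\sigma$. By \Cref{clm:prot-low-info}, $\sigmae$ then succeeds with probability at least $\delta$.

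\textbf{Length.} \Cref{prop:msg-compress} gives expected length at most $\raoconst\cdot(\kl{\mathcal{A}_{k,a}}{\mathcal{B}_{k,a}}+1)$ for the message $\sigmae_{k,a}$; I would use whatever exact additive form the proposition provides, e.g. $\raoconst(\mathbb{D}+\log \mathbb{D}+1)$ absorbed into $\raoconst(\mathbb{D}+1)$. Taking expectation over the conditioning, the expected divergence equals
\[
\mi{\rSigma_k(a)}{\rG_{p-1}(a) \mid \rSigma_{<k},\rSigma_k(<a),\rR_\sigma}.
\]
The next step is to relate this to the information cost. The natural route is the standard bound that for product input distributions (\Cref{prop:product-dist-match}), the sum of these per-message terms is at most $\mi{\rSigma}{\rG_{p-1}\mid \rR_\sigma}$.

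To justify this, note that each player's input $G_{p-1}(a)$ is independent of the others, and remains independent conditioned on any prefix of the transcript; this is the rectangle property of protocols with product inputs. It follows that
\[
\mi{\rSigma_k(a)}{\rG_{p-1}(a)\mid\cdot}=\mi{\rSigma_k(a)}{\rG_{p-1}\mid\cdot}.
\]
This holds because, given the prefix and $G_{p-1}(a)$, the message $\Sigma_k(a)$ depends only on $\player{a}$'s input and private randomness, so it is independent of $G_{p-1}(-a)$. Applying the chain rule over all messages then gives
\[
\sum_{k,a}\Ex[\mathbb{D}] = \mi{\rSigma}{\rG_{p-1}\mid \rR_\sigma} = \ic{\sigma}{\GG_{p-1}}.
\]
By \Cref{clm:prot-low-info}, this is at most $\ic{\pi_p}{\GG_p}/r_p$, and $\ic{\pi_p}{\GG_p}\le \en{\rPi}\le s$.

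Summing the additive constants over the $p\cdot p=p^2$ messages gives the $p^2$ term. The expected total length is therefore at most $\raoconst(s/r_p+p^2)$, after adjusting $\raoconst$ so that it absorbs any slack from the proposition's exact form.

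\textbf{Main obstacle.} The delicate step is justifying the equality of the per-player information term with the term involving the full input. Two points need care. First, conditioning on the public randomness $R_\sigma$ (that is, $\istar,\jstar$) preserves the product structure; it does, since the private sampling in $\sigma$ step 2 is done independently per player given $\istar$. Second, the definition of information cost in \Cref{def:ext-info} must match the conditioning on $\ristar,\rjstar$ used here.
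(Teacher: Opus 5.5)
Your proposal is correct and follows essentially the same route as the paper. The steps match: the per-message bound from \Cref{prop:msg-compress}, the conversion of the expected divergence into $\mi{\rSigma_k(a)}{\rG_{p-1}(a) \mid \rSigma_{<k},\rSigma_k(<a),\rR_\sigma}$ via \Cref{fact:kl-info}, the chain rule to reach $\ic{\sigma}{\GG_{p-1}}$, and then \Cref{clm:prot-low-info} together with \Cref{prop:ic-cc}.

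The step you flag as the main obstacle is not actually needed. The paper uses only the inequality $\mi{\rSigma_k(a)}{\rG_{p-1}(a)\mid\cdot}\le\mi{\rSigma_k(a)}{\rG_{p-1}\mid\cdot}$, which is immediate from data processing because $\rG_{p-1}(a)$ is a function of $\rG_{p-1}$. So the rectangle-property equality is unnecessary for the length bound. That conditional independence is, however, what implicitly justifies the faithfulness of the simulation, a point that both you and the paper treat only briefly.
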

\begin{proof}
	The fact that $\sigmae$ is {faithfully} simulating the protocol $\sigma$ follows immediately from \Cref{prop:msg-compress} as the message compression approach 
	has no error. 
	We now bound the length of messages communicated by $\sigmae$ \emph{in expectation}. 
	
	With a slight abuse of notation, we use $\card{\Sigmae}$ to denote the length of the messages communicated by $\sigmae$ (on a particular input and realization of randomness) 
	and $\card{\Sigmae_{k}(a)}$ for $k \in [p]$ and $a \in [p]$ as the length of the message $\Sigmae_{k}(a)$. Thus, by linearity of expectation, 
	\[
	\Exp\card{\Sigmae} = \sum_{k=1}^{p} \sum_{a=1}^{p} \Exp\card{\Sigmae_{k}(a)}. 
	\]
	Moreover, by~\Cref{prop:msg-compress}, for every $k \in [p]$ and $a \in [p]$, we have, 
	\begin{align*}
		\Exp\card{\Sigmae_{k}(a)} &\leq \raoconst \cdot \Paren{\Exp_{\Sigma_{<k},\Sigma_{k}(<a), G_{p-1}(a),R_{\sigma}} \kl{\mathcal{A}_{k,a}}{\mathcal{B}_{k,a}} + 1} 
		\tag{as the distributions of~\Cref{prop:msg-compress} are $\mathcal{A}_{k,a},\mathcal{B}_{k,a}$} \\
		&\leq \raoconst \cdot \Paren{\Exp_{\substack{\Sigma_{<k},\Sigma_{k}(<a), \\ G_{p-1}(a),R_{\sigma}}} \kl{\rSigma_{k}(a) \mid \Sigma_{<k},\Sigma_{k}(<a), R_{\sigma}, G_{p-1}(a)}{\rSigma_{k}(a) \mid \Sigma_{<k},\Sigma_{k}(<a), R_{\sigma}} + 1}
		\tag{by the definition of $\mathcal{A}_{k,a},\mathcal{B}_{k,a}$} \\
		&= \raoconst \cdot \Paren{\mi{\rSigma_{k}(a)}{\rG_{p-1}(a) \mid \rSigma_{<k},\rSigma_{k}(<a), \rR_{\sigma}} + 1}. \tag{by~\Cref{fact:kl-info}}
	\end{align*}
	Plugging these bounds in the expectation-term above implies 
	\begin{align*}
		\Exp\card{\Sigmae} &\leq \sum_{k=1}^{p} \sum_{a=1}^{p}\raoconst \cdot \Paren{\mi{\rSigma_{k}(a)}{\rG_{p-1}(a) \mid \rSigma_{<k},\rSigma_{k}(<a), \rR_{\sigma}} + 1} \\
		&\leq \sum_{k=1}^{p} \sum_{a=1}^{p}\raoconst \cdot \Paren{\mi{\rSigma_{k}(a)}{\rG_{p-1} \mid \rSigma_{<k},\rSigma_{k}(<a), \rR_{\sigma}} + 1} \tag{by~\itfacts{data-processing}  as $\rG_{p-1}(a)$ is a deterministic function of $\rG_{p-1}$} \\
		&=  \raoconst \cdot \Paren{\mi{\Sigma}{\rG_{p-1} \mid \rR_{\sigma}} + p^2} \tag{by the chain rule (\itfacts{chain-rule}) where $\Sigma = \set{\rSigma_{k}(a)}_{k \in [p], a \in [p]}$} \\
		&= \raoconst \cdot \Paren{\ic{\sigma}{\GG_{p-1}} + p^2} \tag{by \Cref{def:ext-info}} \\
		&\leq \raoconst \cdot \Paren{\frac{1}{r_p} \cdot \ic{\pi_p}{\GG_{p}} + p^2} \tag{by \Cref{clm:prot-low-info}} \\
		&\leq \raoconst \cdot \Paren{\frac{s}{r_p}+ p^2}
	\end{align*}
	where the last inequality holds since the communication cost of a protocol is always larger than its information cost (\Cref{prop:ic-cc}).
\end{proof}

We can now easily construct a protocol $\pi'$ that is identical to protocol $\sigmae$ except with message truncation: 
if, at any point, a player is going to send a message that results in the length of the messages communicated by the protocol to become more than the desired communication cost, then
terminate the protocol and return `fail'. We only provide the full protocol here for completeness. 

\begin{ourbox}
	\textbf{Protocol $\prot'$: the truncated version of the protocol $\sigmae$ for the given parameter $\eta > 0$ from \Cref{lem:roundelim}}
	\begin{itemize}
		\item Run the protocol $\sigmae$ as is. If at any point, a player is going to send a message that results in the communication 
		cost of the protocol to become more than 
		\[
		\frac{\raoconst}{\eta} \cdot \Paren{\frac{s}{r_p} + p^2}, 
		\]
		terminate the protocol and return `fail'. 
	\end{itemize}
\end{ourbox}

\begin{corollary} \label{cor:prot-low-comm}
	Protocol $\pi'$ solves $\PP_{p-1}$ with probability at least $\delta - \eta$ and communication cost
	$$\cc{\pi'} \leq \frac{\raoconst}{\eta} \cdot \Paren{\frac{s}{r_p} + p^2}$$
	where $\eta > 0$ is the parameter given in the statement of \Cref{lem:roundelim}. 
\end{corollary}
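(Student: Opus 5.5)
The plan is to obtain \Cref{cor:prot-low-comm} from \Cref{lem:prot-low-comm-expec} by Markov's inequality plus a union bound over the two ways $\pi'$ can go wrong. Write
\[
L := \raoconst \cdot \Paren{\frac{s}{r_p} + p^2}
\]
for the expected-length bound of \Cref{lem:prot-low-comm-expec}.

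\textbf{Communication cost.} This part is by construction. $\pi'$ aborts before any message that would push the total length above $L/\eta$. Hence on every input and every realization of the public and private randomness, the transcript has length at most $L/\eta$, which is exactly the claimed bound on $\cc{\pi'}$.

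\textbf{Probability of success.} Let $F$ be the event that $\sigmae$, run on $G_{p-1}\sim\GG_{p-1}$ with its randomness, communicates strictly more than $L/\eta$ bits in total. Off $F$, the run of $\pi'$ is identical to the run of $\sigmae$ on the same input and randomness, so both output the same answer. By \Cref{lem:prot-low-comm-expec} the expected length satisfies $\Exp\card{\Sigmae}\le L$, so Markov's inequality gives
\[
\Pr[F] \le \eta .
\]
Let $S$ be the event that $\sigmae$ answers $\PP_{p-1}(G_{p-1},J')$ correctly; by \Cref{lem:prot-low-comm-expec}, $\Pr[S]\ge\delta$. Then $\pi'$ succeeds on $S\setminus F$, so
\[
\Pr[\pi' \text{ succeeds}] \ge \Pr[S]-\Pr[F]\ge \delta-\eta .
\]
The random search sequence $J'$ is independent of everything else, so including it in the probability space changes nothing.

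\textbf{Main subtlety.} The only delicate point is that the expectation in \Cref{lem:prot-low-comm-expec} is over the same joint distribution (input $G_{p-1}\sim\GG_{p-1}$, public randomness, private randomness) as the success probability. That lemma's bound came from information cost under $\GG_{p-1}$, so it is an average over this exact distribution, and Markov applies in that space. $\pi'$ still uses randomness at this stage; the paper can derandomize it later by fixing the best randomness, which preserves both the success guarantee and the worst-case cost.
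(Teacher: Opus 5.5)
Your proof is correct and follows the paper's argument. Truncation gives the worst-case bound on $\cc{\pi'}$, and Markov's inequality applied to the expected length from \Cref{lem:prot-low-comm-expec}, over the same input and randomness distribution as the success event, bounds the abort probability by $\eta$, so success drops to at most $\delta-\eta$.
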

\begin{proof}
	By Markov's Inequality, the truncation 
	of the messages to the desired communication cost as in the statement	
	incurs a failure probability of $\eta$ in the simulation of $\sigmae$.
	Therefore, by \Cref{lem:prot-low-comm-expec}, the success probability is at least $\delta - \eta$ and, by the truncation, the communication cost is as desired.
\end{proof}

We have constructed a low communication protocol with one fewer player but with the same number of rounds. The next step reduces the number of rounds by one.

\subsubsection*{Step 2: Round Elimination}

We are now ready to use the low communication cost protocol $\pi'$ with one fewer player to construct a low communication cost protocol $\pi_{p-1}$ with one fewer player and, additionally, one fewer round. 
We accomplish this by sampling the messages communicated in the first round in order to eliminate a round.
Note that $\pi_{p-1}$ will simulate $\pi_p$ in exactly the same way as protocol $\pi'$; however, the distribution of the underlying input $G$ will necessarily be sampled differently.

\begin{ourbox}
	\textbf{A $(p-1)$-round $p$-party protocol $\pi_{p-1}$ for $\PP_{p-1}$ on input graph $G_{p-1} \sim \GG_{p-1}$}: 
	\begin{enumerate}
		\item Sample first-round messages $\rPi_{1}(1),\ldots,\rPi_{1}(p)$ of $\pi_{p}$ from $\GG_p$ and indices $\istar \in [t_p]$ and $\jstar \in [r_p]$ uniformly at random and independently using \underline{public randomness}\footnote{We emphasize
			that we are \emph{not} sampling the message $\rPi_{1}(p+1)$ in this step (as it is determined by $\istar$).}.   
		\item For every $a \in [p]$, each player $\player{a}$ sets $H_{\istar, \jstar}(a) = G_{p-1}(a)$ and samples
		\[
		G(a) \sim \GG_p \mid \rPi_{1}(1),\ldots,\rPi_{1}(p), \ristar=\istar , \rH_{\istar,\jstar}(a)=G_{p-1}(a)
		\]
		using \underline{private randomness}. Let $G=(G(1),G(2),\ldots,G(p+1))$ be the 
		input of all players where the input of the simulated player $\player{p+1}$ is fixed by $\ristar = \istar$.
		\item The players run $\pi_p$ on input $G$ \underline{from its second round} onwards, following the protocol $\prot'$. 
		\item At the end, for a uniform search sequence $J=(j_{p-1},\ldots,j_0)$, to answer the search predicate $\PP_{p-1}(G_{p-1},J)$, return the answer of $\pi_p$ for $\PP_p(G, (\jstar, j_{p-1},\ldots,j_0))$\footnote{This is possible since
			$\prot'$ generates the messages of $\pi_p$ on input $G$ implicitly.}.  
	\end{enumerate}
\end{ourbox}

As previously mentioned, both $\pi'$ and $\pi_{p-1}$ are simulating $\pi_p$ on an input $G$ in exactly the same way, but with different underlying distributions.
In protocol $\pi'$, $G$ is distributed according to $\GG_p$ and is exactly what the protocol $\pi_p$ would expect as an input. 
Let $\dist(G)$ represent this distribution of $G$.
On the other hand, in protocol $\pi_{p-1}$, $G$ is distributed differently.
Let $\dist(G \mid \pi_{p-1})$ represent this distribution.
We show that these distributions are statistically close to each other.

\begin{claim} \label{lem:tvd-diff}
	$$\tvd{\dist(G)}{\dist(G \mid \pi_{p-1})} \leq \sqrt{\frac{s}{2 \cdot r_p \cdot t_p}}.$$
\end{claim}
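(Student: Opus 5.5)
The plan is the standard round-elimination TVD argument of \cite{AKNS24}. In $\pi_{p-1}$ the input $G$ is sampled in two stages. First the public randomness fixes the first-round messages $\Pi_1 := (\Pi_1(1),\ldots,\Pi_1(p))$, drawn from their marginal under $\GG_p$. Then each player $\player{a}$ samples $G(a)$ conditioned on $\Pi_1$, $\ristar=\istar$ and $\rH_{\istar,\jstar}(a)=G_{p-1}(a)$. Since $G_{p-1}\sim\GG_{p-1}$ is the marginal of $\rH_{\istar,\jstar}$, the resulting law is $\dist(\Pi_1)\times\dist(\istar,\jstar)\times\dist(\rH_{\istar,\jstar})\times\dist(G\mid \Pi_1,\istar,\jstar,\rH_{\istar,\jstar})$, up to one subtlety checked below. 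The true law $\dist(G)$ factors the same way, except that $\dist(\rH_{\istar,\jstar})$ is replaced by $\dist(\rH_{\istar,\jstar}\mid \Pi_1,\istar,\jstar)$.

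The subtlety is that each player conditions only on its own part $H_{\istar,\jstar}(a)$. This is harmless because, by \Cref{prop:product-dist-match}, the inputs of the different players are independent. A first-round message is a function of a single player's input, so conditioned on $\Pi_1$ and $\istar$ the inputs $G(1),\ldots,G(p)$ stay independent (rectangle property). Hence the joint conditional sample equals the product of per-player conditional samples.

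With this factorization, the total variation distance between the two laws of $G$ equals the expected TVD between $\dist(\rH_{\istar,\jstar})$ and $\dist(\rH_{\istar,\jstar}\mid\Pi_1,\istar,\jstar)$, taking the expectation over $(\Pi_1,\istar,\jstar)$. Pinsker's inequality together with Jensen bounds this by
\[
\sqrt{\tfrac12\,\Exp_{\istar,\jstar}\,\mi{\rH_{\istar,\jstar}}{\rPi_1}},
\]
using that the KL divergence averaged over $\Pi_1$ is this mutual information. Here it is essential that $\rPi_1$ is independent of $(\ristar,\rjstar)$. The first-round messages of $\player{1},\ldots,\player{p}$ depend only on the $H_{i,j}$'s, while the message of $\player{p+1}$ is deliberately not sampled. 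The indices are chosen uniformly and independently, so the expectation becomes the average $\frac{1}{r_pt_p}\sum_{i,j}\mi{\rH_{i,j}}{\rPi_1}$.

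The main step, and essentially the only nontrivial one, is bounding this average. Because the $\rH_{i,j}$ are mutually independent, superadditivity (\Cref{prop:info-increase} plus the chain rule, as in \Cref{clm:prot-low-info}) gives
\[
\sum_{i,j}\mi{\rH_{i,j}}{\rPi_1}\le \mi{\set{\rH_{i,j}}}{\rPi_1}\le \en{\rPi_1}\le s.
\]
The last inequality holds because the communication is at most $s$, and prefix-free messages can be assumed, or else $\mi{\cdot}{\cdot}\le\ic{\pi_p}{\GG_p}\le s$ via \Cref{prop:ic-cc}. Plugging this in yields $\sqrt{s/(2r_pt_p)}$. The care needed is in checking that the conditioning on $\ristar$ does not break independence. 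This is why the first-round message of $\player{p+1}$ is excluded, and why $\rPi_1$ is measured against the $\rH_{i,j}$ for all $i$ and not just the special ones.
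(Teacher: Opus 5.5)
Your proposal is correct and follows essentially the same route as the paper. You use the rectangle-property factorization in which the two laws differ only in the $\rH_{\istar,\jstar}$ factor, then the TVD chain rule, Pinsker and Jensen, and finally averaging over the uniform, independent $(\istar,\jstar)$ together with independence of the $\rH_{i,j}$ to reach $\mi{\set{\rH_{i,j}}}{\rPi_1}\le\ic{\pi_p}{\GG_p}\le s$. The paper applies \Cref{prop:info-increase} before averaging rather than after, which is the same computation. One small correction: for the marginal law of $G$ you get ``at most'' rather than ``equals'' the expected TVD of that factor (equality holds only for the joint law), which does not affect the bound.
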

\begin{proof}
	Recall that we have the product distribution 
	$\GG_p = \prod_{a \in [p+1]} \GG_p(a)$.
	Furthermore, for any $a \in [p]$, we have the product distribution 
	$\GG_p(a) = \prod_{i \in [t_p], j \in [r_p]} \rH_{i,j}(a)$ and $\GG_{p}(p+1) = \ristar$.
	Let $\rPi_1(\leq p) = (\rPi_1(1), \ldots, \rPi_1(p))$.
	By the rectangle property (for multi-player protocols), we then have that 
	\begin{align}
		\paren{\GG_p \mid \rPi_1(\leq p), \ristar, \rjstar, \rH_{\ristar,\rjstar}} 
		 = \prod_{a \in [p]} \paren{\GG_p(a) \mid \rPi_1(\leq p), \ristar, \rH_{\ristar,\rjstar}(a)} \label{eq:rec-prod-dist}
	\end{align}
	where the first equality holds since $\rjstar$ is entirely independent of $\GG_p,\rPi$.
	
	With that, if $\pi_{p-1}$ were a faithful simulation of $\pi'$, we would expect $G$ to be distributed as follows:
	\begin{align*}
		\dist(G) 
		&= (\rPi_1(\leq p), \ristar, \rjstar) \times \paren{\rH_{\ristar, \rjstar} \mid \rPi_1(\leq p), \ristar, \rjstar} \times \paren{\GG_p \mid \rPi_1(\leq p), \ristar, \rjstar, \rH_{\ristar,\rjstar}} \\
		&= (\rPi_1(\leq p), \ristar, \rjstar) \times \paren{\rH_{\ristar, \rjstar} \mid \rPi_1(\leq p), \ristar, \rjstar} \times \prod_{a \in [p]} \paren{\GG_p(a) \mid \rPi_1(\leq p), \ristar, \rH_{\ristar,\rjstar}(a)}. \tag{by \Cref{eq:rec-prod-dist}}
	\end{align*}
	This, however, is {\em not} possible as the second term implies that the given input $G_{p-1}$ would need to be distributed according to $\paren{\rH_{\ristar, \rjstar} \mid \rPi_1(\leq p), \ristar, \rjstar}$ where $\rH_{\ristar,\rjstar}$ without the conditioning is distributed according to $\GG_{p-1}$. 
	Since the given input $G_{p-1} \sim \GG_{p-1}$ in protocol $\pi'$, we have that $G$ is distributed as follows:
	\begin{align*}
		\dist(G \mid \pi_{p-1}) 
		&= (\rPi_1(\leq p), \ristar, \rjstar) \times \paren{\rH_{\ristar, \rjstar}} \times \prod_{a \in [p]} \paren{\GG_p(a) \mid \rPi_1(\leq p), \ristar, \rH_{\ristar,\rjstar}(a)}.
	\end{align*}
	Therefore, the only difference between the two distributions is the second term.
	
	Using this in conjunction with \Cref{fact:tvd-chain-rule}, followed by a standard application of Pinsker's Inequality (\Cref{fact:pinskers}), we have that
	\begin{align}
		\tvd{\dist(G)}{\dist(G \mid \pi_{p-1})} 
		&\leq \Exp_{\rPi_1(\leq p), \ristar,\rjstar} \tvd{\paren{\rH_{\istar, \jstar} \mid \Pi_1(\leq p), \istar, \jstar}}{\rH_{\istar, \jstar}} \nonumber \tag{by \Cref{fact:tvd-chain-rule}} \\
		&\leq \Exp_{\rPi_1(\leq p), \ristar,\rjstar} \left[\sqrt{1/2 \cdot \kl{\rH_{\istar, \jstar} \mid \Pi_1(\leq p), \istar, \jstar}{\rH_{\istar, \jstar}}}\right] \nonumber \tag{by \Cref{fact:pinskers}} \\
		&\leq \sqrt{1/2 \cdot \Exp_{\rPi_1(\leq p), \ristar,\rjstar} \big[\kl{\rH_{\istar, \jstar} \mid \Pi_1(\leq p), \istar, \jstar}{\rH_{\istar, \jstar}}\big]} \label{eq:tvd-diff}
	\end{align}
	where the last inequality holds by Jensen's Inequality since $\sqrt{\cdot}$ is concave.
	
	Recall that the random variables $\rH_{1,1}, \ldots, \rH_{t_p,r_p}, \ristar, \rjstar$ are mutually independent of each other.
	We then have that
	\begin{align*}
		\Exp_{\rPi_1(\leq p), \ristar,\rjstar} & \big[\kl{\rH_{\istar, \jstar} \mid \Pi_1(\leq p), \istar, \jstar}{\rH_{\istar, \jstar}}\big] \\
		&= \Exp_{\rPi_1(\leq p), \ristar,\rjstar} \big[\kl{\rH_{\istar, \jstar} \mid \Pi_1(\leq p), \istar, \jstar}{\rH_{\istar, \jstar} \mid \istar, \jstar}\big] \tag{since $\ristar,\rjstar \ \bot \ \rH_{\ristar,\rjstar}$} \\
		&= \mi{\rH_{\ristar, \rjstar}}{\rPi_1(\leq p) \mid \ristar, \rjstar} \tag{by \Cref{fact:kl-info}} \\
		&\leq \mi{\rH_{\ristar, \rjstar}}{\rPi_1(\leq p) \mid \rH_{<\ristar,*}, \rH_{\ristar, <\rjstar}, \ristar, \rjstar} \tag{by \Cref{prop:info-increase} as $\rH_{<\ristar, *}, \rH_{\istar, <\jstar} \perp \rH_{\ristar, \rjstar} \mid \ristar, \rjstar$} \\
		&= \frac{1}{r_p \cdot t_p} \cdot  \sum_{\istar \in [t_p], \jstar \in [r_p]} \mi{\rH_{\istar, \jstar}}{\rPi_1(\leq p) \mid \rH_{<\istar,*}, \rH_{\istar, <\jstar}, \ristar = \istar, \rjstar = \jstar} \tag{since $\ristar$ and $\rjstar$ are uniform random variables}\\
		&= \frac{1}{r_p \cdot t_p} \cdot  \sum_{\istar \in [t_p], \jstar \in [r_p]} \mi{\rH_{\istar, \jstar}}{\rPi_1(\leq p) \mid \rH_{<\istar,*}, \rH_{\istar, <\jstar}} \tag{as event $(\ristar,\rjstar)=(\istar, \jstar) \perp$ all $\rH_{i,j}, \rPi_1(\leq p)$} \\
		&\leq \frac{1}{r_p \cdot t_p} \cdot  \mi{\set{\rH_{i,j}}_{i \in [t_p], j \in [r_p]}}{\rPi_1(\leq p)} \tag{by the chain rule (\itfacts{chain-rule})} \\
		&\leq \frac{1}{r_p \cdot t_p} \cdot  \mi{\rG}{\rPi} \tag{by \itfacts{data-processing} since $\set{\rH_{i,j}}_{i \in [t_p], j \in [r_p]}$ and $\rPi_1(\leq p)$ are functions of $\rG$ and $\rPi$, respectively} \\
		&= \frac{1}{r_p \cdot t_p} \cdot \ic{\pi_p}{\GG_p}
		\leq  \frac{1}{r_p \cdot t_p} \cdot \cc{\pi_p} = \frac{s}{r_p \cdot t_p} .
		\tag{by \Cref{def:ext-info}, then \Cref{prop:ic-cc}, and since $\cc{\pi_r} = s$}
	\end{align*}
	Combining this with \Cref{eq:tvd-diff} completes the proof.
\end{proof}

We can now argue the guarantees of $\pi_{p-1}$.

\begin{claim}
	Protocol $\pi_{p-1}$ solves $\PP_{p-1}$ with probability at least 
	$\delta - \eta - \sqrt{s/(2 \cdot r_p \cdot t_p)}$ 
	and communication cost
	$$\cc{\pi_{p-1}} \leq \frac{\raoconst}{\eta} \cdot \Paren{\frac{s}{r_p} + p^2}.$$
\end{claim}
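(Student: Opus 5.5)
The claim has two parts: the communication bound and the success probability. I will treat them separately, comparing $\pi_{p-1}$ with the protocol $\pi'$ of \Cref{cor:prot-low-comm}.

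\textbf{Communication.} This part is by construction. In $\pi_{p-1}$ the first-round messages $\rPi_1(1),\ldots,\rPi_1(p)$ are sampled with public randomness, so nothing is communicated for them. The simulated player $\player{p+1}$'s first-round message is a deterministic function of the public $\istar$, so it also costs nothing. From the second round on, the players run exactly the truncated compressed simulation of $\pi'$. That simulation halts and outputs `fail' once the written length would exceed $\frac{\raoconst}{\eta}\cdot(\frac{s}{r_p}+p^2)$.

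Two further points need checking. First, the compressed simulation of \Cref{prop:msg-compress} has no error for any input distribution, so the truncation bound holds deterministically. Second, removing round one leaves at most $p-1$ rounds. Hence $\cc{\pi_{p-1}}$ is at most the stated quantity. Public randomness can then be fixed by averaging to make $\pi_{p-1}$ deterministic without hurting success probability.

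\textbf{Success probability.} I view both $\pi'$ and $\pi_{p-1}$ as the same randomized procedure $F$ applied to the tuple $(G,\istar,\jstar,\text{randomness})$. The procedure runs $\pi_p$ via the compressed and truncated simulation, then answers $\PP_p(G,(\jstar,J))$. The only difference between the two protocols is the joint law of $(G,\istar,\jstar)$ together with the first-round messages.

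For $\pi'$, I claim the first-round messages of the simulation equal $\rPi_1(\leq p)$ computed from $G$. Since $\pi_p$ is deterministic, these messages are functions of $G$. So sampling $\rPi_1$ first and then $G$ conditioned on $\rPi_1$ gives the same joint law as $\dist(G)$. In $\pi_{p-1}$ the tuple is instead distributed as $\dist(G\mid\pi_{p-1})$, with $\rPi_1$ consistent with $G$ by the conditioning.

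For any event, in particular ``$F$ answers correctly,'' the two probabilities differ by at most the total variation distance between the input laws. This holds because applying the same Markov kernel cannot increase total variation distance, which follows from data processing.

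\Cref{cor:prot-low-comm} gives success at least $\delta-\eta$ for $\pi'$, and \Cref{lem:tvd-diff} bounds the distance by $\sqrt{s/(2r_pt_p)}$. Together these give success at least $\delta-\eta-\sqrt{s/(2r_pt_p)}$.

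\textbf{Main obstacle.} The delicate point is checking that the conditioned sampling in step 2 of $\pi_{p-1}$ really produces a law matching the description of $\dist(G\mid\pi_{p-1})$ used in \Cref{lem:tvd-diff}. The players sample privately and independently given $\Pi_1(\leq p)$, $\istar$, and their own $G_{p-1}(a)$. Independence across players given these quantities is justified by the rectangle property in \Cref{eq:rec-prod-dist}. I would also verify that the search sequence $(\jstar,J)$ is uniform and independent of $G$ in both protocols, so the correctness event is the same function in both cases.
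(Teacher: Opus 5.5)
Your proposal is correct and follows essentially the same route as the paper: the communication bound is inherited from the truncated protocol $\pi'$ (the first-round messages are sampled publicly), the success probability follows by comparing the identical simulation under $\dist(G)$ and $\dist(G \mid \pi_{p-1})$ using \Cref{fact:tvd-small}, \Cref{cor:prot-low-comm}, and \Cref{lem:tvd-diff}, and the public randomness is then fixed by averaging. Your Markov-kernel/data-processing phrasing of the total variation step is just a restatement of \Cref{fact:tvd-small}, and the consistency check you flag as the main obstacle is what the paper handles inside the proof of \Cref{lem:tvd-diff} via \Cref{eq:rec-prod-dist}.
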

\begin{proof}
	Recall that, given an input $G_{p-1} \sim \GG_{p-1}$, protocols $\pi'$ and $\pi_{p-1}$ both simulate $\pi_p$ on input $G$, which is a function of $G_{p-1}$, in exactly the same way to solve $\PP_{p-1}$.
	However, by sampling the first round messages,
	$G$ is distributed according to $\dist(G \mid \pi_{p-1})$ in $\pi_{p-1}$ instead of $\dist(G)$ in $\pi'$.
	Therefore, we have that
	\begin{align*}
		\Pr_{\substack{G_{p-1} \sim \GG_{p-1} \\ G \sim \dist(G \mid \pi_{p-1})}} \paren{\pi_{p-1} \text{~solves~} \PP_{p-1} } &\geq
		\Pr_{\substack{G_{p-1} \sim \GG_{p-1} \\ G \sim \dist(G)}} \paren{\pi' \text{~solves~} \PP_{p-1} } - \tvd{\dist(G)}{\dist(G \mid \pi_{p-1})} \tag{by \Cref{fact:tvd-small}} \\
		&\geq \delta - \eta - \tvd{\dist(G)}{\dist(G \mid \pi_{p-1})} \tag{by \Cref{cor:prot-low-comm}} \\
		&\geq \delta - \eta - \sqrt{\frac{s}{2 \cdot r_p \cdot t_p}}. \tag{by \Cref{lem:tvd-diff}}
	\end{align*}
	
	Since $\pi_{p-1}$ follows protocol $\pi'$ for its communication, except for the sampled first round messages, its communication cost is at most
	$$\cc{\pi_{p-1}} \leq \cc{\pi'} \leq \frac{\raoconst}{\eta} \cdot \Paren{\frac{s}{r_p} + p^2}$$
	by \Cref{cor:prot-low-comm}.

It remains to construct a deterministic protocol with the same guarantees as the public coin protocol $\pi_{p-1}$.
Since communication cost is a worst-case measure, we can fix the public random string of $\pi_{p-1}$ to its best choice by an averaging argument in order to obtain a deterministic protocol with the same guarantees.
\end{proof}
This completes the proof of \Cref{lem:roundelim}.

% !TEX root = main.tex

\providecommand{\HS}{\textnormal{\textsc{HiddenRestrictions}}}
\providecommand{\DHR}{\ensuremath{\cD_{\mathrm{HR}}}}
\providecommand{\cD}{\ensuremath{\mathcal{D}}}
\providecommand{\cF}{\ensuremath{\mathcal{F}}}
\providecommand{\cG}{\ensuremath{\mathcal{G}}}
\providecommand{\cL}{\ensuremath{\mathcal{L}}}
\providecommand{\cM}{\ensuremath{\mathcal{M}}}
\providecommand{\cS}{\ensuremath{\mathcal{S}}}
\providecommand{\cU}{\ensuremath{\mathcal{U}}}
\providecommand{\E}{\mathbb{E}}
\providecommand{\pre}{\mathrm{pre}}
\providecommand{\post}{\mathrm{post}}
\providecommand{\rPi}{\ensuremath{\rv{\Pi}}}
\providecommand{\Unif}{\ensuremath{\mathcal{U}}}

\section{Second Proof: A Direct Communication Lower Bound}\label{sec:communication}

We now begin the second proof.  This section proves the communication lower
bound directly, by induction on the levels.  In \Cref{sec:reduction}, we reduce
this communication lower bound to the streaming lower bound.

We use a one-way \((p+1)\)-party communication game, called \(\HS\), as the
main communication problem.

Let $P_1, \dots, P_{p+1}$ be the $p+1$ parties and let $t_1, r_1, t_2, r_2, \dots, t_p, r_p$ be positive integers, with \(t_i\ge 2\) for all \(i\). 
The game has the following functions: 

\begin{itemize}
    \item $f_{p+1}: \{*\} \to [t_1]$
    \item $f_{p}: [t_1] \times [r_1] \to [t_2]$
    \item $f_{p-1}: [t_1] \times [r_1] \times [t_2] \times [r_2] \to [t_3]$
    \item \dots 
    \item $f_1: [t_1] \times [r_1] \times \dots \times [t_p] \times [r_p] \to \{0,1\}$. 
\end{itemize}

For each \((y_1,\dots,y_p)\in[r_1]\times\cdots\times[r_p]\), define its
selected path by
\begin{itemize}
    \item $i_1 := f_{p+1}(*)$,
    \item $i_2(y_1) := f_p(i_1,y_1)$,
    \item $i_3(y_1,y_2) := f_{p-1}(i_1,y_1,i_2(y_1),y_2)$,
    \item \(\dots\)
    \item $i_{p+1}(y_1,\ldots,y_p) :=
    f_1(i_1,y_1,i_2(y_1),y_2,\ldots,i_p(y_1,\ldots,y_{p-1}),y_p)\in\{0,1\}$.
\end{itemize}
The \(y\)'s are the output coordinates.  The \(i\)'s are hidden
indices chosen by the players' functions along the way.
Player $P_i$ receives \(f_i\).
Each pass consists of the players speaking in order from $P_1$ to $P_{p+1}$.
For an input, let
\[
X=\bigl(i_{p+1}(y_1,\ldots,y_p)\bigr)_{(y_1,\ldots,y_p)\in [r_1]\times\cdots\times[r_p]}
\in\{0,1\}^{[r_1]\times\cdots\times[r_p]}
\]
be the output vector.  At the end of the \(p\)-th pass, the goal is for
$P_{p+1}$ to output this bit for every \((y_1,\ldots,y_p)\in
[r_1]\times\cdots\times[r_p]\), where each bit is evaluated along its selected
path \(i_1,i_2(y_1),\ldots,i_p(y_1,\ldots,y_{p-1})\).

The proof follows these selected paths.  Once the earlier hidden indices are
fixed, the next function selects a subtable of the next player's input.  The
lower bound tracks how much information the transcript can reveal about this
selected subtable.

The main result of this section is the following communication lower bound.
\begin{theorem} \label{thm:HS}
Fix \(p\) and parameters \(t_i\ge 2\), \(r_i\ge 1\). Let \(\mathcal P\) be any randomized \(p\)-pass one-way
protocol for \HS{}, and write \(b=\cc{\mathcal P}\) for its communication cost.
Suppose that, on every input, the final player outputs
\(\widehat X\in\{0,1\}^{[r_1]\times\cdots\times[r_p]}\) and
\[
    \E_{\mathrm{coins}}\bigl[d_{\rm H}(X,\widehat X)\bigr]
    \le
    \left(\frac12-\eta\right)r_1\cdots r_p
\]
for some fixed \(0<\eta<1/2\). Then, with \(t_{p+1}=2\),
\[
    b\ge
    \widetilde{\Omega}_{p,\eta}\left(
    \min_{k\in[p]} r_1\cdots r_k\,t_k
    \right).
\]
Here \(d_{\rm H}\) denotes Hamming distance.
\end{theorem}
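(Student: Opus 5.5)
I will prove \Cref{thm:HS} through an information-theoretic argument under a hard product distribution. The argument proceeds by induction on the levels, following the sketch in the technical overview.

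\textbf{Hard distribution and reduction to information.} Take every $f_i$ uniformly at random and independently: $f_{p+1}(*)$ uniform in $[t_1]$, each $f_{p+1-k}(\cdot)$ a uniform function into $[t_{k+1}]$, and $f_1$ a uniform Boolean table. Because $f_1$ is uniform and the selected path is a function of $f_2,\ldots,f_{p+1}$, the output vector $X$ is a uniform string in $\{0,1\}^{r_1\cdots r_p}$.

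The Hamming guarantee says $\widehat X$, which is a function of the transcript $\rPi$ and the public coins, agrees with $X$ on a $(1/2+\eta)$ fraction of coordinates in expectation. By the standard Fano-type argument (subadditivity of entropy over coordinates, then concavity of $h$), this gives
\[
\II(X;\rPi \mid R)\ \ge\ r_1\cdots r_p\,\bigl(1-h(1/2-\eta)\bigr)=\Omega_\eta(r_1\cdots r_p).
\]
So it suffices to show that $\II(X;\rPi)$ is at most roughly $b\cdot\max_k r_1\cdots r_p/(r_1\cdots r_k t_k)$, up to $\mathrm{polylog}$ and $p$-dependent factors.

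\textbf{Level-by-level recurrence.} For level $k$, condition on the hidden indices $i_1,\ldots,i_k$ along all paths of prefix length $k-1$. Let $F_k$ be the table held by $P_{p+2-k}$, restricted to the rows indexed by the already-determined prefixes: it has $r_1\cdots r_{k-1}$ rows, each with $t_k$ columns indexed by $i_k$, and each entry is a block of $r_k$ values. Let $G_k$ be the selector, namely the choice $i_k$ on each row, which is determined by the previous player's table. The selected table $F_{k,G_k}$ is then the object that determines the next level, and at the bottom level it is exactly $X$.

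The key inequality I aim for is
\[
\II(F_{k,G_k};\rPi)\ \lesssim\ \frac{\II(F_k;\rPi_{\mathrm{before}\ G_k\ \mathrm{known}})}{t_k}\;+\;(\text{contribution charged to }\II(G_k;\rPi))\cdot\frac{|F_{k,G_k}|}{|G_k|\text{-entropy}}.
\]
The first term rests on a direct-sum argument: information about a uniformly random column among $t_k$ independent columns is, on average, a $1/t_k$ fraction of the information about the whole table. This is the same chain-rule computation as \Cref{clm:prot-low-info}. The second term covers protocols that first learn the selector and then ask for only the selected entries. Learning the selector is itself a problem of the same type one level up: $G_k$ is a selected subtable at level $k-1$. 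Unrolling the recurrence, the information about $X$ splits into $p$ terms. The $k$-th term is about $b\cdot r_1\cdots r_p/(r_1\cdots r_k t_k)$, since per unit of communication the transcript learns only a $1/t_k$ dilution of a table whose selected part is $r_k\cdots r_p$ bits per row. Summing the terms and comparing with the lower bound on $\II(X;\rPi)$ yields $b\ge\widetilde\Omega(\min_k r_1\cdots r_k t_k)$. The $\log t_k$ factors from describing selectors are the reason the bound is stated with $\widetilde\Omega$.

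\textbf{Main obstacle.} The hard step is making the recurrence rigorous with only $p$ passes. The transcript is interleaved, so the "before $G_k$ is known" part is not a clean prefix of $\rPi$. The plan is to pair level $k$ with pass $p+1-k$: information about $G_k$ can only flow through messages of the previous player, which exist only from that pass onward. Any information the protocol obtains about $F_{k,G_k}$ before that pass is therefore diluted by $t_k$, via independence of the columns and the uniformity of $G_k$ given the earlier transcript. I would formalize this with a hybrid or simulation argument in the style of \Cref{lem:tvd-diff}. Concretely, I would bound $\II(G_k;\rPi_{<\mathrm{pass}})$, and show through a rectangle/product-distribution property (as in \Cref{eq:rec-prod-dist}) that, conditioned on the earlier transcript, the selector remains nearly uniform and independent of $F_k$. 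If this pass-indexed bookkeeping does not close, I would instead fall back on round elimination as in \Cref{lem:roundelim}, embedding a $(p-1)$-level instance at a random selected cell.
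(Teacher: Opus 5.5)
Your outline follows the architecture of the paper's second proof: the product distribution $\DHR$, the Fano bound $\II(X;\rPi)=\Omega_\eta(r_1\cdots r_p)$, a recurrence for the selected table $\rv F_{k,\rv G_k}$ with a $1/t_k$-diluted term plus a term charged to the selector, and $\rv G_{k+1}$ being a function of $\rv F_{k,\rv G_k}$. However, the two ingredients that make the recurrence true are missing, and what you propose in their place does not work.

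The first missing ingredient is the key inequality itself. You justify the dilution by ``uniformity of $G_k$ given the earlier transcript'' and plan to show the selector stays nearly uniform. For $k\ge 2$ this is false, because the selector's holder $P_{p+2-k}$ has already spoken in passes $1,\dots,k-1$. For instance, in pass~1 $P_p$ can send $f_p(i,y_0)$ for all $i\in[t_1]$ and one fixed $y_0$, at cost $t_1\log t_2$. After that, the law of $\rv G_2$ given $\rPi_2^{\pre}$ is at total-variation distance at least $1-1/t_2$ from uniform. So a hybrid in the style of Claim~\ref{lem:tvd-diff} loses everything, and the chain rule of Claim~\ref{clm:prot-low-info} applies only when the selected column is independent of the transcript.

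What is needed is a charging inequality valid for an arbitrary selector law. Conditioned on $\rPi_k^{\pre}=\pi$, the table $\rv F_k$ and the selector $\rv G_k$ are independent (Claim~\ref{claim:rectangle}). One splits $\kl{\mathcal{F}'}{\mathcal{F}}$ into cell contributions $t_{a,b}\in[0,\log|\mathcal{S}_k|]$ (Lemma~\ref{lem:KL-mass-table}). One then shows $\mathbb{E}_{g\sim\mathcal{G}'}\sum_a t_{a,g(a)}\le\frac{2}{t_k}\sum_{a,b}t_{a,b}+2\log|\mathcal{S}_k|\,\kl{\mathcal{G}'}{\mathcal{G}}$. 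This holds because a row marginal of $\mathcal{G}'$ can put mass above $2/t_k$ on a column only at a KL cost, and these row costs sum to at most $\kl{\mathcal{G}'}{\mathcal{G}}$ (Lemmas~\ref{lem:Bfromuniform}, \ref{lem:selector-averaging} and~\ref{lem:core_lemma}). Your proposal names the second term but contains nothing that produces it.

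The second missing ingredient is the timing, and here you have two indexing slips. Your $P_{p+2-k}$ actually holds the selector $G_k$; the table $F_k$ belongs to $P_{p+1-k}$. Also, the relevant moment for level $k$ is in pass $k$, not pass $p+1-k$.

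More seriously, you state the inequality against the full transcript $\rPi$, and there the recursion is vacuous. For $k<p$, once $G_k$ is announced in pass $k$, $P_{p+1-k}$ can send $F_{k,G_k}$ outright in pass $k+1$ for about $r_1\cdots r_k\log t_{k+1}$ bits. Then $\rv G_{k+1}$ may be determined by $\rPi$, and the selector term bounds nothing.

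The paper instead measures $\Psi_k=\II(\rv F_{k,\rv G_k};\rPi_k^{\post})$ right after $P_{p+2-k}$ speaks in pass $k$, and lets that player reveal its whole input for free (Assumption~\ref{assump:reveal-input}). This device does four things:
\begin{itemize}
\item it makes $\rv F_k$ a well-defined uniform table independent of earlier reveals, so $\II(\rv F_k;\rPi_k^{\pre})\le b$;
\item via the rectangle property, it gives the Markov chain $\rv G_k\to\rPi_{k-1}^{\post}\to\rPi_k^{\pre}$, hence $\II(\rv G_k;\rPi_k^{\pre})\le\Psi_{k-1}$;
\item it lets one pass from the pre-point KL bound to post-point information (Claim~\ref{cl:postmixing});
\item at the end, it makes everything after $\rPi_p^{\post}$ a function of $\rPi_p^{\post}$, so $\II(X;\rPi)\le\Psi_p$.
\end{itemize}
None of this bookkeeping appears in your proposal.

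Your fallback, round elimination as in Lemma~\ref{lem:roundelim}, is a plausible alternative: the $(p-1)$-level sub-instances indexed by $(i_1,y_1)$ are independent, and only $P_{p+1}$'s input has to be eliminated. But you would have to redo the direct-sum, compression and total-variation steps for \HS{} and track the Hamming advantage through $p$ levels. As a one-line remark it is not a proof.
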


For the proof, we use the following product distribution.

\noindent\fbox{%
\begin{minipage}{\dimexpr\linewidth-2\fboxsep-2\fboxrule-6pt\relax}
Distribution \(\DHR\):
\begin{enumerate}
    \item Sample \(f_{p+1}(*)\) uniformly from \([t_1]\).
    \item For \(\ell=p,p-1,\dots,1\), sample every value of
\[
f_\ell : [t_1]\times[r_1]\times\cdots\times[t_{p+1-\ell}]\times[r_{p+1-\ell}]
\;\to\;
\begin{cases}
[t_{p+2-\ell}], & \ell\ge 2,\\
\{0,1\}, & \ell=1,
\end{cases}
\]
independently and uniformly from its output alphabet.
\end{enumerate}
\end{minipage}%
}

\subsection{Information Bounds for Restrictions}
We first prove a few information bounds.  For a fixed level \(k\), it will be useful to think of \(F_k\)
as a table whose rows are output-coordinate prefixes, such as
\((y_1,\ldots,y_{k-1})\), and whose columns are possible values of the next
hidden index.  The selector \(G_k\), from another player's input,
chooses one column in each row.  The selected table is
\[
        F_{k,G_k}[a] = F_k[a,G_k(a)] .
\]
We will bound how much information the transcript can have about
\(F_{k,G_k}\). Information about the full table \(F_k\) is spread over the
\(t_k\) possible columns.  Information about the selector \(G_k\) is separated, and we bound it after giving the information bounds.

The first lemma breaks the KL divergence of a table into contributions from its
entries. 

\begin{lemma}
\label{lem:KL-mass-table}
Let $\cS$ be a nonempty finite set, and let $\cA,\cB$ be finite sets. Let $\cF$ be the uniform distribution on
$\cS^{\cA\times\cB}$, and let $\cF'$ be any distribution on $\cS^{\cA\times\cB}$.
Fix a total order on $\cA\times\cB$. Let  $\rv{F} \sim \cF$ and $\rv{F}' \sim \cF'$.
For each $(a,b)\in\cA\times\cB$ define
\[
t_{a,b} = 
\E\left[
  \kl{\cL(\rv{F}'[a,b]\mid \rv{F}'_{<(a,b)})}{\Unif(\cS)}
\right].
\]
Then:
\begin{enumerate}
\item 
$\displaystyle \kl{\cF'}{\cF}=\sum_{(a,b)\in\cA\times\cB} t_{a,b}.$
\item 
$\displaystyle 0\le t_{a,b}\le \log|\cS|$ for all $(a,b)\in\cA\times\cB.$
\item 
For each deterministic $g:\cA\to\cB$ and each $f\in\cS^{\cA\times\cB}$ define the restriction
$f_g\in\cS^{\cA}$ by $f_g[a]:=f[a,g(a)]$ for all $a\in\cA$.
Let $\cF_g$ denote the law of $(\rv{F})_g$
and let $\cF'_g$ denote the law of $(\rv{F}')_g$. Then
\[
\kl{\cF'_g}{\cF_g} \leq \sum_{a\in\cA} t_{a,g(a)}.
\]
\end{enumerate}
\end{lemma}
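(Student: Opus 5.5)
The plan is to prove all three parts from the chain rule for KL divergence. We use that the reference distribution $\cF$ is a product of uniform distributions over the coordinates of $\cA\times\cB$.

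\emph{Part 1.} Enumerate the coordinates of $\cA\times\cB$ in the fixed total order. Apply the chain rule for KL divergence to the pair $(\cF',\cF)$. Under $\cF$, the coordinate $\rv{F}[a,b]$ is uniform on $\cS$ and independent of $\rv{F}_{<(a,b)}$, so the conditional law of $\cF$ at each step is just $\Unif(\cS)$. The chain rule then gives $\kl{\cF'}{\cF}=\sum_{(a,b)}\E_{\rv{F}'_{<(a,b)}}\kl{\cL(\rv{F}'[a,b]\mid \rv{F}'_{<(a,b)})}{\Unif(\cS)}=\sum t_{a,b}$.

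\emph{Part 2.} Nonnegativity is immediate. For the upper bound, use $\kl{\mu}{\Unif(\cS)}=\log|\cS|-\HH(\mu)\le\log|\cS|$ for every $\mu$ on $\cS$, and then take expectations.

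\emph{Part 3.} This is the main step. Write $(\rv{F}')_g=(\rv{F}'[a,g(a)])_{a\in\cA}$. Under $\cF_g$, these coordinates are again i.i.d.\ uniform on $\cS$. Order $\cA$ by the order the fixed total order induces on the pairs $(a,g(a))$. The chain rule gives
\[
\kl{\cF'_g}{\cF_g}=\sum_{a\in\cA}\E\Bigl[\kl{\cL(\rv{F}'[a,g(a)]\mid (\rv{F}'_g)_{<a})}{\Unif(\cS)}\Bigr],
\]
where $(\rv{F}'_g)_{<a}$ denotes the selected entries preceding $(a,g(a))$. These preceding selected entries are a deterministic function of the full prefix $\rv{F}'_{<(a,g(a))}$. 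We then need that conditioning on less cannot increase the expected KL to a fixed reference distribution. This holds because $\E_Z\kl{\cL(X\mid Z)}{U}=\mi{X}{Z}+\kl{\cL(X)}{U}$, which is monotone in $Z$ when $Z$ is refined; equivalently, it follows from convexity of KL in its first argument together with Jensen. Applying this, each summand is at most $t_{a,g(a)}$, which gives the claim. A possible subtlety is that we do not need the full prefix order on $\cA\times\cB$ to restrict consistently; it suffices that the selected predecessors are a subset of the full predecessors. This is the point to check carefully, and it is the only nontrivial step. The identity $\E_Z\kl{\cL(X\mid Z)}{U}=\mi{X}{Z}+\kl{\cL(X)}{U}$ together with $\mi{X}{Z}\le\mi{X}{Z,W}$ handles it cleanly.
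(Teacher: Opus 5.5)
Your proposal is correct and follows the paper's proof essentially step for step. Part (1) is the KL chain rule against the i.i.d.-uniform reference, part (2) is the identity $\kl{\mu}{\Unif(\cS)}=\log|\cS|-\HH(\mu)$, and part (3) is the chain rule along the selected coordinates in the induced order, followed by monotonicity of expected KL under refined conditioning, using that the selected predecessors are a function of the full prefix. Your justification of that monotonicity via $\E_Z\,\kl{\cL(X\mid Z)}{U}=\II(X;Z)+\kl{\cL(X)}{U}$ (or via convexity and Jensen) correctly fills in a step the paper only cites.
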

\begin{proof}
    \textbf{(1)} This follows from the KL chain rule applied to
$\cF'$ and $\cF$, together with the fact that under $\cF$ the coordinates are
i.i.d.\ uniform on $\cS$, so for every $(a,b)$ and every prefix value $y$,
$\cL(\rv F[a,b]\mid \rv F_{<(a,b)}=y)=\Unif(\cS).$

\textbf{(2)} For any distribution $\cP$ on $\cS$,
\[
\kl{\cP}{\Unif(\cS)}
=
\sum_{s\in\cS}\cP(s)\log\frac{\cP(s)}{1/|\cS|}
=
\log|\cS|-\mathrm{H}(\cP)
\le
\log|\cS|.
\]
Also \(\kl{\cP}{\Unif(\cS)}\geq 0\). Applying this to the definition of
\(t_{a,b}\) proves (2).

\textbf{(3)} Fix a deterministic function $g:\cA\to\cB$.
Let $I_g := \{(a,g(a)) : a\in\cA\} \subseteq \cA\times\cB,$
and order the elements of $I_g$ according to the fixed total order on
$\cA\times\cB$, writing
\[
i_1 \prec i_2 \prec \cdots \prec i_{|\cA|},
\qquad i_k=(a_k,g(a_k)).
\]

Under $\cF$, the random vector $(\rv F)_g$ is uniform on $\cS^{\cA}$, so
$\cF_g$ is the uniform distribution on $\cS^{\cA}$. Applying the KL chain rule gives 
\[
\kl{\cF'_g}{\cF_g}
=
\sum_{k=1}^{|\cA|}
\E\!\left[
  \kl{
    \cL(\rv F'[i_k]\mid \rv F'[i_{<k}])
  }{
    \Unif(\cS)
  }
\right],
\]
where $\rv F'[i_{<k}]:=(\rv F'[i_1],\dots,\rv F'[i_{k-1}])$.

For each $k$, the random vector $\rv F'[i_{<k}]$ is a
deterministic function of the full prefix
$\rv F'_{< i_k}$ (with respect to the fixed total order on
$\cA\times\cB$).  
Applying the monotonicity of KL divergence under additional conditioning, we have 
\[
\E\left[
  \kl{
    \cL(\rv F'[i_k]\mid \rv F'[i_{<k}])
  }{
    \Unif(\cS)
  }
\right]
\leq 
\E\left[
  \kl{
    \cL(\rv F'[i_k]\mid \rv F'_{< i_k})
  }{
    \Unif(\cS)
  }
\right].
\]

Summing over $k=1,\dots,|\cA|$ proves (3).
\end{proof}

We also need the following which says that if a distribution is close to
uniform, then it cannot put much mass on points whose
probability is much more than uniform.

\begin{lemma} \label{lem:Bfromuniform}
    Let $\cB$ be a nonempty finite set. Let $\cU_\cB = \Unif(\cB)$ and let $\cD$ be any distribution over $\cB$. Then
    \[
    \sum_{b \in \cB} \max \left\{ 0, \cD(b) - \frac{2}{|\cB|} \right\} \leq 2 \cdot \kl{\cD}{\cU_{\cB}}. 
    \]
\end{lemma}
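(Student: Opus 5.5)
Let \(S:=\{b\in\cB:\cD(b)>2/|\cB|\}\) be the set of ``heavy'' points; the left-hand side equals \(\sum_{b\in S}(\cD(b)-2/|\cB|)\). The plan is to lower bound \(\kl{\cD}{\cU_\cB}=\sum_b \cD(b)\log(|\cB|\cD(b))\) by splitting the sum into the contribution from \(S\) and the contribution from its complement. On the complement we cannot just drop the terms, since those with \(\cD(b)<1/|\cB|\) are negative. Instead we control them with the standard inequality \(x\log x\ge (x-1)\log e\) applied to \(x=|\cB|\cD(b)\), i.e.
\[
\cD(b)\log\bigl(|\cB|\cD(b)\bigr)\;\ge\;\log e\,\Bigl(\cD(b)-\frac1{|\cB|}\Bigr),
\]
which holds for every \(b\). (This is just the fact that \(\log\) is concave, equivalently \(\ln x\ge 1-1/x\).)

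For \(b\in S\), the proof needs a stronger, linear-in-excess bound. Writing \(x=|\cB|\cD(b)>2\), I would use \(x\log x\ge (x-1)\log e+\bigl(x\log x-(x-1)\log e\bigr)\) and note that the function \(\phi(x):=x\log x-(x-1)\log e\) is convex and increasing on \([1,\infty)\), with \(\phi'(x)=\log x\ge 1\) for \(x\ge 2\). Since \(\phi(2)\ge 0\), this gives \(\phi(x)\ge (x-2)\) for \(x\ge 2\). Hence for \(b\in S\),
\[
\cD(b)\log\bigl(|\cB|\cD(b)\bigr)\;\ge\;\log e\Bigl(\cD(b)-\frac1{|\cB|}\Bigr)+\Bigl(\cD(b)-\frac{2}{|\cB|}\Bigr).
\]

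Summing over all \(b\in\cB\), the terms \(\log e\,(\cD(b)-1/|\cB|)\) telescope to \(\log e\,(1-1)=0\). The result is
\[
\kl{\cD}{\cU_\cB}\ge\sum_{b\in S}\bigl(\cD(b)-2/|\cB|\bigr),
\]
which is even stronger than the claim by a factor of \(2\). The only step needing care is the elementary inequality \(\phi(x)\ge x-2\) on \([2,\infty)\). It follows from \(\phi(2)=2-\log e\approx 0.557\ge 0\) and \(\phi'(x)=\log x\ge 1\) there; recall that logarithms are base 2. This is the main (mild) obstacle, and the constant \(2\) in the statement gives slack if a cruder estimate is used. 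Alternatively, one could use the log-sum inequality on the aggregated heavy mass \(\cD(S)\) versus \(|S|/|\cB|\), but the pointwise argument above is more direct.
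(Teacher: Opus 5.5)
Your proof is correct and follows essentially the same route as the paper. Both use a pointwise inequality in which the linear terms $\mathcal{D}(b)-1/|\mathcal{B}|$ cancel after summation, while the remaining term $x\log x-(x-1)\log e$ (with $x=|\mathcal{B}|\mathcal{D}(b)$) is nonnegative everywhere and dominates the excess $x-2$ on heavy points. The paper works in nats, verifies the cruder $3\alpha-4\le 2\alpha\ln\alpha$ for $\alpha\ge 2$, and then converts to bits; your base-2 derivative argument ($\phi'(x)=\log x\ge 1$, $\phi(2)=2-\log e>0$) gives the constant $1$ in place of $2$.
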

\begin{proof}
It is enough to prove the following inequality for each \(b\in\cB\):
\begin{equation} \label{eq:heavy-element-pointwise}
    \max\left\{0, \cD(b) - \frac{2}{|\cB|}\right\} \le 2\left(\cD(b)\cdot  \ln( \cD(b) |\cB|) + \frac{1}{|\cB|} - \cD(b) \right).
\end{equation}
After summing over all \(b\in\cB\), the right-hand side becomes
\begin{align*}
2 \sum_{b \in \cB}\left(\cD(b)\cdot  \ln( \cD(b) |\cB|) + \frac{1}{|\cB|} - \cD(b) \right) &=  2 \sum_{b \in \cB}\left(\cD(b)\cdot  \ln( \cD(b) |\cB|) \right) \\ 
& \leq 2 \cdot \kl{\cD}{\cU_{\cB}}. 
\end{align*}
Fix \(b\in\cB\).

First consider the case where $\cD(b) \leq \frac{2}{|\cB|}$. Over the interval
\(\cD(b) \in [0,1/|\cB|]\), the right-hand side of
(\ref{eq:heavy-element-pointwise}) strictly
decreases. Over the interval
\(\cD(b) \in [1/|\cB|,2/|\cB|]\), it strictly increases. Thus the minimum is
attained at \(\cD(b)=1/|\cB|\), where the right-hand side is \(0\).

Now consider the case where  $\cD(b) > \frac{2}{|\cB|}$. Let
\(\cD(b)=\alpha/|\cB|\) for \(\alpha\ge2\). Then
(\ref{eq:heavy-element-pointwise}) reduces to
\[
\frac{\alpha - 2}{|\cB|} \leq 2 \left (\frac{\alpha \ln \alpha + 1 -\alpha}{|\cB|} \right) \enspace \leftrightarrow \enspace 3\alpha -4 \leq 2\alpha \ln \alpha 
\]
which is true for $\alpha \geq 2$: it holds at \(\alpha=2\), and the derivative
of the right-hand side minus the left-hand side is \(2\ln\alpha - 1>0\) on this
range.
\end{proof}

We apply this bound row by row.

\begin{lemma}
\label{lem:selector-averaging}
Let $\cA$ be a finite set, and let $\cB$ be a nonempty finite set. Let $\cG$ be the uniform distribution on
$\cB^{\cA}$, and let $\cG'$ be any distribution on $\cB^{\cA}$.
Let $\{t_{a,b}\}_{(a,b)\in\cA\times\cB}$ be any array satisfying
$0\le t_{a,b}\le K$ for all $(a,b)$. 
Then
\[
\E_{\rv{g}\sim\cG'}\left[\sum_{a\in\cA} t_{a,\rv{g}(a)}\right] \leq 
\frac{2}{|\cB|}\sum_{(a,b)\in\cA\times\cB} t_{a,b} + 
2K\cdot \kl{\cG'}{\cG}.
\]
\end{lemma}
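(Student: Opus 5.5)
The plan is to reduce the statement to a per-row application of \Cref{lem:Bfromuniform} and then use the KL chain rule to pay for all rows with a single divergence term. Write \(\rv g\sim\cG'\) and, for each \(a\in\cA\), let \(\cD_a\) be the marginal law of \(\rv g(a)\) under \(\cG'\). By linearity of expectation,
\[
\E_{\rv g\sim\cG'}\Bigl[\sum_{a\in\cA} t_{a,\rv g(a)}\Bigr]=\sum_{a\in\cA}\sum_{b\in\cB}\cD_a(b)\,t_{a,b}.
\]
For each \(a\) and \(b\), split \(\cD_a(b)\le \frac{2}{|\cB|}+\max\{0,\cD_a(b)-\frac{2}{|\cB|}\}\). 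Since \(t_{a,b}\ge 0\), the first part contributes at most \(\frac{2}{|\cB|}\sum_{(a,b)}t_{a,b}\), which is exactly the first term of the claimed bound. Using \(t_{a,b}\le K\), the second part contributes at most \(K\sum_a\sum_b\max\{0,\cD_a(b)-\frac{2}{|\cB|}\}\).

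For each fixed row \(a\), \Cref{lem:Bfromuniform} applied to \(\cD=\cD_a\) on the set \(\cB\) bounds the inner sum by \(2\,\kl{\cD_a}{\Unif(\cB)}\). It therefore remains to show
\[
\sum_{a\in\cA}\kl{\cD_a}{\Unif(\cB)}\le \kl{\cG'}{\cG}.
\]
This is the standard superadditivity of KL divergence against a product reference measure. Since \(\cG\) is the product of the uniform laws \(\Unif(\cB)\) over the coordinates \(a\in\cA\), fix an order on \(\cA\) and apply the KL chain rule. This gives \(\kl{\cG'}{\cG}=\sum_a \E\bigl[\kl{\cL(\rv g(a)\mid \rv g(<a))}{\Unif(\cB)}\bigr]\). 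Each summand is at least \(\kl{\cD_a}{\Unif(\cB)}\) by convexity of KL, i.e.\ Jensen applied to the average of the conditional laws (the same monotonicity under conditioning used in part (3) of \Cref{lem:KL-mass-table}). Combining the three steps yields the bound with the constant \(2K\).

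The only point that needs care is this marginal-versus-joint step. The per-row divergences are taken against the uniform law, not the marginal of \(\cG\) conditioned on other rows, so one must check that \(\cG\) really is a product of uniforms; it is by definition. Everything else is bookkeeping. Note also that \Cref{lem:Bfromuniform} should be read with KL measured in the same logarithm base as in its proof (natural log there). If the paper's \(\mathbb{D}\) is in base 2, the inequality only becomes weaker in the direction we need, since \(\ln x\le \log_2 x\) for \(x\ge1\) and KL is nonnegative. So no constant is lost.
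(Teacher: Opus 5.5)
Your proposal is correct and follows the paper's proof essentially step for step: you pass to the row marginals, split at the threshold $2/|\cB|$, apply \Cref{lem:Bfromuniform} row by row, and then bound $\sum_a \kl{\cD_a}{\Unif(\cB)}$ by $\kl{\cG'}{\cG}$. The only difference is in that last superadditivity step, where the paper uses the exact identity $\kl{\cG'}{\cG}=\kl{\cG'}{\cM}+\sum_a\kl{\cD_a}{\Unif(\cB)}$ with $\cM$ the product of the marginals and you use the chain rule plus convexity of KL; your base-of-logarithm remark is also sound, provided it is justified globally (the nat-valued divergence equals $\ln 2$ times the base-$2$ one and is nonnegative) rather than termwise, since individual summands can be negative.
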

\begin{proof}
Let $U_{\cB} = \Unif(\cB)$. For an $a \in \cA$, define the marginal distribution $P_a$ on $\cB$ by 
\[
P_a(b)\ :=\ \Pr_{\rv g\sim\cG'}[\rv g(a)=b],
\qquad b\in\cB.
\]
Then
\begin{align*}
\E_{\rv g\sim\cG'} \left[\sum_a t_{a,\rv g(a)}\right]
&=
\sum_{a, b} P_a(b)\,t_{a,b}
\\
& \leq \frac{2}{|\cB|} \sum_{a,b}t_{a,b} + \sum_{a,b} \max \Big\{0,P_a(b) - \frac{2}{|\cB|}\Big\} \cdot t_{a,b} \\
& \leq \frac{2}{|\cB|} \sum_{a,b}t_{a,b} + K \cdot \sum_{a,b} \max \Big\{0,P_a(b) - \frac{2}{|\cB|}\Big\}. \tag{since $t_{a,b} \leq K$}
\end{align*}

We bound the last sum by \(2\kl{\cG'}{\cG}\).  By
\Cref{lem:Bfromuniform},
\[
\sum_{a,b} \max \left\{ 0, P_a(b) - \frac{2}{|\cB|} \right\} \leq 2\sum_a \kl{P_a}{U_{\cB}}.
\]

To bound \(\sum_{a\in\cA}\kl{P_a}{U_{\cB}}\), let
\(\cM\) denote the distribution obtained by sampling $\rv{g}(a)\sim P_a$
independently for each $a\in\cA$.
Then for all $g\in\cB^{\cA}$,
\[
\log\frac{\cG'(g)}{\cG(g)}
=
\log\frac{\cG'(g)}{\cM(g)}+\log\frac{\cM(g)}{\cG(g)}.
\]
Taking expectation over $\rv{g}\sim\cG'$ gives
\[
\kl{\cG'}{\cG}
=
\kl{\cG'}{\cM}+\E_{\rv{g}\sim\cG'}\left[\log\frac{\cM(\rv{g})}{\cG(\rv{g})}\right].
\]
Moreover,
\[
\log\frac{\cM(g)}{\cG(g)}
=\sum_{a\in\cA}\log\frac{P_a(g(a))}{U_{\cB}(g(a))},
\]
so its expectation under $\cG'$ equals
\[
\sum_{a\in\cA}\E_{b\sim P_a}\!\left[\log\frac{P_a(b)}{U_{\cB}(b)}\right]
=
\sum_{a\in\cA}\kl{P_a}{U_{\cB}}, 
\]
which gives the result. 
\end{proof}

Finally, we break up the KL divergence of the true uniform distribution from an arbitrary probability distribution into its contributions. This will then allow us to bound these contributions separately and conclude our proof. 
\begin{lemma}
\label{lem:core_lemma}
Let $\cS,\cB$ be nonempty finite sets and let $\cA$ be a finite set.  Let $\cF$ be the uniform distribution on $\cS^{\cA\times\cB}$ and let $\cG$ be the uniform distribution on $\cB^{\cA}$. Let $\cF'$ be any probability distribution on 
$\cS^{\cA\times\cB}$ and let
$\cG'$ be any probability distribution on $\cB^{\cA}$.
For $f\in\cS^{\cA\times\cB}$ and $g\in\cB^{\cA}$, define the
restriction $f_g\in\cS^{\cA}$ by
\[
  f_g[a] \;:=\; f[a,g(a)] \qquad\text{for all }a\in\cA.
\]
Let $\cF_g$ and $\cF'_g$ denote the induced laws of $(\rv F)_g$ when
$\rv{F}\sim\cF$ and $\rv{F}\sim\cF'$, respectively. Then
\[
  \E_{\rv{g}\sim\cG'}[\kl{\cF'_\rv{g}}{\cF_\rv{g}}]\leq
  \frac{2}{|\cB|}\kl{\cF'}{\cF} + 
  2\log|\cS|\cdot \kl{\cG'}{\cG}.
\]
\end{lemma}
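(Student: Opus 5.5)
The plan is to combine the three preceding lemmas directly. Define the array $t_{a,b}$ exactly as in \Cref{lem:KL-mass-table}, using some fixed total order on $\cA\times\cB$ and the distribution $\cF'$. This array depends only on $\cF'$ and not on $g$. That independence is what lets us average over a random selector $\rv g\sim\cG'$ afterwards.

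First, for each fixed deterministic $g\in\cB^{\cA}$, part (3) of \Cref{lem:KL-mass-table} gives the pointwise bound
\[
\kl{\cF'_g}{\cF_g}\le \sum_{a\in\cA} t_{a,g(a)} .
\]
We then take expectation over $\rv g\sim\cG'$. Since the $t_{a,b}$ do not depend on $g$, this yields
\[
\E_{\rv g\sim\cG'}\bigl[\kl{\cF'_{\rv g}}{\cF_{\rv g}}\bigr]\le \E_{\rv g\sim\cG'}\Bigl[\sum_{a} t_{a,\rv g(a)}\Bigr].
\]

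Second, part (2) of \Cref{lem:KL-mass-table} shows $0\le t_{a,b}\le \log|\cS|$. So \Cref{lem:selector-averaging} applies with $K=\log|\cS|$ and bounds the right-hand side by
\[
\frac{2}{|\cB|}\sum_{(a,b)} t_{a,b}+2\log|\cS|\cdot\kl{\cG'}{\cG}.
\]
Finally, part (1) of \Cref{lem:KL-mass-table} identifies $\sum_{(a,b)}t_{a,b}$ with $\kl{\cF'}{\cF}$, which gives exactly the claimed inequality.

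There is essentially no obstacle here. The only point to check is that the same ordering and the same array $t_{a,b}$ are used in parts (1) and (3). The argument also needs $\cF'$ and $\cG'$ to be arbitrary and unrelated, with no independence required between them. Since the expectation over $\rv g$ is taken outside a quantity defined solely through $\cF'$, that requirement is satisfied. The degenerate case $\cA=\emptyset$ is trivial, because both sides are then $0$.
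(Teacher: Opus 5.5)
Your proposal is correct and follows essentially the same route as the paper: the paper's proof likewise fixes an order, defines $t_{a,b}$ from $\cF'$, applies part (3) of \Cref{lem:KL-mass-table} pointwise and averages over $\rv g\sim\cG'$, then invokes \Cref{lem:selector-averaging} with $K=\log|\cS|$ (via part (2)), and closes with part (1). Your remarks that one array serves parts (1) and (3) and that $\cA=\emptyset$ is trivial are fine additions; the paper does not state them.
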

\begin{proof}
Fix an arbitrary total order on \(\cA\times\cB\), and define the array $\{t_{a,b}\}_{a\in\cA,b\in\cB}$ from $\cF'$ as in
Lemma~\ref{lem:KL-mass-table}.  
By Lemma~\ref{lem:KL-mass-table}, for every deterministic selector
$g:\cA\to\cB$,
\begin{equation} \label{eq:Fg}
\kl{\cF'_g}{\cF_g} \leq 
\sum_{a\in\cA} t_{a,g(a)}.
\end{equation}

Averaging over $\rv{g}\sim\cG'$,
\[
\E_{\rv{g}\sim\cG'}\big[\kl{\cF'_{\rv{g}}}{\cF_{\rv{g}}}\big] \leq
\E_{\rv{g}\sim\cG'}\!\left[\sum_{a\in\cA} t_{a,\rv{g}(a)}\right].
\]

Lemma~\ref{lem:KL-mass-table} also gives
\(0 \leq t_{a,b} \leq \log|\cS|\).  Applying
Lemma~\ref{lem:selector-averaging} with \(K=\log|\cS|\) gives
\[
\E_{\rv{g}\sim\cG'}\!\left[\sum_{a\in\cA} t_{a,\rv{g}(a)}\right]
\;\le\;
\frac{2}{|\cB|}\sum_{(a,b)\in\cA \times \cB}t_{a,b}
\;+\;
2\log|\cS|\cdot \kl{\cG'}{\cG}.
\]

Finally, Lemma~\ref{lem:KL-mass-table} gives
\[
\sum_{(a,b)\in\cA \times \cB}t_{a,b}
\;=\;
\kl{\cF'}{\cF}, 
\]
which completes the proof.
\end{proof}

\subsection{Inductive Proof of Theorem~\ref{thm:HS}}

We prove the corresponding deterministic lower bound under \(\DHR\).
If a randomized protocol satisfies the worst-case Hamming guarantee in
Theorem~\ref{thm:HS}, then its expected Hamming loss, averaged over
\(\DHR\) and over all coins, is at most
\((1/2-\eta)r_1\cdots r_p\).  Hence some fixed choice of the coins gives a
deterministic protocol with the same communication bound and expected Hamming
loss at most \((1/2-\eta)r_1\cdots r_p\) under \(\DHR\).

If the protocol had \(p+1\) passes, the players could learn the hidden indices
one level at a time.  In the first pass, player
$P_{p+1}$ reveals index $i_1 \in [t_1]$; in the second
pass, player $P_p$ reveals the indices \(i_2(y_1)\) for each
\(y_1\in[r_1]\), and so on.  The proof measures the transcript at the points
where these hidden indices would become available.

\begin{definition}[Critical player and critical point]
For pass $k\in\{1,\dots,p\}$, the \emph{critical player} of pass $k$
is player
\[
P_{c(k)} \;:=\; P_{p+2-k}.
\]
The \emph{critical point} of pass $k$, denoted $C_k$, is the moment
immediately after the critical player $P_{c(k)}$ has sent their message in
pass $k$.
\end{definition}

\begin{assumption} \label{assump:reveal-input}
    In each pass $k$, we assume player $P_{c(k)}$ communicates their entire input. We do not charge this as communication cost. 
\end{assumption}

This assumption can only make the protocol stronger.  We refer to this
additional communication as ``free reveals". 

We also use the following about one-way protocols under a product
input distribution. Informally it says that once the protocol randomness is fixed, conditioning on a
transcript prefix and the free reveals only restricts each player's input
separately.  Thus the players' inputs remain independent after conditioning.

\begin{claim}\label{claim:rectangle}
Fix a deterministic protocol. Let \(X_i\) be the original input of player
\(P_i\), and let \(U\) be the transcript prefix, together with all free reveals
made so far, at any fixed point in the protocol. Then, for every
\(u\in\mathrm{supp}(U)\), there exist sets \(R_1(u),\ldots,R_{p+1}(u)\) such
that, as subsets of the full input space,
\[
\{U=u\}=\prod_{i=1}^{p+1} R_i(u).
\]
Consequently, under the product distribution \(\DHR\), the players' inputs
remain mutually independent after conditioning on \(U=u\).

In particular, fix \(i\neq j\). Let \(Z_i\) be determined by
\(X_i\) and \(u\), let \(W_j\) be determined by \(X_j\) and \(u\), and let
\(E_j\) be an event determined by \(X_j\) and \(u\), with
\(\Pr[E_j\mid U=u]>0\). Then, for all sets \(B,C\) of possible values,
\[
\Pr[Z_i\in B,\ W_j\in C\mid U=u,E_j]
=
\Pr[Z_i\in B\mid U=u]\Pr[W_j\in C\mid U=u,E_j].
\]
In particular,
\[
\mathcal L(Z_i\mid U=u,E_j)=\mathcal L(Z_i\mid U=u).
\]
\end{claim}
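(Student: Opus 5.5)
The plan is to prove the rectangle property by induction over the sequence of communication events: messages, each sent by a single player, and free reveals, each likewise an announcement by a single player. The inductive statement is that after every event, each value $u$ of the accumulated public record $U$ satisfies $\{U=u\}=\prod_i R_i(u)$. The base case is the empty record, where $\{U=\emptyset\}$ is the whole input space and $R_i(u)$ is the full input set of $P_i$.

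For the inductive step, suppose the next event is an announcement $m$ made by player $P_a$. In a deterministic protocol, $m$ is a function of $X_a$ and the current record $u$; for a free reveal of $P_{c(k)}$, under \Cref{assump:reveal-input}, $m$ is just $X_a$ itself. The new record $u'=(u,m)$ therefore satisfies
\[
\{U'=u'\}=\{U=u\}\cap\{x:\ m_a(x_a,u)=m\}.
\]
The second set constrains only the $a$-th coordinate. Hence $R_a(u')=R_a(u)\cap\{x_a: m_a(x_a,u)=m\}$, while $R_i(u')=R_i(u)$ for $i\neq a$. This preserves the product form, and it is the only place where one-wayness and determinism are used.

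To get independence, note that $\DHR$ is a product measure over the players' inputs, because each $f_\ell$ is sampled independently. Conditioning a product measure on a product set $\prod_i R_i(u)$ yields the product of the separately conditioned marginals, so the inputs stay mutually independent given $U=u$.

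For the displayed identity, fix $u$. Given $U=u$, the variable $Z_i$ is a function of $X_i$ alone, and $W_j$ and $E_j$ are functions of $X_j$ alone. By the independence just shown, $Z_i$ is independent of the pair $(W_j,\mathbf 1_{E_j})$ under the conditional law. Therefore
\[
\Pr[Z_i\in B,\,W_j\in C,\,E_j\mid U=u]=\Pr[Z_i\in B\mid U=u]\,\Pr[W_j\in C,\,E_j\mid U=u].
\]
Dividing both sides by $\Pr[E_j\mid U=u]>0$ gives the claim. Taking $C$ to be the full set of values gives $\mathcal L(Z_i\mid U=u,E_j)=\mathcal L(Z_i\mid U=u)$.

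The argument has no real obstacle. The one point that needs care is that free reveals, and the order of speakers within a pass, never let an announcement depend jointly on two players' private inputs except through the public record $u$, which is what the inductive step uses.
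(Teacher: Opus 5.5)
Your proposal is correct and follows essentially the same route as the paper: build the rectangle one announcement (message or free reveal) at a time, observe that each announcement restricts only its speaker's coordinate once the current record is fixed, and then use that conditioning a product measure on a product set preserves independence, from which the displayed identity follows. The one-way speaking order is not actually needed for the rectangle step; all you use is that each announcement is a function of a single player's input and the current record.
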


\begin{proof}
We reveal the fixed prefix one message or free reveal at a time. Initially the
consistent input set is the full product space. Suppose that, after some
partial prefix \(v\), the consistent input set is \(\prod_i R_i(v)\). Suppose
the following item in the prefix is produced by player \(P_i\). Once \(v\) is
fixed, determinism says that this item is a function only of \(X_i\).
Conditioning on its value therefore only restricts the \(i\)-th coordinate
\(R_i(v)\), and leaves every other coordinate unchanged. The same argument
applies to a free reveal, since it is also produced by one player as a
deterministic function of that player's input. Thus the consistent set remains a
product rectangle after each step, and hence \(\{U=u\}\) is a rectangle.

Conditioning a product distribution on a rectangle gives a product of the
conditional marginals, so the players' inputs remain independent given \(U=u\).
The displayed consequence follows because conditioning further on \(E_j\)
changes only the conditional marginal of \(X_j\).
\end{proof}

Below, \(b\) denotes the total communication of the protocol excluding the free reveals from Assumption~\ref{assump:reveal-input}.
By the reduction to deterministic protocols under \(\DHR\) described above, it
suffices to consider a deterministic protocol whose expected Hamming error under
\(\DHR\) is at most \((1/2-\eta)r_1\cdots r_p\).

\paragraph{Invariant.}
Fix any deterministic $p$-pass protocol $\cP$ for \HS{} with total charged
communication $b$ bits, under the product distribution $\DHR$. 
Let $\rPi$ be the random transcript produced by running $\cP$ on an input drawn from $\DHR$. 
For each $k\in\{1,\dots,p\}$ let $\rPi_k^{\pre}$ denote the transcript before the critical player
$P_{c(k)}$ speaks in pass $k$ and $\rPi_k^{\post}$ denote the transcript immediately after critical player $P_{c(k)}$ has spoken and the free reveal from Assumption~\ref{assump:reveal-input} has been made. 

Let $P_{u(k)}:= P_{p + 1-k}$ denote the player who speaks immediately before $P_{c(k)}$. 
Let $\rv F_k$ be the random table obtained from the input of $P_{u(k)}$ after
fixing all coordinates determined by indices
$i_1, i_2(\cdot), \dots, i_{k-1}(\cdot)$ revealed by the critical players in the
first $k-1$ passes, and let $\cF_k$ be its law. Recall that by
\Cref{assump:reveal-input}, the full inputs of the critical players in the
first $k-1$ passes have been freely revealed.
Similarly, let $\rv G_k$ be the random selector obtained from the input of
$P_{c(k)}$ after fixing the same earlier restrictions, and let $\cG_k$ be its
law. 

For $k=1$, take $\cA_1:=\{\ast\}$. For $k\in\{1,\dots,p\}$, define
\[
\cA_k := [r_1]\times\cdots\times[r_{k-1}],
\qquad
\cB_k := [t_k],
\qquad
\cS_k :=
\begin{cases}
[t_{k+1}]^{[r_k]}, & k<p,\\
\{0,1\}^{[r_p]}, & k=p.
\end{cases}
\]
Under $\DHR$, the random variable $\rv F_k$ is uniform on
$\cS_k^{\cA_k\times\cB_k}$ and the random variable $\rv G_k$ is uniform on
$\cB_k^{\cA_k}$. Here
\(\rv F_k\) is the table held by the player immediately before the critical
player.  A row \(a\in\cA_k\) is an output-coordinate prefix, a column \(b\in\cB_k\) is a
possible next hidden index, and \(\rv G_k\) chooses the actual column in each
row.

For $f\in\cS_k^{\cA_k\times\cB_k}$ and $g\in\cB_k^{\cA_k}$, define the
restriction $f_g\in\cS_k^{\cA_k}$ by
\[
    f_g[a] := f[a,g(a)]
    \qquad\text{for all }a\in\cA_k.
\]
For deterministic \(g\), write \(\rv F_{k,g}:=(\rv F_k)_g\).  Accordingly,
define the random restricted table
\[
    \rv F_{k,\rv G_k}\in\cS_k^{\cA_k}
    \qquad\text{by}\qquad
    \rv F_{k,\rv G_k}[a] := \rv F_k[a,\rv G_k(a)]
    \quad\text{for all }a\in\cA_k.
\]

The invariant is the following.

\begin{lemma} \label{lem:invariant}
    Define $\Psi_k = \II(\rv F_{k, \rv G_k};\rPi_k^{\post}).$
    For all $k\in\{1,\dots,p\}$,
\[
\Psi_k \;\le\; \Phi_k,
\qquad\text{where}\qquad
\Phi_1:=\frac{2b}{t_1},
\quad
\Phi_k:= 3r_k\Phi_{k-1}\log(t_{k+1})+\frac{2b}{t_k}\ \ \ (k\ge2).
\]
Here, we use the convention that $t_{p+1} = 2$. 
\end{lemma}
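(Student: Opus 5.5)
Our plan is to prove the bound by induction on \(k\), applying \Cref{lem:core_lemma} at each critical point with \(\cF'\) and \(\cG'\) set to conditional laws given the transcript. Fix \(k\). Everything is conditioned on the freely revealed earlier indices, which are contained in \(\rPi_k^{\pre}\). Write \(\Psi_k=\II(\rv F_{k,\rv G_k};\rPi_k^{\post})\) as an expectation of KL divergences:
\[
\Psi_k=\E_{\pi\sim\rPi_k^{\post}}\bigl[\kl{\cL(\rv F_{k,\rv G_k}\mid \pi)}{\Unif(\cS_k^{\cA_k})}\bigr].
\]
Under \(\DHR\) the selected table is uniform even conditioned on \(\rv G_k\), because \(\rv F_k\) and \(\rv G_k\) are independent uniform. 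After the free reveal in \(\rPi_k^{\post}\), the selector \(\rv G_k\) (the critical player's input) is fully determined. By \Cref{claim:rectangle}, conditioned on \(\pi\) the table \(\rv F_k\), which belongs to a different player \(P_{u(k)}\), is independent of the reveal. So \(\cL(\rv F_k\mid\pi)=\cL(\rv F_k\mid \pi^{\pre})\), where \(\pi^{\pre}\) is the prefix before \(P_{c(k)}\) speaks. The term then becomes \(\E\bigl[\kl{\cF'_{g}}{\cF_{g}}\bigr]\) with \(\cF'=\cL(\rv F_k\mid\pi^{\pre})\) and \(g\) drawn from \(\cG'=\cL(\rv G_k\mid \pi^{\pre})\). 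Again by the rectangle property, \(g\) is independent of \(\rv F_k\) given \(\pi^{\pre}\). This puts us exactly in the setting of \Cref{lem:core_lemma}, with \(|\cB_k|=t_k\) and \(\log|\cS_k|=r_k\log t_{k+1}\).

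\Cref{lem:core_lemma} gives
\[
\Psi_k\le \frac{2}{t_k}\,\E\,\kl{\cL(\rv F_k\mid\rPi_k^{\pre})}{\cF_k}+2r_k\log(t_{k+1})\,\E\,\kl{\cL(\rv G_k\mid\rPi_k^{\pre})}{\cG_k}.
\]
We bound the first term by \(\II(\rv F_k;\rPi_k^{\pre})\) (KL to the prior). This is at most the charged communication \(b\), plus whatever the free reveals leak. The free reveals come from players independent of \(P_{u(k)}\), and the earlier fixings only select coordinates of \(P_{u(k)}\)'s input, which stay uniform. A chain-rule and rectangle argument therefore gives a bound of \(b\) on the first term, and hence a contribution of at most \(2b/t_k\).

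The second term is the crux: \(\II(\rv G_k;\rPi_k^{\pre})\) must be bounded by \(\approx\Psi_{k-1}\le\Phi_{k-1}\). The intended intuition is this. \(\rv G_k\) is the restriction of \(P_{c(k)}\)'s input along the indices revealed at pass \(k-1\). Hence \(\rv G_k=\rv F_{k-1,\rv G_{k-1}}\), reinterpreted with rows \(\cA_{k-1}\times[r_{k-1}]=\cA_k\), because \(P_{c(k)}=P_{u(k-1)}\). The difficulty is that \(\rPi_k^{\pre}\) comes later than \(\rPi_{k-1}^{\post}\): it also contains the messages of the players from \(P_{c(k-1)}\) onward in pass \(k-1\), and of \(P_1,\dots,P_{c(k)-1}\) in pass \(k\). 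My plan is to argue that these intermediate messages are sent by players other than \(P_{c(k)}\). Given the prefix, they are therefore functions of inputs independent of \(\rv G_k\), by \Cref{claim:rectangle}. Consequently they can only add information about \(\rv G_k\) through correlation with the already-known prefix. A careful chain-rule step should then show
\[
\II(\rv G_k;\rPi_k^{\pre})\le \II(\rv F_{k-1,\rv G_{k-1}};\rPi_{k-1}^{\post})+(\text{lower order}),
\]
with the lower-order slack absorbed into the constant \(3\) in place of \(2\). One subtlety remains. \(P_{c(k)}\)'s own messages before pass \(k\), sent in pass \(k-1\) after \(C_{k-1}\), do depend on \(\rv G_k\). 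I expect to handle these by noting that \(P_{c(k)}\) speaks before \(P_{c(k-1)}\) within a pass, so its pass-\((k-1)\) message is already inside \(\rPi_{k-1}^{\post}\). Verifying this ordering, and making the slack fit within the factor \(3r_k\log t_{k+1}\), is the main obstacle.

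The base case \(k=1\) is the same computation with \(\cA_1=\{\ast\}\) and a trivial previous level. There \(\rv G_1=i_1\) is not yet communicated, since \(P_{p+1}\) speaks last, so \(\kl{\cG'}{\cG}=0\) and \(\Psi_1\le 2b/t_1\). Assembling the two terms yields \(\Phi_k=3r_k\Phi_{k-1}\log t_{k+1}+2b/t_k\).
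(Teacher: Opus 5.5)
Your proposal is correct and follows the paper's proof essentially step for step: you rewrite $\Psi_k$ as $\E_{\pi}\E_{g\sim\cG'}[\kl{\cF'_g}{\cF_g}]$ via the rectangle property (this is \Cref{cl:postmixing}), apply \Cref{lem:core_lemma}, bound $\II(\rv F_k;\rPi_k^{\pre})\le b$, and bound $\II(\rv G_k;\rPi_k^{\pre})\le\Psi_{k-1}$ using $\rv G_k=\rv F_{k-1,\rv G_{k-1}}$. The step you hedge on closes exactly, with no lower-order term. The conditional independence you describe is precisely the Markov chain $\rv G_k\to\rPi_{k-1}^{\post}\to\rPi_k^{\pre}$: no free reveals occur between $C_{k-1}$ and $C_k$, and since $c(k)=c(k-1)-1$, all of $P_{c(k)}$'s pass-$(k-1)$ communication already lies inside $\rPi_{k-1}^{\post}$. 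So the argument actually yields the factor $2$, and the $3$ in $\Phi_k$ is pure slack.
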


We first show how the invariant implies the theorem, and then prove the
invariant.

\begin{proof}[Concluding Theorem~\ref{thm:HS}]
    Let
\[
X =\bigl(i_{p+1}(y_1,\dots,y_p)\bigr)_{(y_1,\dots,y_p)\in[r_1]\times\cdots\times[r_p]}
\in\{0,1\}^{r_1\cdots r_p}
\]
be the output vector, where each bit is evaluated along its selected hidden
path.  Under \(\DHR\), the hidden paths are determined by
\(f_{p+1},\ldots,f_2\), independently of \(f_1\).  Once these
paths are fixed, the \(r_1\cdots r_p\) output bits are independent bits
from \(f_1\).  Hence \(X\) is uniform on \(\{0,1\}^{r_1\cdots r_p}\), so
\[
\HH(X) = r_1\cdots r_p.
\]

Let \(\widehat X\) be the output of \(P_{p+1}\) produced from \(\rPi\).  Recall that we have
\begin{equation}
\label{eq:theorem2-hamming-error}
    \E_{\DHR}\bigl[d_{\rm H}(\widehat X,X)\bigr]
    \le
    \left(\frac12-\eta\right)r_1\cdots r_p
\end{equation}
for a fixed constant \(0<\eta<1/2\). Let \(m:=r_1\cdots r_p\).  After ordering
the \(m\) output coordinates arbitrarily, write
\(E_i:=X_i\oplus \widehat X_i\).  Since \(\widehat X\) is a function of \(\rPi\),
\[
\II(X;\rPi)\ge \II(X;\widehat X)=\HH(X)-\HH(X\mid \widehat X).
\]
Moreover,
\[
\HH(X\mid \widehat X)=\HH(E\mid \widehat X)\le \HH(E)
\le \sum_{i=1}^m \HH(E_i)
\le m\cdot h\!\left(\frac12-\eta\right),
\]
where the last step uses concavity of binary entropy and the expected Hamming
error bound \eqref{eq:theorem2-hamming-error}.  Hence
\[
    \II(X;\rPi)
    \ge
    \left(1-h\!\left(\frac12-\eta\right)\right)m
    =
    \Omega_\eta(r_1\cdots r_p).
\]

We compare the full transcript \(\rPi\) to the transcript at the last
critical point. Write
\[
\rPi=(\rPi_p^{\post},W),
\]
where \(W\) is the suffix of messages sent after the point represented by
\(\rPi_p^{\post}\).  At this point, player \(P_1\) will not speak again.
Moreover, by Assumption~\ref{assump:reveal-input}, the full inputs of
\(P_2,\ldots,P_{p+1}\) are contained in \(\rPi_p^{\post}\).  Since the protocol
is deterministic, the suffix \(W\) is therefore a deterministic function of
\(\rPi_p^{\post}\). Thus
\[
X \to \rPi_p^{\post} \to \rPi
\]
is a Markov chain. By \itfacts{data-processing},
\[
\II(X;\rPi)\le \II(X;\rPi_p^{\post}).
\]

At the last step, $\rv F_{p,\rv G_p}$ is the full output vector $X$:
\[
    \rv F_{p,\rv G_p}\in
    \bigl(\{0,1\}^{[r_p]}\bigr)^{\cA_p}
    =
    \bigl(\{0,1\}^{[r_p]}\bigr)^{[r_1]\times\cdots\times[r_{p-1}]},
\]
which is the same as a bit vector indexed by
$[r_1]\times\cdots\times[r_p]$. Hence
\[
    \II(X;\rPi_p^{\post})
    =
    \II(\rv F_{p,\rv G_p};\rPi_p^{\post})
    =
    \Psi_p.
\]
By \Cref{lem:invariant}, $\Psi_p\le \Phi_p$. Combining the previous estimates,
\[
    \Omega_\eta(r_1\cdots r_p)
    \le
    \II(X;\rPi)
    \le
    \II(X;\rPi_p^{\post})
    =
    \Psi_p
    \le
    \Phi_p.
\]

Unwinding $\Phi_k = 3r_k\Phi_{k-1}\log(t_{k+1}) + 2b/t_k$ with $\Phi_1=2b/t_1$ gives
\[
\Phi_p
=
2b\left(
\frac{\prod_{j=2}^p 3r_j\log(t_{j+1})}{t_1}
+\frac{\prod_{j=3}^p 3r_j\log(t_{j+1})}{t_2}
+\cdots
+\frac{1}{t_p}
\right)
\]
where empty products are $1$.  It follows that
\[
r_1\cdots r_p
\leq
O_{p,\eta}(1)\cdot
b\left(
\frac{\prod_{j=2}^p r_j\log(t_{j+1})}{t_1}
+\frac{\prod_{j=3}^p r_j\log(t_{j+1})}{t_2}
+\cdots
+\frac{1}{t_p}
\right).
\]
Equivalently, there exists some $k\in\{1,\dots,p\}$ such that
\[
r_1\cdots r_p
\leq
O_{p,\eta}(1)\cdot b\cdot \frac{\prod_{j=k+1}^p r_j \log(t_{j+1})}{t_k},
\]
which rearranges to
\[
b \ge \Omega_{p,\eta}\left(\frac{r_1\cdots r_k\,t_k}{\prod_{j = k+1}^p\log(t_{j+1})}\right).
\]
Taking the minimum over $k$ gives
\[
b \ge
\Omega_{p,\eta}\left(
\min_{k\in[p]}
\frac{r_1\cdots r_k\,t_k}
{\prod_{j=k+1}^p \max\{1,\log(t_{j+1})\}}
\right),
\]
where \(t_{p+1}=2\).  Suppressing logarithmic factors gives
\[
b \ge \widetilde{\Omega}_{p,\eta}\left(\min_{k\in[p]} r_1\cdots r_k\,t_k\right).
\]
\end{proof}

We now prove the invariant.  Before applying Lemma~\ref{lem:core_lemma}, we
need to connect it to
\(\Psi_k=\II(\rv F_{k,\rv G_k};\rPi_k^{\post})\).  The quantity \(\Psi_k\) is
measured after the critical player's message and free reveal, while
Lemma~\ref{lem:core_lemma} works with the conditional laws just before that
point.  The next claim uses Claim~\ref{claim:rectangle} to say that, once
\(\rPi_k^{\pre}\) is fixed, the non-selector part of the critical player's input
does not give extra information about the selected table.  Thus the reveal
matters only through \(\rv G_k\), and we get the KL expression needed below.
\begin{claim}\label{cl:postmixing}
    Fix \(k \in \{1, \dots, p\}\). For each transcript value
\(\pi\) in the support of \(\rPi_k^{\pre}\), let
\[
\cF'_{k,\pi} := \cL(\rv F_k\mid \rPi_k^{\pre}=\pi),
\qquad
\cG'_{k,\pi} := \cL(\rv G_k\mid \rPi_k^{\pre}=\pi).
\]
For each \(g\in \cB_k^{\cA_k}\), let \(\cF_{k,g}\) be the law of
\(\rv F_{k,g}\) under \(\cF_k\), and let \(\cF'_{k,\pi,g}\) be the law of
\((\rv F)_g\) for \(\rv F\sim\cF'_{k,\pi}\). Then
\[
\II(\rv F_{k,\rv G_k};\rPi_k^{\post})
\le
\E_{\pi\sim \rPi_k^{\pre}}
\E_{g\sim \cG'_{k,\pi}}
\left[
\kl{\cF'_{k,\pi,g}}{\cF_{k,g}}
\right].
\]
\end{claim}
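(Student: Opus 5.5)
We bound $\II(\rv F_{k,\rv G_k};\rPi_k^{\post})$ by first decomposing the post-transcript as $\rPi_k^{\post}=(\rPi_k^{\pre},M,R)$, where $M$ is the critical player's message and $R$ its free reveal. Since $M$ is a deterministic function of $\rPi_k^{\pre}$ and $P_{c(k)}$'s input, and $R$ is that input itself, the tuple $(\rPi_k^{\pre},R)$ determines $\rPi_k^{\post}$. Data processing then gives $\Psi_k\le \II(\rv F_{k,\rv G_k};\rPi_k^{\pre},R)$. The input of $P_{c(k)}$, restricted by the earlier reveals, contains the selector $\rv G_k$ together with the unselected part $Z$. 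Thus $R$ is (up to already-fixed data) the pair $(\rv G_k,Z)$.

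Next we write the mutual information as an expected KL divergence:
\[
\II(\rv F_{k,\rv G_k};\rPi_k^{\pre},\rv G_k,Z)=\E\Bigl[\kl{\cL(\rv F_{k,\rv G_k}\mid \rPi_k^{\pre},\rv G_k,Z)}{\cL(\rv F_{k,\rv G_k})}\Bigr],
\]
valid because the mutual information $\II(A;B)$ equals the expectation over $B$ of $\kl{\cL(A\mid B)}{\cL(A)}$. For the unconditional law we argue that under $\DHR$ the tables $\rv F_k$ and $\rv G_k$ are independent and uniform. Hence $\cL(\rv F_{k,\rv G_k})$ is uniform on $\cS_k^{\cA_k}$, which equals $\cF_{k,g}$ for every $g$. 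The reference measure can therefore be replaced by $\cF_{k,g}$ at the realized $g$.

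The key step is to identify the conditional law. Here we apply Claim~\ref{claim:rectangle} at the point $U=\rPi_k^{\pre}$ with $i=u(k)$ and $j=c(k)$. We take $Z_i=\rv F_k$, which is determined by $X_{u(k)}$ together with the earlier reveals contained in $\pi$, and the event $E_j=\{\rv G_k=g, Z=z\}$, which is determined by $X_{c(k)}$ and $\pi$. The claim gives $\cL(\rv F_k\mid \rPi_k^{\pre}=\pi,\rv G_k=g,Z=z)=\cF'_{k,\pi}$. Restricting by the now-fixed $g$ yields $\cF'_{k,\pi,g}$. Finally, the conditional distribution of $g$ given $\pi$ is $\cG'_{k,\pi}$ after marginalizing out $z$, and the integrand does not depend on $z$, so averaging gives exactly the right-hand side of the claim.

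The step I expect to need the most care is checking that $\rv F_k$ is a function of $P_{u(k)}$'s input and $\pi$ alone, and $\rv G_k$ of $P_{c(k)}$'s input and $\pi$. Both are defined after fixing coordinates by the hidden indices $i_1,\dots,i_{k-1}(\cdot)$. These indices come from inputs of earlier critical players, which are freely revealed in $\pi$. This is what lets the rectangle claim apply directly. For $k=1$ the argument is the same with $\cA_1=\{*\}$ and no prior reveals.
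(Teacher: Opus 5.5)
Your proof is correct and follows essentially the paper's route: data processing replaces $\rPi_k^{\post}$ by the pre-transcript plus the critical player's full input, Claim~\ref{claim:rectangle} identifies the conditional law of the selected table as $\cF'_{k,\pi,g}$, and the independence and uniformity of $\rv F_k$ and $\rv G_k$ identify the reference law as $\cF_{k,g}$. The only cosmetic difference is in the middle step: you expand $\II(\rv F_{k,\rv G_k};\rPi_k^{\pre},\rv G_k,Z)$ directly as an expected KL against the uniform unconditional law, whereas the paper first uses the chain rule to drop the unselected part and then rewrites the remaining term as $\II(\rv F_{k,\rv G_k};\rPi_k^{\pre}\mid \rv G_k)$.
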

\begin{proof}
Let \(T:=\rPi_k^{\pre}\), let \(Z:=\rv F_{k,\rv G_k}\), and let \(R_k\) be the
part of the critical player's input remaining after \(\rv G_k\) is specified.
Since the protocol is deterministic and the critical player's input is freely
revealed at \(C_k\), \(\rPi_k^{\post}\) is a deterministic function of
\((T,\rv G_k,R_k)\).  Hence, by data processing and the chain rule,
\[
\II(Z;\rPi_k^{\post})
\le \II(Z;T,\rv G_k,R_k)
=
\II(Z;T,\rv G_k)+\II(Z;R_k\mid T,\rv G_k).
\]

Fix \(T=\pi\) and \(\rv G_k=g\).  The variable \(Z=\rv F_{k,g}\) is determined
by \(P_{u(k)}\)'s input, while \(R_k\) is determined by the remaining input of
the critical player.  Here \(g\) is fixed, and the event \(\{\rv G_k=g\}\)
depends only on the critical player's input.  Applying
Claim~\ref{claim:rectangle} with \(E_j=\{\rv G_k=g\}\) gives the needed
independence, so \(\II(Z;R_k\mid T,\rv G_k)=0\).  Also, before
conditioning on \(T\), the selected table \(Z\) is independent of \(\rv G_k\).
Therefore
\[
\II(Z;\rPi_k^{\post})
\le \II(Z;T,\rv G_k)
=\II(Z;T\mid \rv G_k).
\]
Writing the last term as expected KL divergence,
\[
\II(Z;T\mid \rv G_k)
=
\sum_{g,\pi}
\Pr[\rv G_k=g,T=\pi]\,
D\!\left(
\cL(Z\mid T=\pi,\rv G_k=g)
\middle\|
\cL(Z\mid \rv G_k=g)
\right).
\]
For fixed \(g\), \(\cL(Z\mid \rv G_k=g)=\cF_{k,g}\).  By
Claim~\ref{claim:rectangle}, \(\rv F_k\) and \(\rv G_k\) are independent
conditioned on \(T=\pi\), so
\(\cL(Z\mid T=\pi,\rv G_k=g)=\cF'_{k,\pi,g}\).  Since
\(\cG'_{k,\pi}=\cL(\rv G_k\mid T=\pi)\), the preceding display becomes
\[
\II(\rv F_{k,\rv G_k};\rPi_k^{\post})
\le
\E_{\pi\sim \rPi_k^{\pre}}
\E_{g\sim \cG'_{k,\pi}}
\left[
\kl{\cF'_{k,\pi,g}}{\cF_{k,g}}
\right],
\]
as claimed.
\end{proof}

\paragraph{Base case: $C_1$.}
We use the $k=1$ instantiation
\[
\cA_1=\{\ast\},\qquad
\cB_1=[t_1],\qquad
\cS_1=
\begin{cases}
[t_2]^{[r_1]}, & p>1,\\
\{0,1\}^{[r_1]}, & p=1.
\end{cases}
\]
Let $\cF_1$ be the uniform distribution on $\cS_1^{\cA_1\times \cB_1}$ and let
$\cG_1$ be the uniform distribution on $\cB_1^{\cA_1}$. For $f \in \cS_1^{\cA_1\times \cB_1}$ and $g \in \cB_1^{\cA_1}$, recall the definition of restriction $f_g \in \cS_1^{\cA_1}$ by 
\[
f_g[\ast] := f[\ast, g(\ast)]. 
\]
For each transcript value $\pi_1$ in the support of random variable $\rPi_1^{\pre}$, let 
\[
\cF'_{1,\pi_1} := \cL(\rv F_1\mid \rPi_1^\pre=\pi_1),
\qquad
\cG'_{1,\pi_1} := \cL(\rv G_1\mid \rPi_1^\pre=\pi_1)
\]
where the laws are taken under $\DHR$ and the fixed deterministic protocol. For each fixed \(\pi_1\) and each deterministic \(g\), let
\(\cF_{1,g}\) be the law of \(\rv F_{1,g}\), and let
\(\cF'_{1,\pi_1,g}\) be the law of \((\rv F)_g\) for
\(\rv F\sim\cF'_{1,\pi_1}\).
The second KL term below is zero. Before \(C_1\), the critical player
\(P_{c(1)}=P_{p+1}\), whose input determines \(G_1\), has not spoken and no
free reveal has involved this player. The rectangle argument of
Claim~\ref{claim:rectangle} therefore leaves this player's input unrestricted
when we condition on \(\rPi_1^{\pre}=\pi_1\). Hence conditioning on
\(\rPi_1^{\pre}=\pi_1\) does not change the law of \(G_1\), so
\[
\cG'_{1,\pi_1}=\cG_1
\]
and so
\[
\kl{\cG'_{1,\pi_1}}{\cG_1}=0.
\]
Now apply Lemma~\ref{lem:core_lemma} with $(\cS,\cA,\cB)=(\cS_1,\cA_1,\cB_1)$,
$\cF=\cF_1$, $\cG=\cG_1$, $\cF'=\cF'_{1,\pi_1}$, and $\cG'=\cG'_{1,\pi_1}$.
For each fixed $\pi_1$ this gives
\begin{align*}\E_{\rv g\sim \cG'_{1,\pi_1}}\left[
  \kl{\cF'_{1, \pi_1, \rv g}}{\cF_{1,\rv g}}
\right]
&\leq
\frac{2}{t_1}\,\kl{\cF'_{1,\pi_1}}{\cF_1}
+
2\log|\cS_1|\cdot \kl{\cG'_{1,\pi_1}}{\cG_1} \\
&=
\frac{2}{t_1}\,\kl{\cF'_{1,\pi_1}}{\cF_1}.
\end{align*}
Taking expectation over $\rPi_1^\pre$ gives
\begin{equation} \label{eq:expectationpi}
\E_{\pi_1\sim\rPi_1^\pre}\left[
  \E_{\rv{g}\sim \cG'_{1,\pi_1}}\left[
    \kl{\cF'_{1, \pi_1, \rv g}}{\cF_{1,\rv g}}
  \right]
\right]
\leq
\frac{2}{t_1}\,
\E_{\pi_1\sim\rPi_1^\pre}\left[
  \kl{\cF'_{1,\pi_1}}{\cF_1}
\right].
\end{equation}
Therefore,
\begin{align*}
    \Psi_1 & = \II(\rv{F}_{1,\rv{G}_1}; \rPi_1^\post)  
    \\ &\leq \E_{\pi_1\sim\rPi_1^\pre}
        \E_{\rv g\sim\cG'_{1,\pi_1}}\left[
        \kl{\cF'_{1, \pi_1, \rv g}}{\cF_{1,\rv g}} \right] \tag{using \Cref{cl:postmixing}}
    \\ & \leq \frac{2}{t_1}\,
\E_{\pi_1\sim\rPi_1^\pre}\left[
  \kl{\cF'_{1,\pi_1}}{\cF_1}
\right] \tag{using (\ref{eq:expectationpi})}
	\\ &= \frac{2}{t_1}\II(\rv F_1;\rPi_1^\pre)
	\\ &\leq \frac{2}{t_1} \HH(\rPi_1^\pre)
	\\& \leq \frac{2b}{t_1} 
		. \tag{a $b$-bit transcript has entropy at most $b$}
	\end{align*}

\paragraph{Induction step.}

Assume Lemma~\ref{lem:invariant} holds after critical point \(C_{k-1}\), so
\(\Psi_{k-1}\le \Phi_{k-1}\).  We prove \(\Psi_k\le \Phi_k\).

As in the base case, for each transcript value $\pi_k$ in the support of random variable $\rPi_k^{\pre}$, let
\[
\cF'_{k,\pi_k} := \cL(\rv F_k\mid \rPi_k^\pre=\pi_k),
\qquad
\cG'_{k,\pi_k} := \cL(\rv G_k\mid \rPi_k^\pre=\pi_k)
\]
where the laws are taken under $\DHR$ and the fixed deterministic protocol. For each fixed \(\pi_k\) and deterministic \(g\), let \(\cF_{k,g}\) be the law
of \(\rv F_{k,g}\), and let
\(\cF'_{k,\pi_k,g}\) be the law of \((\rv F)_g\) for
\(\rv F\sim\cF'_{k,\pi_k}\).

We apply \Cref{lem:core_lemma} with $(\cS, \cA, \cB) = (\cS_k, \cA_k, \cB_k)$, $\cF = \cF_k$, $\cG = \cG_k$, $\cF' = \cF'_{k, \pi_k}$, and $\cG' = \cG'_{k, \pi_k}$. For each fixed $\pi_k$ this gives
\begin{align*}\E_{\rv g\sim \cG'_{k,\pi_k}}\left[
  \kl{\cF'_{k, \pi_k, \rv g}}{\cF_{k,\rv g}}
\right]
&\leq
\frac{2}{t_k}\,\kl{\cF'_{k,\pi_k}}{\cF_k}
+
2\log|\cS_k|\cdot \kl{\cG'_{k,\pi_k}}{\cG_k}.
\end{align*}

Taking expectation over $\rPi_k^\pre$ gives
\begin{align}
\E_{\pi_k\sim\rPi_k^\pre}\left[
 \E_{\rv g\sim \cG'_{k,\pi_k}}\left[
  \kl{\cF'_{k, \pi_k, \rv g}}{\cF_{k,\rv g}}
\right]
\right]
\leq
\frac{2}{t_k}\,
\E_{\pi_k\sim\rPi_k^\pre}\left[
  \kl{\cF'_{k,\pi_k}}{\cF_k}
\right]
\notag\\
\qquad
{}+ 2\log|\cS_k|\cdot
\E_{\pi_k \sim \rPi_k^\pre}\left[
  \kl{\cG'_{k,\pi_k}}{\cG_k}
\right].
\label{eq:inductexpect}
\end{align}
Therefore,
\begin{align*}
   \Psi_k & =  \II(\rv F_{k, \rv G_k}; \rPi_k^\post) 
   \\ & \leq
   \E_{\pi_k\sim\rPi_k^\pre}
   \E_{\rv g\sim\cG'_{k,\pi_k}}
\left[ \kl{\cF'_{k, \pi_k, \rv g}}{\cF_{k,\rv g}}\right] \tag{using \Cref{cl:postmixing}} 
   \\ & \leq \frac{2}{t_k}\,
\E_{\pi_k\sim\rPi_k^\pre}\left[
  \kl{\cF'_{k,\pi_k}}{\cF_k}
 \right] + 2\log|\cS_k|\cdot  \II(\rv G_k; \rPi_k^\pre).\tag{using (\ref{eq:inductexpect})}
\end{align*}
Here we used
\[
        \E_{\pi_k\sim\rPi_k^\pre}
        \kl{\cG'_{k,\pi_k}}{\cG_k}
        =
        \II(\rv G_k;\rPi_k^\pre).
\]
The remaining expectation is \(\II(\rv F_k;\rPi_k^\pre)\).  We bound the two
terms. Let \(R_{<k}\) be the free reveals before \(C_k\), and let
\(M_k\) be the transcript before \(C_k\). Then
\[
\rPi_k^{\pre}=(R_{<k},M_k),
\qquad
\HH(M_k\mid R_{<k})\le b.
\]
Since logarithms are base \(2\), a message transcript of at most
\(b\) bits has entropy at most \(b\). For every fixed value of
\(R_{<k}\), the free reveals have only fixed which coordinates of
\(P_{u(k)}\)'s input make up \(\rv F_k\). Those coordinates have not been freely
revealed, so they remain independent uniform entries.  Thus \(\rv F_k\) has law
\(\cF_k\) and is independent of \(R_{<k}\).  By the chain rule
(\itfacts{chain-rule}),
\[
\II(\rv F_k;\rPi_k^{\pre})
=
\II(\rv F_k;R_{<k},M_k)
=
\II(\rv F_k;M_k\mid R_{<k})
\le
\HH(M_k\mid R_{<k})
\le b.
\]

Next, between the point represented by \(\rPi_{k-1}^{\post}\) and the point just
before \(P_{c(k)}\) speaks, the player \(P_{c(k)}\), whose input determines
\(\rv G_k\), does not speak. Write
\[
\rPi_k^{\pre}=(\rPi_{k-1}^{\post},W_k),
\]
where \(W_k\) is the message suffix in this interval. Conditioned on
\(\rPi_{k-1}^{\post}=\rho\), Claim~\ref{claim:rectangle} implies that the input
of \(P_{c(k)}\) remains independent of the inputs of the players who speak during
this suffix. Since \(P_{c(k)}\) does not speak in this interval, \(W_k\) is a
function of those other inputs and the fixed prefix \(\rho\).  Thus \(W_k\) is
conditionally independent of \(\rv G_k\) given \(\rPi_{k-1}^{\post}\).  Hence
\[
\rv G_k\to \rPi_{k-1}^{\post}\to \rPi_k^{\pre}
\]
is a Markov chain. The data processing inequality (\itfacts{data-processing}) gives
\[
\II(\rv G_k;\rPi_k^{\pre})
\le
\II(\rv G_k;\rPi_{k-1}^{\post}).
\]
Since \(\rv G_k\) is a deterministic function of
\(\rv F_{k-1,\rv G_{k-1}}\), another application of
\itfacts{data-processing} gives
\[
\II(\rv G_k;\rPi_{k-1}^{\post})
\le
\II(\rv F_{k-1,\rv G_{k-1}};\rPi_{k-1}^{\post})
=
\Psi_{k-1}
\le
\Phi_{k-1}.
\]
Finally, \(\log|\cS_k|=r_k\log(t_{k+1})\), with \(t_{p+1}=2\).  Therefore
\[
\Psi_k
\le
\frac{2b}{t_k}
+
2r_k\log(t_{k+1})\Phi_{k-1}
\le
\frac{2b}{t_k}
+
3r_k\log(t_{k+1})\Phi_{k-1}
=
\Phi_k.
\]

% !TEX root = main.tex
\section{From Communication to Streaming}\label{sec:reduction}

We now prove Theorem~\ref{thm:main-lb}. 

We reduce \HS{} to \(p\)-pass approximate maximum bipartite matching. Given an \HS{} input, we build a bipartite graph in which a
large matching gives enough information to guess the hidden output bits.  The
construction uses the RS graphs from
Proposition~\ref{prop:RS-gkk12}.

\subsection{The Hierarchical Graph}

Fix an \HS{} input.  For each tuple \((y_1,\ldots,y_p)\), the functions
determine the hidden indices \(i_1,\ldots,i_p\) in sequence: first
\(i_1=f_{p+1}(\ast)\), and then each \(i_\ell\) is determined by the earlier
pairs \((i_1,y_1),\ldots,(i_{\ell-1},y_{\ell-1})\).  This gives a path
\((i_1,y_1,\ldots,i_p,y_p)\), whose output bit is
\(f_1(i_1,y_1,\ldots,i_p,y_p)\).  As the tuple \((y_1,\ldots,y_p)\) varies,
these paths form a depth-\(p\) tree that branches over the possible values of
\(y_\ell\) at level \(\ell\).

Intuitively, the graph is built by repeating the same RS-graph replacement at
each level.  At level \(\ell\), we use an \((r_\ell,t_\ell)\)-RS graph
\(Q_\ell\), whose edges are partitioned into \(t_\ell\) induced matchings of
\(r_\ell\) edges each.  The hidden index \(i_\ell\in[t_\ell]\) selects the
important matching, and the values \(y_\ell\in[r_\ell]\) label its \(r_\ell\)
edges.  On every edge of \(Q_\ell\), we embed the corresponding
level-\((\ell+1)\) block between the two endpoint groups.  Repeating this
through level \(p\) leaves one possible edge at each leaf.  The randomized
deletion rule below correlates the presence of this edge with a masked version
of \(f_1(i_1,y_1,\ldots,i_p,y_p)\).  Auxiliary edges match off the vertices
outside the selected matching. For \(p=2\), this corresponds to the formulation of \cite{KonradN24}. Figure~\ref{fig:rs-hierarchy}
shows one step of the replacement.

\begin{figure}[t]
\centering
\begin{tikzpicture}[
    font=\scriptsize,
    endpoint/.style={
        draw=black!55,
        fill=black!4,
        rounded corners=1.2pt,
        minimum width=1.28cm,
        minimum height=0.42cm,
        inner sep=2pt
    },
    block/.style={
        draw=DarkBlue!70,
        fill=DarkBlue!6,
        rounded corners=1.4pt,
        line width=0.65pt,
        minimum width=2.72cm,
        minimum height=0.5cm,
        inner sep=2pt
    },
    badge/.style={
        draw=DarkRed!70,
        fill=DarkRed!7,
        rounded corners=1.1pt,
        inner xsep=3pt,
        inner ysep=1pt,
        font=\scriptsize
    },
    selected/.style={DarkRed,line width=0.78pt},
    guide/.style={black!25,line width=0.45pt,densely dashed},
    lab/.style={font=\scriptsize},
    header/.style={font=\scriptsize\bfseries}
]
    \node[header,align=center] at (-3.1,2.05) {left vertex\\groups};
    \node[header,align=center] at (0,2.05) {next-level\\blocks};
    \node[header,align=center] at (3.1,2.05) {right vertex\\groups};
    \node[lab,DarkRed] at (0,2.45)
        {the hidden index \(i_\ell\) selects this matching};

    \node[endpoint] (u1) at (-3.1,1.25) {group \(1\)};
    \node[endpoint] (v1) at (3.1,1.25) {group \(1\)};
    \node[block] (g1) at (0,1.25) {level-\((\ell+1)\) block};

    \node[endpoint] (u2) at (-3.1,0.45) {group \(2\)};
    \node[endpoint] (v2) at (3.1,0.45) {group \(2\)};
    \node[block] (g2) at (0,0.45) {level-\((\ell+1)\) block};

    \node[lab] at (-3.1,-0.18) {\(\vdots\)};
    \node[lab] at (0,-0.18) {\(\vdots\)};
    \node[lab] at (3.1,-0.18) {\(\vdots\)};

    \node[endpoint] (ur) at (-3.1,-0.92) {group \(r_\ell\)};
    \node[endpoint] (vr) at (3.1,-0.92) {group \(r_\ell\)};
    \node[block] (gr) at (0,-0.92) {level-\((\ell+1)\) block};

    \begin{scope}[on background layer]
        \node[
            draw=DarkRed!20,
            fill=DarkRed!2,
            rounded corners=3pt,
            fit=(u1) (v1) (ur) (vr),
            inner xsep=8pt,
            inner ysep=8pt
        ] {};
        \draw[guide] (u1.east) -- (v1.west);
        \draw[guide] (u2.east) -- (v2.west);
        \draw[guide] (ur.east) -- (vr.west);
    \end{scope}

    \draw[selected] (u1.east) -- (g1.west);
    \draw[selected] (g1.east) -- (v1.west);
    \draw[selected] (u2.east) -- (g2.west);
    \draw[selected] (g2.east) -- (v2.west);
    \draw[selected] (ur.east) -- (gr.west);
    \draw[selected] (gr.east) -- (vr.west);

    \node[badge,anchor=east] at (-3.92,1.25) {\(y_\ell=1\)};
    \node[badge,anchor=east] at (-3.92,0.45) {\(y_\ell=2\)};
    \node[badge,anchor=east] at (-3.92,-0.92) {\(y_\ell=r_\ell\)};

\end{tikzpicture}
\caption{One step of the recursive replacement; only the selected matching is
shown.  At level \(\ell\), the hidden index \(i_\ell\) selects an induced
matching with \(r_\ell\) edges.  Each \(y_\ell\in[r_\ell]\) labels one of these
edges, on which a level-\((\ell+1)\) block is embedded.  For \(\ell<p\), this
child block is built from an
\((r_{\ell+1},t_{\ell+1})\)-RS graph.  At the bottom level \(\ell=p\), the
recursion ends: each child block has one vertex on each side and one possible
edge between them, whose presence is determined by the randomized rule below.}
\label{fig:rs-hierarchy}
\end{figure}

\paragraph{Active prefixes.}
Set \(\mathsf{Pref}_0=\{\emptyset\}\).  For \(\ell\in[p]\), let
\[
        \mathsf{Pref}_\ell=
        [t_1]\times[r_1]\times\cdots\times[t_\ell]\times[r_\ell]
\]
be the set of all prefixes through level \(\ell\).  For \(\ell\in[p]\), given
a parent prefix \(\alpha\in\mathsf{Pref}_{\ell-1}\), \(i\in[t_\ell]\), and
\(y\in[r_\ell]\), write \(\alpha\circ(i,y)\) for the level-\(\ell\) prefix
obtained by appending \((i,y)\).  For every \(\ell\in[p]\), the input defines
a selector
\[
\operatorname{sel}_\ell:\mathsf{Pref}_{\ell-1}\to[t_\ell],
\qquad
\operatorname{sel}_\ell(\alpha)=
\begin{cases}
f_{p+1}(\ast), & \ell=1,\\
f_{p+2-\ell}(\alpha), & \ell\geq2.
\end{cases}
\]
Starting with \(\mathsf{Act}_0=\{\emptyset\}\), define, for \(\ell\in[p]\),
\[
        \mathsf{Act}_\ell=
        \{\alpha\circ(\operatorname{sel}_\ell(\alpha),y):
        \alpha\in\mathsf{Act}_{\ell-1},\ y\in[r_\ell]\}.
\]
The nodes in \(\mathsf{Act}_0,\ldots,\mathsf{Act}_p\), connected by the
parent-child relation above, form the active tree.  For each \(\ell\in[p]\),
every node in \(\mathsf{Act}_{\ell-1}\) has exactly \(r_\ell\) active children:
the selector fixes \(i_\ell\), while \(y_\ell\) ranges over \([r_\ell]\).
Consequently, there is exactly one active leaf for each
tuple \((y_1,\ldots,y_p)\in[r_1]\times\cdots\times[r_p]\).

For each level \(\ell\in[p]\), fix a balanced \((r_\ell,t_\ell)\)-RS graph
\(Q_\ell\) with bipartition \(U_\ell\cup V_\ell\), where
\(|U_\ell|=|V_\ell|=:b_\ell\), and write
\[
        M_{\ell,i}=\{e_{\ell,i,y}:y\in[r_\ell]\},
        \qquad e_{\ell,i,y}=(u_{\ell,i,y},v_{\ell,i,y}),
        \qquad i\in[t_\ell].
\]
Let \(U_{\ell,i}\) and \(V_{\ell,i}\) denote the two endpoint sets of
\(M_{\ell,i}\).

\paragraph{Bottom edges.}
At the bottom of the construction, every full prefix
\(\alpha\in\mathsf{Pref}_p\) has a two-vertex block
\(\Gamma_{p+1}(\alpha)\).  All such blocks use the same local labels
\(L_{p+1}=\{\lambda\}\) and \(R_{p+1}=\{\rho\}\), with \(N_{p+1}=1\), and
have one possible edge \(e_\alpha=(\lambda,\rho)\).  We also use \(e_\alpha\)
for the image of this edge after the block is embedded into the final graph.

The bit attached to this block is \(X_\alpha=f_1(\alpha)\); on active
leaves, these are exactly the output bits of \HS{}.  Independently for every
full prefix, sample a uniform public mask \(Y_\alpha\in\{0,1\}\) and a deletion
coin \(D_\alpha\), private to \(P_1\), with \(\Pr[D_\alpha=1]=q\), where
\(q\in(0,1)\) is a sufficiently small constant depending only on \(p\) and
\(\eps\).  All these random bits are mutually independent.  Set
\[
        X'_\alpha=X_\alpha\oplus Y_\alpha.
\]
The edge \(e_\alpha\) is absent only when \(X'_\alpha=0\) and
\(D_\alpha=1\); otherwise it is present.  Thus an edge with
\(X'_\alpha=1\) is always present, while one with \(X'_\alpha=0\) is present
with probability \(1-q\).  Write
\(X'=(X'_\alpha)_{\alpha\in\mathsf{Pref}_p}\).  Because the masks are
independent fair bits, for every fixed \HS{} input the vector \(X'\) is uniform
on \(\{0,1\}^{\mathsf{Pref}_p}\).  The masks are public to the communication
protocol but are not shown to the streaming algorithm.

\paragraph{Building upward.}
We construct the remaining blocks for \(\ell=p,p-1,\ldots,1\).  Suppose all
level-\((\ell+1)\) blocks have the common labeled sides \(L_{\ell+1}\) and
\(R_{\ell+1}\), each of size \(N_{\ell+1}\).  For a parent prefix
\(\alpha\in\mathsf{Pref}_{\ell-1}\) and every edge \(e_{\ell,i,y}\) of
\(Q_\ell\), embed the child block
\(\Gamma_{\ell+1}(\alpha\circ(i,y))\) between the groups
\(\{u_{\ell,i,y}\}\times L_{\ell+1}\) and
\(\{v_{\ell,i,y}\}\times R_{\ell+1}\), as in
Definition~\ref{def:embed}.  Blocks on RS edges that share an endpoint also
share the corresponding vertex group; blocks on the edges of any one induced
matching are vertex-disjoint.  The index
\(i^\star=\operatorname{sel}_\ell(\alpha)\) designates the selected matching.

Next add \((b_\ell-r_\ell)N_{\ell+1}\) auxiliary vertices on each side.  For
each possible \(i\in[t_\ell]\), fix in advance a perfect matching from
\((U_\ell\setminus U_{\ell,i})\times L_{\ell+1}\) to the right auxiliary
vertices and a perfect matching from the left auxiliary vertices to
\((V_\ell\setminus V_{\ell,i})\times R_{\ell+1}\).  For the selected value
\(i^\star\), add the two corresponding matchings.  Their
\(2(b_\ell-r_\ell)N_{\ell+1}\) edges are the level-\(\ell\) auxiliary edges.
The resulting graph is \(\Gamma_\ell(\alpha)\).  Its vertex set is independent
of \(\alpha\); denote its two sides by \(L_\ell\) and \(R_\ell\), and their
common size by \(N_\ell\).

After carrying out this step at every level, the final graph is
\(G=\Gamma_1(\emptyset)\).  We write
\(\Gamma_{\ell+1}^{(\alpha,i,y)}\) for the copy of the child block embedded on
\(e_{\ell,i,y}\) inside \(\Gamma_\ell(\alpha)\).

\paragraph{Edge ownership.}
We say that a player owns an edge if that player is responsible for listing
the edge in the simulated stream whenever it is present.  We assign the edges
of \(G\) as follows.
\begin{itemize}
    \item Player \(P_1\) owns every bottom edge \(e_\alpha\), for
    \(\alpha\in\mathsf{Pref}_p\).  From \(f_1\), \(Y_\alpha\), and
    \(D_\alpha\), it can determine whether each such edge is present.
    \item For every \(\ell\in[p]\), player \(P_{p+2-\ell}\) owns all
    level-\(\ell\) auxiliary edges.  Its input \(f_{p+2-\ell}\) determines
    \(\operatorname{sel}_\ell\) at every level-\((\ell-1)\) prefix, and hence
    determines these edges.
\end{itemize}
Thus \(P_{p+1}\) owns the level-1 auxiliary edges, \(P_p\) owns the level-2
auxiliary edges, and so on, down to \(P_2\), which owns the level-\(p\)
auxiliary edges.  Embedding a child block into its parent does not change who
lists the child's existing edges; the parent construction adds only the new
auxiliary edges assigned above.  Since every RS edge is replaced by a child
block, every edge of \(G\) is either a bottom edge or an auxiliary edge from
exactly one level, so this assigns every edge to exactly one player.

We fix an order on all possible edge labels, independent of the input, masks,
and deletion coins.  Each player lists its present owned edges in this order.

\paragraph{Active bottom edges.}
Recall that every full prefix \(\alpha\in\mathsf{Pref}_p\) has one possible
bottom edge \(e_\alpha\), owned by \(P_1\).  The leaves
\(\alpha\in\mathsf{Act}_p\) are the full prefixes reached by the active tree,
one for each output-coordinate tuple of \HS{}.  Define
\[
        M^{\star,0}=\{e_\alpha:\alpha\in\mathsf{Act}_p\},
        \qquad
        M^\star=M^{\star,0}\cap E(G).
\]
The set \(M^{\star,0}\) contains one possible edge label for each
output-coordinate tuple, whether or not that edge is present, and hence has
size \(r_1\cdots r_p\).  The set \(M^\star\) contains exactly those active
bottom edges that are actually present in \(G\).  These are the active leaf
edges referred to below.

\subsection{Matching Structure}

The selected RS matchings allow us to construct a matching that contains every
active leaf edge that is present.  We compare the size of this particular
matching with the maximum matching size after all possible active leaf edges
are removed.  The constructed matching is larger by exactly the number of
active leaf edges present.  In the next subsection, this comparison shows that
any sufficiently good approximate matching must contain many of them.

\begin{lemma}
\label{lem:selected-block-invariant-multipass}
Fix a level \(\ell\in[p]\) and a node \(\alpha\in \mathsf{Pref}_{\ell-1}\).  Let
\(i^\star=\operatorname{sel}_\ell(\alpha)\) be the selected hidden index.  In
\(\Gamma_\ell(\alpha)\), consider the vertex groups corresponding to the
endpoints of the selected matching \(M_{\ell,i^\star}\).  The subgraph induced
by these groups is exactly the disjoint union of the child blocks embedded on
the edges of that matching:
\[
        \Gamma_\ell(\alpha)\left[
        \big(U_{\ell,i^\star}\times L_{\ell+1}\big)
        \cup
        \big(V_{\ell,i^\star}\times R_{\ell+1}\big)
        \right]
        =
        \bigsqcup_{y\in[r_\ell]}
        \Gamma_{\ell+1}^{(\alpha,i^\star,y)}.
\]
In particular, no level-\(\ell\) auxiliary edge lies inside this
selected subgraph.
\end{lemma}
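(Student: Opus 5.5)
The plan is to reduce this directly to \Cref{lem:match-embed-induced}, after separating the edges of \(\Gamma_\ell(\alpha)\) into two kinds. By construction, every edge of \(\Gamma_\ell(\alpha)\) is either (a) an edge added with respect to some RS edge \(e_{\ell,i,y}\) of \(Q_\ell\), i.e.\ an edge of an embedded child block \(\Gamma_{\ell+1}^{(\alpha,i,y)}\), or (b) a level-\(\ell\) auxiliary edge. I will treat these two kinds separately.

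\textbf{Edges of kind (a).} The non-auxiliary part of \(\Gamma_\ell(\alpha)\) is exactly \(\embed(\cH\to Q_\ell)\) for the family \(\cH=\{\Gamma_{\ell+1}(\alpha\circ(i,y))\}_{i\in[t_\ell],\,y\in[r_\ell]}\). All of these graphs live on the common labeled vertex set \(L_{\ell+1}\cup R_{\ell+1}\), since the vertex set of \(\Gamma_{\ell+1}(\cdot)\) does not depend on the prefix. This is the one point I would check explicitly: the building-upward step states this independence, and for \(\ell=p\) the blocks are the fixed two-vertex blocks. Applying \Cref{lem:match-embed-induced} with \(i=i^\star\), the embedded-block edges lying inside \((U_{\ell,i^\star}\times L_{\ell+1})\cup(V_{\ell,i^\star}\times R_{\ell+1})\) form exactly the vertex-disjoint union of the blocks \(\Gamma_{\ell+1}^{(\alpha,i^\star,y)}\) for \(y\in[r_\ell]\). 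The only ingredients are the two facts from that lemma: \(M_{\ell,i^\star}\) is induced, and the induced matchings of \(Q_\ell\) are edge-disjoint.

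\textbf{Edges of kind (b).} For the selected index \(i^\star\), the auxiliary edges are the two perfect matchings that join \((U_\ell\setminus U_{\ell,i^\star})\times L_{\ell+1}\) to the right auxiliary vertices, and the left auxiliary vertices to \((V_\ell\setminus V_{\ell,i^\star})\times R_{\ell+1}\). Each such edge has one endpoint that is an auxiliary vertex, and auxiliary vertices lie outside the selected groups. Its other endpoint lies outside \(U_{\ell,i^\star}\times L_{\ell+1}\) or outside \(V_{\ell,i^\star}\times R_{\ell+1}\) respectively. Hence no auxiliary edge has both endpoints in the selected vertex set, which also proves the ``in particular'' clause.

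Combining the two parts gives the claimed equality of induced subgraphs. I expect no real obstacle here; the only care needed is the bookkeeping that the auxiliary matchings actually added are those indexed by the selected \(i^\star\), and not those for some other \(i\).
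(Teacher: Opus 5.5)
Your proposal is correct and follows essentially the same argument as the paper. The paper repeats the content of \Cref{lem:match-embed-induced} inline instead of citing it: $M_{\ell,i^\star}$ is induced, so every other RS edge has an endpoint outside the selected groups, and it is a matching, so the child blocks are vertex-disjoint. It then excludes the auxiliary edges exactly as you do, because each one has an endpoint outside the selected groups.
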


\begin{proof}
The matching \(M_{\ell,i^\star}\) is induced in \(Q_\ell\).  After a next-level
block is embedded on every RS edge, restricting to the vertex groups associated
with the endpoints of \(M_{\ell,i^\star}\) leaves exactly the blocks embedded
on the edges of that matching: any other RS edge has at least one endpoint
outside these groups.  Since the selected RS edges form a matching, their child
blocks are vertex-disjoint.  Every level-\(\ell\) auxiliary edge has one regular
endpoint in
\((U_\ell\setminus U_{\ell,i^\star})\times L_{\ell+1}\) or
\((V_\ell\setminus V_{\ell,i^\star})\times R_{\ell+1}\), and its other endpoint
is an auxiliary vertex.  Hence no auxiliary edge lies in the selected induced
subgraph.
\end{proof}

\begin{corollary}
\label{cor:ultimate-induced-matching-multipass}
The set $M^\star$ is an induced matching in $G$.  Moreover,
\(|M^{\star,0}|=r_1r_2\cdots r_p\), and \(M^\star\subseteq M^{\star,0}\)
is exactly the set of active leaf edges that are present after the random rule
for leaf edges.
\end{corollary}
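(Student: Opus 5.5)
We prove the corollary by induction on the level, using \Cref{lem:selected-block-invariant-multipass} at every level. The two counting claims come first and need no graph structure. The active tree has exactly \(r_1\cdots r_p\) leaves, one for each tuple \((y_1,\ldots,y_p)\), because every node in \(\mathsf{Act}_{\ell-1}\) has exactly \(r_\ell\) active children. The labels \(e_\alpha\) for distinct leaves \(\alpha\) are distinct. Hence \(|M^{\star,0}|=r_1\cdots r_p\). By definition, \(M^\star=M^{\star,0}\cap E(G)\) consists of the active leaf edges that survive the deletion rule.

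For the structural claim, define for each active node \(\alpha\in\mathsf{Act}_{\ell-1}\) the vertex set \(S_\alpha\) of the copy of \(\Gamma_\ell(\alpha)\) inside \(G\). We prove the following by downward induction on \(\ell\).

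Claim: the subgraph of \(\Gamma_\ell(\alpha)\) induced by the union of the vertex sets of the active leaf blocks below \(\alpha\) contains no edges other than the present active leaf edges \(e_\beta\), where \(\beta\in\mathsf{Act}_p\) extends \(\alpha\). Moreover, these leaf blocks are pairwise vertex-disjoint.

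The base case is the bottom level. Each leaf block \(\Gamma_{p+1}(\beta)\) has two vertices and at most one edge, namely \(e_\beta\). This edge appears exactly when it survives the deletion rule.

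For the inductive step at level \(\ell\), all active leaves below \(\alpha\) lie in the child blocks \(\Gamma_{\ell+1}^{(\alpha,i^\star,y)}\) for \(y\in[r_\ell]\), where \(i^\star=\operatorname{sel}_\ell(\alpha)\). All of these vertices lie inside the selected groups \((U_{\ell,i^\star}\times L_{\ell+1})\cup(V_{\ell,i^\star}\times R_{\ell+1})\). By \Cref{lem:selected-block-invariant-multipass}, the subgraph induced on these groups is exactly the disjoint union of those child blocks. Restricting to a subset of vertices therefore gives the disjoint union, over \(y\in[r_\ell]\), of the subgraphs of each child block induced by its own active leaf vertices. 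By the inductive hypothesis applied to each child, each of these contains only present active leaf edges and has pairwise vertex-disjoint leaf blocks. Since the child blocks themselves are vertex-disjoint, the union inherits both properties.

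Applying the claim at \(\ell=1\) with \(\alpha=\emptyset\), and noting \(G=\Gamma_1(\emptyset)\), shows that the active leaf edges are vertex-disjoint. So \(M^\star\) is a matching. Furthermore, \(G\) induced on \(V(M^{\star,0})\supseteq V(M^\star)\) contains only edges of \(M^\star\). Inducedness of \(M^\star\) follows, because passing to a subset of vertices cannot add edges.

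The main obstacle is bookkeeping. The embedding of a block into its parent, and then into the parent's parent, must preserve the statement of \Cref{lem:selected-block-invariant-multipass}. That lemma is stated inside \(\Gamma_\ell(\alpha)\) as a standalone graph. The point to check is that, at every higher level, the copy of \(\Gamma_\ell(\alpha)\) sits on a selected RS edge. Its vertex group is then closed within the ancestor's selected induced subgraph. As a result, no edges from sibling blocks or from higher-level auxiliary edges enter, and the induced subgraph on its vertices inside \(G\) equals \(\Gamma_\ell(\alpha)\) itself.
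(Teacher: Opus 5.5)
Your proposal is correct and takes the same approach as the paper. Like the paper, you apply Lemma~\ref{lem:selected-block-invariant-multipass} level by level along the active tree and count the \(r_1\cdots r_p\) active leaves; your bottom-up induction is a formalized version of the paper's one-line root-to-leaves restriction argument.
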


\begin{proof}
Iterate Lemma~\ref{lem:selected-block-invariant-multipass} from the root down to
the leaves, restricting inside each active block to the endpoint groups of its
selected matching.  The final induced subgraph is the disjoint union of the
active bottom blocks.  Each such block has at most one edge, so the present
active bottom edges form an induced matching.  The active tree has
\(r_1r_2\cdots r_p\) leaves.
\end{proof}

\begin{lemma}
\label{lem:size-one-copy-multipass}
For every $\ell\in[p]$,
\[
        N_\ell=(2b_\ell-r_\ell)N_{\ell+1}.
\]
Consequently, \(N_1=\prod_{\ell=1}^p(2b_\ell-r_\ell)\), the graph \(G\) has
\(2N_1\) vertices, and
\[
        \frac{|M^{\star,0}|}{N_1}
        =
        \prod_{\ell=1}^p
        \frac{r_\ell}{2b_\ell-r_\ell}.
\]
In particular, if
\[
        \left(\frac12-\gamma\right)b_\ell
        \leq r_\ell\leq\frac{b_\ell}{2}
        \qquad\text{for every }\ell\in[p],
\]
then
\[
        \left(\frac{1/2-\gamma}{3/2+\gamma}\right)^p
        \leq \frac{|M^{\star,0}|}{N_1}
        \leq 3^{-p},
\]
and hence \(|M^{\star,0}|/N_1=3^{-p}-O_p(\gamma)\).
\end{lemma}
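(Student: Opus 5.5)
The plan is to count vertices directly from the upward construction and then do elementary arithmetic.

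\textbf{Vertex count.} Fix \(\ell\in[p]\). The regular vertices of \(\Gamma_\ell(\alpha)\) on the left side are \(U_\ell\times L_{\ell+1}\). By Definition~\ref{def:embed}, this set has \(b_\ell N_{\ell+1}\) elements: child blocks on RS edges sharing an endpoint share that endpoint's group, so the groups are indexed by \(U_\ell\) alone and no vertices are duplicated. The construction then adds \((b_\ell-r_\ell)N_{\ell+1}\) auxiliary vertices on the left. Hence
\[
N_\ell=b_\ell N_{\ell+1}+(b_\ell-r_\ell)N_{\ell+1}=(2b_\ell-r_\ell)N_{\ell+1},
\]
and the right side has the same size by symmetry. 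This also confirms that the vertex set does not depend on \(\alpha\), as claimed in the construction.

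\textbf{Unrolling.} Unroll the recurrence from \(N_{p+1}=1\) to get \(N_1=\prod_{\ell=1}^p(2b_\ell-r_\ell)\). Since \(G=\Gamma_1(\emptyset)\) has two sides of size \(N_1\), it has \(2N_1\) vertices. Corollary~\ref{cor:ultimate-induced-matching-multipass} gives \(|M^{\star,0}|=\prod_\ell r_\ell\), and dividing yields the stated product formula for \(|M^{\star,0}|/N_1\).

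\textbf{Bounds on the ratio.} Each factor \(x\mapsto x/(2b_\ell-x)\) is increasing in \(x=r_\ell\) on \([0,b_\ell]\).
\begin{itemize}
\item Setting \(r_\ell=b_\ell/2\) gives a factor of at most \((1/2)/(3/2)=1/3\).
\item Setting \(r_\ell=(1/2-\gamma)b_\ell\) gives a factor of at least \((1/2-\gamma)/(3/2+\gamma)\).
\end{itemize}
Taking the product over the \(p\) levels gives both inequalities. The final estimate \(3^{-p}-O_p(\gamma)\) follows because \(g(\gamma)=\bigl((1/2-\gamma)/(3/2+\gamma)\bigr)^p\) is smooth near \(0\) with \(g(0)=3^{-p}\). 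For \(\gamma\in[0,1/4]\), say, its derivative is bounded by a constant depending only on \(p\), which gives a Lipschitz bound.

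The only delicate point is the bookkeeping in the vertex count. The count must not charge vertices once per embedded child block; shared endpoint groups are counted once. Beyond that, everything is routine.
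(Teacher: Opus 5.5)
Your proposal is correct and follows the paper's proof: the same per-level vertex count \(N_\ell=b_\ell N_{\ell+1}+(b_\ell-r_\ell)N_{\ell+1}\), followed by multiplying over levels using \(|M^{\star,0}|=r_1\cdots r_p\). The paper compresses the ratio bounds into ``multiplying over levels,'' and your monotonicity check of \(x\mapsto x/(2b_\ell-x)\) together with the Lipschitz remark supplies exactly the routine detail it omits.
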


\begin{proof}
At level \(\ell\), each left vertex \(u\in U_\ell\) contributes the
\(N_{\ell+1}\) vertices \((u,x)\) with \(x\in L_{\ell+1}\), and similarly on the
right side.  The block construction then adds
\((b_\ell-r_\ell)N_{\ell+1}\) auxiliary vertices on each side.  Thus
\[
        N_\ell=b_\ell N_{\ell+1}+(b_\ell-r_\ell)N_{\ell+1}
        =(2b_\ell-r_\ell)N_{\ell+1}.
\]
The remaining claims follow by multiplying over levels and using
\(|M^{\star,0}|=r_1r_2\cdots r_p\).
\end{proof}

For the matching-size calculation, set
\(K_\ell:=\prod_{h=\ell}^p r_h\), with \(K_{p+1}:=1\).  This is the number of
possible active leaf edges below any active node at depth \(\ell-1\).  In
particular, \(K_1=|M^{\star,0}|\).
Here and below, deleting possible active leaf edges means deleting those edges,
not their endpoints.

\begin{lemma}
\label{lem:matching-from-active-leaves}
Fix a level \(\ell\in[p+1]\) and an active node \(\alpha\in \mathsf{Act}_{\ell-1}\).  Let
\(m(\alpha)\) be the number of present active leaf edges below \(\alpha\).  Then
\(\Gamma_\ell(\alpha)\) contains a matching of size $N_\ell-\bigl(K_\ell-m(\alpha)\bigr).$
\end{lemma}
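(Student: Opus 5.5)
We argue by downward induction on \(\ell\), from \(\ell=p+1\) up to \(\ell=1\), constructing the matching explicitly.

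\emph{Base case.} Take \(\ell=p+1\). An active node \(\alpha\in\mathsf{Act}_p\) is an active leaf, and \(\Gamma_{p+1}(\alpha)\) is the two-vertex block with \(N_{p+1}=1\) and \(K_{p+1}=1\). If \(e_\alpha\) is present, then \(m(\alpha)=1\) and the single edge is a matching of size \(1=N_{p+1}-(K_{p+1}-m(\alpha))\). Otherwise \(m(\alpha)=0\) and the empty matching has the required size \(0\).

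\emph{Inductive step.} Fix \(\ell\in[p]\) and \(\alpha\in\mathsf{Act}_{\ell-1}\), and let \(i^\star=\operatorname{sel}_\ell(\alpha)\). By the definition of \(\mathsf{Act}_\ell\), each child \(\alpha\circ(i^\star,y)\) with \(y\in[r_\ell]\) is active. By induction, the child block \(\Gamma_{\ell+1}^{(\alpha,i^\star,y)}\) contains a matching \(M_y\) of size \(N_{\ell+1}-(K_{\ell+1}-m(\alpha\circ(i^\star,y)))\). These child blocks sit on the edges of the induced matching \(M_{\ell,i^\star}\), so they are vertex-disjoint (Lemma~\ref{lem:selected-block-invariant-multipass}). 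They occupy exactly the vertex groups \((U_{\ell,i^\star}\times L_{\ell+1})\cup(V_{\ell,i^\star}\times R_{\ell+1})\).

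Next add the two level-\(\ell\) auxiliary matchings for the selected value \(i^\star\). One matches \((U_\ell\setminus U_{\ell,i^\star})\times L_{\ell+1}\) perfectly to the right auxiliary vertices. The other matches the left auxiliary vertices perfectly to \((V_\ell\setminus V_{\ell,i^\star})\times R_{\ell+1}\). These edges touch only auxiliary vertices and vertex groups outside the selected endpoint groups. They are therefore disjoint from every \(M_y\), and the two auxiliary matchings are disjoint from each other. Hence the union of all these edges is a matching in \(\Gamma_\ell(\alpha)\).

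\emph{Counting.} The auxiliary part has \(2(b_\ell-r_\ell)N_{\ell+1}\) edges. Below \(\alpha\), the active leaves are exactly those below its \(r_\ell\) active children, so \(\sum_y m(\alpha\circ(i^\star,y))=m(\alpha)\). We also have \(r_\ell K_{\ell+1}=K_\ell\). The total size is therefore
\[
2(b_\ell-r_\ell)N_{\ell+1}+r_\ell N_{\ell+1}-r_\ell K_{\ell+1}+m(\alpha)=(2b_\ell-r_\ell)N_{\ell+1}-(K_\ell-m(\alpha)).
\]
By Lemma~\ref{lem:size-one-copy-multipass}, this equals \(N_\ell-(K_\ell-m(\alpha))\).

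The main point to check carefully is the disjointness claim: the auxiliary edges must avoid the selected endpoint groups, and the child blocks must not overlap. Both follow from the auxiliary construction and Lemma~\ref{lem:selected-block-invariant-multipass}; the rest is bookkeeping.
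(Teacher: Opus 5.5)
Your proposal is correct and follows the paper's proof essentially step for step. It uses the same single-edge base case at level $p+1$, and the same inductive union of the level-$\ell$ auxiliary matching with the inductively obtained matchings in the $r_\ell$ vertex-disjoint selected child blocks. The count is also the same, via $K_\ell=r_\ell K_{\ell+1}$ and $N_\ell=(2b_\ell-r_\ell)N_{\ell+1}$.
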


\begin{proof}
We prove the claim by induction from the leaves upward.  At level \(p+1\), the
block is a single possible edge.  The claimed size is \(1-(1-m(\alpha))\), which
is \(1\) if the edge is present and \(0\) otherwise.

Now fix $\ell\leq p$ and assume the claim for $\ell+1$.  Let
\(i^\star=\operatorname{sel}_\ell(\alpha)\).  The level-$\ell$
auxiliary edges form a matching of size \(2(b_\ell-r_\ell)N_{\ell+1}\).
They saturate all regular vertices in
\((U_\ell\setminus U_{\ell,i^\star})\times L_{\ell+1}\) and
\((V_\ell\setminus V_{\ell,i^\star})\times R_{\ell+1}\), as well as all
level-\(\ell\) auxiliary vertices.  The unmatched vertices are therefore the
endpoint groups of the selected matching, which induce the vertex-disjoint
blocks \(\Gamma_{\ell+1}^{(\alpha,i^\star,y)}\) by
Lemma~\ref{lem:selected-block-invariant-multipass}.  If \(m_y\) denotes the
number of present active leaf edges below \(\alpha\circ(i^\star,y)\), then
\(\sum_y m_y=m(\alpha)\).  By the induction hypothesis, the selected blocks
contribute $\sum_{y\in[r_\ell]}
        \bigl(N_{\ell+1}-(K_{\ell+1}-m_y)\bigr)$
matching edges.  Together with the auxiliary matching, this has size
\[
        2(b_\ell-r_\ell)N_{\ell+1}
        +r_\ell N_{\ell+1}
        -r_\ell K_{\ell+1}
        +m(\alpha)
        =
        N_\ell-(K_\ell-m(\alpha)).
\]
This proves the claim.
\end{proof}

\begin{lemma}
\label{lem:residual-matching-size}
Fix a level \(\ell\in[p+1]\) and an active node \(\alpha\in \mathsf{Act}_{\ell-1}\).  If all
possible active leaf edges below \(\alpha\) are deleted, then the remaining
graph inside \(\Gamma_\ell(\alpha)\) has maximum matching size $N_\ell-K_\ell.$
\end{lemma}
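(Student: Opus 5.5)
The equality splits into a lower bound and an upper bound, proved together by induction from the leaves upward. For the lower bound, take the matching built in Lemma~\ref{lem:matching-from-active-leaves} with $m(\alpha)=0$. That construction never uses a possible active leaf edge except to count it, so it survives the deletion and has size $N_\ell-K_\ell$. The real content is the upper bound. By König's theorem it suffices to exhibit a vertex cover of size $N_\ell-K_\ell$ for the residual graph inside $\Gamma_\ell(\alpha)$. I will prove the stronger inductive claim that such a cover exists.

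For the base case $\ell=p+1$, the block is a single possible edge, which is active and has been deleted. The residual graph is therefore empty, the empty set is a cover, and $N_{p+1}-K_{p+1}=1-1=0$.

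For the inductive step, fix $\ell\le p$ and $\alpha\in\mathsf{Act}_{\ell-1}$, and let $i^\star=\operatorname{sel}_\ell(\alpha)$. Take the cover $C$ to be the union of two parts:
\begin{itemize}
\item all regular vertices in $(U_\ell\setminus U_{\ell,i^\star})\times L_{\ell+1}$ and in $(V_\ell\setminus V_{\ell,i^\star})\times R_{\ell+1}$, which number $2(b_\ell-r_\ell)N_{\ell+1}$;
\item for each $y\in[r_\ell]$, the inductively given cover of size $N_{\ell+1}-K_{\ell+1}$ of the residual graph of the selected child block $\Gamma_{\ell+1}^{(\alpha,i^\star,y)}$. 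This applies because $\alpha\circ(i^\star,y)\in\mathsf{Act}_\ell$, and the active leaves below $\alpha$ are exactly the active leaves below these children.
\end{itemize}

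To check that $C$ covers every remaining edge, use the three edge types.
\begin{itemize}
\item Every level-$\ell$ auxiliary edge has its regular endpoint outside the selected groups, so the first part covers it.
\item Every edge inside a child block on a non-selected RS edge $e_{\ell,i,y'}$ with $i\ne i^\star$ has an endpoint group that is not in the selected endpoint set. Since $M_{\ell,i^\star}$ is induced, an RS edge outside it cannot have both endpoints in $U_{\ell,i^\star}\cup V_{\ell,i^\star}$. So the first part covers this edge too.
\item Every remaining edge has both endpoints in the selected groups. By Lemma~\ref{lem:selected-block-invariant-multipass} it then lies in some selected child block, where the inductive cover handles it.
\end{itemize}

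The only subtle point, which I expect to be the main obstacle, is being careful about edges with both endpoints in selected groups. Lemma~\ref{lem:selected-block-invariant-multipass} settles it: these are exactly the edges of the selected child blocks. Edges of non-selected blocks that touch a selected group always have their other endpoint in a non-selected group. Note that deletions inside non-selected blocks are irrelevant, since only active leaves are deleted and those lie in selected blocks.

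It remains to count, using $N_\ell=(2b_\ell-r_\ell)N_{\ell+1}$ from Lemma~\ref{lem:size-one-copy-multipass} and $K_\ell=r_\ell K_{\ell+1}$:
\[
|C|=2(b_\ell-r_\ell)N_{\ell+1}+r_\ell\bigl(N_{\ell+1}-K_{\ell+1}\bigr)=(2b_\ell-r_\ell)N_{\ell+1}-r_\ell K_{\ell+1}=N_\ell-K_\ell.
\]
This matches the lower bound and completes the induction.
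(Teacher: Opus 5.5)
Your proof is correct and follows the paper's argument. It uses the same cover (all regular vertices outside the selected endpoint groups, plus recursive covers of the selected child blocks), the same count $2(b_\ell-r_\ell)N_{\ell+1}+r_\ell(N_{\ell+1}-K_{\ell+1})=N_\ell-K_\ell$, and the same auxiliary-plus-recursive matching for the lower bound. The differences are cosmetic: you carry the vertex cover itself through the induction, so only weak duality is needed rather than K\H{o}nig at each level, and you take the lower bound from Lemma~\ref{lem:matching-from-active-leaves} instead of the inductive hypothesis.
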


\begin{proof}
The proof is again by induction from the leaves upward.  At level \(p+1\),
deleting the possible leaf edge leaves a graph with no edges, and
\(N_{p+1}-K_{p+1}=0\).

Now fix $\ell\leq p$ and assume the claim for $\ell+1$.  Let
\(i^\star=\operatorname{sel}_\ell(\alpha)\).  We cover the vertices outside the
selected matching, and then use the induction hypothesis inside the selected
next-level blocks.

After deleting all possible active leaf edges below \(\alpha\), put into the
cover all regular vertices whose level-\(\ell\) RS vertex is not an endpoint of
the selected matching:
\[
        \bigl(U_\ell\setminus U_{\ell,i^\star}\bigr)\times L_{\ell+1}
        \quad\text{and}\quad
        \bigl(V_\ell\setminus V_{\ell,i^\star}\bigr)\times R_{\ell+1}
\]
on the left and right sides.  Because the selected RS matching is induced, these
vertices cover every embedded edge not lying inside a selected next-level block.
They also cover all level-\(\ell\) auxiliary edges.  The only edges
left to cover are inside the selected blocks, which are vertex-disjoint by
Lemma~\ref{lem:selected-block-invariant-multipass}.  By induction, each selected
block has maximum matching size \(N_{\ell+1}-K_{\ell+1}\) after the active leaf
edges below it are deleted.  Since each block is bipartite,
Fact~\ref{fact:konig} gives a vertex cover of the same size in each block.  The
union of these covers with the regular vertices above has size
\[
        2(b_\ell-r_\ell)N_{\ell+1}
        +
        r_\ell(N_{\ell+1}-K_{\ell+1})
        =
        N_\ell-K_\ell.
\]
Thus the maximum matching size is at most \(N_\ell-K_\ell\).  For the reverse
inequality, take the level-\(\ell\) auxiliary matching together with a matching
of size \(N_{\ell+1}-K_{\ell+1}\) in each selected block, whose existence follows
from the induction hypothesis.  These matchings are vertex-disjoint and have
total size \(N_\ell-K_\ell\).  Hence the maximum matching size is exactly
\(N_\ell-K_\ell\).
\end{proof}

At the root, \(m(\emptyset)=|M^\star|\) and
\(K_1=|M^{\star,0}|\).  Lemmas~\ref{lem:matching-from-active-leaves}
and~\ref{lem:residual-matching-size} therefore give
\[
    \mu(G)\ge N_1-\bigl(|M^{\star,0}|-|M^\star|\bigr),
    \qquad
    \mu\bigl(G\setminus M^{\star,0}\bigr)=N_1-|M^{\star,0}|.
\]
Thus the matching constructed in \(G\) is larger than the maximum matching in
\(G\setminus M^{\star,0}\) by exactly \(|M^\star|\), as claimed in the opening
paragraph.

\subsection{Decoding from a Large Matching}

We now show how to turn a sufficiently large matching in \(G\) into a
prediction of the \HS{} output vector.  Each \((y_1,\ldots,y_p)\) corresponds
to one possible active bottom edge in \(M^{\star,0}\).  The argument
has two parts.  First, the matching-size comparison from the previous section,
together with concentration below, forces a sufficiently good approximate
matching to contain many edges from \(M^{\star,0}\).  Second, each such returned
edge gives a slight statistical advantage for guessing its corresponding
output bit.

An active leaf edge returned in the matching does not determine its bit
directly, so we first work with the masked output
\[
        X'_\alpha=X_\alpha\oplus Y_\alpha,
        \qquad \alpha\in \mathsf{Pref}_p.
\]
For every fixed \HS{} input, the masks make the bits of \(X'\) independent and
unbiased.  When \(X'_\alpha=1\), the edge \(e_\alpha\) is always present; when
\(X'_\alpha=0\), it is present with probability \(1-q\).  Thus a valid output
matching that contains \(e_\alpha\) provides evidence that
\(X'_\alpha=1\), although it does not determine the bit.  After showing that a
good matching contains many active leaf edges, we define a randomized decoder:
it guesses \(X'_\alpha=1\) on a fixed-size random subset of those returned
edges, makes independent fair guesses on the remaining active leaves, and then
uses the public masks to obtain guesses for \(X\).  The coupling argument below
proves that this bias remains valid even though the returned matching may
depend on the entire edge stream.

Set
\[
        R_\star:=|M^{\star,0}|=r_1r_2\cdots r_p,
        \qquad
        \beta:=R_\star/N_1.
\]
Thus \(R_\star\) is the number of output coordinates, while \(\beta\) is this
number normalized by the size \(N_1\) of one side of \(G\).
By Lemma~\ref{lem:size-one-copy-multipass}, if
\((1/2-\gamma)b_\ell\leq r_\ell\leq b_\ell/2\) for every \(\ell\), then
\(\beta=3^{-p}-O_p(\gamma)\).

We begin by showing that almost all of these possible active bottom edges are
present.  Each one is deleted with probability \(q/2\), independently over the
active leaves.

\begin{lemma}
\label{lem:present-ultimate-size-multipass}
For every fixed \HS{} input, with probability at least
\(1-1/\operatorname{poly}(R_\star)\) over the bits \(Y_\alpha\) and \(D_\alpha\),
\[
        \left|
        |M^\star|-\left(1-\frac q2\right)R_\star
        \right|
        \le R_\star^{2/3}.
\]
\end{lemma}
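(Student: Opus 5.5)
Fix an \HS{} input; then the active tree, and hence the set \(M^{\star,0}\) of \(R_\star\) possible active leaf edges, is determined. What remains random are the masks \(Y_\alpha\) and the deletion coins \(D_\alpha\). We write \(|M^\star|\) as a sum of indicators, \(|M^\star|=\sum_{\alpha\in\mathsf{Act}_p} Z_\alpha\), where \(Z_\alpha=\mathbf 1[e_\alpha\in E(G)]\). By the deletion rule, \(e_\alpha\) is absent exactly when \(X_\alpha\oplus Y_\alpha=0\) and \(D_\alpha=1\). Since \(X_\alpha\) is fixed and \(Y_\alpha\) is a uniform bit independent of \(D_\alpha\), we get \(\Pr[Z_\alpha=0]=\tfrac12\cdot q\). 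Moreover, each \(Z_\alpha\) depends only on the pair \((Y_\alpha,D_\alpha)\), and these pairs are mutually independent across distinct prefixes \(\alpha\). So the \(Z_\alpha\) are i.i.d.\ Bernoulli\((1-q/2)\), and \(\E|M^\star|=(1-q/2)R_\star\).

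Next we apply Hoeffding's inequality (a Chernoff bound works equally well) to these \(R_\star\) independent \(\{0,1\}\)-valued variables:
\[
\Pr\Bigl[\bigl||M^\star|-(1-\tfrac q2)R_\star\bigr|>R_\star^{2/3}\Bigr]\le 2\exp\bigl(-2R_\star^{4/3}/R_\star\bigr)=2\exp\bigl(-2R_\star^{1/3}\bigr).
\]
This is at most \(1/\operatorname{poly}(R_\star)\) for any fixed polynomial once \(R_\star\) is large, and \(R_\star=r_1\cdots r_p\) grows with the parameters. (For small \(R_\star\), the \(\operatorname{poly}\) can absorb constants.)

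There is no real obstacle here. The only point to check is independence: the active set is fixed once the \HS{} input is fixed, so the set of prefixes being summed over does not depend on the random bits. Also, distinct active leaves are distinct full prefixes and therefore use disjoint, independent coins \((Y_\alpha,D_\alpha)\).
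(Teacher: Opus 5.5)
Your proposal is correct and follows the paper's proof. Fixing the \HS{} input fixes the active leaves, each active leaf edge is deleted independently with probability $q/2$, so $|M^\star|$ is binomial with parameters $R_\star$ and $1-q/2$, and Hoeffding gives the bound $2\exp(-2R_\star^{1/3})$ for all sufficiently large $R_\star$.
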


\begin{proof}
Once the \HS{} input is fixed, the active leaves are fixed.  For each such leaf,
the public mask \(Y_\alpha\) makes \(X'_\alpha\) uniform.
The edge is deleted exactly when \(X'_\alpha=0\) and \(D_\alpha=1\), which has
probability \(q/2\).  These events are independent over active leaves, so
\(|M^\star|\) is binomial with parameters \(R_\star\) and \(1-q/2\).
Hoeffding's inequality gives
\[
    \Pr\!\left[
    \left||M^\star|-\left(1-\frac q2\right)R_\star\right|
    >R_\star^{2/3}
    \right]
    \le 2\exp(-2R_\star^{1/3}),
\]
which is at most \(1/\operatorname{poly}(R_\star)\) for all sufficiently large
\(R_\star\).
\end{proof}

The concentration bound, together with
Lemma~\ref{lem:matching-from-active-leaves}, gives a lower bound on
\(\mu(G)\).  Let \(G_{\rm res}:=G\setminus M^{\star,0}\) be the graph obtained
by deleting all possible active leaf edges.  Lemma~\ref{lem:residual-matching-size} gives
\(\mu(G_{\rm res})=N_1-R_\star\).

\begin{lemma}
\label{lem:large-opt-and-residual-multipass}
For every fixed \HS{} input, with probability at least
\(1-1/\operatorname{poly}(R_\star)\) over the bits \(Y_\alpha\) and \(D_\alpha\),
\[
        \mu(G)
        \geq
        N_1-\frac q2R_\star-R_\star^{2/3}
        =
        \left(1-\frac q2\beta-\beta R_\star^{-1/3}\right)N_1.
\]
Moreover, for every outcome of these bits,
\(\mu(G_{\rm res})=N_1-R_\star\).
\end{lemma}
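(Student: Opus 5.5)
This lemma follows by combining two earlier results. For the lower bound on \(\mu(G)\), I will fix the \HS{} input, which fixes the active tree and hence \(M^{\star,0}\), and apply Lemma~\ref{lem:matching-from-active-leaves} at the root. There \(\ell=1\), \(\alpha=\emptyset\), \(m(\emptyset)=|M^\star|\) and \(K_1=R_\star\), so \(G=\Gamma_1(\emptyset)\) contains a matching of size
\[
N_1-\bigl(R_\star-|M^\star|\bigr).
\]
Hence \(\mu(G)\ge N_1-R_\star+|M^\star|\) holds deterministically, for every outcome of the bits \(Y_\alpha,D_\alpha\).

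Next I invoke Lemma~\ref{lem:present-ultimate-size-multipass}. With probability at least \(1-1/\operatorname{poly}(R_\star)\) over the masks and deletion coins, we have \(|M^\star|\ge (1-q/2)R_\star-R_\star^{2/3}\). On this event, substituting gives
\[
\mu(G)\ge N_1-\frac q2 R_\star-R_\star^{2/3}.
\]
The rewriting in terms of \(N_1\) then follows from \(\beta=R_\star/N_1\), since \(\tfrac q2 R_\star=\tfrac q2\beta N_1\) and \(R_\star^{2/3}=R_\star\cdot R_\star^{-1/3}=\beta R_\star^{-1/3}N_1\).

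For the second claim, I apply Lemma~\ref{lem:residual-matching-size} at the root with \(\ell=1\) and \(\alpha=\emptyset\). Deleting all possible active leaf edges below the root is exactly forming \(G_{\rm res}=G\setminus M^{\star,0}\). So \(\mu(G_{\rm res})=N_1-K_1=N_1-R_\star\).

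That lemma makes no assumption about which leaf edges are present. Its vertex-cover and matching arguments use only the structure of the blocks and the auxiliary edges, and the deleted active leaf edges are removed regardless of their coins. So the identity holds for every outcome of the bits.

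I do not expect a real obstacle. The only point needing care is that the high-probability event depends only on \((Y_\alpha,D_\alpha)\) for the fixed input. Lemma~\ref{lem:present-ultimate-size-multipass} is stated in exactly that form, so the probability quantifier carries over unchanged.
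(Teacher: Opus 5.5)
Your proposal is correct and matches the paper's proof: apply Lemma~\ref{lem:matching-from-active-leaves} at the root, substitute the concentration bound from Lemma~\ref{lem:present-ultimate-size-multipass}, and get the residual equality from Lemma~\ref{lem:residual-matching-size} at the root. Your added justification that the residual bound holds for every outcome of the bits is also right: the cover in that lemma handles non-active bottom edges whether or not they are present.
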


\begin{proof}
Lemma~\ref{lem:matching-from-active-leaves} applied at the root gives a
matching of size \(N_1-(R_\star-|M^\star|)\).
Lemma~\ref{lem:present-ultimate-size-multipass}
then gives
\[
        N_1-\bigl(R_\star-|M^\star|\bigr)
        \ge
        N_1-\frac q2R_\star-R_\star^{2/3}.
\]
The equality for \(G_{\rm res}\) follows from
Lemma~\ref{lem:residual-matching-size} at the root after deleting all possible
active leaf edges.
\end{proof}

Any sufficiently large approximate matching in \(G\) must include many active
leaf edges.  Otherwise its other edges lie in \(G_{\rm res}\), whose maximum
matching is too small.

\begin{lemma}
\label{lem:approx-finds-ultimate-edges-multipass}
Suppose
\[
        \left(\frac12-\gamma\right)b_\ell
        \leq r_\ell\leq\frac{b_\ell}{2}
        \qquad\text{for every }\ell\in[p],
\]
where $\gamma>0$ and $q>0$ are sufficiently small constants depending only on
$p$ and $\eps$.  Then, for every fixed \HS{} input and all sufficiently large
\(R_\star\), with probability at least
\(1-1/\operatorname{poly}(R_\star)\) over the bits \(Y_\alpha\) and
\(D_\alpha\), every
\((1-3^{-p}+\eps)\)-approximate matching \(M\) in \(G\) contains at least
\(c_{p,\eps}N_1\) edges from
\(M^{\star,0}\), for some constant \(c_{p,\eps}>0\).
\end{lemma}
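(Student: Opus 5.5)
We work on the high-probability event of Lemma~\ref{lem:large-opt-and-residual-multipass}. This event has probability at least \(1-1/\operatorname{poly}(R_\star)\) over \(Y_\alpha,D_\alpha\) for every fixed \HS{} input. On it,
\[
\mu(G)\ge\Bigl(1-\tfrac q2\beta-\beta R_\star^{-1/3}\Bigr)N_1
\qquad\text{and}\qquad
\mu(G_{\rm res})=N_1-R_\star=(1-\beta)N_1.
\]
The argument is then deterministic. Fix any matching \(M\) with \(|M|\ge(1-3^{-p}+\eps)\mu(G)\), and split it as \(M=(M\cap M^{\star,0})\cup(M\setminus M^{\star,0})\). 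The second part is a matching in \(G_{\rm res}\), so it has at most \((1-\beta)N_1\) edges. This gives
\[
|M\cap M^{\star,0}|\;\ge\;(1-3^{-p}+\eps)\Bigl(1-\tfrac q2\beta-\beta R_\star^{-1/3}\Bigr)N_1-(1-\beta)N_1 .
\]

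Next we estimate the right-hand side using Lemma~\ref{lem:size-one-copy-multipass}. The hypothesis \((1/2-\gamma)b_\ell\le r_\ell\le b_\ell/2\) gives \(\beta\le 3^{-p}\) and \(\beta\ge 3^{-p}-C_p\gamma\) for a constant \(C_p\) depending only on \(p\). We expand the product, using \(\beta\le 3^{-p}\le 1\) to bound the error terms:
\[
(1-3^{-p}+\eps)\Bigl(1-\tfrac q2\beta-\beta R_\star^{-1/3}\Bigr)\ge 1-3^{-p}+\eps-\tfrac q2-R_\star^{-1/3}.
\]
The right-hand side of the previous display is therefore at least
\[
\Bigl(\beta-3^{-p}+\eps-\tfrac q2-R_\star^{-1/3}\Bigr)N_1\;\ge\;\Bigl(\eps-C_p\gamma-\tfrac q2-R_\star^{-1/3}\Bigr)N_1 .
\]

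Finally we fix the constants. Choose \(\gamma\le \eps/(4C_p)\) and \(q\le\eps/2\), and take \(R_\star\) large enough that \(R_\star^{-1/3}\le\eps/4\). The coefficient is then at least \(\eps/4\), so we may take \(c_{p,\eps}=\eps/4\). The only care needed is in the bookkeeping, which has two points. First, the approximation is measured against \(\mu(G)\), which we only lower-bound on the good event. Second, the gap \(3^{-p}-\beta\) must be absorbed by \(\gamma\), so \(\gamma\) has to be chosen depending on \(p\) and \(\eps\). No step is a real obstacle, since the structural content is already supplied by Lemmas~\ref{lem:matching-from-active-leaves} and~\ref{lem:residual-matching-size}.
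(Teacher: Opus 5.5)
Your proposal is correct and follows the paper's proof essentially line for line. You bound $|M\setminus M^{\star,0}|\le\mu(G_{\rm res})=N_1-R_\star$, combine this with the high-probability lower bound on $\mu(G)$ from Lemma~\ref{lem:large-opt-and-residual-multipass}, and absorb the gap $3^{-p}-\beta=O_p(\gamma)$ together with the $q$ and $R_\star^{-1/3}$ terms by taking $\gamma,q\ll_p\eps$ and then $R_\star$ large. Your choice $c_{p,\eps}=\eps/4$ suffices for the lemma as stated; the paper's extra cap of $c_{p,\eps}$ by one tenth of $\bigl((1/2-\gamma)/(3/2+\gamma)\bigr)^p$ is not needed for the lemma itself.
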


\begin{proof}
All edges of \(M\) outside \(M^{\star,0}\) are edges of \(G_{\rm res}\), so $|M|\leq |M\cap M^{\star,0}|+\mu(G_{\rm res}).$
Using Lemma~\ref{lem:large-opt-and-residual-multipass} and
\(\beta=R_\star/N_1\), every \((1-3^{-p}+\eps)\)-approximate matching
satisfies
\[
\begin{aligned}
        |M\cap M^{\star,0}|
        &\geq
        (1-3^{-p}+\eps)
        \left(N_1-\frac q2R_\star-R_\star^{2/3}\right)
        -(N_1-R_\star) \\
        &=
        \left[
        \beta-3^{-p}+\eps
        -
        \frac q2\beta(1-3^{-p}+\eps)
        -\beta(1-3^{-p}+\eps)R_\star^{-1/3}
        \right]N_1.
\end{aligned}
\]
The hypothesis and Lemma~\ref{lem:size-one-copy-multipass} give
\[
    \left(\frac{1/2-\gamma}{3/2+\gamma}\right)^p
    \leq \beta\leq 3^{-p},
\]
so \(\beta=3^{-p}-O_p(\gamma)\).  First choosing
\(\gamma,q\ll_p\eps\), and then taking \(R_\star\) sufficiently large, makes
the bracketed quantity at least a positive constant depending only on \(p\)
and \(\eps\).  Choose \(c_{p,\eps}\) below that constant, and also below one
tenth of \(\bigl((1/2-\gamma)/(3/2+\gamma)\bigr)^p\).
\end{proof}

Set \(s_\star=\lfloor c_{p,\eps}N_1\rfloor\), decreasing
\(c_{p,\eps}\) if needed so
Lemma~\ref{lem:approx-finds-ultimate-edges-multipass} guarantees at least
\(s_\star\) active leaf edges whenever the lemma applies.

Given a matching \(M\) and the selector functions
\(f_2,\ldots,f_{p+1}\), the decoder can reconstruct the active leaves and
identify \(M\cap M^{\star,0}\).  If this set has size at least \(s_\star\), choose a
uniform subset \(S\subseteq M\cap M^{\star,0}\) of size \(s_\star\); otherwise
set \(S=\emptyset\).  On leaves whose edges are in \(S\), set
\(\widehat X'_\alpha=1\).  On all other active leaves, use independent fair
guesses.  The decoder chooses \(S\) using only \(M\cap M^{\star,0}\) and its
own randomness; it does not directly inspect the public mask values when
choosing \(S\).  The masks are used only afterward to convert guesses for
\(X'\) into guesses for \(X\).

The validity assumption in the next lemma is used only to rule out absent edges
in the output matching.

\begin{lemma}
\label{lem:biased-decoding-multipass}
Fix an arbitrary \HS{} input.  Let \(M\) be the output of a randomized
procedure that is given only the resulting edge stream, listed in the fixed
order on edge labels, together with independent internal randomness.  Assume
that this procedure always outputs a valid matching in the input graph.  Let \(S\) and
\(\widehat X'\in\{0,1\}^{\mathsf{Act}_p}\) be produced by the decoder above,
with \(S\) determined only by \(M\cap M^{\star,0}\) and independent decoder
randomness.  Then, over the bits \(Y_\alpha,D_\alpha\), the procedure's
internal randomness, and decoder randomness,
\[
        \E\!\left[d_{\rm H}(\widehat X',X'|_{\mathsf{Act}_p})\right]
        \le
        \frac{R_\star}{2}
        -
        \frac{q}{2(2-q)}\,\E[|S|].
\]
In particular, if \(\mathcal E\) is any event on which
\(|M\cap M^{\star,0}|\ge s_\star\), then
\[
        \E\!\left[d_{\rm H}(\widehat X',X'|_{\mathsf{Act}_p})\right]
        \le
        \frac{R_\star}{2}
        -
        \frac{q}{2(2-q)}\,s_\star\Pr[\mathcal E].
\]
\end{lemma}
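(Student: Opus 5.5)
Fix the \HS{} input throughout, so the active leaves \(\mathsf{Act}_p\) and the set \(M^{\star,0}\) are deterministic. The plan is to decompose the expected Hamming error leaf by leaf. Leaves not in \(S\) receive independent fair guesses, so each contributes exactly \(1/2\) in expectation. The whole task is therefore to show, for each active leaf \(\alpha\),
\[
\Pr[X'_\alpha=1,\ \alpha\in S]\ \ge\ \frac{1}{2-q}\Pr[\alpha\in S].
\]
If this holds, the error on \(\alpha\) is \(\Pr[\alpha\notin S]/2+\Pr[\alpha\in S,X'_\alpha=0]\le \tfrac12-\tfrac{q}{2(2-q)}\Pr[\alpha\in S]\). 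Summing over \(\alpha\) and using \(\sum_\alpha\Pr[\alpha\in S]=\E[|S|]\) gives the first inequality. The second follows because \(|S|=s_\star\) on \(\mathcal E\), so \(\E|S|\ge s_\star\Pr[\mathcal E]\).

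The key step is a coupling or resampling argument for a single leaf \(\alpha\). Condition on every random bit other than \((Y_\alpha,D_\alpha)\); this includes all other masks and coins, the procedure's randomness, and the decoder's randomness. The edge stream is then a function only of whether \(e_\alpha\) is present, since the fixed order on labels means presence is all that matters. Let \(\mathsf{Pres}_\alpha=1-(1-X'_\alpha)D_\alpha\).

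Since \(X'_\alpha\) is a uniform bit (the mask is public but fair, and independent of everything else) and \(D_\alpha\sim\mathrm{Bern}(q)\) independently, we get \(\Pr[X'_\alpha=1\mid \mathsf{Pres}_\alpha=1]=\frac{1/2}{1-q/2}=\frac1{2-q}\). The event \(\{\alpha\in S\}\) requires \(e_\alpha\in M\), and validity of \(M\) forces \(\mathsf{Pres}_\alpha=1\). Given the conditioned bits and \(\mathsf{Pres}_\alpha=1\), the output \(M\), and hence \(S\), are fixed functions that do not depend on how \(\mathsf{Pres}_\alpha=1\) was realized. Therefore \(\{\alpha\in S\}\) is conditionally independent of \(X'_\alpha\) given \(\mathsf{Pres}_\alpha=1\). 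This yields
\[
\Pr[X'_\alpha=1,\alpha\in S]=\tfrac1{2-q}\Pr[\alpha\in S].
\]
Averaging over the conditioned bits preserves the identity.

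The main obstacle is making this conditional independence rigorous. The decoder never looks at \(Y_\alpha\) when choosing \(S\); it uses \(Y_\alpha\) only afterwards to flip guesses, and the Hamming distance for \(X'\) does not involve that flip. I would also check that the \(S=\emptyset\) branch, taken when \(|M\cap M^{\star,0}|<s_\star\), is a function of \(M\) alone. That case is covered by the same argument, because \(S\) is throughout a function of \(M\cap M^{\star,0}\) and independent decoder coins.
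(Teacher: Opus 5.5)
Your proposal is correct and takes essentially the same route as the paper. Both arguments fix all randomness except $(Y_\alpha,D_\alpha)$ and note that the stream, and hence $M$ and $S$, depends only on whether $e_\alpha$ is present; validity then excludes selection when the edge is absent, giving $\Pr[X'_\alpha=1\mid \alpha\in S]=1/(2-q)$ before summing over active leaves. Your conditional-independence phrasing repackages the paper's observation that the three cases $(X'_\alpha,D_\alpha)\in\{(1,0),(1,1),(0,0)\}$ yield identical streams, which gives $\Pr[C_\alpha\mid X'_\alpha=0]=(1-q)\Pr[C_\alpha\mid X'_\alpha=1]$.
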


\begin{proof}
Fix an active leaf \(\alpha\), and let \(C_\alpha\) be the event that its edge is
chosen into \(S\).  Since \(M\) is a valid matching in the input graph,
\(C_\alpha\) can occur only when the leaf edge is present.  Fix all randomness
other than \(Y_\alpha\) and \(D_\alpha\), including the procedure's internal
randomness and the decoder randomness.  In each of the three cases
\[
    (X'_\alpha,D_\alpha)\in\{(1,0),(1,1),(0,0)\},
\]
the edge \(e_\alpha\) is present, and the resulting edge stream is identical.
Indeed, \(Y_\alpha\) and \(D_\alpha\) affect no other edge, and the fixed edge
order is independent of these bits.  The procedure therefore returns the same
matching, and the decoder chooses the same set \(S\), in all three cases.  In
the remaining case \((X'_\alpha,D_\alpha)=(0,1)\), the edge is absent and
validity of \(M\) makes \(C_\alpha\) impossible.  Averaging over the randomness
used in this comparison gives
\[
        \Pr[C_\alpha\mid X'_\alpha=0]
        =
        (1-q)\Pr[C_\alpha\mid X'_\alpha=1].
\]
Since \(X'_\alpha\) is unbiased, whenever \(\Pr[C_\alpha]>0\),
\[
        \Pr[X'_\alpha=1\mid C_\alpha]
        =
        \frac{1}{2-q}.
\]
Conditioned on \(e_\alpha\in S\), the estimate \(\widehat X'_\alpha=1\) is
correct with probability \(1/(2-q)\), a gain over a fair guess of
\[
        \frac{1}{2-q}-\frac12=\frac{q}{2(2-q)}.
\]
If \(e_\alpha\notin S\), the decoder uses an independent fair bit, whose error
probability is \(1/2\) regardless of the conditional distribution of
\(X'_\alpha\).  Consequently,
\[
    \Pr[\widehat X'_\alpha\ne X'_\alpha]
    =
    \frac12-\frac{q}{2(2-q)}\Pr[C_\alpha].
\]
Summing this identity over all active leaves and using
\(\sum_{\alpha\in\mathsf{Act}_p}\Pr[C_\alpha]=\E[|S|]\) gives
\[
        \E\!\left[d_{\rm H}(\widehat X',X'|_{\mathsf{Act}_p})\right]
        \le
        \frac{R_\star}{2}
        -
        \frac{q}{2(2-q)}\,\E[|S|].
\]
On \(\mathcal E\), we have \(|M\cap M^{\star,0}|\ge s_\star\).  Therefore
\(|S|=s_\star\) on \(\mathcal E\), and
\(\E[|S|]\ge s_\star\Pr[\mathcal E]\).
\end{proof}

\subsection{Simulating the Streaming Algorithm}
Let \(\alg\) be a \(p\)-pass streaming algorithm for approximate maximum
bipartite matching, using \(s_{\alg}\) bits of space.  We simulate \(\alg\) on
the graph \(G\).  Public coins determine the masks \(Y_\alpha\) and the random
bits used by \(\alg\), while \(P_1\) privately samples the deletion coins
\(D_\alpha\).  These choices are independent and fixed for all passes.  

For each $a \in [p+1]$, let $\sigma_a$ be the sequence of present edges owned by $P_a$,
listed in the fixed order on edge labels defined above. Thus the stream is partitioned as
$\sigma=\sigma_1\circ\cdots\circ\sigma_{p+1}$, and each player can generate its portion of
the stream locally from its input and randomness. By Proposition~A.8, the standard
streaming-to-communication simulation gives a $p$-round $(p+1)$-party protocol in which
the charged communication from passing the streaming memory state is at most
$p(p+1)s_A=O_p(s_A)$ bits. No edge list needs to be communicated.

We also use the free reveals allowed by
Assumption~\ref{assump:reveal-input}.  After the critical player
\(P_{p+2-k}\) sends its memory state in pass \(k\), it reveals its entire input
\(f_{p+2-k}\), which specifies the selected level-\(k\) matching for every
level-\((k-1)\) prefix.  These are exactly the \(p\) reveals excluded from the
charged communication in the proof of Theorem~\ref{thm:HS}; no other
input-dependent communication is free.  In particular, neither \(f_1\) nor the
deletion coins are revealed.  The reveals are available to the protocol
players but not to \(\alg\), which receives only the edge stream and its random
bits.

The timing prevents the reveals from making the problem trivial.  In the final
pass, \(P_1\) sends its last memory state before \(P_2\) reveals \(f_2\), so
\(P_1\) never communicates after all active leaves are known.  The revealed
functions identify the relevant entries of \(f_1\), but their values can reach
the final player only through the charged memory states.

At the end of the last pass, \(P_{p+1}\) computes the matching \(M\) output by
\(\alg\).  From \(f_2,\ldots,f_{p+1}\), it reconstructs the active leaves and
\(M^{\star,0}\), and then applies the decoder defined above.  The decoder needs
no deletion coins: it uses the active leaf edges contained in the valid
matching \(M\) and makes fair guesses on the remaining active leaves.  Thus the
charged transcript consists only of streaming memory states, and the resulting
randomized \(p\)-pass protocol for \HS{} has cost \(O_p(s_{\alg})\).

\subsection{Parameter Choice and Proof}

\begin{proof}[Proof of Theorem~\ref{thm:main}]
We first show that a small-space streaming algorithm gives an \HS{} protocol
with charged communication \(O_p(s_{\alg})\) and nontrivial Hamming advantage.
We then choose the RS
parameters so that Theorem~\ref{thm:HS} forces the desired space lower bound.

Choose constants \(\gamma,q>0\), depending only on \(p\) and \(\eps\),
small enough for Lemma~\ref{lem:approx-finds-ultimate-edges-multipass}.
Let $\alg$ be a randomized constant-error \(p\)-pass
\((1-3^{-p}+\eps)\)-approximation streaming algorithm for maximum
bipartite matching that outputs a valid matching in the input graph on every
run.
Let \(s_{\alg}\) be its space usage.
A run is successful if this matching has the required approximation ratio.
Since \(\alg\) has constant error, there is a fixed constant
\(\zeta>0\) such that the success probability is at least \(1/2+\zeta\).

Run the communication simulation with \(\alg\).  Fix an arbitrary \HS{} input.
All probabilities below are
over the public masks, the deletion coins, the randomness of \(\alg\), and the
decoder randomness.  Let \(\mathsf{Succ}\) be the event that the simulated
streaming run succeeds, and let \(\mathsf{Good}\) be the event that the conclusion
of Lemma~\ref{lem:approx-finds-ultimate-edges-multipass} holds for the sampled
graph.  For every realized graph \(G\), the algorithm succeeds with probability
at least \(1/2+\zeta\).  Hence
\(\Pr[\mathsf{Succ}]\ge 1/2+\zeta\) after averaging over the random graph
construction.  The parameter choice below satisfies the hypotheses of
Lemma~\ref{lem:approx-finds-ultimate-edges-multipass} and has
\(R_\star\to\infty\), so \(\Pr[\mathsf{Good}]=1-o(1)\).  On
\(\mathsf{Succ}\cap\mathsf{Good}\), the output matching satisfies
\(|M\cap M^{\star,0}|\ge s_\star\), and
\[
        \Pr[\mathsf{Succ}\cap\mathsf{Good}]
        \ge \Pr[\mathsf{Succ}]+\Pr[\mathsf{Good}]-1
        \ge \frac12+\zeta-o(1).
\]
The final player applies the decoding procedure defined in the paragraph
immediately preceding Lemma~\ref{lem:biased-decoding-multipass}; the guarantee
for this procedure is proved in that lemma.  The bits \(Y_\alpha\) are public, so
it forms an output vector \(\widehat X\) by setting
\(\widehat X_\alpha=\widehat X'_\alpha\oplus Y_\alpha\) for
\(\alpha\in \mathsf{Act}_p\).  For each output-coordinate tuple
\((y_1,\ldots,y_p)\), the active-tree recursion determines exactly one
\(\alpha\in\mathsf{Act}_p\).  Hence these values specify one output bit for
every coordinate of the \HS{} instance.
For every outcome of the randomness,
\[
        d_{\rm H}(\widehat X,X|_{\mathsf{Act}_p})
        =
        d_{\rm H}(\widehat X',X'|_{\mathsf{Act}_p}),
\]
because both sides use the same public mask \(Y_\alpha\) at each active leaf.
Applying Lemma~\ref{lem:biased-decoding-multipass} with the event
\(\mathsf{Succ}\cap\mathsf{Good}\) gives
\[
        \E\!\left[
        d_{\rm H}(\widehat X,X|_{\mathsf{Act}_p})
        \right]
        \le
        \frac{R_\star}{2}
        -
        \frac{q}{2(2-q)}\,s_\star
        \Pr[\mathsf{Succ}\cap\mathsf{Good}].
\]
For all sufficiently large \(R_\star\), this probability is at least
\(1/2+\zeta/2\).  Also
\(s_\star/R_\star=\Theta_{p,\eps}(1)\): indeed
\(s_\star=\Theta_{p,\eps}(N_1)\) and
\(R_\star=(3^{-p}-O_p(\gamma))N_1\).  Thus the expected Hamming error is at
most
\[
        \left(\frac12-\eta_{p,\eps}\right)R_\star
\]
for some constant \(\eta_{p,\eps}>0\).
This holds for every fixed \HS{} input, over the masks, deletion coins,
streaming-algorithm randomness, and decoder randomness.  Thus the resulting
protocol has worst-case Hamming advantage \(\eta_{p,\eps}\).

The simulation in the preceding subsection has charged communication at most
\(p(p+1)s_{\alg}\).  Applying the charged-communication form proved under
Assumption~\ref{assump:reveal-input} in the proof of
Theorem~\ref{thm:HS} gives
\[
        s_{\alg}
        \ge
        \widetilde{\Omega}_{p,\eps}\left(
        \min_{k\in[p]} r_1r_2\cdots r_k\,t_k
        \right).
\]
We now choose parameters that make the right-hand side superlinear.  Fix a
sufficiently large integer \(n\), the desired number of vertices in the final
streaming instance.  We first construct a graph with at most \(n\) vertices and
then add isolated vertices to make its vertex count exactly \(n\).  After
fixing \(\gamma\), apply Proposition~\ref{prop:RS-gkk12} with parameter
\(\gamma/2\), and let \(c_{\rm RS}>0\) be the resulting constant.  Thus the RS
construction gives
\[
        t_\ell\ge b_\ell^{c_{\rm RS}/\log\log b_\ell}
\]
for all sufficiently large side sizes.  Choose a constant
\(0<\theta\leq\min\{1/4,c_{\rm RS}/16\}\), and set
\[
        N=\left\lfloor\frac{n}{2(3/2+\gamma)^p}\right\rfloor,
        \qquad
        L=\log\log N.
\]
Define
\[
        a_\ell=(\theta/L)^{\ell-1}\quad(\ell=2,\ldots,p),
        \qquad
        a_1=1-\sum_{\ell=2}^p a_\ell,
\]
so that \(a_1,\ldots,a_p>0\) and \(\sum_{\ell=1}^p a_\ell=1\).  Set
the target side size at level \(\ell\) to \(\lfloor N^{a_\ell}\rfloor\).
These targets all tend to infinity because \(p\) is fixed and
\(N^{a_p}\to\infty\).  The every-\(n\) form of
Proposition~\ref{prop:RS-gkk12} therefore lets us take RS graphs with exactly
\(b_\ell\) vertices in each part, where
\[
        b_\ell=\lfloor N^{a_\ell}\rfloor
        =\Theta_{p,\eps}(N^{a_\ell})
\]
and at least \(b_\ell^{c_{\rm RS}/\log\log b_\ell}\) induced matchings of
size \(\lfloor(1/2-\gamma/2)b_\ell\rfloor\).  Since every \(b_\ell\) tends to
infinity, for all sufficiently large \(n\),
\[
    \left\lfloor\left(\frac12-\frac\gamma2\right)b_\ell\right\rfloor
    \geq
    \left\lceil\left(\frac12-\gamma\right)b_\ell\right\rceil.
\]
We may therefore delete edges from each induced matching and take
\[
        r_\ell=\left\lceil\left(\frac12-\gamma\right)b_\ell\right\rceil,
        \qquad
        t_\ell\geq b_\ell^{c_{\rm RS}/\log\log b_\ell}.
\]
For all sufficiently large \(n\), these choices satisfy
\((1/2-\gamma)b_\ell\leq r_\ell\leq b_\ell/2\), as required by
Lemma~\ref{lem:approx-finds-ultimate-edges-multipass}.
The hierarchical graph \(G\) constructed above, before any isolated vertices
are added, has at most \(n\) vertices, since
\[
        2N_1
        =2\prod_{\ell=1}^p(2b_\ell-r_\ell)
        \leq2(3/2+\gamma)^p\prod_{\ell=1}^p b_\ell
        \leq 2(3/2+\gamma)^pN
        \leq n.
\]
We then add isolated vertices to \(G\) until its vertex count is exactly \(n\).
This changes neither the edge stream nor any matching quantity, so the
simulation and decoding argument above apply unchanged to the padded graph.

Fix \(k\in[p]\).  Since
\(a_p\log N=\theta^{p-1}\log N/L^{p-1}\) grows faster than \(L\), the
choice \(b_k=\lfloor N^{a_k}\rfloor\) gives, for all sufficiently large \(N\),
\[
        \log b_k\geq\frac{a_k}{2}\log N,
        \qquad
        \log\log b_k\leq L,
\]
and hence
\[
        t_k\geq N^{c_{\rm RS}a_k/(2L)}.
\]
The terms \(a_2,\ldots,a_p\) form a geometric sequence.  Our choice of
\(\theta\) ensures, for every \(k\in[p]\), that
\[
        \sum_{j=k+1}^p a_j\leq\frac{c_{\rm RS}a_k}{4L}.
\]
Indeed, for \(k\geq2\) the left-hand side is at most
\(2\theta a_k/L\); for \(k=1\), it is at most \(2\theta/L\) and
\(a_1\geq1/2\).  Since \(r_j=\Theta(b_j)\), it follows that
\[
\begin{aligned}
        r_1\cdots r_k\,t_k
        &\geq
        \widetilde{\Omega}_{p,\eps}\!\left(
        N^{\sum_{j=1}^k a_j+c_{\rm RS}a_k/(2L)}
        \right) \\
        &\geq
        \widetilde{\Omega}_{p,\eps}\!\left(
        N^{1+c_{\rm RS}a_k/(4L)}
        \right).
\end{aligned}
\]
The smallest \(a_k/L\) is
\(a_p/L=\theta^{p-1}/L^p=\Theta_{p,\eps}(L^{-p})\).  Moreover,
\(N=\Theta_{p,\eps}(n)\) and \(L=\Theta(\log\log n)\).  Since the induced
matchings are edge-disjoint, \(t_\ell r_\ell\leq b_\ell^2\); together with
\(r_\ell=\Theta_{p,\eps}(b_\ell)\), this shows that every \(r_\ell\) and
\(t_\ell\) is \(O_{p,\eps}(n)\).  Thus all logarithmic factors suppressed above
form a fixed polylogarithmic factor in \(n\).  Because
\(\log N/L^p=\omega(\log\log N)\), dividing by any fixed polylogarithmic factor
still leaves a bound of the form \(N^{1+\Omega_{p,\eps}(L^{-p})}\).  Combining
this with the communication lower bound above gives
\[
        s_{\alg}\geq
        n^{1+\Omega_{p,\eps}((\log\log n)^{-p})}
\]
bits of space.
\end{proof}

\section*{Acknowledgements and AI Disclosure}

The authors thank Sepehr Assadi and Christian Konrad for many useful discussions about the problem and literature.

The authors used ChatGPT 5.5 Pro to assist in preparing portions of this
manuscript, in particular, the second proof presented for our main theorem. Specifically, it was used to draft substantial portions of Section~7 from an
author-provided outline and technical direction, after which the authors
extensively revised and verified the resulting material. It was also used to
reorganize parts of Section~6 and to assist with the exposition and formulation
of several lemmas near the beginning of that section, again followed by
extensive revision and verification by the authors. Its remaining use was
limited to occasional rewording, proofreading, and minor edits to the
introduction. The authors take full responsibility for all content of the
paper.

\bibliographystyle{alpha}
\bibliography{general}

\appendix

% !TeX root = main.tex 
%!TEX root = main.tex

\section{Additional Preliminaries}\label{app:additional-prelims}

\subsection{Information Theory}\label{sec:info-theory}\label{app:info}

We collect the information-theoretic notation and facts used throughout the
paper. All random variables in this subsection are finite-valued. We refer the
reader to Cover and Thomas~\cite{CoverT06} for background and proofs of the
standard facts below.

For a random variable \(\rA\), the \emph{Shannon entropy} of \(\rA\) is
\begin{align}
    \en{\rA}
    :=
    \sum_{A\in\supp{\rA}}
    \Pr\paren{\rA=A}\cdot
    \log\paren{1/\Pr\paren{\rA=A}}.
    \label{eq:entropy}
\end{align}
The \emph{conditional entropy} of \(\rA\) conditioned on \(\rB\) is
\begin{align}
    \en{\rA\mid\rB}
    :=
    \Ex_{B\sim\distribution{\rB}}\bracket{\en{\rA\mid \rB=B}},
    \label{eq:cond-entropy}
\end{align}
where \(\en{\rA\mid \rB=B}\) is defined using the distribution of \(\rA\)
conditioned on the event \(\rB=B\).

The \emph{mutual information} of two random variables \(\rA\) and \(\rB\) is
denoted by \(\mi{\rA}{\rB}\) and defined as
\begin{align}
    \mi{\rA}{\rB}
    :=
    \en{\rA}-\en{\rA\mid\rB}
    =
    \en{\rB}-\en{\rB\mid\rA}.
    \label{eq:mi}
\end{align}
The \emph{conditional mutual information} is
\[
    \mi{\rA}{\rB\mid\rC}
    :=
    \en{\rA\mid\rC}-\en{\rA\mid\rB,\rC}.
\]

\subsubsection{Entropy and Mutual Information}
\label{sec:prop-en-mi}

The following basic properties of entropy and mutual information are used
throughout.

\begin{fact}\label{fact:it-facts}
Let \(\rA,\rB,\rC,\rD\) be four possibly correlated random variables.
\begin{enumerate}
    \item \label{part:uniform}
    \(0\leq \en{\rA}\leq \log\card{\supp{\rA}}\). The right equality holds
    if and only if \(\distribution{\rA}\) is uniform.
    \item \label{part:info-zero}
    \(\mi{\rA}{\rB\mid\rC}\geq0\). Equality holds if and only if \(\rA\) and
    \(\rB\) are independent conditioned on \(\rC\).
    \item \label{part:cond-reduce}
    Conditioning on a random variable reduces entropy:
    \[
        \en{\rA\mid\rB,\rC}\leq \en{\rA\mid\rB}.
    \]
    Equality holds if and only if \(\rA\perp \rC\mid\rB\).
    \item \label{part:chain-rule}
    Chain rule for mutual information:
    \[
        \mi{\rA,\rB}{\rC\mid\rD}
        =
        \mi{\rA}{\rC\mid\rD}
        +
        \mi{\rB}{\rC\mid\rA,\rD}.
    \]
    \item \label{part:data-processing}
    Data processing inequality: for a function \(f(\rA)\) of \(\rA\),
    \[
        \mi{f(\rA)}{\rB\mid\rC}
        \leq
        \mi{\rA}{\rB\mid\rC}.
    \]
\end{enumerate}
\end{fact}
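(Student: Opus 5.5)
These are standard facts, and I would derive all five parts directly from the definitions \eqref{eq:entropy}--\eqref{eq:mi}, using only Jensen's inequality and nonnegativity of KL divergence. The pivot is Gibbs' inequality: for distributions $P,Q$ on a finite set, $\sum_x P(x)\log\frac{P(x)}{Q(x)}\ge 0$, with equality iff $P=Q$. I would prove it first by applying Jensen to the concave function $\log$: $\sum_x P(x)\log\frac{Q(x)}{P(x)}\le\log\sum_x Q(x)\le 0$. Equality in the strictly concave step forces $Q/P$ to be constant on $\supp{P}$, and the second inequality forces that constant to be $1$.

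For part (\ref{part:uniform}), nonnegativity holds termwise, since each summand $\Pr(\rA=A)\log(1/\Pr(\rA=A))$ is nonnegative. For the upper bound, apply Gibbs with $Q=\Unif(\supp{\rA})$: this gives $\log|\supp{\rA}|-\en{\rA}=\kl{\distribution{\rA}}{Q}\ge 0$, with equality iff $\rA$ is uniform. For part (\ref{part:info-zero}), expand the definition as
\[
\mi{\rA}{\rB\mid\rC}=\Ex_{C}\,\kl{\distribution{\rA,\rB\mid \rC=C}}{\distribution{\rA\mid\rC=C}\times\distribution{\rB\mid\rC=C}},
\]
a routine rearrangement of logarithms. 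Gibbs then gives nonnegativity. Equality holds iff every KL term with positive weight vanishes, which is exactly conditional independence.

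Part (\ref{part:cond-reduce}) is just part (\ref{part:info-zero}) rewritten: by definition, $\en{\rA\mid\rB}-\en{\rA\mid\rB,\rC}=\mi{\rA}{\rC\mid\rB}\ge 0$, with the same equality condition. For part (\ref{part:chain-rule}), I would first establish the entropy chain rule $\en{\rA,\rB\mid\rD}=\en{\rA\mid\rD}+\en{\rB\mid\rA,\rD}$ by factoring $\Pr(A,B\mid D)=\Pr(A\mid D)\Pr(B\mid A,D)$ inside the logarithm. Writing $\mi{\rA,\rB}{\rC\mid\rD}=\en{\rA,\rB\mid\rD}-\en{\rA,\rB\mid\rC,\rD}$ and expanding both terms with this rule, the terms regroup into the two claimed conditional mutual informations.

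For part (\ref{part:data-processing}), apply the chain rule both ways to $\mi{\rA,f(\rA)}{\rB\mid\rC}$:
\[
\mi{\rA}{\rB\mid\rC}+\mi{f(\rA)}{\rB\mid\rA,\rC}=\mi{f(\rA)}{\rB\mid\rC}+\mi{\rA}{\rB\mid f(\rA),\rC}.
\]
The term $\mi{f(\rA)}{\rB\mid\rA,\rC}$ is zero because $f(\rA)$ is constant given $\rA$. The last term is nonnegative by part (\ref{part:info-zero}), and the inequality follows.

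The only step needing genuine care is the equality case of Gibbs' inequality. It requires handling zero-probability atoms, using the conventions $0\log 0=0$ and restricting sums to supports. Everything else is bookkeeping of the definitions.
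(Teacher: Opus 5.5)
Your proposal is correct. It is the standard textbook derivation: Gibbs' inequality via Jensen, KL divergence against the uniform distribution on the support, the entropy chain rule, and the two-way chain-rule expansion of $\mi{\rA,f(\rA)}{\rB\mid\rC}$ for data processing. This is the same route as the source the paper relies on, since the paper gives no proof of these facts and only refers the reader to Cover and Thomas.
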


We also use the following standard proposition on the effect of conditioning on
mutual information.

\begin{proposition}\label{prop:info-increase}
For random variables \(\rA,\rB,\rC,\rD\), if \(\rA\perp \rD\mid\rC\), then
\[
    \mi{\rA}{\rB\mid\rC}
    \leq
    \mi{\rA}{\rB\mid\rC,\rD}.
\]
\end{proposition}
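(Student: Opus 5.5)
We prove \Cref{prop:info-increase} by expanding $\mi{\rA}{\rB,\rD\mid\rC}$ with the chain rule (\itfacts{chain-rule}) in two different orders. The hypothesis $\rA\perp\rD\mid\rC$ kills one term in the first expansion, and non-negativity of mutual information handles the leftover term in the second.

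First, peel off $\rD$ before $\rB$:
\[
    \mi{\rA}{\rB,\rD\mid\rC}
    =
    \mi{\rA}{\rD\mid\rC}
    +
    \mi{\rA}{\rB\mid\rC,\rD}.
\]
By \itfacts{info-zero}, the hypothesis $\rA\perp\rD\mid\rC$ gives $\mi{\rA}{\rD\mid\rC}=0$. Hence
\[
    \mi{\rA}{\rB,\rD\mid\rC}=\mi{\rA}{\rB\mid\rC,\rD}.
\]

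Second, peel off $\rB$ before $\rD$:
\[
    \mi{\rA}{\rB,\rD\mid\rC}
    =
    \mi{\rA}{\rB\mid\rC}
    +
    \mi{\rA}{\rD\mid\rB,\rC}.
\]
The last term is non-negative by \itfacts{info-zero}, so $\mi{\rA}{\rB\mid\rC}\leq\mi{\rA}{\rB,\rD\mid\rC}$.

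Combining the two expansions gives $\mi{\rA}{\rB\mid\rC}\leq\mi{\rA}{\rB\mid\rC,\rD}$, which is the claim.

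The one point needing care is that \itfacts{chain-rule} is stated with the pair $(\rA,\rB)$ in the first argument. We need the chain rule with the pair in the second argument, which follows from symmetry of mutual information, since the definition \eqref{eq:mi} is symmetric in its two arguments. Everything else is immediate.
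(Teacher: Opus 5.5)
Your proof is correct and is essentially the paper's argument in mutual-information form: your facts $\mi{\rA}{\rD\mid\rC}=0$ and $\mi{\rA}{\rD\mid\rB,\rC}\geq 0$ are exactly the paper's $\en{\rA\mid\rC}=\en{\rA\mid\rC,\rD}$ and $\en{\rA\mid\rB,\rC}\geq\en{\rA\mid\rB,\rC,\rD}$. The paper substitutes these directly into $\mi{\rA}{\rB\mid\rC}=\en{\rA\mid\rC}-\en{\rA\mid\rB,\rC}$, which avoids the chain-rule-with-symmetry step you flag; that symmetry does hold, so your version is fine.
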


\begin{proof}
Since \(\rA\) and \(\rD\) are independent conditioned on \(\rC\), by
\itfacts{cond-reduce},
\(\HH(\rA\mid\rC)=\HH(\rA\mid\rC,\rD)\). Also
\(\HH(\rA\mid\rC,\rB)\geq\HH(\rA\mid\rC,\rB,\rD)\). Thus
\begin{align*}
    \mi{\rA}{\rB\mid\rC}
    &=
    \HH(\rA\mid\rC)-\HH(\rA\mid\rC,\rB) \\
    &\leq
    \HH(\rA\mid\rC,\rD)-\HH(\rA\mid\rC,\rB,\rD)
    =
    \mi{\rA}{\rB\mid\rC,\rD}.
\end{align*}
\end{proof}

\subsubsection{Distances Between Distributions}
\label{sec:prob-distance}

For two finite distributions \(\mu\) and \(\nu\) over the same probability
space, the \emph{Kullback--Leibler divergence} of \(\mu\) relative to \(\nu\)
is denoted by \(\kl{\mu}{\nu}\) and defined as
\[
    \kl{\mu}{\nu}
    :=
    \Ex_{a\sim\mu}\Bracket{\log\frac{\mu(a)}{\nu(a)}}.
\]
We use the standard convention that terms with \(\mu(a)=0\) contribute \(0\),
and that \(\kl{\mu}{\nu}=\infty\) if \(\mu(a)>0\) and \(\nu(a)=0\) for some
\(a\).

\begin{fact}\label{fact:kl-info}
For random variables \(\rA,\rB,\rC\),
\[
    \mi{\rA}{\rB\mid\rC}
    =
    \Ex_{(B,C)\sim\distribution{(\rB,\rC)}}
    \Bracket{
        \kl{\distribution{\rA\mid\rB=B,\rC=C}}
            {\distribution{\rA\mid\rC=C}}
    }.
\]
\end{fact}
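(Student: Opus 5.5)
We prove Fact~\ref{fact:kl-info} by unfolding both sides into sums over the joint support of $(\rA,\rB,\rC)$ and checking that they agree term by term.

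\textbf{Step 1: expand the left-hand side.} Using the definition $\mi{\rA}{\rB\mid\rC}=\en{\rA\mid\rC}-\en{\rA\mid\rB,\rC}$ and the definitions of entropy \eqref{eq:entropy} and conditional entropy \eqref{eq:cond-entropy}, both terms become expectations over $(A,B,C)\sim\distribution{(\rA,\rB,\rC)}$. This gives
\[
\mi{\rA}{\rB\mid\rC}=\Ex_{(A,B,C)}\Bracket{\log\frac{\Pr(\rA=A\mid \rB=B,\rC=C)}{\Pr(\rA=A\mid\rC=C)}}.
\]
For the term $\en{\rA\mid\rC}$, we average $\log(1/\Pr(\rA=A\mid\rC=C))$ against the joint law of $(A,C)$. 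We then observe that this equals the average against the joint law of $(A,B,C)$, because the integrand does not depend on $B$.

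\textbf{Step 2: split the expectation.} We write the expectation as an outer expectation over $(B,C)\sim\distribution{(\rB,\rC)}$ and an inner expectation over $A\sim\distribution{\rA\mid\rB=B,\rC=C}$. By the definition of KL divergence, the inner expectation is exactly
\[
\kl{\distribution{\rA\mid\rB=B,\rC=C}}{\distribution{\rA\mid\rC=C}}.
\]
This yields the claimed identity.

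\textbf{Step 3: check support issues.} This is the only point needing care; there is no real obstacle. Whenever $\Pr(\rA=A,\rB=B,\rC=C)>0$, we also have $\Pr(\rA=A\mid\rC=C)>0$. Hence no infinite terms arise, and terms where the conditional probability vanishes contribute $0$ under the stated convention. All variables are finite-valued, so the sums are finite and can be rearranged freely.
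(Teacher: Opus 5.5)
Your proof is correct and complete. Rewriting $\mi{\rA}{\rB\mid\rC}=\en{\rA\mid\rC}-\en{\rA\mid\rB,\rC}$ as one expectation of $\log\frac{\Pr(\rA=A\mid\rB=B,\rC=C)}{\Pr(\rA=A\mid\rC=C)}$ over the joint law, conditioning on $(B,C)$, and checking supports is the standard derivation; the paper does not prove this fact itself and instead cites it as standard, pointing to Cover and Thomas.
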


For two distributions \(\mu\) and \(\nu\) on the same finite set \(\Omega\), the
\emph{total variation distance} between them is
\[
    \tvd{\mu}{\nu}
    :=
    \max_{\Omega'\subseteq\Omega}\card{\mu(\Omega')-\nu(\Omega')}
    =
    \frac{1}{2}\sum_{x\in\Omega}\card{\mu(x)-\nu(x)}.
\]

\begin{fact}\label{fact:tvd-small}
Suppose \(\mu\) and \(\nu\) are two distributions on a probability space
containing an event \(\event\).
Then
\[
    \mu(\event)\leq \nu(\event)+\tvd{\mu}{\nu}.
\]
\end{fact}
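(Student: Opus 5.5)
The inequality follows directly from the variational (maximum over events) form of the total variation distance given just above the statement, so I would simply unfold that definition. Let $\Omega$ be the finite underlying space and let $\event\subseteq\Omega$ be the given event. By definition,
\[
    \tvd{\mu}{\nu}=\max_{\Omega'\subseteq\Omega}\card{\mu(\Omega')-\nu(\Omega')}.
\]
Taking the particular choice $\Omega'=\event$ in this maximum gives $\card{\mu(\event)-\nu(\event)}\leq\tvd{\mu}{\nu}$.

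Dropping the absolute value yields $\mu(\event)-\nu(\event)\leq\tvd{\mu}{\nu}$, and rearranging gives the claimed bound $\mu(\event)\leq\nu(\event)+\tvd{\mu}{\nu}$. By symmetry, the same argument also gives $\nu(\event)\leq\mu(\event)+\tvd{\mu}{\nu}$, which is the form used when the success probability of $\pi'$ is transferred to $\pi_{p-1}$ in the proof of \Cref{lem:roundelim}.

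There is no real obstacle. The only thing I would state explicitly is why the two expressions in the definition agree. Write $A:=\{x:\mu(x)>\nu(x)\}$. Then $\mu(\Omega')-\nu(\Omega')$ is maximized at $\Omega'=A$, and its minimum is attained at the complement $A^c$. Since $\sum_x(\mu(x)-\nu(x))=0$, both extreme values have absolute value equal to $\frac12\sum_x\card{\mu(x)-\nu(x)}$.

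Only the max-over-events characterization is needed here, so the argument above is complete with or without that side remark.
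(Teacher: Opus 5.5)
Your argument is correct and is the intended one. The paper records this as a standard fact without proof (deferring to Cover--Thomas), and it follows immediately, as you say, by taking $\Omega'=\event$ in the max-over-events form of $\tvd{\mu}{\nu}$ stated just above it. Your remark on orientation is harmless but not needed: the application in \Cref{lem:roundelim} uses the inequality exactly as stated, with $\mu=\dist(G)$ and $\nu=\dist(G\mid\pi_{p-1})$, and the distance is symmetric anyway.
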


\begin{fact}\label{fact:tvd-chain-rule}
For any distributions \(\mu\) and \(\nu\) on \(n\)-tuples
\((X_1,\ldots,X_n)\),
\[
    \tvd{\mu}{\nu}
    \leq
    \sum_{i=1}^{n}
    \Exp_{X_{<i}\sim\mu}
    \tvd{\mu(X_i\mid X_{<i})}{\nu(X_i\mid X_{<i})}.
\]
Here conditional distributions under \(\nu\) may be chosen arbitrarily on
prefixes \(X_{<i}\) to which \(\nu\) assigns probability zero.
\end{fact}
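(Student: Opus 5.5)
We would prove this by a hybrid argument, interpolating between \(\nu\) and \(\mu\) one coordinate at a time. First fix, once and for all, the conditional kernels \(\nu(X_j\mid X_{<j})\) for every \(j\) and every prefix, making the arbitrary choice allowed in the statement on prefixes that \(\nu\) assigns probability zero. Do the same for \(\mu\) on \(\mu\)-null prefixes; those choices will not matter. For \(i=0,1,\ldots,n\) define the hybrid distribution \(\mathcal H_i\) on \(n\)-tuples as follows: sample \(X_1,\ldots,X_i\) successively from the conditionals of \(\mu\), then sample \(X_{i+1},\ldots,X_n\) successively from the fixed conditional kernels of \(\nu\), each given the prefix generated so far.

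The first step is to check the endpoints. \(\mathcal H_n=\mu\) directly. Also \(\mathcal H_0=\nu\): the arbitrary kernels are only ever invoked on \(\nu\)-null prefixes, which contribute zero mass, so the product of \(\nu\)'s kernels reproduces \(\nu\). By the triangle inequality for \(\tvd{\cdot}{\cdot}\),
\[
\tvd{\mu}{\nu}\le\sum_{i=1}^n\tvd{\mathcal H_i}{\mathcal H_{i-1}}.
\]

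The main step is to identify each summand. In both \(\mathcal H_i\) and \(\mathcal H_{i-1}\), the prefix \(X_{<i}\) has law \(\mu(X_{<i})\), and the suffix \(X_{>i}\) is generated by the same Markov kernel (the \(\nu\)-kernels) applied to \((X_{<i},X_i)\). The two distributions differ only in the kernel for \(X_i\) given \(X_{<i}\): it is \(\mu(X_i\mid X_{<i})\) in \(\mathcal H_i\) and \(\nu(X_i\mid X_{<i})\) in \(\mathcal H_{i-1}\).

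Applying a common kernel to the suffix cannot increase total variation distance. This follows from the \(\ell_1\) formula: sum over the suffix first and use the triangle inequality. Hence \(\tvd{\mathcal H_i}{\mathcal H_{i-1}}\) is at most the total variation distance between the laws of \((X_{<i},X_i)\) under the two hybrids. Since the prefix marginal is shared, the \(\ell_1\) sum splits prefix by prefix, and this distance equals
\[
\Exp_{X_{<i}\sim\mu}\tvd{\mu(X_i\mid X_{<i})}{\nu(X_i\mid X_{<i})}.
\]
Summing over \(i\) gives the claim.

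The only delicate point, and the one we expect to be the main obstacle, is the bookkeeping of conditionals on null prefixes. One must confirm three things: that \(\mathcal H_0=\nu\) holds regardless of the arbitrary choices, that the \(\nu\)-kernels used in the suffix are exactly the ones appearing in the bound, and that \(\mu\)'s arbitrary choices on \(\mu\)-null prefixes contribute nothing, since the expectation is over \(X_{<i}\sim\mu\).
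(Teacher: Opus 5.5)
The paper does not prove this fact. It is listed among the standard facts in the appendix with a pointer to Cover and Thomas, so there is no in-paper argument to compare against. Your hybrid argument is correct and complete, and it is the standard proof.

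\begin{itemize}
\item Your identification $\mathcal H_0=\nu$ holds for the reason you give, made precise by induction on the prefix length. At the first index $j$ where a prefix becomes $\nu$-null, the factor $\nu(x_j\mid x_{<j})$ is a genuine conditional of a $\nu$-positive prefix and is therefore $0$. So the prefix marginals of $\mathcal H_0$ agree with those of $\nu$, and the arbitrary kernels are only evaluated on prefixes of $\mathcal H_0$-mass zero.
\item Your comparison step is in fact an equality, not just an inequality. Both hybrids extend their laws on $(X_{<i},X_i)$ by the same suffix kernel, which retains $X_{\le i}$. Each row of that kernel sums to one, so summing the $\ell_1$ expression over the suffix first collapses it exactly. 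Hence $\tvd{\mathcal H_i}{\mathcal H_{i-1}}$ equals $\Exp_{X_{<i}\sim\mu}\tvd{\mu(X_i\mid X_{<i})}{\nu(X_i\mid X_{<i})}$.
\item You fix one arbitrary family of $\nu$-kernels and use that same family in the hybrids and in the bound. The inequality therefore holds for every admissible choice, which is exactly what the statement's caveat requires.
\end{itemize}

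An equivalent route is induction on $n$ from the two-coordinate case, using the intermediate distribution $\mu(x_{<n})\,\nu(x_n\mid x_{<n})$; this produces precisely your hybrids, organized recursively. Another is a sequential maximal coupling with a union bound over the first index of disagreement. There one must observe that the probability of agreeing on a prefix $x_{<i}$ is at most $\mu(x_{<i})$. Your version handles the null-prefix convention most transparently.
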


\begin{fact}[Pinsker's inequality]\label{fact:pinskers}
For any distributions \(\mu\) and \(\nu\),
\[
    \tvd{\mu}{\nu}
    \leq
    \sqrt{\frac{1}{2}\cdot \kl{\mu}{\nu}}.
\]
\end{fact}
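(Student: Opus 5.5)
Pinsker's inequality is the last statement before the paper breaks off, so that is what I will prove. I will not assume it is standard. The plan is to reduce to the two-point (binary) case by data processing for KL divergence, and then prove the binary inequality by one-variable calculus.

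\emph{Reduction to the binary case.} Let \(\Omega'=\{x:\mu(x)\ge\nu(x)\}\). This set attains the maximum in the definition of \(\tvd{\mu}{\nu}\). Put \(a=\mu(\Omega')\) and \(b=\nu(\Omega')\), so that \(\tvd{\mu}{\nu}=a-b\). Consider the coarsening map \(x\mapsto\mathbf 1[x\in\Omega']\). The images of \(\mu\) and \(\nu\) under this map are Bernoulli\((a)\) and Bernoulli\((b)\). I will use the log-sum inequality, which is a consequence of the convexity of \(t\log t\). Applied separately to \(\Omega'\) and to its complement, it gives
\[
\kl{\mu}{\nu}\ \ge\ a\log\frac ab+(1-a)\log\frac{1-a}{1-b}.
\]
This is data processing for KL divergence. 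Cases with \(\nu(x)=0<\mu(x)\) make the left side infinite and need no argument. It therefore suffices to prove
\[
a\ln\frac ab+(1-a)\ln\frac{1-a}{1-b}\ \ge\ 2(a-b)^2 \qquad (0\le b\le a\le 1).
\]
This is written in nats. Since logs in the paper are base \(2\), the base-\(2\) divergence is larger by a factor \(1/\ln 2>1\), so the natural-log statement implies the stated one.

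\emph{Binary case.} Fix \(a\) and set \(g(b)=a\ln\frac ab+(1-a)\ln\frac{1-a}{1-b}-2(a-b)^2\) for \(b\in(0,a]\). Then \(g(a)=0\), and
\[
g'(b)=-\frac ab+\frac{1-a}{1-b}+4(a-b)=(a-b)\Bigl(4-\frac{1}{b(1-b)}\Bigr)\le 0,
\]
because \(b(1-b)\le 1/4\). So \(g\) is nonincreasing on \((0,a]\), which gives \(g(b)\ge g(a)=0\). At the endpoint \(b=0\) with \(a>0\) the divergence is infinite. If \(a=b\) the inequality is trivial.

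Taking square roots then gives \(\tvd{\mu}{\nu}\le\sqrt{\tfrac12\kl{\mu}{\nu}}\). The only step needing care is the base and endpoint bookkeeping in the reduction. The derivative computation is the heart of the argument and is routine.
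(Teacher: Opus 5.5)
Your proof is correct. The reduction to the two-point case via the indicator of $\{x:\mu(x)\ge\nu(x)\}$ and the log-sum inequality, the identity $g'(b)=(a-b)\bigl(4-\frac{1}{b(1-b)}\bigr)\le 0$, and the nats-versus-bits bookkeeping all check out, and the paper gives no proof of its own, deferring to Cover and Thomas, whose textbook proof (binary case by differentiating in the second parameter, then data processing on the indicator of the set where $\mu$ exceeds $\nu$) is exactly the argument you give.
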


\subsection{Streaming and Communication}

We work in the insertion-only edge-arrival streaming model. A \(p\)-pass
streaming algorithm sees the same edge stream \(p\) times and outputs a
matching at the end of the final pass. We assume the algorithm always outputs an edge set that is a valid matching in the input graph. An \(\alpha\)-approximation algorithm succeeds on \(G\) if this
matching has size at least \(\alpha\mu(G)\). For randomized streaming
algorithms, constant error means that the algorithm succeeds with probability
at least \(1/2+\zeta\), for some fixed constant \(\zeta>0\), on every input.

We also use the standard multi-party number-in-hand communication model. There
are \(k\geq1\) players \(P_1,\ldots,P_k\), and in each round the players speak
in the order \(P_1,\ldots,P_k\). Messages are written on a public blackboard
visible to all players. Protocols may use public and private randomness.

\begin{definition}\label{def:cc}
For any protocol \(\prot\), the \textbf{communication cost} of \(\prot\),
denoted by \(\cc{\prot}\), is the worst-case total number of bits communicated by all players in all rounds, over all inputs and all outcomes of the protocol randomness. 
We assume, by padding if needed, that all charged transcripts of \(\prot\) have
the same worst-case length.
\end{definition}

\begin{proposition}[cf.~\cite{AlonMS96}]\label{prop:cc-stream}
For any \(p\geq1\), \(s\geq1\), and \(\delta\in(0,1)\), suppose there is a
\(p\)-pass \(s\)-space streaming algorithm \(\alg\) for a problem whose input
stream is partitioned as \(\sigma=\sigma_1\circ\cdots\circ\sigma_k\), and
suppose \(\alg\) succeeds with probability at least \(\delta\). Then there is a
\(k\)-party protocol \(\prot\) with \(p\) rounds, communication cost
\(\cc{\prot}\leq pks\), and success probability at least \(\delta\) for the
same problem, where the final output is produced by the last player and is not
charged as communication. If the output itself is required to be written on the
blackboard, its encoding length must be added to the communication cost.
\end{proposition}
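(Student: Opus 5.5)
This is the standard simulation of a streaming algorithm by a communication protocol. Fix the public random string of \(\alg\) as public randomness shared by all \(k\) players. Player \(P_a\) can generate \(\sigma_a\) from its own input, so the protocol only needs to move \(\alg\)'s memory between players.

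The protocol runs \(p\) rounds, one per pass. In round \(j\), player \(P_1\) starts from the memory state \(m_{j-1}\) that \(P_k\) wrote at the end of round \(j-1\); for \(j=1\) this is the initial state fixed by the public coins. \(P_1\) feeds \(\sigma_1\) into \(\alg\) and writes the resulting state on the blackboard. Each \(P_a\) with \(a\ge 2\) then reads the state written by \(P_{a-1}\), continues \(\alg\) on \(\sigma_a\), and writes the new state. The state \(P_k\) writes at the end of round \(j\) is exactly \(\alg\)'s memory after pass \(j\) over \(\sigma_1\circ\cdots\circ\sigma_k\).

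If \(\alg\) uses private coins in the middle of the stream, we fold them into the public string. This is legitimate because the protocol may use public randomness and the coins are independent of the input. After round \(p\), player \(P_k\) holds the final memory and computes \(\alg\)'s output locally, which is not charged.

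For correctness, we check by induction on \((j,a)\) that the state written by \(P_a\) in round \(j\) equals \(\alg\)'s state at the corresponding point of pass \(j\), for every fixing of the coins. Consequently the protocol's output distribution coincides with \(\alg\)'s on every input, and the success probability is at least \(\delta\).

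For the cost, each of the \(pk\) messages is a memory state of at most \(s\) bits. The total is therefore at most \(pks\); we pad each message to exactly \(s\) bits to match \Cref{def:cc}. If the output must appear on the blackboard, \(P_k\) writes it and that length is added, as the statement says.

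There is no real obstacle here. The only point needing care is that the memory state must capture everything \(\alg\) retains, including the pass counter and any randomness generated mid-stream. We handle this by treating all of \(\alg\)'s random bits as public, so each state is a deterministic function of the stream prefix and the public coins.
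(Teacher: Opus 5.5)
Your proposal is correct and follows the same simulation as the paper: public coins fix \(\alg\)'s random tape, each player continues \(\alg\) on its own segment \(\sigma_a\) and writes the memory state on the blackboard, and the last player outputs the answer. This gives at most \(pk\) messages of at most \(s\) bits each, and the output distribution equals \(\alg\)'s, so the success probability is at least \(\delta\). Your remarks on padding and mid-stream randomness are consistent with the paper's argument and only make it more explicit.
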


\begin{proof}
Consider a stream \(\sigma=\sigma_1\circ\dots\circ\sigma_k\), where
\(\sigma_i\) is the input held by player \(P_i\), in an arbitrary order.
The players use public randomness to sample the random tape of \(\alg\).
Player \(P_1\) runs \(\alg\) on \(\sigma_1\) and writes the memory state on the
blackboard; player \(P_2\) continues from that state on \(\sigma_2\), and so
on. This simulates one pass using at most \(ks\) bits of communication.
Repeating this for \(p\) passes gives communication at most \(pks\), and the
last player outputs the answer of \(\alg\). The protocol has the same success
probability as the streaming algorithm.
\end{proof}

\Cref{prop:cc-stream} allows us to translate communication lower bounds into
streaming lower bounds.

\subsection{Information Cost and Message Compression}
\label{subsec:info-cost-compression}

We also use the notion of \emph{information cost} of protocols, introduced
in~\cite{ChakrabartiSWY01}; see, e.g., \cite{Weinstein15} for a survey. The
definition below is the natural multi-party version of external information
cost~\cite{BarakBCR10}.

\begin{definition}\label{def:ext-info}
For any multi-party protocol \(\prot\) whose inputs are distributed according
to a distribution \(\mu\), the \textbf{external information cost} is
\[
    \ic{\prot}{\mu}:=\mi{\rX}{\rProt\mid\rR},
\]
where \(\rX\) denotes the random input sampled from \(\mu\), \(\rProt\)
denotes the charged transcript written on the blackboard, and \(\rR\) is the
public randomness.
\end{definition}

In words, multi-party external information cost measures the information
revealed by the charged transcript about the entire input to an external
observer.
The following result follows from~\cite{BarakBCR10}.

\begin{proposition}[cf.~\cite{BarakBCR10}]\label{prop:ic-cc}
For any multi-party protocol \(\prot\) on any input distribution \(\mu\),
\[
    \ic{\prot}{\mu}\leq \cc{\prot}.
\]
\end{proposition}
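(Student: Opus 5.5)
The statement is the last item before the paper breaks off, namely \Cref{prop:ic-cc}: for any multi-party protocol \(\prot\) on any input distribution \(\mu\), \(\ic{\prot}{\mu}\leq\cc{\prot}\). The plan is to bound mutual information by entropy, and then bound entropy by the transcript length.

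First, by the definition of conditional mutual information and nonnegativity of conditional entropy,
\[
\ic{\prot}{\mu}=\mi{\rX}{\rProt\mid\rR}=\en{\rProt\mid\rR}-\en{\rProt\mid\rX,\rR}\leq \en{\rProt\mid\rR}.
\]
Private randomness is harmless here: the transcript may still be random given \((\rX,\rR)\), but the subtracted term is only dropped, so the inequality holds either way.

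Second, fix a value \(R\) of the public randomness. By \Cref{def:cc}, every charged transcript has the same worst-case length \(\cc{\prot}\), after padding. So, conditioned on \(\rR=R\), the transcript \(\rProt\) takes at most \(2^{\cc{\prot}}\) possible values. \itfacts{uniform} then gives \(\en{\rProt\mid\rR=R}\leq\cc{\prot}\). Averaging over \(R\), as in the definition of conditional entropy, gives \(\en{\rProt\mid\rR}\leq\cc{\prot}\), which combined with the first step proves the claim.

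The only delicate point is the counting bound on the transcript alphabet. If messages in a round could have variable lengths, a naive count would give \(\sum_{\ell\le c}2^\ell<2^{c+1}\) strings, costing an extra bit. It is also unclear whether \(k\) players' message boundaries must be encoded. I would avoid both issues with the padding convention in \Cref{def:cc}. Each player's message in each round has a fixed, protocol-determined length, so the whole blackboard contents is a single bit string of length exactly \(\cc{\prot}\), and message boundaries are implicit. Without padding, one can instead use prefix-freeness of the protocol tree: the set of possible transcripts is a prefix-free set of strings of length at most \(c\). Kraft's inequality together with the standard entropy bound for prefix codes then gives \(\en{\rProt\mid\rR=R}\leq\Exp|\rProt|\leq c\).
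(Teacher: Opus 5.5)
Your proof is correct and follows the paper's argument: you drop the term \(\en{\rProt\mid\rX,\rR}\), then bound the remaining entropy by \(\log\) of the number of padded transcripts, which is at most \(\cc{\prot}\). The only cosmetic difference is that you bound \(\en{\rProt\mid\rR=R}\) for each fixed \(R\) and then average, whereas the paper first removes the conditioning via \(\en{\rProt\mid\rR}\leq\en{\rProt}\).
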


\begin{proof}
We have
\begin{align*}
    \ic{\prot}{\mu}
    \Eq{(1)}
    \mi{\rX}{\rProt\mid\rR}
    \Eq{(2)}
    \en{\rProt\mid\rR}-\en{\rProt\mid\rX,\rR}
    \Leq{(3)}
    \en{\rProt\mid\rR}
    \Leq{(4)}
    \en{\rProt}
    \Leq{(5)}
    \log\card{\supp{\rProt}}
    \Leq{(6)}
    \cc{\prot}.
\end{align*}
Here, \((1)\) is the definition of information cost, \((2)\) is the definition
of mutual information, \((3)\) is non-negativity of entropy
(\itfacts{uniform}), \((4)\) is because conditioning can only reduce entropy
(\itfacts{cond-reduce}), \((5)\) is because the uniform distribution maximizes
entropy (\itfacts{uniform}), and \((6)\) is the definition of worst-case
communication cost after padding charged transcripts to length
\(\cc{\prot}\).
\end{proof}

\paragraph{Message compression.}
We use message compression to reduce communication cost close to information
cost in certain settings. The following result is due to~\cite{HarshaJMR07}
and was strengthened slightly in~\cite{BravermanG14}; we use the textbook
presentation from~\cite{RaoY20}.

\begin{proposition}[cf.~{\cite[Theorem 7.6]{RaoY20}}]\label{prop:msg-compress}
Suppose Alice knows two distributions \(\cA,\cB\) over the same set \(U\), and
Bob knows only \(\cB\). There is a protocol for Alice and Bob to sample an
element according to \(\cA\), with no error, using public randomness and a
single message from Alice of expected length
\begin{align}
    \kl{\cA}{\cB}+2\log(1+\kl{\cA}{\cB})+O(1)
    \label{eq:kl-msg-compress}
\end{align}
bits.
\end{proposition}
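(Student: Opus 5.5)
The statement is the message-compression result (Proposition~\ref{prop:msg-compress}), which is cited from \cite{HarshaJMR07,BravermanG14,RaoY20}. I would prove it via \emph{rejection sampling with public randomness}, in the style of Harsha et al.

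\textbf{Setup.} The public randomness supplies an infinite sequence of i.i.d.\ pairs $(u_1,\rho_1),(u_2,\rho_2),\ldots$. Each $u_i\sim\cB$, and each $\rho_i$ is uniform on $[0,1]$.

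\textbf{Sampling rule.} Alice considers the ratios $\cA(u)/\cB(u)$ and runs a greedy acceptance process. She scans indices $i=1,2,\ldots$ and accepts index $i$ if $\rho_i\le \cA(u_i)/(2^{k}\cB(u_i))$, where $k$ is a threshold that is increased in stages. I would use the ``lazy'' variant from \cite{BravermanG14}: keep a residual mass so that the output is exactly $\cA$-distributed.

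The cleanest route is the Braverman--Garg version. Define $\alpha_i(u)$ as the residual probability of $u$ still unaccepted before step $i$. Accept $u_i$ with probability
\[
\min\Bigl\{\frac{\cA(u_i)-\text{(mass already assigned to } u_i)}{\cB(u_i)},\,1\Bigr\}\cdot(\text{normalizing factor}).
\]
By induction on $i$, this makes the accepted element exactly $\cA$-distributed, so there is no error.

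\textbf{Communication.} Alice sends the accepted index $i^\ast$ with a prefix-free code. Bob knows the public sequence, so he outputs $u_{i^\ast}$.

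\textbf{Key step (the main obstacle).} The expected length is $\Exp[\log i^\ast]+2\log\log i^\ast+O(1)$, so the crux is showing
\[
\Exp[\log i^\ast]\le \kl{\cA}{\cB}+O(1).
\]
The plan is to condition on the output $u$. An element $u$ with ratio $\cA(u)/\cB(u)=2^{\ell}$ gets accepted within roughly $2^{\ell}$ expected trials. Then by Jensen's inequality,
\[
\Exp[\log i^\ast\mid u]\le \log\frac{\cA(u)}{\cB(u)}+O(1).
\]
Averaging over $u\sim\cA$ gives $\kl{\cA}{\cB}+O(1)$.

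\textbf{Coding overhead.} Applying Jensen again to the Elias-type code length $\log x+2\log(1+\log x)$ gives the overhead term $2\log(1+\kl{\cA}{\cB})$. That yields exactly \eqref{eq:kl-msg-compress}.

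The delicate part is getting the per-$u$ bound despite the residual-mass bookkeeping; elements with ratio below $1$ are handled by the $O(1)$ term. Since the paper treats this as a cited fact, in practice I would invoke \cite[Theorem 7.6]{RaoY20} directly rather than reprove it.
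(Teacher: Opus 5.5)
The paper does not prove this proposition. It imports it from \cite{HarshaJMR07,BravermanG14,RaoY20}, so your fallback of citing \cite[Theorem 7.6]{RaoY20} is exactly what the paper does. Downstream it only uses the weakened form $\raoconst(\kl{\cA}{\cB}+1)$ of \eqref{eq:msg-compress-final}.

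As a self-contained argument, however, your sketch breaks at the step you flag as the crux. Consider the residual-mass (greedy) sampler in which Alice sends the raw index $i^\ast$. For it, the per-element bound $\Exp[\log i^\ast\mid u]\le\log(\cA(u)/\cB(u))+O(1)$ is false, and so is the heuristic that $u$ is accepted within about $\cA(u)/\cB(u)$ trials.

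Take $U=\{a,b\}$ with $\cA(b)=\delta$, $\cB(b)=\delta^2$, $\cA(a)=1-\delta$, $\cB(a)=1-\delta^2$. At step $1$, $a$ receives all its mass and $b$ receives $\delta^2$. From then on all surviving mass $\beta_i$ belongs to $b$, which receives only $\beta_i\cB(b)$ per step. So $\beta_{i+1}=(1-\delta^2)\beta_i$, and, given output $b$, $i^\ast$ is essentially geometric with mean $\delta^{-2}$. Thus $\Exp[\log i^\ast\mid b]=2\log(1/\delta)-O(1)$, while $\log(\cA(b)/\cB(b))=\log(1/\delta)$, and the excess is unbounded. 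The number of trials is governed by how fast the \emph{global} survival probability $\beta_i$ decays, not by $u$'s own ratio.

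The missing idea is to carry this term explicitly and average it globally rather than pointwise in $u$. If $u$ is output at step $i$, it was capped at every earlier step, so $(i-1)\beta_i\cB(u)<\cA(u)$. Hence $\log i^\ast\le\max\{0,\log(\cA(u)/\cB(u))\}+\log(1/\beta_{i^\ast})+1$. Since $\beta_i=\Pr[i^\ast\ge i]$ is nonincreasing, $\Pr[\beta_{i^\ast}\le x]\le x$, so $\Exp[\log(1/\beta_{i^\ast})]\le\log e$. The negative part of $\log(\cA/\cB)$ costs at most $(\log e)/e$ under $\cA$. Together these give $\Exp[\log i^\ast]\le\kl{\cA}{\cB}+O(1)$, after which your Jensen/Elias step is fine.

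Alternatively, you can keep a pointwise argument by changing what is sent; this is roughly what your abandoned ``threshold $2^k$'' idea was reaching for. Alice takes the first public point with $\rho_i\le\cA(u_i)$, where $u_i$ is uniform on $U$. She sends $k=\lceil\max\{0,\log(\cA(u_i)/\cB(u_i))\}\rceil$ together with the rank of $i$ among indices with $\rho_{i'}\le 2^k\cB(u_{i'})$. Conditioned on the accepted point, the earlier points are i.i.d.\ rejects, so the expected rank is at most $1+2^k$ and your Jensen step goes through. This gives $\kl{\cA}{\cB}+O(\log(1+\kl{\cA}{\cB}))+O(1)$, which is all the paper needs.

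Two smaller fixes are also needed. First, the normalization belongs inside the minimum: accept with probability $\min\{(\cA(u_i)-\alpha_i(u_i))/(\beta_i\cB(u_i)),1\}$, where $\alpha_i(u)$ is the mass already assigned. As you wrote it, the ``probability'' can exceed $1$. Second, exactness needs termination, i.e.\ $\beta_i\to0$, using $|U|<\infty$ and $\supp{\cA}\subseteq\supp{\cB}$. Induction on $i$ alone only shows that assigned mass never exceeds $\cA(u)$.
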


Somewhat weaker bounds on the KL-divergence in
\Cref{eq:kl-msg-compress} already suffice for our purposes. Thus, to simplify
the exposition, we use
\begin{align}
    \kl{\cA}{\cB}+2\log(1+\kl{\cA}{\cB})+O(1)
    &\leq
    \kl{\cA}{\cB}+10\kl{\cA}{\cB}+O(1)
    \tag{as \(\log(1+x)\leq5x\) for every \(x\geq0\)} \\
    &\leq
    \raoconst\cdot\paren{\kl{\cA}{\cB}+1},
    \label{eq:msg-compress-final}
\end{align}
for some absolute constant \(\raoconst\geq1\), which we use throughout.

\subsection{Graph Facts}

\begin{fact}[K\H{o}nig's theorem]\label{fact:konig}
In every bipartite graph, the size of a maximum matching equals the size of a
minimum vertex cover.
\end{fact}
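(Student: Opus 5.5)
The plan is to prove the two inequalities separately. The inequality $\mu \le \tau$ holds in every graph, while $\tau \le \mu$ uses bipartiteness. Here $\tau$ is the minimum vertex cover size. For the easy direction: if $M$ is any matching and $C$ any vertex cover, every edge of $M$ has an endpoint in $C$. Distinct edges of $M$ are vertex-disjoint, so these endpoints are distinct, giving $|M|\le |C|$ and hence $\mu(G)\le\tau(G)$.

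For the reverse direction I will build, from a maximum matching $M$ of a bipartite graph with sides $L\cup R$, a vertex cover of size exactly $|M|$, using alternating paths. Let $U\subseteq L$ be the vertices of $L$ left unmatched by $M$. Let $Z$ be the set of vertices reachable from $U$ by $M$-alternating paths, meaning paths that start in $U$, go from $L$ to $R$ along non-matching edges, and go from $R$ back to $L$ along matching edges. The candidate cover is
\[
  C := (L\setminus Z)\cup (R\cap Z).
\]

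Two checks are needed, both with alternating-path reasoning.
\begin{enumerate}
\item \emph{$C$ covers every edge.} Suppose an edge $(u,v)$ with $u\in L$, $v\in R$ is uncovered, so $u\in Z$ and $v\notin Z$. If $(u,v)\notin M$, then extending the alternating path to $u$ by this edge reaches $v$, contradicting $v\notin Z$. If $(u,v)\in M$, then $u\notin U$, so $u$ was reached through its matching edge, which must be $(v,u)$; this forces $v\in Z$, again a contradiction.
\item \emph{$|C|\le |M|$.} I will show each vertex of $C$ is matched and that no matching edge has both endpoints in $C$.
\begin{itemize}
\item Every $v\in R\cap Z$ is matched. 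Otherwise the alternating path from $U$ to $v$ would be $M$-augmenting, contradicting maximality of $M$. Its partner is then in $L\cap Z$ by construction, so it is not in $C$.
\item Every $u\in L\setminus Z$ is matched, since $U\subseteq Z$. Its partner $v$ is not in $Z$: if it were, the alternating path would continue along the matching edge and put $u$ into $Z$.
\end{itemize}
Hence the map sending each vertex of $C$ to its matching edge is injective, and $|C|\le |M|=\mu(G)$.
\end{enumerate}

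The only step with real content is the second check, specifically ruling out an unmatched vertex in $R\cap Z$. This is exactly where the maximality of $M$ enters, via Berge's observation that a maximum matching admits no augmenting path; I will state that observation and use it as given. Combining the two directions gives $\mu(G)=\tau(G)$.

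As a fallback, one could instead note that the bipartite incidence matrix is totally unimodular and invoke LP duality. The combinatorial route above is self-contained, so it is the one I will follow.
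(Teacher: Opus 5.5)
Your argument is correct and complete. It is the standard alternating-path proof of K\H{o}nig's theorem. The cover property of $C=(L\setminus Z)\cup(R\cap Z)$ and the injection from $C$ into $M$ are both checked properly. Maximality of $M$ enters exactly once: an unmatched vertex of $R\cap Z$ would end an augmenting path $P$, and $M\,\triangle\,P$ would be a larger matching. So the ``Berge observation'' you take as given is only this trivial direction.

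The paper gives no proof to compare against; it lists the statement as a standard fact in the appendix. It does need the nontrivial direction $\tau\le\mu$ that you prove. In \Cref{lem:residual-matching-size}, the inductive bound on the maximum matching of each selected block is turned into a vertex cover of the same size. The use in \Cref{prop:approx-specialM} needs only the easy direction.

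One cosmetic point: when you extend an alternating path ending at $u$ by an edge to $v$, the vertex $v$ may already lie on that path. In that case $v\in Z$ anyway, so nothing breaks. Equivalently, you can define $Z$ by directed reachability, or by alternating walks.
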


\end{document}